\DeclareFontFamily{U}{mathx}{\hyphenchar\font45}
\DeclareFontShape{U}{mathx}{m}{n}{
	<5> <6> <7> <8> <9> <10>
	<10.95> <12> <14.4> <17.28> <20.74> <24.88>
	mathx10
}{}
\DeclareSymbolFont{mathx}{U}{mathx}{m}{n}
\DeclareMathSymbol{\bigtimes}{1}{mathx}{"91}
\definecolor{DarkRed}{rgb}{0.5,0.1,0.1}
\definecolor{DarkBlue}{rgb}{0.1,0.1,0.5}
\definecolor{ForestGreen}{rgb}{0.1333,0.5451,0.1333}
\definecolor{Red}{rgb}{0.9,0,0}
\crefname{property}{property}{Property}
\crefname{equation}{eq}{Eq}
\def\BState{\State\hskip-\ALG@thistlm}
\newtheorem{theorem}{Theorem}
\newtheorem{lemma}{Lemma}[section]
\newtheorem{proposition}[lemma]{Proposition}
\newtheorem{claim}[lemma]{Claim}
\newtheorem{fact}[lemma]{Fact}
\newtheorem{definition}[lemma]{Definition}
\newtheorem*{claim*}{Claim}
\newtheorem*{assumption*}{Assumption}
\newtheorem*{proposition*}{Proposition}
\newtheorem*{lemma*}{Lemma}
\newtheorem*{problem5*}{Problem}
\crefname{lemma}{Lemma}{Lemmas}
\crefname{figure}{Figure}{Figures}
\crefname{claim}{claim}{claims}
\newtheorem{mdresult}{Result}
\newenvironment{result}{\begin{mdframed}[backgroundcolor=lightgray!40,topline=false,rightline=false,leftline=false,bottomline=false,innertopmargin=2pt]\begin{mdresult}}{\end{mdresult}\end{mdframed}}
\newtheorem{mddistribution}{Distribution}
\newenvironment{Distribution}{\begin{mdframed}[roundcorner=10pt, hidealllines=false,innerleftmargin=5pt,backgroundcolor=white!10,innertopmargin=2pt]\begin{mddistribution}}{\end{mddistribution}\end{mdframed}}
\newtheorem{mdalgorithm}{Algorithm}
\newenvironment{Algorithm}{\begin{mdframed}[hidealllines=false,innerleftmargin=10pt,backgroundcolor=white!10,innertopmargin=5pt,innerbottommargin=5pt,roundcorner=10pt]\begin{mdalgorithm}}{\end{mdalgorithm}\end{mdframed}}
\newtheorem{observation}[lemma]{Observation}
\newtheoremstyle{restate}{}{}{\itshape}{}{\bfseries}{~(restated).}{.5em}{\thmnote{#3}}
\theoremstyle{restate}
\newtheorem*{restate}{}
\theoremstyle{definition}
\newtheorem{mdinvariant}[lemma]{Definition}
\newenvironment{Definition}{\begin{mdframed}[roundcorner=10pt, hidealllines=false,innerleftmargin=5pt,backgroundcolor=white!10,innertopmargin=2pt]\begin{mdinvariant}}{\end{mdinvariant}\end{mdframed}}
\crefname{mdinvariant}{Definition}{Definitions}
\renewcommand{\geq}{\geqslant}
\renewcommand{\leq}{\leqslant}
\renewcommand{\ge}{\geq}
\renewcommand{\le}{\leq}
\newcommand{\Leq}[1]{\ensuremath{\underset{\textnormal{#1}}\leq}}
\newcommand{\Eq}[1]{\ensuremath{\underset{\textnormal{#1}}=}}
\newcommand{\tvd}[2]{\ensuremath{\norm{#1 - #2}_{\mathrm{tvd}}}}
\newcommand{\Ot}{\ensuremath{\widetilde{O}}}
\newcommand{\eps}{\ensuremath{\varepsilon}}
\newcommand{\Bracket}[1]{\Big[#1\Big]}
\newcommand{\bracket}[1]{\left[#1\right]}
\newcommand{\paren}[1]{\ensuremath{\left(#1\right)}\xspace}
\newcommand{\card}[1]{\left\vert{#1}\right\vert}
\newcommand{\norm}[1]{\ensuremath{\|#1\|}}
\newcommand{\set}[1]{\ensuremath{\left\{ #1 \right\}}}
\newcommand{\poly}{\mbox{\rm poly}}
\newcommand{\polylog}{\mbox{\rm  polylog}}
\newcommand{\alg}{\ensuremath{\mathcal{A}}\xspace}
\newcommand{\dist}{\textnormal{\text{dist}}}
\DeclareMathOperator*{\Exp}{\ensuremath{{\mathbb{E}}}}
\DeclareMathOperator*{\Prob}{\ensuremath{\textnormal{Pr}}}
\renewcommand{\Pr}{\Prob}
\newcommand{\Ex}{\Exp}
\newenvironment{tbox}{\begin{tcolorbox}[
		enlarge top by=5pt,
		enlarge bottom by=5pt,
		 breakable,
		 boxsep=0pt,
                  left=4pt,
                  right=4pt,
                  top=10pt,
                  arc=0pt,
                  boxrule=1pt,toprule=1pt,
                  colback=white
                  ]
	}
{\end{tcolorbox}}
\newcommand{\event}{\ensuremath{\mathcal{E}}}
\newcommand{\rv}[1]{\ensuremath{{\mathsf{#1}}}\xspace}
\newcommand{\rA}{\rv{A}}
\newcommand{\rB}{\rv{B}}
\newcommand{\rC}{\rv{C}}
\newcommand{\rD}{\rv{D}}
\newcommand{\cX}{\mathcal{X}}
\newcommand{\cY}{\mathcal{Y}}
\newcommand{\supp}[1]{\ensuremath{\textnormal{\text{supp}}(#1)}}
\newcommand{\distribution}[1]{\ensuremath{\textnormal{dist}(#1)}\xspace}
\newcommand{\kl}[2]{\ensuremath{\mathbb{D}(#1~||~#2)}}
\newcommand{\II}{\ensuremath{\mathbb{I}}}
\newcommand{\HH}{\ensuremath{\mathbb{H}}}
\newcommand{\mi}[2]{\ensuremath{\def\mione{#1}\def\mitwo{#2}\mireal}}
\newcommand{\mireal}[1][]{
  \ifx\relax#1\relax%
    \II(\mione \,; \mitwo)%
  \else%
    \II(\mione \,; \mitwo\mid #1)%
  \fi
}
\newcommand{\en}[1]{\ensuremath{\HH(#1)}}
\newcommand{\itfacts}[1]{\Cref{fact:it-facts}-(\ref{part:#1})\xspace}
\newcommand{\OR}{\ensuremath{\textnormal{OR}}}
\newcommand{\ORPPC}{\ensuremath{\textnormal{\texttt{OR-PPC}}}}
\newcommand{\rX}{\ensuremath{\rv{X}}}
\newcommand{\rY}{\ensuremath{\rv{Y}}}
\newcommand{\cc}[1]{\ensuremath{\textsc{cc}(#1)}}
\newcommand{\prot}{\ensuremath{\pi}}
\newcommand{\Prot}{\ensuremath{\Pi}}
\newcommand{\cL}{\ensuremath{\mathcal{L}}}
\newcommand{\point}[2]{\ensuremath{\textsc{last}_{#1}(#2)}}
\newcommand{\PC}{\ensuremath{\textnormal{\texttt{PC}}}}
\newcommand{\PPC}{\ensuremath{\textnormal{\texttt{PPC}}}}
\newcommand{\product}[2]{\ensuremath{#1 \times #2}}
\newcommand{\join}[2]{\ensuremath{#1 + #2}}
\newcommand{\Gprod}{\ensuremath{G^{\times}}}
\newcommand{\Vprod}{\ensuremath{V^{\times}}}
\newcommand{\Eprod}{\ensuremath{E^{\times}}}
\newcommand{\Mprod}{\ensuremath{M^{\times}}}
\newcommand{\Vjoin}{\ensuremath{V^{+}}}
\newcommand{\Ejoin}{\ensuremath{E^{+}}}
\newcommand{\mutilPC}{\ensuremath{\tilde{\mu}_{\PC}}}
\newcommand{\muPC}{\ensuremath{\mu_{\PC}}}
\newcommand{\nuPC}{\ensuremath{\nu_{\PC}}}
\newcommand{\muPPC}{\ensuremath{\mu_{\PPC}}}
\newcommand{\Stil}{\ensuremath{\tilde{S}}}
\newcommand{\Ttil}{\ensuremath{\tilde{T}}}
\newcommand{\pt}[1]{\ensuremath{P(#1)}}
\newcommand{\rGam}{\ensuremath{\bm{\Gamma}}}
\newcommand{\Gam}{\ensuremath{{\Gamma}}}
\newcommand{\rProt}{\ensuremath{\bm{\Prot}}}
\newcommand{\rpt}[1]{\ensuremath{\rv{P}(#1)}}
\newcommand{\cUell}{\ensuremath{\mathcal{U}_{\ell}}}
\newcommand{\rM}{\ensuremath{\rv{M}}}
\newcommand{\randvert}{\ensuremath{\rv{v}}}
\newcommand{\rT}{\ensuremath{\rv{T}}}
\newcommand{\rProtAl}{\ensuremath{\rProt^{\textnormal{A}}}}
\newcommand{\rProtBob}{\ensuremath{\rProt^{\textnormal{B}}}}
\newcommand{\ents}[2]{\ensuremath{\mathbb{H}^{(#1)}(#2)}}
\newcommand{\rModd}{\ensuremath{\rM^{\textnormal{odd}}}}
\newcommand{\rMeven}{\ensuremath{\rM^{\textnormal{even}}}}
\newcommand{\Meven}{\ensuremath{M^{\textnormal{even}}}}
\newcommand{\outB}{\ensuremath{\textsc{out}}}
\newcommand{\ver}[1]{\ensuremath{v_{#1}}}
\newcommand{\rb}{\ensuremath{\rv{b}}}
\newcommand{\istar}{\ensuremath{i^{\star}}}
\newcommand{\protOR}{\ensuremath{\prot_{\OR}}}
\newcommand{\cOR}{\ensuremath{c_{\OR}}}
\newcommand{\protPPC}{\ensuremath{\prot_{\PPC}}}
\renewcommand{\eps}{\epsilon}
\title{Better Bounds for Semi-Streaming Single-Source Shortest Paths}
\author{
Sepehr Assadi\footnote{(sepehr@assadi.info) School of Computer Science, University of Waterloo. 
Supported in part by a Sloan Research Fellowship, an NSERC
Discovery Grant, and a Faculty of Math Research Chair grant. \smallskip} 
\\ {\small University of Waterloo} \and
Gary Hoppenworth\footnote{(garytho@umich.edu) Computer Science Department, University of Michigan. Supported in part by National Science Foundation grant NSF:AF 2153680. \smallskip}
 \\ {\small University of Michigan} \and
Janani Sundaresan\footnote{(jsundaresan@uwaterloo.ca) School of Computer Science, University of Waterloo. Supported in part by a Cheriton Scholarship from School of Computer Science, a Faculty of Math Graduate Research Excellence Award, and Sepehr Assadi's NSERC Discovery Grant. \smallskip} \\ {\small University of Waterloo}
}
\date{}
\begin{document}
	\maketitle
	
	\pagenumbering{roman}
	
\begin{abstract}

\bigskip

In the semi-streaming model, an algorithm must process any $n$-vertex graph by making one or few passes over a stream of its edges, use $\Ot(n) := O(n \cdot \poly\!\log\!{(n)})$ words of space, and at the end of the last pass, output a solution to the problem at hand. Approximating (single-source) shortest paths on \emph{undirected} graphs is a longstanding open question in this model. 

\medskip

In this work, we make progress
on this question from both upper and lower bound fronts: 

\begin{itemize}
	\item We present a simple randomized algorithm that for any $\eps > 0$, with high probability computes $(1+\eps)$-approximate shortest paths from a given source vertex in 
	\[
	O\left(\frac{1}{\eps} \cdot n \log^3n \right)~\text{space} \quad \text{and} \quad O\left(\frac{1}{\eps} \cdot \paren{\frac{\log{n}}{\log\log{n}}}^2\right) ~\text{passes}.
	\] 
	The algorithm can also be derandomized and made  to work on dynamic streams  at a cost of some extra $\poly(\log{n},1/\eps)$ factors only in the space. 
	
	Previously, the best known algorithms for this problem required $1/\eps \cdot \log^{c}(n)$ passes, for an    unspecified large constant $c$. 
	
	\medskip
	
	\item  We prove that any semi-streaming algorithm that with large constant probability outputs any constant approximation to shortest paths from a given source vertex (even to a single fixed target vertex
	and only the distance, not necessarily the path)
	 requires 
	\[
		\Omega\left(\frac{\log{n}}{\log\log{n}}\right) ~\text{passes}. 
	\]
	We emphasize that our lower bound holds for \textit{any} constant-factor approximation of shortest paths. 
	Previously, only constant-pass lower bounds were known and only for small approximation ratios below two.  
	\end{itemize}

\medskip
\noindent
Our results collectively reduce the gap in the pass complexity of approximating single-source shortest paths in the semi-streaming model from $\poly\!\log{\!(n)}$ vs $\omega(1)$ to only a quadratic gap. 

\end{abstract}
	
	\clearpage
	
	\setcounter{tocdepth}{3}
	\tableofcontents
	
	\clearpage
	
	\pagenumbering{arabic}
	\setcounter{page}{1}
	
\section{Introduction}

In the semi-streaming model for processing massive graphs---introduced by~\cite{FeigenbaumKMSZ04}---the edges of an $n$-vertex graph $G= (V,E)$ are
presented to an algorithm one-by-one in an arbitrarily ordered stream. A semi-streaming algorithm then is allowed to make one or few passes over this stream and use $\Ot(n) := O(n \cdot \poly\!\log{\!(n)})$ space to solve a given problem. 
We study the \textbf{single-source shortest path} problem in \textbf{undirected} (and possibly \textbf{weighted}) graphs in this model. 

Alongside the introduction of the semi-streaming model,~\cite{FeigenbaumKMSZ04} presented a single-pass semi-streaming algorithm for $O\left(\frac{\log{n}}{\log\log{n}}\right)$-approximation of single-source shortest paths, 
which was shown to be optimal soon after in~\cite{FeigenbaumKMSZ05}. This led~\cite{FeigenbaumKMSZ04} to pose the following open question for undirected shortest paths: 
\begin{quote}
	\cite{FeigenbaumKMSZ04}: \emph{``Is there a multipass algorithm that improves the approximation ratio?''}
\end{quote}

Since then, there has been a long line of research on this question using a wide range of techniques, primarily focusing on proving upper and lower bounds on the number of passes needed for obtaining a constant or $(1+\eps)$-approximation~\cite{ElkinZ04,GuruswamiO13,HenzingerKN16,ElkinN18,ElkinN19,AssadiR20,BeckerFKL21,ChenKPSSY21,ElkinT22,AssadiJJST22,AssadiS23}. 

\paragraph{Upper bounds.} The first of such algorithms appeared in the work of~\cite{ElkinZ04} who obtained a semi-streaming $O_{\eps}(n^{\delta})$-pass algorithm for any small \emph{constant} $\delta > 0$. 
This algorithm is based on the idea of \emph{hybrid spanners} (see~\cite{ElkinP04}). 
The next big improvements came from using \emph{hopsets} to reduce the passes to $(1/\eps)^{O(\sqrt{\log{n}\cdot\log\log{n}})}$~\cite{HenzingerKN16,ElkinN19}, which is $n^{o(1)}$ for constant $\eps > 0$. 
Finally,~\cite{BeckerFKL21}, using \emph{continuous optimization} tools, 
reduced the number of passes to $\poly\!\log{(n)}/\eps^2$, and~\cite{AssadiJJST22} further improved the $\eps$-dependency in passes to $1/\eps$; the dependency
on $\log{n}$ in both these algorithms is some unspecified large polynomial (which appears to be at least $(\log{n})^7$ in~\cite{BeckerFKL21} and $(\log{n})^{10}$ in~\cite{AssadiJJST22}). 

\paragraph{Lower bounds.} On the lower bound front,~\cite{FeigenbaumKMSZ05} first proved that finding $(1+\eps)$-approximate shortest paths 
in the semi-streaming model requires $\Omega(1/\eps)$ passes as long as $\eps \geq (\frac{\log\!\log{n}}{\log{n}})$. This lower bound was extended to the (algorithmically easier) problem of 
$(1+\eps)$-approximation of the shortest path \emph{distance} (and not necessarily finding a path) between a fixed source and target in~\cite{GuruswamiO13}. The space complexity in the latter result has been improved to $n^{2-o(1)}$ space in a series of work~\cite{AssadiR20,ChenKPSSY21,AssadiS23}, although they do not imply new pass lower bounds over~\cite{GuruswamiO13} for semi-streaming algorithms. 

\medskip 
\noindent 
The current state-of-affairs for constant-factor approximation of shortest paths is thus the following: 
\begin{itemize}
	\item $\polylog{(n)}$-pass upper bounds for arbitrarily small constant approximation ratios, and
	\item $\omega(1)$-pass lower bounds for sufficiently small constant approximation ratios. 
\end{itemize}


\subsection{Our Contributions} 

We make progress on this problem by establishing new upper and lower bounds.

Our upper bound holds for the  general problem of computing  $(1+\eps)$-approximate single-source shortest paths on weighted graphs undergoing edge insertions and deletions. On the other hand, our lower bound holds for the simplest version of the problem 
where we only need to compute a  large constant approximation of the distance  between two fixed vertices in an unweighted graph (without having to find a shortest path). 
Our results can be stated more generally in terms of space-pass tradeoffs beyond the semi-streaming space restrictions. Our upper bound is as follows. 


\begin{result}\label{res:upper}
	For any integer $k \ge 1$ and parameter $\eps \in (0, 1)$, there is a randomized streaming algorithm for weighted graphs that, with (exponentially) high probability, computes
	$(1+\eps)$-approximate  shortest paths from a given source to all other vertices in an $n$-vertex graph using 
	\[
	O\!\left(\frac{1}{\eps} \cdot kn^{1+1/k} \log n \right)~\text{space} \quad \text{and} \quad O\!\left(\frac{k^2}{\eps}\right) ~\text{passes.} 
	\]
	Moreover, this algorithm can be extended to dynamic streams that include both edge insertions and deletions. 
	It can also be derandomized on insertion-only streams by increasing the space with some fixed $\poly(\log{\!(n)},1/\eps)$ factors. 
\end{result}

Plugging in $k = \log{n}/\!\log\log{n}$ to  \Cref{res:upper} implies a randomized semi-streaming algorithm for $(1+\eps)$-approximate single-source shortest paths with 
\[
O\left(\frac{1}{\eps} \cdot n \log^3 n \right)~\text{space} \quad \text{and} \quad O\left(\frac{1}{\eps} \cdot \paren{\frac{\log{n}}{\log\log{n}}}^2\right) ~\text{passes}.\footnote{We  measure space using $O(\log n)$-bit machine words.}
\] 
This improves upon prior algorithms for this problem by reducing the pass complexity from some unspecified polylogarithmic number of  passes to only $\approx \log^2\!{(n)}$ passes (and keeping $\eps$-dependency the same). 
Another benefit of this algorithm, which we will elaborate on further when we discuss our techniques, is that it is arguably much simpler than all existing approaches for this problem, especially the state-of-the-art methods that utilize convex optimization tools~\cite{BeckerFKL21,AssadiJJST22}.

For the most general version of the problem on dynamic streams, the current state-of-the-art algorithms (to the best of our knowledge) are due to~\cite{ElkinT22}. These algorithms, operating in a similar space of $\Ot(n^{1+1/k})$, require $\left(\frac{k}{\eps}\right)^{k}$ passes for unweighted graphs and $\left(\frac{k \cdot \log{n}}{\eps}\right)^{k}$ passes for weighted graphs.
\Cref{res:upper} reduces this pass complexity to only $O(k^2/\eps)$,  even for weighted graphs.

Switching to lower bounds, we can prove that the pass complexity of~\Cref{res:upper} is within at most a quadratic factor of the optimal bounds for any constant-factor approximation of the problem. 
\begin{result}\label{res:lower}
	For any $\alpha \ge 1$ and any integer $k$ sufficiently smaller than $\log{\!(n)}/\alpha$, any streaming algorithm  on unweighted $n$-vertex graphs that, with constant probability of success, can output an $\alpha$-approximation to the distance between a 	fixed pair of vertices requires 
	\[
	\Omega\left(\frac{1}{\poly(k)} \cdot n^{1+\frac{1}{\alpha \cdot k}}\right)~\text{space} \quad \text{or} \quad \Omega(k) ~\text{passes}.
	\]
\end{result}
Again, for semi-streaming algorithms in insertion-only streams and on unweighted graphs, our \Cref{res:lower} implies that \emph{any} constant approximation of shortest paths requires
\[
\Omega\paren{\frac{\log{n}}{\log\log{n}}} ~\text{passes.}
\] 
Previously, only single-pass lower bounds were known
for approximation ratios beyond two~\cite{FeigenbaumKMSZ05}, and only constant pass lower bounds for less-than-two approximation factors~\cite{GuruswamiO13,AssadiR20,ChenKPSSY21} (even on weighted graphs, for finding all single-source shortest paths,
and on dynamic streams). 

\medskip

Our results collectively show that the pass-complexity of (constant factor) approximation of single-source shortest paths in undirected graphs is between $\approx \log{\!(n)}$ to $\approx \log^2{\!(n)}$ passes. 

\subsection{Our Techniques}\label{sec:tech}

We briefly discuss the techniques behind our work in this subsection and postpone a more elaborate
discussion to our technical overview in~\Cref{sec:overview}. 

\paragraph{Upper bounds.} Our upper bound uses the recently developed framework of~\cite{Assadi24} for designing multi-pass semi-streaming algorithms for approximate matching via 
a direct application of the multiplicative weight update method (as opposed to prior linear programming based approaches in streaming, e.g.,~\cite{AhnG11,IndykMRUVY17,AhnG18}, based on the Plotkin-Shmoys-Tardos framework~\cite{PlotkinST91}). 
Our algorithm is simply as follows: sample $\Ot(n/\eps)$ edges from the graph, find a shortest path tree of the sampled edges, increase the ``importance'' of any edge of the original graph 
that violates triangle inequality with respect to this tree, and repeat while adjusting sampling probabilities to be proportional to importances of edges. A key additional aspect we need to handle in this algorithm, compared to~\cite{Assadi24} (and its follow up in~\cite{Quanrud24}), 
is to ensure that even edges that violate the triangle inequality cannot do so ``too dramatically''.\footnote{This issue does not appear in any form in~\cite{Assadi24} as the corresponding step is to consider edges not covered by a vertex cover, which has a Boolean nature as opposed to the integer value that an edge can violate triangle inequality in our setting. Note that while prior work of~\cite{Assadi24,Quanrud24} applied this framework to a covering/packing problem, our approach
	to shortest path is not based on a covering or packing formulation of the problem.} We handle this by additionally maintaining a \emph{spanner} of the input graph at all times, and we compute the shortest path tree of the sampled  edges together with the spanner edges. Our analysis then incorporates this additional aspect into these frameworks as well (in addition to  problem-specific steps such as proving a ``sampling lemma'' for shortest paths 
and recovering our final solution from the intermediate ones). 

Many of the recent advances in shortest path algorithms in streaming and related models have involved solving the more general \emph{transshipment} 
problem~\cite{AndoniSZ20,Li20,BeckerFKL21,AssadiJJST22,RozhonGHZL22,Zuzic23}. 
This is often attributed to the fact that ``[unlike shortest paths,] crude approximations to transshipment can be boosted to near-optimal approximate solutions''~\cite{Zuzic23}. 
However, these approaches lead to considerably complicated algorithms with various heavy subroutines (e.g. requiring approximate dual maximizers and not only primal ones~\cite{BeckerFKL21,Zuzic23})  and indirect steps (e.g. even recovering approximate shortest path trees from approximate transshipment solutions is a non-trivial step with its own complications; see, e.g.~\cite{AndoniSZ20,Li20,BeckerFKL21}). Given this, we believe our direct approach for approximating shortest paths 
without relying on detours to the transshipment problem can be of its own independent interest, even beyond the streaming model.

\paragraph{Lower bounds.} Our lower bound follows the work of~\cite{GuruswamiO13} (itself inspired by~\cite{FeigenbaumKMSZ05}) by reducing the problem to \emph{many} instances of the 
classical pointer chasing problem on $n$-vertex graphs (see, e.g.~\cite{NisanW91,Yehudayoff20} and~\Cref{sec:overview-lower}). The key difference is in the analysis of the underlying pointer chasing instances. The lower bound of~\cite{GuruswamiO13} 
works on a \emph{product} distribution of inputs which inherently only corresponds to finding the shortest path distance to within an \emph{additive} factor of two. On the other hand, we introduce and analyze a 
\emph{correlated} input distribution that allows us to prove a lower bound for constant factor \emph{multiplicative} approximation of shortest path distance. This part is the key technical contribution of our lower bound and is 
proven using a combination of probabilistic and combinatorial arguments to address the introduced correlation in the inputs. 

Finally, similar to~\cite{GuruswamiO13}, we also  extend this lower bound to ``OR of many pointer chasing instances'' using information complexity~\cite{ChakrabartiSWY01,Bar-YossefJKS02,BarakBCR10}. But, this part of our argument also differs almost entirely from~\cite{GuruswamiO13} since our underlying hard distribution is no longer a product distribution. We implement
this step by relying on the vast body of work on message compression and information complexity in a blackbox manner---instead of the adhoc arguments of~\cite{GuruswamiO13}---and prove the lower bound even for two-player
communication complexity.

\clearpage
	
	\clearpage


\section{Technical Overview}\label{sec:overview}

In this section, we give a technical overview of both the upper bound and the lower bound. The techniques for our two main results are mostly disjoint, and therefore we cover them separately. 
The rest of the paper is self-contained and thus the reader can skip this section and proceed directly to the technical material beginning in \Cref{sec:prelim}.

\subsection{Upper Bound} \label{sec:overview-upper}

Since the entire proof of our upper bound in \Cref{sec:upper_rand} is quite short, we refrain from an extensive technical overview and instead focus on highlighting its main ideas and comparing it to prior work.

Our upper bound follows the ``sample-and-solve'' paradigm commonly used in graph streaming algorithms dating back to~\cite{LattanziMSV11,KumarMVV13}. 
In this framework, we  sample a sparse representation of the input graph from the stream. Once we have this sparse representation stored locally, we are able to process it without using additional resources, and then
use this information to update our sampling strategy for the next pass, until we eventually solve the problem. 

In the following, we focus specifically on the semi-streaming implementation of our algorithm in \Cref{res:upper} (for the case of $k = O(\log n)$, and with a slightly worse pass complexity up to a $(\log\log{n})^2$ factor). 
Given an input graph $G$ and a source vertex $s \in V(G)$, the high-level outline of our semi-streaming algorithm for approximate single-source shortest paths can be summarized as follows. 
\begin{Algorithm} \normalfont
A high-level description of our algorithm on an input $G=(V,E,w)$. 
\begin{enumerate}
	\item Compute an $O(\log n)$-spanner $H$ of $G$. Assign each edge in $G$ an ``importance value''.\footnote{All edges are equally important to start with.}
	\item  In one pass over the stream, sample a set $F \subseteq E(G)$ of  $\widetilde{O}(n/\eps)$ edges of $G$ according to their importance values, and  compute a shortest path tree $T$  of $H \cup F$ rooted at $s$. 
	\item For each edge $(u, v) \in E(G)$, we say that $(u, v)$ violates the triangle inequality with respect to tree $T$ iff  
	\[
	\card{\dist_T(s, u) - \dist_T(s, v)} > w(u, v).
	\]
	If  $(u, v)$ violates the triangle inequality with respect to $T$, 
	double its importance value. 
	\item Repeat this sampling procedure $R = \Theta(\log^2(n)/\eps)$ times. Let $H^*$ be the union of all  the shortest path trees $T$ computed  in each of the $R$ sampling passes, and output a shortest path tree of $H^*$. 
\end{enumerate}
\end{Algorithm}

Our edge importance update rule increases the importance of every edge $(u, v) \in E(G)$ such that adding  $(u, v)$ to graph $H \cup F$ would decrease distances from $s$. This update rule is reminiscent of edge relaxation updates in the Bellman-Ford algorithm and serves a similar purpose. If an edge $(u, v)$ has its importance value increased frequently, then it will be sampled into set $F$ with relatively higher probability. In this way, edge importance values track which edges in $G$ are relevant for approximating $\{s\} \times V(G)$ distances.

We prove the correctness of our algorithm by arguing that if we choose the number of rounds $R$ to be sufficiently large, then for each edge $(u, v) \in E(G)$, almost every shortest path tree $T$ we computed satisfies the triangle inequality 
\begin{equation}
	\card{\dist_T(s, u) - \dist_T(s, v)} \le w(u, v).
	\label{eq:triangle-ineq}
\end{equation}
It is not hard to see that \emph{if} one of our trees $T$ satisfied  \Cref{eq:triangle-ineq} for every edge in $E(G)$ \textit{simultaneously}, then $T$ would be a shortest path tree of $s$ in $G$.  In general, this condition does not hold; however, we can ensure that after $R$ rounds,  \Cref{eq:triangle-ineq} is violated by at most a $(1+\eps)$-factor \textit{on average} for each edge $e \in E(G)$. We show that this is sufficient to obtain a $(1+\eps)$-approximate shortest  path tree of $s$ in $G$. 
Finally, to prove the earlier guarantee regarding~\Cref{eq:triangle-ineq} (on average), we use a potential function argument of a similar flavor as the standard MWU analysis of \cite{AHK12}, and specifically following the approach of~\cite{Assadi24}. 

The  $O(\log n)$-spanner $H$ of $G$ that we compute in our algorithm is essential for this correctness argument. Since each tree $T$ is a shortest path tree in $H \cup F$, it follows that  \Cref{eq:triangle-ineq} is violated by at most an $O(\log n)$-factor for any edge $(u, v) \in E(G)$. This in turn allows us to argue we only need $R \approx \log^2 n$ rounds for our algorithm to converge on a $(1+\eps)$-approximate shortest path tree (even when using  spanner $H$,  \Cref{eq:triangle-ineq} can be violated by an $\Omega(\log n)$-factor in general; this  causes our algorithm to suffer an additional logarithmic factor in the pass complexity  compared to \cite{Assadi24}).

\paragraph{Comparison with \cite{Assadi24}.} As we  mentioned earlier, our upper bound uses the  framework of \cite{Assadi24}, which obtained a  multi-pass streaming algorithm for a  $(1-\eps)$-approximation of  maximum matching by directly applying the multiplicative weight update method. The streaming algorithm of \cite{Assadi24} samples edges from the input graph; computes  a maximum matching of the sampled edges; increases the edge ``importance'' values of  edges that would have led to a larger maximum matching (via a primal-dual view, e.g., in bipartite graphs, by using edges of the original graph that violate a vertex cover of the sampled subgraph); and repeats this process until it obtains a $(1-\eps)$-approximate maximum matching. Adapting this strategy to single-source shortest paths requires several problem-specific modifications. 

The first and perhaps most important step is to come up with our notion of which edges to prioritize, which unlike the packing/covering framework of~\cite{Assadi24} (and~\cite{Quanrud24}) and its primal dual view, is based on the triangle inequality
and the ``Bellman-Ford view'' of the problem.\footnote{While it is possible to cast $s$-$t$ shortest path as a covering problem, its dual packing problem (as required by~\cite{Assadi24}) does not correspond to triangle inequality. On the other hand, 
one can see triangle-inequality as a dual for another LP relaxation of $s$-$t$ shortest path but this will no longer be a covering/packing approach. Finally, these approaches both seem to be restricted to $s$-$t$ shortest path 
unlike $s \times V(G)$ shortest paths we have for our solution.} We then need to establish our own ``sampling lemma'' (\Cref{lem:sampling-lemma}),  which shows that if we randomly sample a sufficiently large edge set $F \subseteq E(G)$, then not too many edges $e \in E(G)$ will violate the triangle inequality of \Cref{eq:triangle-ineq} with respect to the shortest path tree $T$ of $H \cup F$. Finally, edges $(u, v) \in E(G)$ can violate the triangle inequality with respect to tree $T$ (i.e., violate \Cref{eq:triangle-ineq}) by an arbitrarily large factor; as we mentioned earlier, we need to compute an $O(\log n)$-spanner $H$ in order to guarantee that \Cref{eq:triangle-ineq} is violated by at most an $O(\log n)$-factor. In contrast, this issue can be avoided in  \cite{Assadi24} using the duality of maximum matchings and minimum vertex covers in bipartite graphs (and the  corresponding duality in general graphs).

\paragraph{Comparison with  \cite{BeckerFKL21}, \cite{AssadiJJST22}, and \cite{Zuzic23}.} We also find it useful to compare our streaming algorithm to a line of work studying the transshipment problem in the streaming and related models. 
In the transshipment problem, we are given an uncapacitated graph $G$ and a feasible demand vector  over the vertex set of $G$, and we wish to find a minimum-cost flow that routes the demand. 
The work of \cite{BeckerFKL21} gave the first nontrivial $(1+\eps)$-approximation semi-streaming algorithm for transshipment using  $O(\eps^{-2} \cdot \polylog(n))$ passes.
Subsequently, a semi-streaming algorithm with $O(\eps^{-1} \cdot \polylog(n))$ pass complexity was given in \cite{AssadiJJST22}, improving the dependency on $\eps$. 
Transshipment is a generalization of the single-source shortest paths problem, and so the authors achieve similar bounds for $(1+\eps)$-approximate single-source shortest paths as well. One advantage of working with the transshipment problem is that $\alpha$-approximate solutions to transshipment can be boosted to $(1+\eps)$-approximations at the cost of a $\poly(\alpha)$-factor in the pass complexity. 
This approach has the advantage of extending to the distributed and parallel models (in certain cases; see also \cite{Zuzic23}). 
In contrast, in our streaming algorithm, we need to solve single-source shortest paths on the sampled subgraph exactly or at the very least, find a solution wherein at most $\eps$-fraction of sampled edges 
violate the triangle inequality. This is where we rely on the power of the streaming model that once we are working with sparse enough sample, 
we can perform ``heavy computations'' over them for free. 

However, directly solving single-source shortest paths has its own advantages. First, we obtain a much simpler algorithm. Approximately solving transshipment in \cite{BeckerFKL21} requires performing projected gradient descent on a modified version of transshipment, with many technical considerations. 
The subsequent work of \cite{Zuzic23} simplifies  \cite{BeckerFKL21} by solving only the dual of transshipment as a subroutine. Still, this  approach leads to new complications, like  difficulties obtaining a feasible primal solution, and an additional binary search to guess the optimal solution value. Likewise, the work of \cite{AssadiJJST22} obtains an approximation algorithm for transshipment by implementing a primal-dual algorithm inspired by \cite{Sherman} and gradient methods in the semi-streaming model, 
which leads to considerably complicated algorithms. 

Another advantage of solving 
 single-source shortest paths directly is that it leads to an improved pass complexity for our algorithm. Many of the technical details of \cite{BeckerFKL21}, \cite{AssadiJJST22}, and \cite{Zuzic23} discussed above introduce additional $\polylog(n)$ factors in  pass complexity that are  sometimes even hard to track. On top of this, a $(1+\eps)$-approximate solution to transshipment does not even directly yield a $(1+\eps)$-approximate shortest path tree. 
This obstacle can be overcome using ``an intricate
randomized rounding  scheme''\cite{BeckerFKL21}  that yields a  tree whose shortest paths are $(1+\eps)$-approximations on average. This in turn requires an additional logarithmic factor in passes to obtain a true approximate shortest path tree. In contrast, our streaming algorithm computes $(1+\eps)$-approximate $(s, t)$-shortest paths for all $t \in V(G)$ simultaneously, at no additional cost.


\newcommand{\NPC}{\ensuremath{\textnormal{PC}}}

\subsection{Lower Bound}\label{sec:overview-lower}
In this subsection, we give an overview of our lower bound of $\tilde{\Omega}(\log n)$ passes for any constant-factor approximation of shortest path estimation.\footnote{We emphasize that when we say $\tilde{\Omega}(f)$ or $\tilde{O}(f)$, we mean $\Omega(f/\poly \log f)$ and $O(f \cdot \poly \log f)$ respectively. Here,  $\tilde{\Omega}(\log n)$ is hiding a single $\log \log n$ factor.} Our work generalizes the result of \cite{GuruswamiO13}, which proves the same lower bound for finding the exact length of the shortest path. 

\subsubsection*{Prior Work by \cite{GuruswamiO13}}
We begin by briefly talking about the key components of the lower bound of \cite{GuruswamiO13} for testing if the shortest path between two given vertices $u, v$ is of length at most $k$ for some integer $k \geq 0$. 

\noindent
\textbf{Equality of Pointer Chasing.}
Let us start with a simple layered graph with $k+1$ layers and $n/(k+1)$  vertices in each layer.  The edges between each layer are chosen to be a matching.\footnote{Technically speaking, the construction of~\cite{GuruswamiO13} does not use matchings, but edges associated with a random function between the layers. We will talk about matchings here for  simplicity.} Let $u$ and $v$ be arbitrary vertices in the first layer and last layer of the graph, respectively. It is easy to see that vertex $u$ is connected to a unique vertex of the last layer, and there is a path from $u$ to $v$ if and only if $v$ is this unique vertex. 
 Refer to \Cref{fig:pc} for an illustration.
\begin{figure}[h!]
	\centering
	\begin{subfigure}[b]{0.45\textwidth}
		\centering
		\includegraphics[scale=0.20]{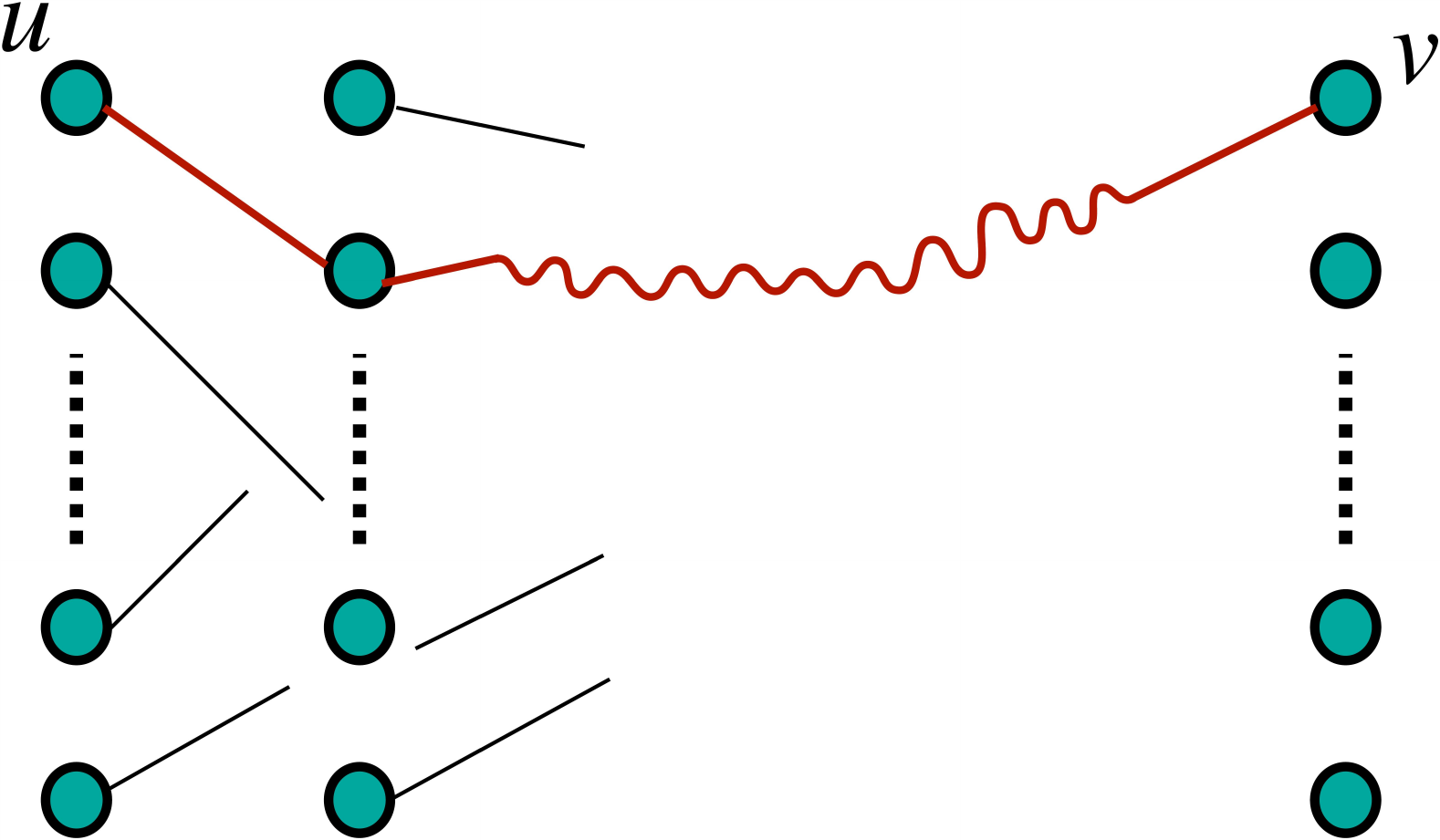}
		\caption{A graph where the shortest $(u, v)$-path is of length exactly $k$.}
	\end{subfigure}
	\hfill
	\begin{subfigure}[b]{0.45\textwidth}
		\centering
		\includegraphics[scale=0.20]{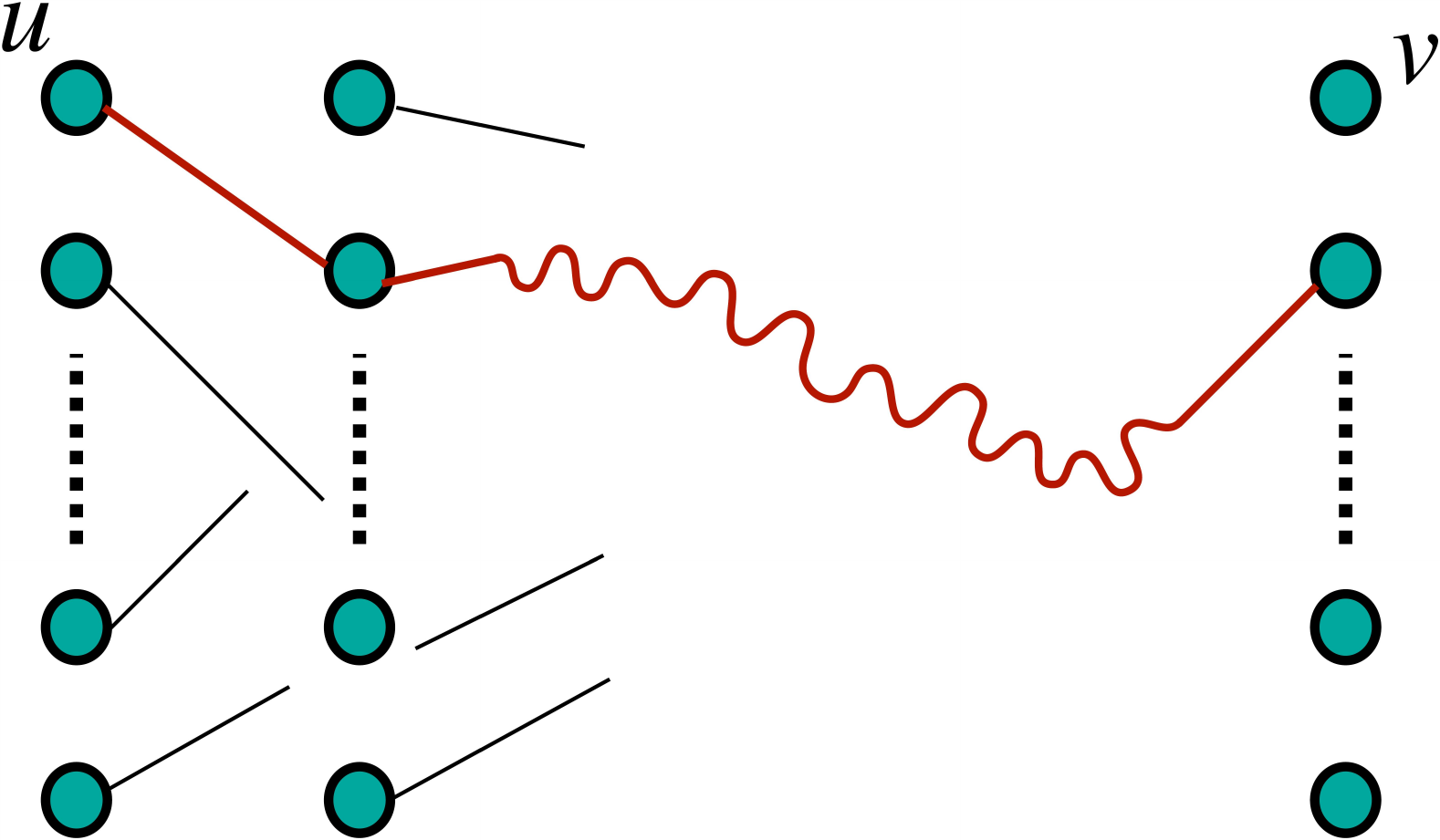}
		\caption{A graph where  $u$ and $v$ are not connected, as $v$ is not the vertex in the last layer connected to $u$.}
	\end{subfigure}
	\caption{Pointer Chasing  modeled as a layered graph. The path from vertex $u$ is shown using the wiggly line (brown).}
	\label{fig:pc}
\end{figure}

The search version of the problem, which requires outputting the unique vertex in the last layer connected to $u$, is closely related to the standard pointer chasing ($\NPC$) problem in communication complexity introduced by \cite{PapadimitriouS84}. 
This problem has since then been widely studied in various settings \cite{NisanW91,PonzioRV99,AssadiCK19b,Yehudayoff20,AssadiKSY20}. This search version has in fact been used previously in~\cite{FeigenbaumKMSZ05} to prove
semi-streaming lower bounds for \emph{finding} exact shortest paths.

The work of \cite{GuruswamiO13}, on the other hand, looks at the \emph{decision} version of the problem, where there are two $\NPC$ instances.
There are a total of $2k$ players, with $k$ players from each instance. 
The first instance has $k$ players $P_1, P_2, \ldots, P_k$, and the $k$ different matchings are distributed among them. The second instance has players $P_{k+1}, P_{k+2}, \ldots P_{2k}$, and these players have the $k$ matchings from the second instance. In each round, the players speak in the order of $P_1, P_2, \ldots$, until the last player $P_{2k}$. The final player outputs $1$ if the final vertex of both $\PC$ instances is the same and outputs $0$ otherwise. 
We refer to this problem as the Equality of Pointer Chasing; see  \Cref{fig:eq-pc} for an illustration. 

\begin{figure}[h!]
	\centering
	\includegraphics[scale=0.25]{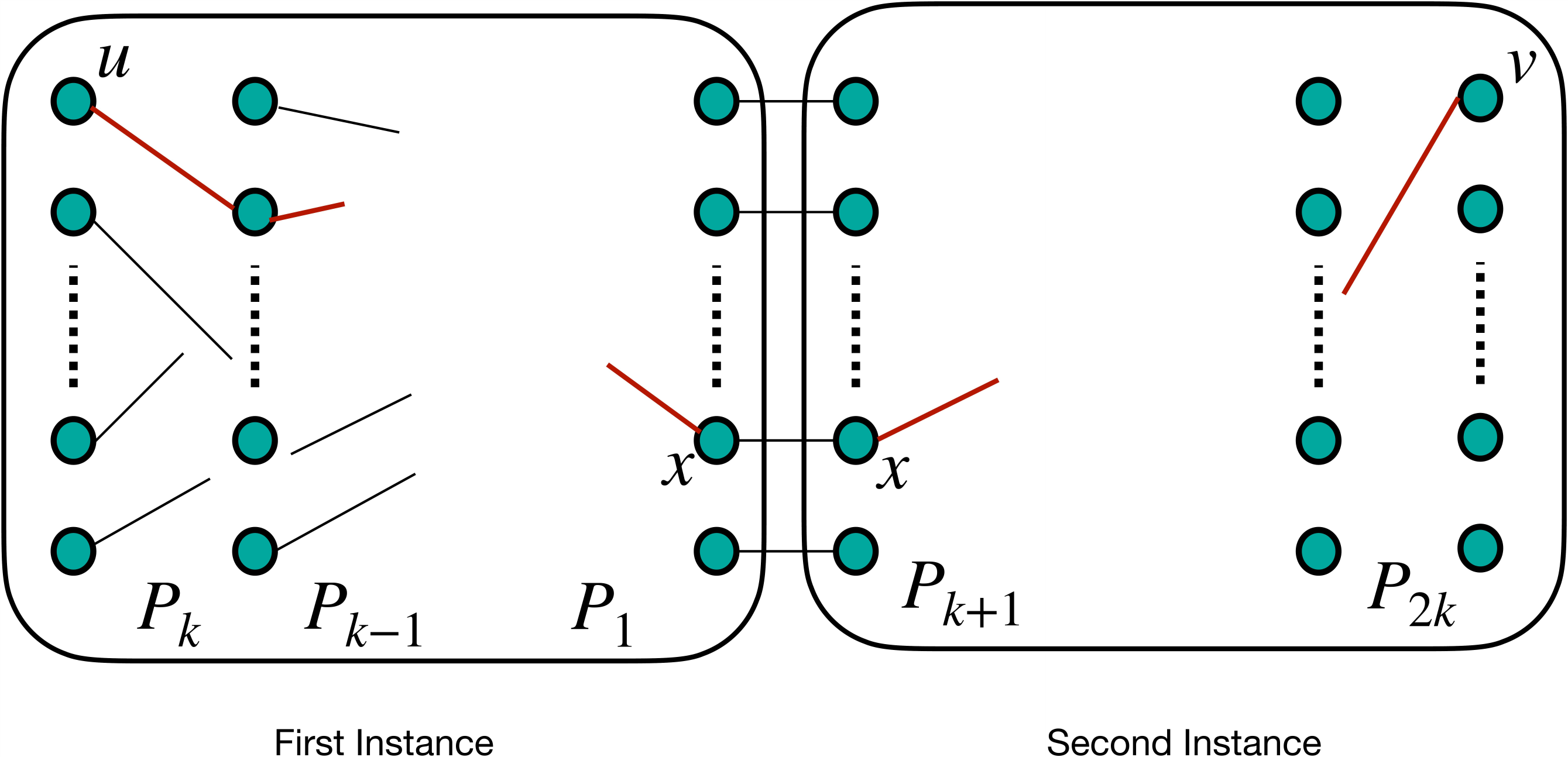}
	\caption{Equality of Pointer Chasing. When the final vertex is $x$ for both the instances, there is an $(s, t)$-path of length $2k+1$. Otherwise, the two vertices are not connected. The players are $P_1, P_2 \ldots, P_{2k}$, and which matching is given to each player is specified below the matching. }\label{fig:eq-pc}
\end{figure}

The authors in \cite{GuruswamiO13} prove a lower bound of $\Omega(n/\poly(k))$ on the communication complexity of Equality of Pointer Chasing when the number of rounds is at most $k-1$.\footnote{Observe that there is a protocol that uses $O(\log n)$ communication by simply sending the next vertex on the paths from $s$ and $t$ if there are $k$ rounds.} In fact, they prove this lower bound on the information complexity\cite{CoverT06} (see \Cref{app:info} for formal definitions) of the problem over the uniform distribution of matchings, and this in turn gives a communication lower bound by the well-known connection between communication complexity and information complexity (see \Cref{prop:ic-cc}). 

\noindent
\textbf{$\OR$ of  Equality of Pointer Chasing.} 
The current graph in \Cref{fig:eq-pc}, although requiring $\Omega(n/\poly(k))$ communication with limited number of rounds, is easy to solve in the semi-streaming model with $\tilde{O}(n)$ space because it only has $O(n)$ edges. 
The work of \cite{GuruswamiO13} gets around this issue by embedding many equality of $\NPC$ instances in the same graph. They then define a new problem, called $\OR$ of Equality of Pointer Chasing, where the players must output $1$ if the solution of at least one of the equality of $\NPC$ instances is $1$, and output $0$ otherwise. 
To guarantee that there is no interference between the different equality of $\NPC$ instances, the work of \cite{GuruswamiO13} randomly  scrambles the connections between layers in each $\NPC$ instance, while ensuring that the solution of each $\NPC$ instance is unchanged.

More concretely, to prove a lower bound against $p$-pass streaming algorithms, first they take $t := n^{\Theta(1/p)}$ many $\NPC$ instances, then they scramble each $\NPC$ instance by randomly permuting the vertices in the middle layers, and finally they take the union of these graphs on the same vertex set. Refer to \Cref{fig:or-eq-pc} for an illustration.

\begin{figure}[h!]
	\centering
	\includegraphics[scale=0.25]{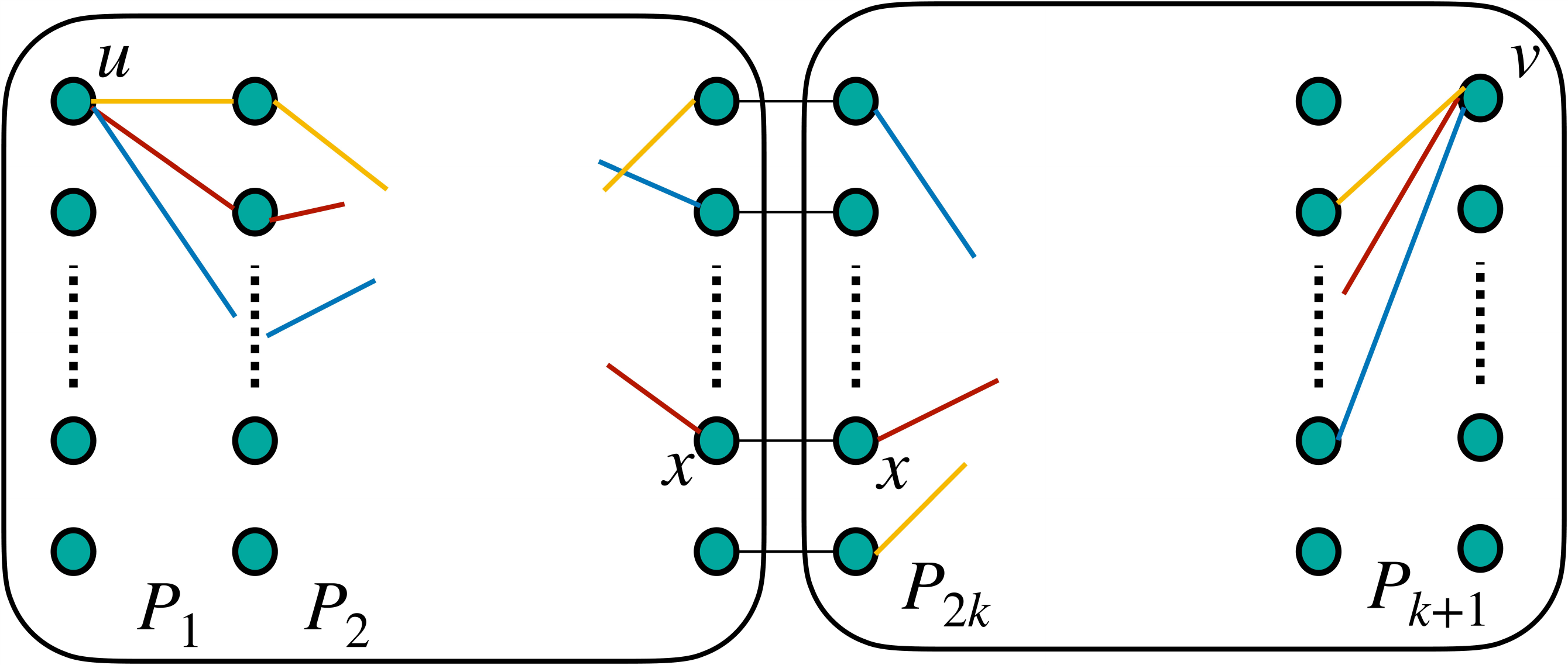}
	\caption{Multiple equality of $\NPC$ instances embedded in the same graph. Each color corresponds to one instance. There may be paths of multi-colored edges from $u$ and $v$ to the same vertex in the middle layer, so that the shortest path length between $u,v$ is $2k+1$, but the parameters are chosen so that this is unlikely.}\label{fig:or-eq-pc}
\end{figure}

It is easy to see that if the solution to the $\OR$ of equality of $\NPC$ instance is $1$, then the distance from $u$ to $v$ in the graph is $2k+1$. When the solution is $0$, none of the $t$ input instances have equal pointers, and it is not clear immediately what we can conclude about the distance between $u$ and $v$. However, here we can use the fact that \cite{GuruswamiO13} randomly permutes the vertices in the middle layers of each $\NPC$ instance before unioning them. As these permutations are sampled independently of each other across different equality of $\NPC$ instances, roughly $t^{2p+1}$ vertices in the last layer become reachable from vertex $u$ through paths from multiple $\NPC$ instances. As long as $t^{2p+1} \ll (n/k)$, the shortest path distance will be at least $2k+2$ with high probability. Hence, a streaming algorithm for \textit{exact} shortest paths can effectively distinguish between these two cases.

To conclude the lower bound, \cite{GuruswamiO13} uses ad hoc direct sum techniques from \cite{Bar-YossefJKS02} and the independence properties of the uniform distribution to say that the $\OR$ of $t$-many instances of equality of $\NPC$ requires $\Omega(t \cdot n/\poly(k))$ communication. By setting $t  = n^{\Theta(1/p)}$ and $k  = p+1$,  they get a lower bound of $n^{1+\Omega(1/p)}/\poly(p, k)$ on the space required for $p$-pass algorithms to decide  whether two given vertices $u, v$ are at most $2k+1$ distance apart. 

\subsubsection*{Our Work}
We now discuss our result and the changes from \cite{GuruswamiO13} that give us the lower bound.

\noindent
\textbf{Challenge in Extending to Constant Approximation.}
In the reduction from exact shortest paths to the $\OR$ of Equality of $\NPC$ problem, \cite{GuruswamiO13} crucially uses the fact that as long as the shortest path distance between $u$ and $v$ is more than $2k+1$, the players can output that the answer to $\OR$ of Equality of $\NPC$ is $0$. 
The interference between many equality of $\NPC$ instances happens with a probability of $o(1)$ if the parameter $t$ is chosen appropriately.

However, this argument does immediately give a reduction from approximate shortest paths to the $\OR$ of Equality of $\NPC$ problem. Constant approximation algorithms cannot distinguish between the case where the distance from $u$ to $v$ is exactly $2k+1$ (in which case the players should output $1$) and the distance from  $u$ to $v$ is greater than $2k+1$ but still $O(k)$ (in which case the players should output $0$). Specifically, it is possible there is an $(u, v)$-path of length, say, $6k$ in the graph, via a path from $u \rightsquigarrow x \rightsquigarrow y \rightsquigarrow v$ with $u \rightsquigarrow x$, and $x \rightsquigarrow y$ and $y \rightsquigarrow v$ each coming from different equality of $\NPC$ instances (see \Cref{fig:overlap-bad}). Moreover, this issue cannot be avoided by randomly scrambling the individual $\NPC$ instances. 
Therefore, we cannot hope to use a constant approximation shortest paths algorithm to solve the exact $\OR$ of equality of $\NPC$ instances.  
There is some hope, however, to solve the problem distributionally for some apt distribution and prove a lower bound using Yao's minimax principle.

\begin{wrapfigure}{r}{0.5\textwidth}
	\centering
	\includegraphics[scale=0.25]{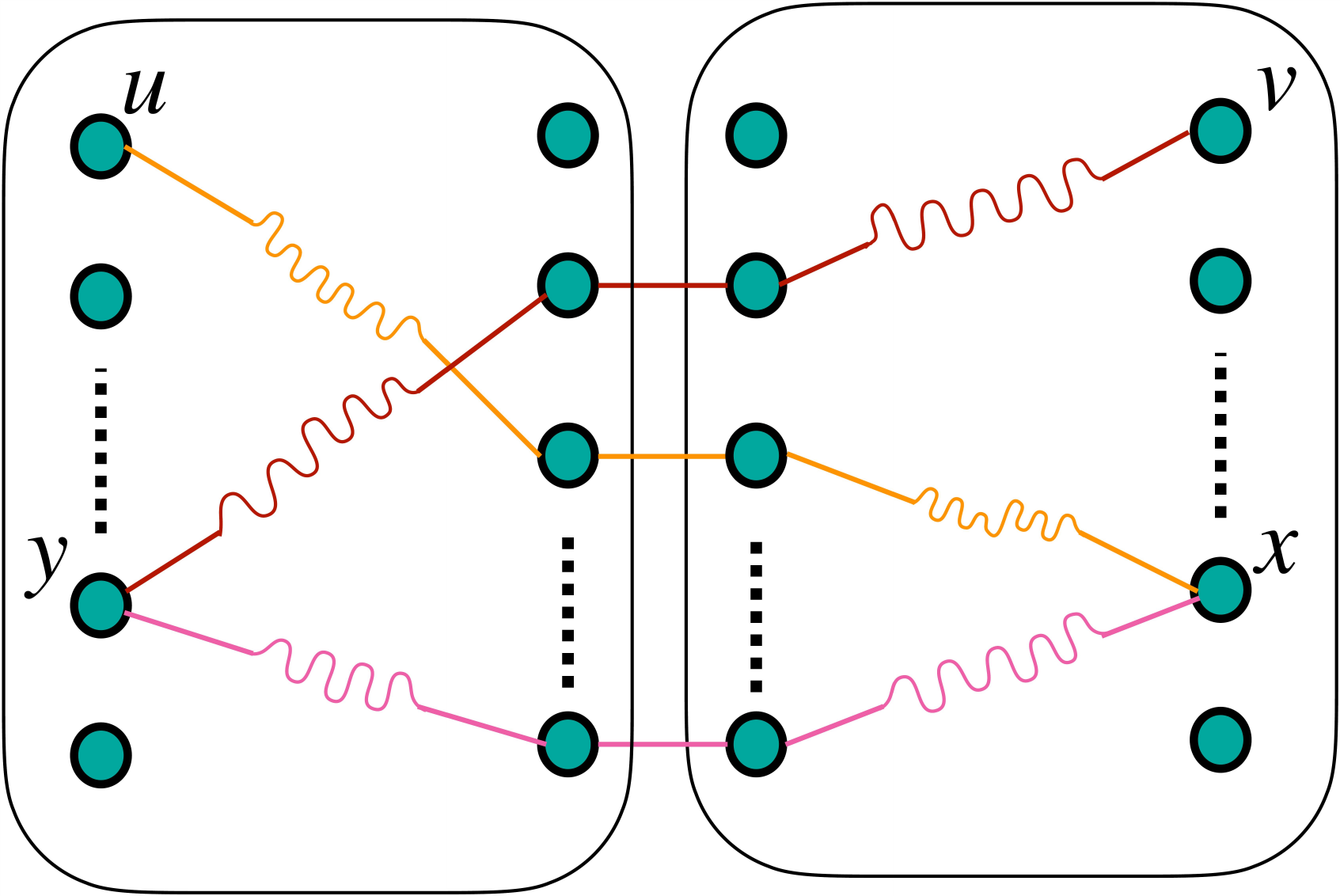}
	\caption{Three paths from three equality of $\NPC$ instances, with different colors used to represent edges from different instances.  Together, these three paths form an $(u, v)$-path of length $O(k)$. \vspace{-12mm} }\label{fig:overlap-bad} 
\end{wrapfigure}


The underlying hard distribution from \cite{GuruswamiO13} is simply the uniform distribution for all matchings.\footnote{This choice is necessary for their communication lower bounds as they heavily exploit the independence
between the input of players, both for proving the communication lower bound for the equality of pointer chasing, and then also for extending it to multiple instances via a direct sum argument.} Let us argue why this distribution is insufficient for our purpose. 
Here, we have:
\begin{itemize}
	\item When the $\OR$ is one, which happens with $\approx t/n$ probability, the streaming algorithm can always give the correct answer because the shortest path distance will be $2k+1$. 
	\item When the $\OR $ is zero, which happens with $\approx (1-t/n)$ probability, the streaming algorithm succeeds whenever the instances do not interfere with each other. The instances interfere with a probability of at least $\Omega(t^{2p} \cdot k/n)$. 
\end{itemize}

The error probability when using the approximate shortest path distance is at least $\Omega(t^{2p} \cdot k/n)$, and this error probability carries over to one instance of equality of $\NPC$ also with standard direct sum techniques. But, solving one instance of equality of $\NPC$ with error probability $> k/n$ on the uniform distribution is simple: we just have to output zero all the time. 
Instead, we need a different distribution so that the streaming algorithm has a substantial advantage with \emph{even an approximate} shortest path distance. Moreover, this distribution should remain hard to solve. 

We now describe how we achieve this by introducing some correlation in the distribution.

\noindent
\textbf{Correlated Input Distribution.}
We create a distributional communication problem called \textbf{paired pointer chasing}, closely related to the $\OR$ of equality of $\NPC$ problem. We define this problem informally in the overview; for a precise definition, see \Cref{def:paired-pc}.

There is an answer bit $\rb \in \{0,1\}$ sampled uniformly at random that the players have to find. The input given to them is sampled from a distribution $\mu_{\rb}$:
\begin{itemize}
	\item Distribution $\mu_0$: all the matchings in both instances of $\NPC$ are chosen uniformly at random. 
	\item Distribution $\mu_1$: all the matchings in both instances of $\NPC$ are chosen uniformly at random, \textbf{conditioned on the pointers being equal}. 
\end{itemize}

The two $\NPC$ instances are possibly paired together based on the value of $\rb$. There are only two players, Alice and Bob, who receive alternate matchings from both the instances. They have to output the value of $\rb$ at the end of all messages, with Bob speaking first. Refer to  \Cref{fig:ppc} for an illustration.

\begin{figure}[h!]
	\centering
	\begin{subfigure}[b]{0.45\textwidth}
		\centering
		\includegraphics[width=\linewidth]{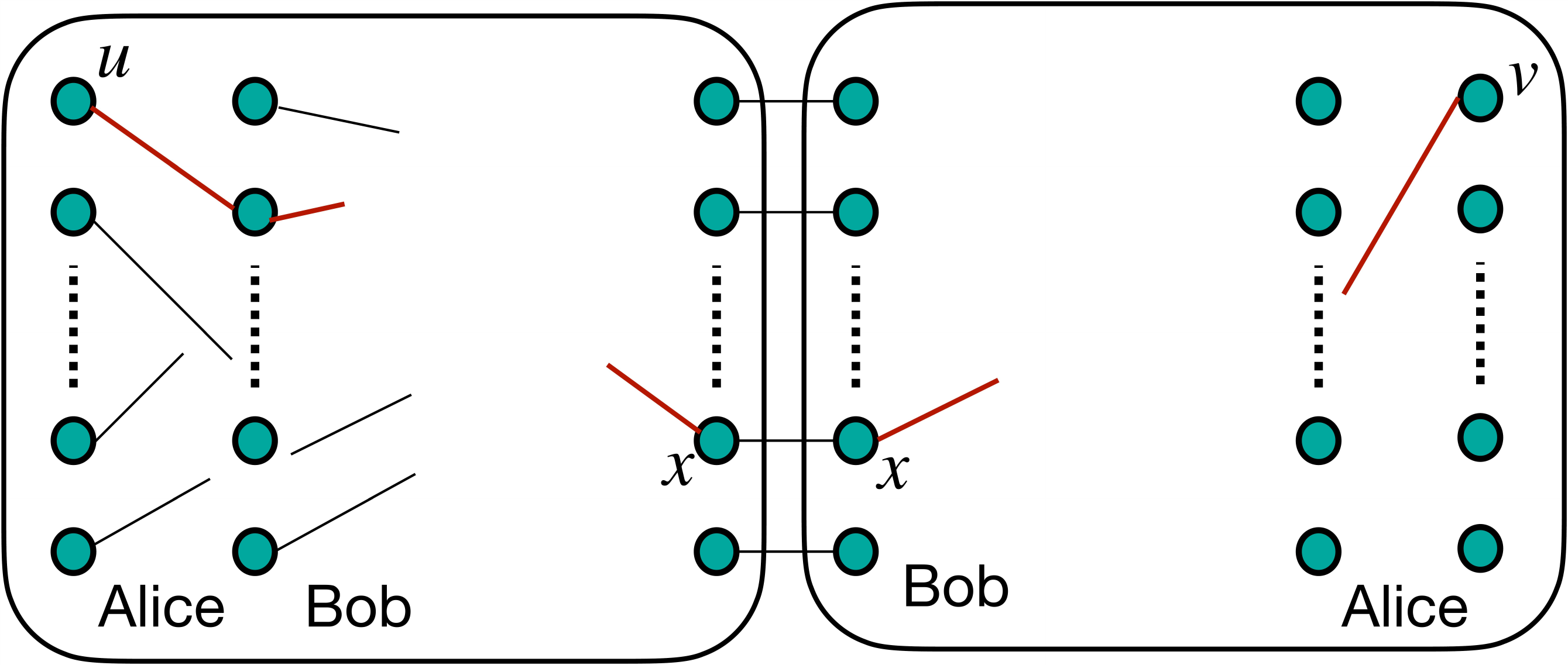}
		\caption{A paired pointer chasing instance sampled when $\rb = 1$. The two pointers are equal to some vertex $x$ in the middle layer. }
	\end{subfigure}
	\hfill
	\begin{subfigure}[b]{0.45\textwidth}
		\centering
		\includegraphics[width=\linewidth]{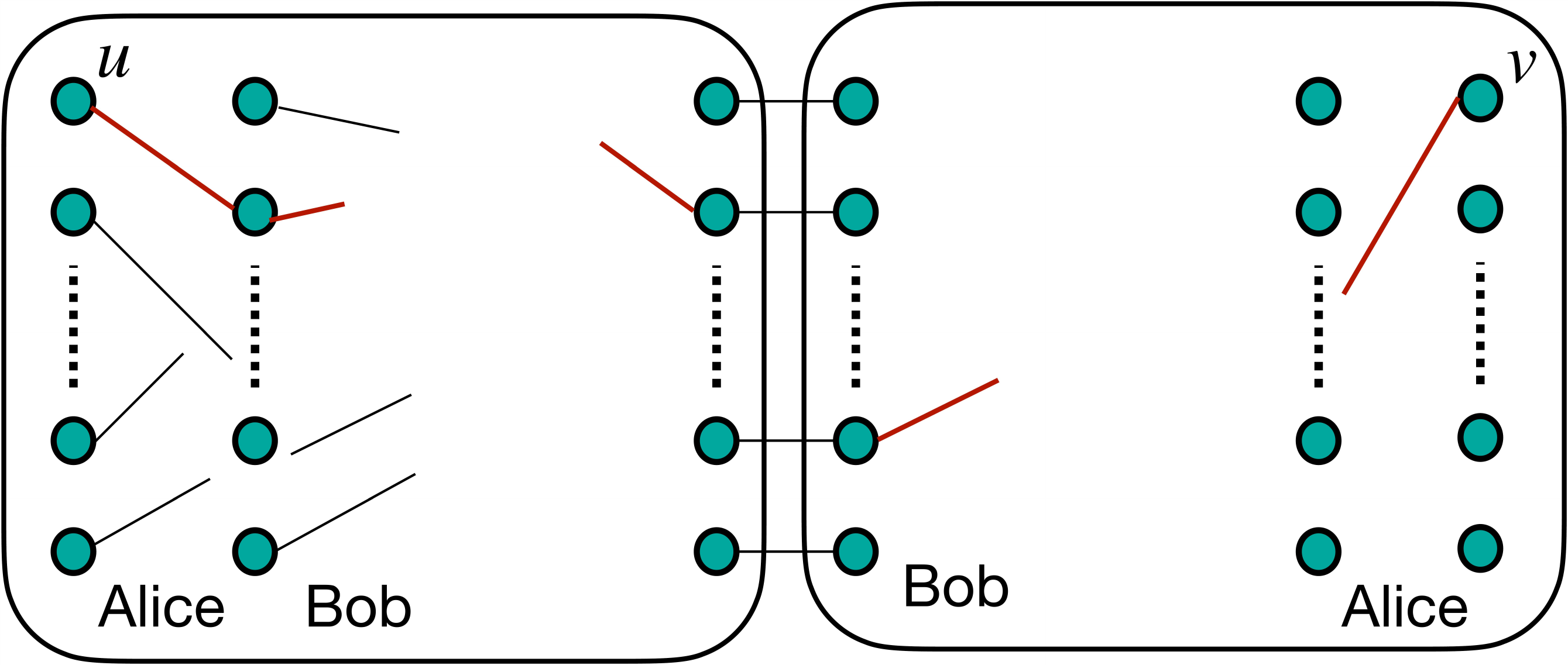}
		\caption{A paired pointer chasing instance sampled when $\rb = 0$. All the matchings are chosen at random. The two pointers may or may not be equal.}
	\end{subfigure}
	\caption{Paired pointer chasing instances. The player whom the matching is given to is specified below the edges. \textbf{A reminder that here, Bob speaks first.}}
	\label{fig:ppc}
\end{figure}

Our distribution can be viewed as artificially boosting the probability that the two $\NPC$ instances are equal so that the answer $\rb$ is uniformly random. 
The approximate shortest path distance still has a non-trivial advantage in guessing the value of $\rb$, and trivially outputting either $0$ or $1$ for $\rb$ can only succeed with probability half. We prove that outputting the value of $\rb$ with a high enough constant advantage requires $\Omega(n / \poly(k))$ communication, which is the main technical part of our lower bound. The details will be covered later in the overview.

\textbf{$\OR $ of Paired Pointer Chasing.}
The final hard instance is created by $\OR$ of $t$ many instances of paired pointer chasing instead. The interference probability when all instances are sampled from $\mu_0$ is $O(t^{2p+1} \cdot k/n) = o(1)$, and the streaming algorithm will have $\Omega(1)$ advantage in finding the value of $\rb$ based on the approximate shortest path distance. Our lower bound will be $\Omega(t \cdot n/\poly(k))$ for the $\OR $ of paired pointer chasing. As the input distribution for our problem is correlated, this part is not a straightforward application of techniques. Our process has roughly three steps:
\begin{enumerate}[label=$(\roman*)$]
	\item A direct sum part, converting a protocol $\protOR$ for the $\OR$ problem into a protocol $\prot$ for one instance of paired pointer chasing. The information cost when the instance is sampled from $\mu_0$ is $t$ times smaller than the communication cost of $\prot$. This uses standard direct sum techniques, e.g., as in~\cite{BarakBCR10}.
	\item A message compression part, converting $\prot$ into a protocol with low communication in expectation when the input distribution is $\mu_0$. This part uses the external information compression arguments of \cite{HarshaJMR07}, more specifically, the presentation from \cite{RaoY20}.
	\item A final protocol that either has a low communication over both of $\mu_1$ and $\mu_0$ (recall that part $(ii)$ was only for $\mu_0$) or can distinguish between the two distributions
	using a straightforward ``communication odometer'' (see also the (much) stronger notion of an \emph{information odometer}~\cite{BravermanW15} and odometer arguments, e.g., in~\cite{GoosJP017,Assadi17sc}). 
\end{enumerate}

With $t$ set as $n^{\Theta(1/(\alpha \cdot p))}$, we get a lower bound of $n^{1+\Omega(1/(\alpha \cdot p))}$ on the space used by $p$-pass streaming algorithms for $\alpha$-approximation of estimating shortest path distance.

\noindent
\textbf{Lower Bound Paired Pointer Chasing.}
We briefly outline how we get the lower bound of $\Omega(n/\poly(k))$ for paired pointer chasing, which is the main technical contribution of our lower bound. 
Firstly, we can cast our problem as an $\NPC$ problem on $n^2/(k+1)^2$ vertices in each layer, simply by taking the cartesian product of the corresponding vertex sets in each layer. 
An edge exists between $(w_1, w_2)$ and $(w_1', w_2')$ in the next layer if and only if edge $(w_1, w_1')$ exists in the first instance, and $(w_2, w_2')$ exists in the second instance. 
See \Cref{fig:cart-prod} for an illustration.

\begin{figure}[h!]
	\centering
	\begin{subfigure}[b]{0.45\textwidth}
		\centering
		\includegraphics[scale=0.30]{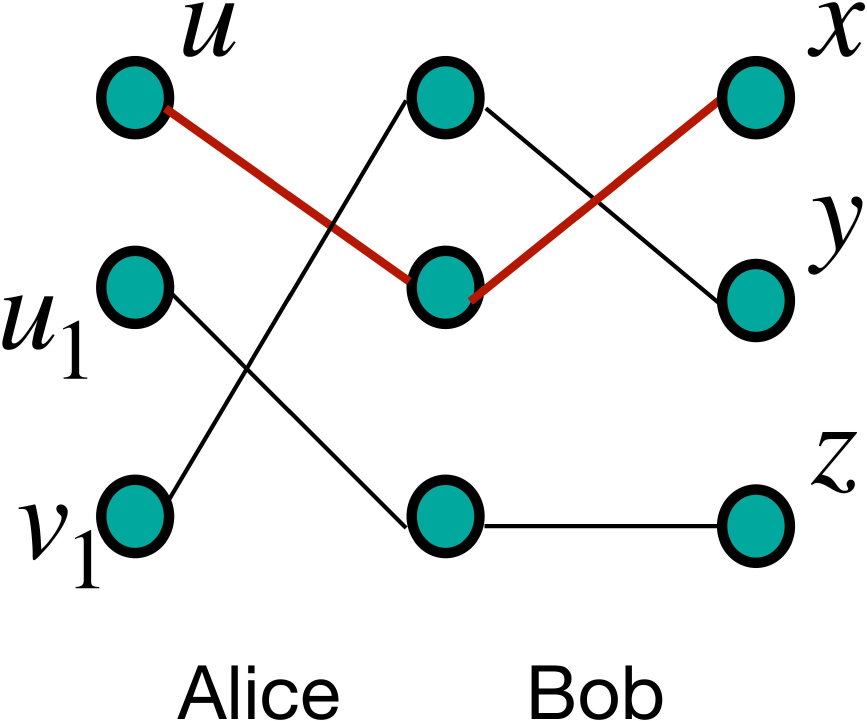}
		\caption{One instance of $\NPC$ from paired pointer chasing.}
	\end{subfigure}
	\hfill
	\begin{subfigure}[b]{0.45\textwidth}
		\centering
		\includegraphics[scale=0.30]{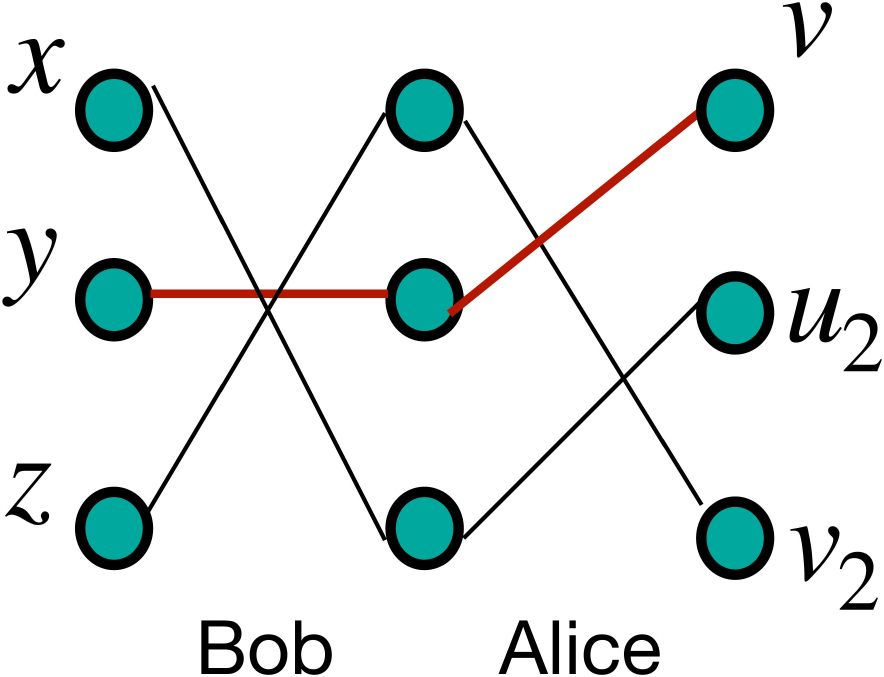}
		\caption{Other instance of $\NPC$ from paired pointer chasing.}
	\end{subfigure}
	
	\begin{subfigure}[b]{\textwidth}
		\centering
		\includegraphics[scale=0.25]{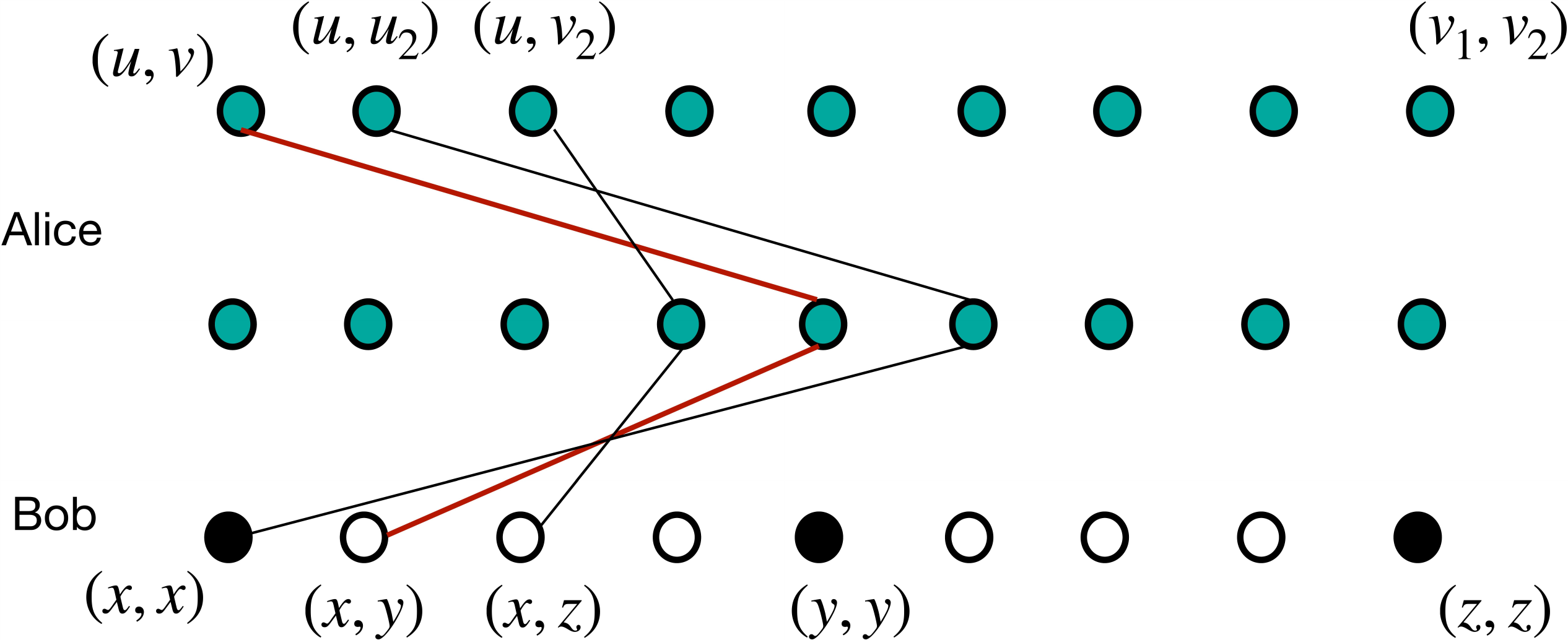}
		\caption{Product of the two instances. Pink vertices in the last layer highlight the end points when the two instances are equal. Some edges and labels of vertices are omitted.}\label{fig:cart-prod-3}
	\end{subfigure}
	\caption{Paired pointer chasing instances with $n = 9, k = 2$. 
		All four matchings are chosen uniformly at random when $\rb = 0$, and all four are chosen uniformly at random conditioned on the vertex reachable from $(u, v)$ being blank when $\rb = 1$.}
	\label{fig:cart-prod}
\end{figure}

There are $n/(k+1)$ vertices in the last layer in the new product graph of the form $(x,x)$ for some vertex $x$; these vertices correspond to the case when the two pointers are equal. 
We can reduce the problem to finding whether the vertex $(s,t)$ in the first layer has a path to one of these $n/(k+1)$ different vertices in the last layer. If the distribution of the edges in the product graph were uniform as well, we would be done now through existing techniques for analyzing sparse pointer chasing or sparse index problems \cite{AssadiKL16,GhaziKS16,Saglam19}.  Alas, this is not true as the $n^2/(k+1)^2$ edges between each layer originate only from two matchings of size $n/(k+1)$.

We carefully choose a subset of vertices in the last layer, termed a \textbf{rook set} (see \Cref{def:rook-set}), so that, conditioned on the final pointer of $(u, v)$ being inside a rook set, the matchings are indeed uniformly random. 
Let us call the set of vertices in the last layer that correspond to the instances having the same pointer as \textbf{one-vertices}, i.e., vertices of the form $(w,w)$. The rest of the vertices are called \textbf{zero-vertices}. In the last layer of \Cref{fig:cart-prod-3}, one-vertices are black, and others are blank.

\begin{wrapfigure}{r}{0.4\textwidth}
	\centering
	\includegraphics[scale=0.25]{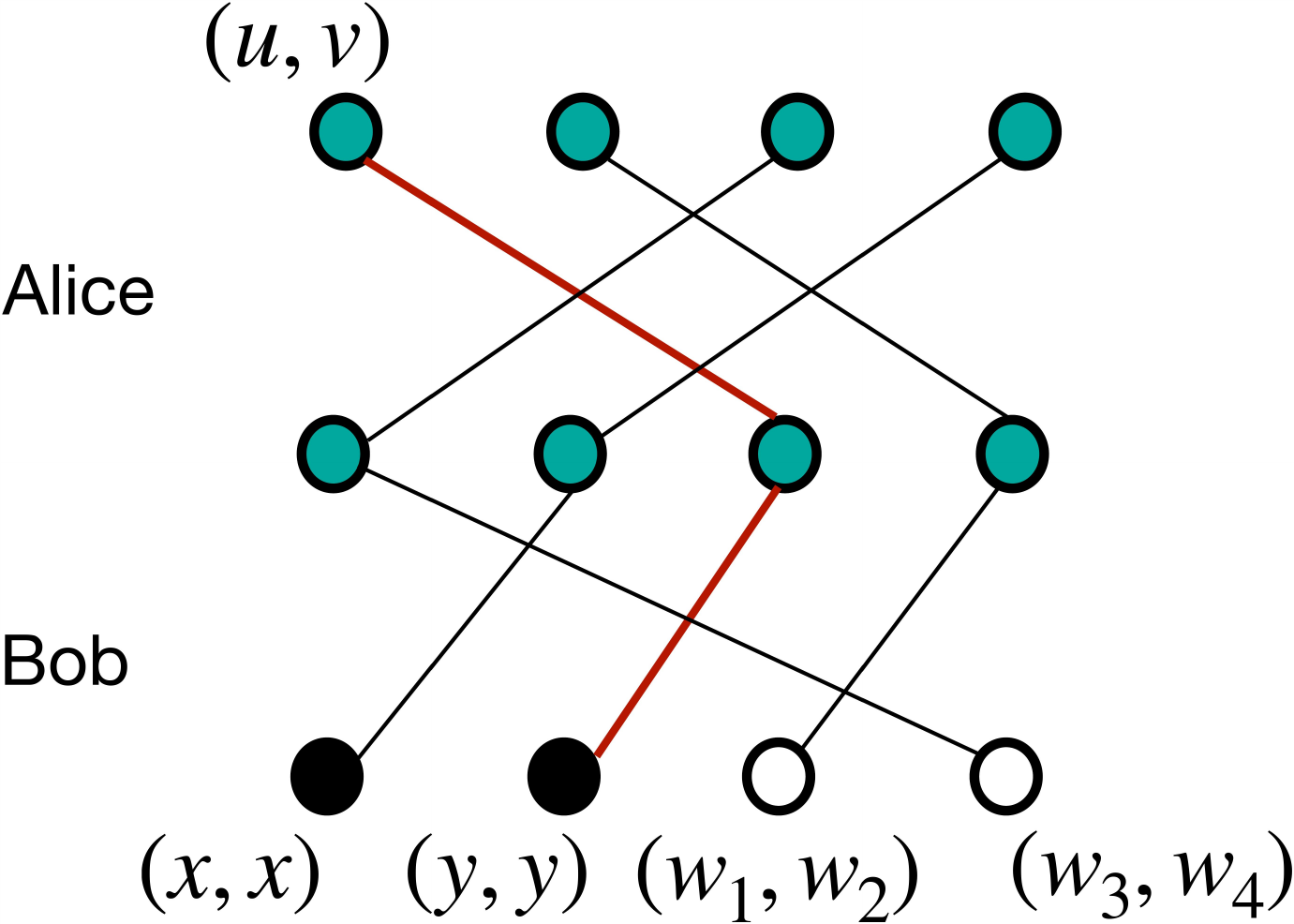}
	\caption{A rook set of size four. The vertex from $(u, v)$ in the last layer is a one vertex. The labels of $x, y, w_1, w_3$ are distinct. The labels of $x, y, w_2$, and $w_4$ are also distinct. Other vertices and edges in the graph are omitted. \vspace{-25mm}}\label{fig:rook-set}
\end{wrapfigure}

A rook set has the following properties:
\begin{enumerate}[label=$(\roman*)$]
	\item Size of $\Omega(n/k)$, ensuring that the final pointer from $(u, v)$ lands in the rook set with $\Omega(1)$ probability. 
	
	\item Equal number of one-vertices and zero-vertices, so that conditioned on the pointer from $(u, v)$ landing in the set,  the value of $\rb$ is still almost uniform. 
	\item For each vertex $x$ in the last layer of the first $\NPC$ instance, at most one vertex $(x,w)$ is present in the rook set. Similarly, for each vertex $y$ in the last layer of the second $\NPC$ instance, at most one $(w,y)$ is present in the set. This is to ensure that the edges do not interfere with each other.  In fact, this property gives us the name rook set as well, see\\ \Cref{foot:rook-set} for more details.
\end{enumerate}

\vspace{2mm}

We show that such a set exists with some basic combinatorial tools. Refer to \Cref{fig:rook-set} for a simple illustration. The final piece of the proof is to show that $\NPC$ requires $\Omega(n/\poly(k))$ communication when the matchings are chosen uniformly at random. This is given in \Cref{lem:unif-pc} and \Cref{app:unif-pc}, and the proof follows almost directly from \cite{AssadiN21}.\footnote{The only difference is that \cite{AssadiN21} proves the lower bound for multiple players, whereas we only have two players. }

\subsubsection*{Comparison with \cite{GuruswamiO13}.}
We conclude our overview by giving a schematic comparison of our work with \cite{GuruswamiO13}, which also serves as a summary of the main steps. 
\renewcommand{\arraystretch}{1.4}
\begin{center}
	\begin{tabular}{|p{7cm}|p{7cm}|}
		\hline
		\textbf{Work of \cite{GuruswamiO13}} & \textbf{Our Work} \\
		\hline
		Reduction from exact shortest path distance to $\OR$ of equality of $\NPC$&   Similar reduction from estimation of shortest path distance to distributional $\OR$ of paired pointer chasing\\
		\hline
		Ad-hoc techniques with direct sum and message compression to reduce $\OR$ to one equality of $\NPC$ with a focus on the uniform distribution & Direct sum and message compression to reduce $\OR$ to one paired pointer chasing on a non-product distribution\\
		\hline
		Lower bound on the uniform product distribution in the equality of $\NPC$  & Lower bound on a correlated distribution for paired pointer chasing \\
		\hline
		$\NPC$ input split between many players & Only two players, Alice and Bob  \\
		\hline
	\end{tabular}
\end{center}

\clearpage
	
	\section{Preliminaries}\label{sec:prelim}

\paragraph{Notation.}
We use $[n]$ denote the set of numbers $\{1, \dots, n\}$. We use \textsf{sans-serif} font to denote random variables throughout this paper. 
 We use $\log(\cdot)$ to denote the  binary logarithm and $\ln(\cdot)$ to denote the natural logarithm. We use $\exp(\cdot)$ to denote the exponential function,  $\exp(x) = e^x$.

Given two graphs $G_1 = (V, E_1)$ and $G_2 = (V, E_2)$ over the same vertex set $V$, we use $G_1 \cup G_2$ to denote the edge union of graphs $G_1$ and $G_2$, i.e., $G_1 \cup G_2 = (V, E_1, \cup E_2)$. Likewise, we define $G_1 \cap G_2$ as $(V, E_1 \cap E_2)$.  Given a (possibly weighted) graph $G$ containing vertices $u$ and $v$, we use $\dist_G(u, v)$ to denote the weighted shortest path distance between $u$ and $v$ in $G$.

Unless stated otherwise, we measure  space complexity in our algorithms and lower bounds in terms of machine words each having $O(\log n)$ bits (for our algorithms on weighted graphs, we 
assume the weights can fit a single word, either by limiting weights to be $\poly(n)$-bounded or letting words be larger to fit a single edge weight). 
We say that an event holds with \emph{exponentially high probability} if it holds with probability at least $1-\exp(-cn)$ for some constant $c > 0$.

\subsection*{Graph Primitives}

We will need a standard theorem from graph theory that counts the number of distinct spanning trees of a complete graph with labeled vertices. 

\begin{fact}[Cayley's Tree Formula \cite{borchardt1861interpolationsformel}] \label{fact:treeform}
	The number of spanning trees of a complete graph with labeled vertices is $n^{n-2}$. 
\end{fact}

Additionally, our upper bounds will use graph spanners, which are sparse subgraphs that approximately preserve distances in the original graph. More formally, given a (possibly weighted) graph $G = (V, E)$ and an integer $k \ge 1$, a subgraph $H \subseteq G$ is a $k$-spanner of $G$ if $\dist_H(s, t) \le k \cdot \dist_G(s, t)$ for all $s, t \in V$. We will rely on existing space-efficient  streaming algorithms for computing high-quality graph spanners in a single pass.

\begin{proposition}[\!\!\!\cite{FeigenbaumKMSZ05,jin2024streaming}] \label{prop:spanner}
	For any $k \ge 1$, there is a one-pass streaming algorithm for computing a $(2k-1+\eps)$-spanner of size $O(\eps^{-1}n^{1+1/k}\log n)$ of a weighted graph using  $O(\eps^{-1}n^{1+1/k}\log n)$ words. 
\end{proposition}

\subsection*{Model of Communication}
We work with the standard two party model of communication with players Alice and Bob. 
There is a function $f:\cX \times \cY \rightarrow \{0,1\}$ that Alice and Bob want to compute. Alice is give $X \in \cX$ and Bob is given $Y \in \cY$. Alice and Bob have a shared tape of random bits they can access called public randomness. They also have their own private sources of randomness.

The communication happens in rounds, and there is one message sent by Alice or Bob to the other player in each round. The players alternate who sends the message over the rounds, and \textbf{Bob speaks first}. More specifically, Bob sends a message in the first round, and in all subsequent odd numbered rounds. Alice sends a message in all even numbered rounds. At the end of all communication, the player who receives the last message is required to output $f(X, Y)$.

A protocol $\prot$ is said to succeed with probability $p$ if, for all $(X,Y) \in \cX \times \cY$, the output of $\prot$ is the same as $f(X, Y)$ with probability at least $p$. The communication cost of protocol $\prot$ is the maximum length of any message sent over all the rounds in the protocol. 

We use information-theoretic techniques extensively in the proof of our lower bound. Basic definitions and standard results from information theory can be found in \Cref{app:info}.

%
%

\clearpage
	
	

\newcommand{\pr}[1]{p^{(#1)}}
\newcommand{\Fr}[1]{F^{(#1)}}
\newcommand{\Tr}[1]{T^{(#1)}}

\newcommand{\qr}[1]{q^{(#1)}}
\newcommand{\Qr}[1]{Q^{(#1)}}
\newcommand{\Hr}[1]{H^{(#1)}}

\newcommand{\TT}{\ensuremath{\mathcal{T}}}

\section{The Upper Bound}\label{sec:upper_rand}

We now present our streaming algorithm for single-source shortest paths. This section focuses on our randomized approach for insertion-only streams, while we defer the (standard) extensions to deterministic algorithms and dynamic streams to~\Cref{app:extension}. The following theorem formalizes the first part of~\Cref{res:upper}.

\begin{theorem}\label{thm:upper-main}
	Let $G = (V, E, w)$  be an 
	 $n$-vertex graph  with non-negative edge weights presented in 
	an insertion-only stream.
	Given a source vertex 
	  $s \in V$ and  parameters $k \in [\ln{n}]$ and $\eps \in (0,1)$, 	there is a randomized streaming algorithm that 
	 outputs a $(1+\eps)$-approximate shortest path tree rooted at source $s$ using 
	\[
		O\paren{\frac{k}{\eps} \cdot n^{1+1/k} \cdot \log{n}}~\text{space} \quad \text{and} \quad O\left(\frac{k^2}{\eps}\right) ~\text{passes},
	\]
	with probability at least $1-n^{-\Omega(n)}$.
\end{theorem}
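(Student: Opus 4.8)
The plan is to implement the high-level algorithm sketched in the technical overview, following the multiplicative-weights-update (MWU) framework of \cite{Assadi24}, and to carefully control the extra integer-valued ``overshoot'' that triangle-inequality violations can exhibit. Concretely, I would maintain edge weights (importance values) $w_e^{(r)}$ over $R = \Theta(k^2/\eps)$ rounds; all initialized to $1$. In round $r$, I would use one pass to sample each edge of $G$ independently with probability proportional to $\min\{1, c \cdot n \cdot w_e^{(r)}/(\eps \cdot W^{(r)})\}$ where $W^{(r)} = \sum_e w_e^{(r)}$, giving a set $F^{(r)}$ of $\Ot(n/\eps)$ edges in expectation (with a Chernoff bound to control the size, which is where the exponentially-high-probability statement comes from). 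I would compute a shortest-path tree $T^{(r)}$ of $H \cup F^{(r)}$ rooted at $s$, where $H$ is the $O(\log n)$-spanner from \Cref{prop:spanner} computed in the first pass (for $k$ roughly $\log n$; for general $k$ one uses a $(2k{-}1)$-spanner and the analysis carries the corresponding factor). Then I would double $w_e^{(r)}$ for every $e=(u,v)$ violating $\card{\dist_{T^{(r)}}(s,u) - \dist_{T^{(r)}}(s,v)} \le w(u,v)$. Since $T^{(r)} \supseteq H$, every edge's violation is bounded by the spanner stretch, i.e., at most an $O(\log n)$ multiplicative factor on $w(u,v)$ — this boundedness is the crucial ingredient that keeps the MWU width (and hence $R$) polylogarithmic.

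The first key lemma I would prove is a \emph{sampling lemma} (the analogue of \cite[Lemma on sampled subgraphs]{Assadi24}): if $F$ is sampled as above with a large enough constant, then with high probability the total importance weight of edges of $G$ violating the triangle inequality w.r.t.\ the shortest-path tree of $H \cup F$ is at most $\eps$-fraction of $W$. The natural way to argue this is a ``no net progress'' / exchange argument: if many high-importance edges were violated, then adding a sampled one of them would strictly decrease some $\dist(s,v)$, and by standard Bellman–Ford-type monotonicity an appropriate potential (e.g.\ $\sum_v \dist_T(s,v)$ over the sampled graph, or a witness-based charging to the true shortest-path tree) could not have stabilized; one turns this into a probabilistic statement by a union bound over the (polynomially many) relevant ``thresholds''. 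The second key lemma is the MWU potential argument: tracking $\Phi^{(r)} = \log W^{(r)}$, each round the spanner-boundedness gives that doubling the violated edges increases $\Phi$ by at most $O(\log\log n / k)$ times... more precisely, the standard MWU bookkeeping with width $\rho = O(\log n)$ (for the $k\approx\log n$ case) and per-round progress $\eps$ yields that after $R = O(\rho \log n / \eps) = O(\log^2 n/\eps)$ rounds — or $O(k^2/\eps)$ in terms of the spanner parameter — on average over the $R$ trees, each edge $e$ is violated in at most an $\eps$-fraction of rounds, and moreover the ``average violation amount'' is at most $(1+\eps) w(e)$.

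From here I would extract the final tree: let $H^\star = \bigcup_{r} T^{(r)}$, which has $\Ot(n/\eps)$ edges total (so it fits in memory), and output its shortest-path tree $T^\star$ rooted at $s$. Correctness is the statement that for every $v$, $\dist_{T^\star}(s,v) \le (1+O(\eps)) \dist_G(s,v)$; I would prove this by walking along a true shortest $s$–$v$ path $e_1, \dots, e_\ell$ in $G$, and using the ``$\eps$-fraction of rounds violated, $(1+\eps)$ on average when violated'' guarantee together with an averaging/pigeonhole over rounds: there exists a round $r$ (or one can sum telescoping inequalities along the path across rounds) in which the path's edges collectively satisfy the triangle inequality up to a $(1+O(\eps))$ factor, so $\dist_{T^{(r)}}(s,v) \le (1+O(\eps)) \dist_G(s,v)$, and $T^\star \supseteq$-reachability dominates $T^{(r)}$ so $\dist_{T^\star}(s,v) \le \dist_{H^\star}(s,v) \le \dist_{T^{(r)}}(s,v)$. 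Rescaling $\eps$ by a constant gives the clean $(1+\eps)$ bound. The space bound is immediate: $O(\eps^{-1} k n^{1+1/k}\log n)$ for the spanner, $\Ot(n/\eps)$ per sampled set with $R$ of them but they are accumulated into $H^\star$ of the same total size, and the pass count is $R+O(1) = O(k^2/\eps)$.

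The main obstacle I expect is making the MWU analysis \emph{quantitatively} give both the $\eps$-fraction-of-rounds bound \emph{and} the ``violated by at most $(1+\eps)$ on average'' bound simultaneously, since these are not the standard single scalar-constraint MWU conclusion — one has to set up the right vector of ``constraints'' (one soft constraint per edge, each measuring a real-valued violation clipped at the spanner bound) and argue the potential drop carefully; relatedly, proving the sampling lemma with the correct dependence (so that it composes cleanly with the potential argument without losing extra $\log$ factors) is delicate because the violated-edge set depends on the random sample in a non-monotone way, so the union bound over thresholds must be done with care. The extensions to dynamic streams (via $\ell_0$-samplers / linear sketches for the per-round sampling and for maintaining the spanner) and to a deterministic algorithm (via a deterministic spanner and derandomizing the sampling with bounded independence or an explicit sparsification, at a $\poly(\log n, 1/\eps)$ space cost) I would handle separately in the appendix as indicated, since they do not affect the pass bound.
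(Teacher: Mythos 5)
Your high-level plan matches the paper's: compute a $2k$-spanner $H$, maintain per-edge importances, sample each round proportionally to importance, find a shortest-path tree of the spanner plus the sample, increase the importance of triangle-inequality violators, and finally output a shortest-path tree of the union of all computed trees. The spanner's role of capping each violation at a $2k$ factor, the potential argument bounding $Q^{(R+1)}$, and the averaging/telescoping extraction (picking the best round, which does at least as well as the expectation over a random round) are all as in the paper. However, two quantitative steps in your sketch would not produce the stated $O(k^2/\eps)$ pass bound for general $k \in [\ln n]$.

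First, you \emph{double} the importance of a violating edge. With doubling, an edge that fails $f$ times has $q_e \ge 2^f$, while the sampling lemma bounds $\log Q^{(R+1)} = O(\log n)$ for $R = O(k^2/\eps)$; this forces $f = O(\log n)$, not $f = O(k)$. Consequently the per-edge expected violation is $w + \frac{f}{R}\cdot 2k\,w = w\bigl(1 + O(\eps \cdot \tfrac{\log n}{k})\bigr)$, which is $(1+\eps)$ only when $k = \Theta(\log n)$; for smaller $k$ you would need $R = \Omega(k\log n/\eps)$. The paper instead multiplies by $1+n^{1/k}$, so each failure costs $\approx (\log n)/k$ in log-importance and $f = O(k)$; the price is that the sampling probability must carry an extra $n^{1/k}$ factor (hence the $O(\tfrac{k}{\eps} n^{1+1/k}\log n)$ space, not $\Ot(n/\eps)$), which is exactly the space–pass trade-off in the theorem. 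Your sketch has the cost (spanner stretch) bounded but not the \emph{granularity} of progress per failure, which is the crucial new knob. Second, your sampling lemma sketch proposes ``a union bound over the (polynomially many) relevant thresholds''; the paper's proof union-bounds over all $n^{n-2}$ labeled spanning trees of $G$ (Cayley's formula), which is why the sample size must include an $n\log n$ factor so that each fixed ``bad'' tree is missed with probability $\exp(-n\log n)$. A polynomial union bound does not account for the fact that the shortest-path tree chosen in a round is itself a function of the random sample; the enumeration must be over all candidate trees (or at least all ``bad'' ones, of which there are exponentially many). Both gaps are localized and the rest of your extraction and streaming-implementation plan is sound.
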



We first present our algorithm and its analysis independently of any computational model, and then verify that it can be implemented in the streaming model efficiently in \Cref{lem:upper-stream}.

At a high level, the algorithm begins by computing an $O(k)$-spanner $H$ of the input graph $G$. Then, it goes through $R=O(k^2/\eps)$ rounds and in each round $r \in [R]$, 
it samples each edge $e \in E$ with some prescribed probability $\pr{r}_e$ into a set $\Fr{r} \subseteq E$, which we use to locally compute an exact shortest path tree $\Tr{r}$ of $H \cup \Fr{r}$. 
Finally, we increase the sampling probabilities of any edge in $E$ that violates triangle inequality with respect to the computed tree $\Tr{r}$ for the next round and continue. 
At the end, we compute another shortest path tree on all existing trees $\Tr{r}$ for $r \in [R]$.

\begin{Algorithm}\label{alg:s_t} \normalfont
	An algorithm for approximating shortest paths in a graph $G=(V,E,w)$ from a source $s \in V$ (with given integer $1 \leq k \leq \ln{n}$ and real $\eps \in (0,1)$).

	\begin{enumerate}
		\item Compute a $2k$-spanner $H = (V, E_H)$ of $G$. 
		\item Assign each edge $e \in E$ an importance attribute $\qr{1}_e := 1$,  and let $\Qr{1} := \sum_{e \in E} \qr{1}_e$.
		\item For rounds $r=1$ to $R := 10  k^2/\eps$:
		\begin{enumerate}
			\item \label{alg:3a} Sample $\Fr{r} \subseteq E$ by picking each edge $e \in E$ independently and with probability 
			\begin{equation} \label{eq:p_e}
				\pr{r}_e :=  \min\set{1,\paren{\frac{10}{\eps} \cdot k \cdot n^{1+1/k} \log n} \cdot \frac{\qr{r}_e}{\Qr{r}}}.
			\end{equation}
			\item \label{alg:1} Let $\Hr{r} := (V, E_H \cup  F^{(i)})$; compute a shortest path tree $\Tr{r}$ in $\Hr{r}$ rooted at $s$.       
			\item\label{alg:3c} For every edge $e=(u, v) \in E$, if 
			\[
				\card{\dist_{\Tr{r}}(s, u) - \dist_{\Tr{r}}(s, v)} > w(u, v)
			\]
			let $\qr{r+1}_e := (1+n^{1/k}) \cdot \qr{r}_e$ and otherwise $\qr{r+1}_e = \qr{r}_e$. Let $\Qr{r+1} := \sum_{e \in E} \qr{r+1}_e$. 
		\end{enumerate}
		\item Return a shortest path tree rooted at $s$ in the graph $H^* := \Tr{1} \cup \dots \cup \Tr{R}$. 
	\end{enumerate}
\end{Algorithm}

We now analyze the algorithm. We first need the following ``sampling lemma'' that uses the randomness of $\Fr{r}$ to argue that
the tree $\Tr{r}$ computed by the algorithm in round $r \in [R]$ cannot lead to a ``large'' increase in the total importances. 

\begin{lemma}[Sampling Lemma] \label{lem:sampling-lemma}
	For every $r \in [R]$, with probability at least $1-n^{-\Omega(n)}$, 
	\[
	\Qr{r+1} \leq \paren{1+\frac{\eps}{10k}}  \cdot \Qr{r}.
	\]
\end{lemma}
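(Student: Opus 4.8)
The plan is to bound the expected increase in total importance $\Qr{r+1}$ relative to $\Qr{r}$, and then apply a concentration inequality to get the high-probability statement. The key observation is that $\Qr{r+1} - \Qr{r} = n^{1/k} \cdot \sum_{e \in B^{(r)}} \qr{r}_e$, where $B^{(r)} \subseteq E$ is the (random) set of edges that violate the triangle inequality with respect to the tree $\Tr{r}$. So it suffices to show that $\sum_{e \in B^{(r)}} \qr{r}_e$ is small, ideally at most roughly $\frac{\eps}{10 k \cdot n^{1/k}} \cdot \Qr{r}$, with high probability.

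The heart of the argument is a \emph{deterministic} structural claim: conditioned on any fixed outcome of the sampling in rounds $1, \dots, r-1$ (so that the probabilities $\pr{r}_e$ and importances $\qr{r}_e$ are fixed), if the sampled set $\Fr{r}$ happens to contain ``enough'' of the high-importance edges, then $\Tr{r}$ can only be violated by edges whose total importance is small. The clean way to see this: if an edge $e = (u,v) \in E$ violates $\card{\dist_{\Tr{r}}(s,u) - \dist_{\Tr{r}}(s,v)} > w(u,v)$, then $e$ provides a genuine shortcut — adding $e$ to $\Hr{r}$ strictly decreases some distance from $s$. Crucially, \emph{if $e \in \Fr{r}$} then $e \in \Hr{r}$ and the shortest path tree $\Tr{r}$ of $\Hr{r}$ already accounts for $e$, so $e$ cannot violate the triangle inequality. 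Hence every violating edge is outside $\Fr{r}$. Now I would argue that the event ``$\sum_{e \in B^{(r)}} \qr{r}_e$ is large'' forces the sampler to have missed a set $S \subseteq E$ of total importance $\gtrsim \frac{\eps}{10k \cdot n^{1/k}} \cdot \Qr{r}$, and each such edge $e$ was missed with probability $1 - \pr{r}_e$; but by the definition of $\pr{r}_e$ in \Cref{eq:p_e}, an edge with importance $\qr{r}_e$ comparable to this threshold is sampled with probability close to $1$. The failure probability needs to be exponentially small in $n$, which is where the spanner-free part of the analysis must be careful: we cannot afford a union bound over all edges directly, so instead I would union-bound over all \emph{possible shortest path trees} $\Tr{r}$ — by Cayley's formula (\Cref{fact:treeform}) there are at most $n^{n}$ of them — and for each fixed candidate tree $\Tau$, bound the probability that $\Tr{r} = \Tau$ \emph{and} the violating edges of $\Tau$ have large importance, then show this is at most $n^{-\Omega(n)}$ so the union bound over $n^n$ trees still leaves $n^{-\Omega(n)}$.

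More precisely, for a fixed candidate tree $\Tau$, let $B_\Tau = \{ e \in E : e \text{ violates triangle inequality w.r.t.\ } \Tau\}$. If $\Tr{r} = \Tau$ then (as noted) $B_\Tau \cap \Fr{r} = \emptyset$, i.e.\ none of the edges of $B_\Tau$ were sampled. So
\[
\Pr\!\bracket{\Tr{r} = \Tau \text{ and } \textstyle\sum_{e \in B_\Tau}\qr{r}_e \ge \frac{\eps}{10 k\cdot n^{1/k}}\Qr{r}} \le \Pr\!\bracket{B_\Tau \cap \Fr{r} = \emptyset}
\]
whenever $\sum_{e\in B_\Tau}\qr{r}_e \ge \frac{\eps}{10k\cdot n^{1/k}}\Qr{r}$. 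The right-hand side is $\prod_{e \in B_\Tau}(1-\pr{r}_e)$. Using $1 - x \le e^{-x}$, this is at most $\exp(-\sum_{e \in B_\Tau}\pr{r}_e)$; and since $\pr{r}_e \ge \frac{10k\, n^{1+1/k}\log n}{\eps}\cdot \frac{\qr{r}_e}{\Qr{r}}$ (when the min is not active; edges where the min caps at $1$ are automatically in $\Fr{r}$ with certainty and cannot be in $B_\Tau$), we get $\sum_{e \in B_\Tau}\pr{r}_e \ge \frac{10k\,n^{1+1/k}\log n}{\eps} \cdot \frac{1}{\Qr{r}}\sum_{e\in B_\Tau}\qr{r}_e \ge n \cdot n^{1/k}\log n \cdot \frac{\eps}{\eps} \cdot (\text{const}) = \Omega(n\log n)$. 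Thus each term is $\exp(-\Omega(n \log n)) = n^{-\Omega(n)}$, and a union bound over the at most $n^n$ candidate trees $\Tau$ (and over the two cases of whether $\sum_{e \in B_\Tau}\qr{r}_e$ meets the threshold) gives that with probability $1 - n^{-\Omega(n)}$, the actual tree $\Tr{r}$ satisfies $\sum_{e \in B^{(r)}}\qr{r}_e < \frac{\eps}{10k \cdot n^{1/k}}\Qr{r}$, hence $\Qr{r+1} = \Qr{r} + n^{1/k}\sum_{e\in B^{(r)}}\qr{r}_e \le (1 + \frac{\eps}{10k})\Qr{r}$, as claimed.

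I expect the main obstacle to be getting the constants and the role of the spanner $H$ exactly right. The spanner is present so that $\Hr{r} \supseteq H$ always — this is not needed for \emph{this} lemma (where $H$ only helps in later parts of the proof of \Cref{thm:upper-main}), but one must be careful that the argument above never uses any property of $\Fr{r}$ beyond ``an edge in $\Fr{r}$ cannot violate the triangle inequality w.r.t.\ $\Tr{r}$'', which holds because $\Fr{r} \subseteq \Hr{r}$ regardless of $H$. A second subtlety is handling edges $e$ with $\pr{r}_e = 1$ (the min is active): these are sampled deterministically, so they never lie in any $B_\Tau$ that could equal $B^{(r)}$, and they should simply be excluded from the product over $B_\Tau$ — equivalently, if $e \in B_\Tau$ with $\pr{r}_e = 1$ then $\Pr[B_\Tau \cap \Fr{r} = \emptyset] = 0$ and the bound is trivial. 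Finally, one should double-check that $n^n \cdot n^{-cn\log n} = n^{-\Omega(n)}$ with room to spare, which it is, so the constant $10$ in $R$ and in \Cref{eq:p_e} is comfortably sufficient.
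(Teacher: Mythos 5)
Your proposal is correct and follows essentially the same route as the paper's proof: define bad trees as those whose triangle-inequality-violating edges have total importance exceeding $\frac{\eps}{10k\cdot n^{1/k}}\Qr{r}$, observe that if $\Tr{r}=\TT_j$ then $B(\TT_j)\cap\Fr{r}=\emptyset$ (since any edge in $\Hr{r}$ satisfies the triangle inequality with respect to a shortest path tree of $\Hr{r}$), bound the probability of this miss event by $\exp(-n\log n)$, and union-bound over the $\le n^{n}$ candidate spanning trees via Cayley's formula. Your two noted subtleties (the spanner $H$ being irrelevant here, and the trivial handling of edges with $\pr{r}_e=1$) are both correct observations; the paper's write-up elides the second but it is harmless for exactly the reason you give.
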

\begin{proof}
	Fix any $r \in [R]$. 
	Let $\TT_1, \dots, \TT_t$ denote the set of all spanning trees of $G$; by Cayley's tree formula (\Cref{fact:treeform}), we have $t \le n^{n-2}$. 
	For a tree $\TT_j$ for $j \in [t]$, we define the set $B(\TT_j) \subseteq E$ as: 
	\[
		B(\TT_j) := \set{(u,v) \in E ~\text{s.t.}~ \card{\dist_{\TT_j}(s, u) - \dist_{\TT_j}(s, v)} > w(u, v)};
	\]
	namely, these are the ``bad'' edges of $G$ that will have their importances increased in~\Cref{alg:s_t}, \emph{if} the algorithm chooses $\TT_j$ 
	as the shortest path tree $\Tr{r}$ in round $r$. 
	
	We say a tree $\TT_j$ is \textbf{bad} iff 
	\begin{equation} \label{eq:B_j-importance}
		\sum_{e \in B(\TT_j)} \qr{r}_e > \frac{\eps}{10k \cdot n^{1/k} } \cdot \Qr{r},
	\end{equation}
	namely, the total importances of the edges $B(\TT_j)$ is ``high''. We similarly say that the computed tree $\Tr{r}$ is bad if $\Tr{r} = \TT_j$ for some bad tree $\TT_j$. 
	Notice that if the computed tree $\Tr{r}$ of round $r$ is \emph{not} bad, then we will have 
	\begin{align*}
		\Qr{r+1} &= \sum_{e \in E} \qr{r+1}_e = \sum_{e \notin B(\Tr{r})} \qr{r}_e + \sum_{e \in B(\Tr{r})} (1+n^{1/k}) \cdot \qr{r}_e \\
		&= \sum_{e \in E} \qr{r}_e + \sum_{e \in B(\Tr{r})} n^{1/k} \cdot \qr{r}_e \leq \Qr{r} + \frac{\eps}{10k} \cdot \Qr{r}, 
	\end{align*}
	by~\Cref{eq:B_j-importance} in the last step for the tree $\Tr{r}$ (which we assumed is not bad). Thus, to establish the claim, we only need to prove that with exponentially high probability, the tree $\Tr{r}$ 
	is not bad. To do so, we bound the probability that any bad tree $\TT_j$ can be a shortest path tree in the graph $\Hr{r}$, using the randomness of the sample $\Fr{r}$, 
	and union bound over all bad trees to conclude the proof. 
	
	Fix a bad tree $\TT_j$. For $\TT_j$ to be a possible shortest path tree in $\Hr{r}$, it should happen that none of the edges in $B(\TT_j)$ are sampled in $\Fr{r}$. This is because 
	for every edge $(u,v) \in \Fr{r} \subseteq \Hr{r}$, 
	\[
		\card{\dist_{\Hr{r}}(s, v) -  \dist_{\Hr{r}}(s, u) } \leq w(u, v),
	\]
	by the triangle inequality, and any shortest path tree of $\Hr{r}$ will also have to satisfy this triangle inequality (as $\{s\} \times V$  distances in the tree and $\Hr{r}$ are the same). 
	As such, 
	\begin{align*}
		\Pr\paren{\Tr{r} = \TT_j} &\leq \Pr\paren{\Fr{r} \cap B(\TT_j) = \emptyset} = \prod_{e \in B(\TT_j)} (1-\pr{r}_e) \tag{by the independence and sampling probabilities of edges} \\
		&\leq \exp\paren{-\sum_{e \in B(\TT_j)} {\frac{10}{\eps} \cdot k \cdot n^{1+1/k} \log n} \cdot \frac{\qr{r}_e}{\Qr{r}}} \tag{since $1+x \leq e^x$ and by~\Cref{eq:p_e} for values of $\pr{r}_e$} \\
		&\leq \exp\paren{-{\frac{10}{\eps} \cdot k \cdot n^{1+1/k} \log n} \cdot \frac{1}{\Qr{r}} \cdot \frac{\eps}{10 \cdot n^{1/k} \cdot k} \cdot \Qr{r}} \tag{since $\TT_j$ is bad and by~\Cref{eq:B_j-importance}} \\
		&= \exp\paren{-n\log{n}}. 
	\end{align*}
	A union bound over at most $t \leq 2^{n\log{n}}$ bad trees implies that 
	\[
		\Pr\paren{\text{$\Tr{r}$ is bad}} \leq \sum_{\text{bad trees $\TT_j$}} \Pr\paren{\Tr{r} = \TT_j} \leq \paren{\frac{2}{e}}^{n\log{n}} =n^{-\Omega(n)},  
	\]
	concluding the proof. 
\end{proof}

To continue, we need a quick definition. We say an edge $e=(u,v) \in E$ \textbf{fails} in round $r \in [R]$ of \Cref{alg:s_t} iff 
\[
\card{\dist_{\Tr{r}}(s, u) - \dist_{\Tr{r}}(s, v)} > w(u, v),
\]
namely, it violates the triangle inequality with respect to $\Tr{r}$ and  has its importance increased by a factor of $\approx n^{1/k}$. 
We use~\Cref{lem:sampling-lemma} to prove that no edge can fail in ``too many'' rounds. This will then be used to argue that the last step of the algorithm will return 
an approximate shortest path tree as desired.

\begin{claim} \label{clm:edge_fails}
	With probability $1-n^{-\Omega(n)}$, no edge $e \in E$ fails in more than ${\eps R}/({2k})$ rounds. 
\end{claim}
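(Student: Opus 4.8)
The plan is to derive the claim directly from the Sampling Lemma (\Cref{lem:sampling-lemma}) via a telescoping/potential argument on the quantity $\Qr{r}$. First I would condition on the event that the conclusion of \Cref{lem:sampling-lemma} holds for every round $r \in [R]$ simultaneously; since there are $R = 10k^2/\eps \le \poly(n)$ rounds and each fails with probability $n^{-\Omega(n)}$, a union bound keeps the total failure probability at $n^{-\Omega(n)}$, which matches the stated bound. On this good event we have $\Qr{r+1} \le (1 + \eps/(10k)) \cdot \Qr{r}$ for all $r$, and hence, since $\Qr{1} = \abs{E} \le n^2$,
\[
\Qr{R+1} \le \paren{1 + \frac{\eps}{10k}}^{R} \cdot \Qr{1} \le \exp\paren{\frac{\eps}{10k} \cdot R} \cdot n^2 = \exp\paren{k} \cdot n^2,
\]
using $R = 10k^2/\eps$. (If one prefers a cleaner bound, note $\exp(k)\cdot n^2 \le n^{k}\cdot n^2 = n^{k+2}$ for $k \le \ln n$ — though actually $\exp(k) \le n$ when $k \le \ln n$, so $\Qr{R+1} \le n^3$; I would use whichever is convenient.)

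Next I would track a single fixed edge $e \in E$. Each time $e$ fails in some round $r$, its importance is multiplied by $(1 + n^{1/k}) \ge n^{1/k}$; since $\qr{1}_e = 1$ and importances never decrease, if $e$ fails in $m$ rounds total then $\qr{R+1}_e \ge n^{m/k}$. But $\qr{R+1}_e \le \Qr{R+1} \le n^3$ (using the bound from the previous paragraph, with $k \le \ln n$ so $\exp(k) \le n$). Therefore $n^{m/k} \le n^3$, i.e. $m \le 3k$. This already gives an $O(k)$ bound on the number of failures per edge; to get the sharper $\eps R/(2k) = 5k$ claimed, I should double-check the constants — in fact $3k \le 5k = \eps R/(2k)$, so the crude bound $m \le 3k$ already suffices, and no union bound over edges is even needed for this step since the inequality $\qr{R+1}_e \le \Qr{R+1}$ holds deterministically once we are on the good event.

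So the overall structure is: (i) union bound over the $R$ rounds to get all Sampling Lemma conclusions simultaneously, with failure probability $n^{-\Omega(n)}$; (ii) telescope to bound $\Qr{R+1}$ by $\poly(n)$; (iii) observe that each failure of an edge multiplies its importance by $\ge n^{1/k}$, so more than $\eps R/(2k)$ failures would force $\qr{R+1}_e > \Qr{R+1}$, a contradiction. I do not expect any real obstacle here — this is the routine "MWU potential stays bounded, so no coordinate can be updated too often" argument in the style of \cite{Assadi24,AHK12}. The only things to be careful about are the constant bookkeeping (making sure $10k^2/\eps$ and the $\eps/(10k)$ growth factor combine so that the exponent on $n$ in $\Qr{R+1}$ is small enough that $m/k$ stays below $\eps R/(2k^2)\cdot k = 5k$) and the fact that we need $k \le \ln n$ (which is given in the hypotheses of \Cref{thm:upper-main}) to convert $\exp(\eps R/(10k)) = \exp(k)$ into a genuine $\poly(n)$ bound. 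I would state step (ii) as a short displayed inequality and step (iii) in one or two lines.
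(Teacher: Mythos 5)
Your proposal is correct and takes essentially the same approach as the paper's proof: union-bound the Sampling Lemma over all $R$ rounds, telescope to get $\Qr{R+1} \le e^k \cdot n^2 \le n^3$, and then note that each failure of an edge multiplies its importance by at least $n^{1/k}$, so too many failures would make $\qr{R+1}_e$ exceed $\Qr{R+1}$. The only cosmetic difference is that the paper frames step (iii) as a contradiction (assume $m > \eps R/(2k) = 5k$, derive $\Qr{R+1} \ge n^5 > n^3$), whereas you derive the tighter direct bound $m \le 3k$ and observe $3k \le 5k$; both are the same calculation.
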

\begin{proof} 
	By a union bound over the rounds of the algorithm and~\Cref{lem:sampling-lemma}, we have that with probability $1-n^{-\Omega(n)}$, 
	\[
		\Qr{R+1} \leq \paren{1+\frac{\eps}{10k}}^{R} \cdot \Qr{1} \leq \exp\paren{\frac{\eps}{10k} \cdot \frac{10}{\eps} \cdot k^2} \cdot \card{E} \leq e^{k} \cdot n^2 \leq n^3,  
	\]
	using the choice of $R = (10/\eps) \cdot k^2$ and since $\card{E} \leq n^2$ and $k \leq \ln{n}$ (the second inequality is by $1+x \leq e^x$ and since $\Qr{1} = \card{E}$ as $\qr{r}_e = 1$ for every edge $e \in E$ by design).

	Suppose towards contradiction that there exists an edge $e \in E$ that fails more than ${\eps R}/({2k})$ rounds of \Cref{alg:s_t}. In each round $r$ that the edge $e$ fails, we increase its importance 
	by a factor larger than $n^{1/k}$. Thus, in that case, and by the positivity of importances, 
	\[
		\Qr{R+1} \geq \qr{R+1}_e \geq (n^{1/k})^{\eps R/(2k)} = n^{\eps \cdot R/(2k^2)} = n^5, 
	\]
	by the choice of $R = (10/\eps) \cdot k^2$. This contradicts the upper  bound on $\Qr{R+1}$, implying that $e$ could have not failed in more than $\eps R/(2k)$ rounds. 
\end{proof}

We are now ready to prove the correctness of \Cref{alg:s_t}. 

\begin{lemma}\label{lem:alg-correct} 
	\Cref{alg:s_t} returns a $(1+\eps)$-approximate single-source shortest tree rooted at $s$ with probability at least $1-n^{-\Omega(n)}$.  
\end{lemma}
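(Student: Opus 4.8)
The plan is to show that $H^* = \Tr{1} \cup \dots \cup \Tr{R}$ contains, for every vertex $t \in V$, a path from $s$ to $t$ of length at most $(1+\eps) \cdot \dist_G(s,t)$, from which a shortest path tree of $H^*$ inherits the same guarantee. Fix a target $t$ and let $P = (s = v_0, v_1, \dots, v_\ell = t)$ be a shortest path from $s$ to $t$ in $G$, so $\dist_G(s,t) = \sum_{i=1}^{\ell} w(v_{i-1}, v_i)$. The key idea is an averaging argument over the $R$ rounds combined with \Cref{clm:edge_fails}: for each edge $e = (v_{i-1}, v_i)$ on $P$, $e$ fails in at most $\eps R/(2k)$ rounds, so by a union bound over the at most $\ell \le n$ edges of $P$, at most $n \cdot \eps R/(2k)$ (round, edge-on-$P$) pairs are ``failing''. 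Since this is far less than $R$ for reasonable parameters -- wait, that bound is too weak directly. Instead the right approach is: for a \emph{single} well-chosen round $r$, not every edge of $P$ need satisfy the triangle inequality; rather, I want to bound the total ``damage'' accumulated along $P$ across the rounds and then pick the best round.

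Here is the cleaner version of the argument I would carry out. In any round $r$, consider the tree $\Tr{r}$, which is a genuine shortest path tree of $\Hr{r} \supseteq H$. For each $i$, either edge $(v_{i-1},v_i)$ satisfies $|\dist_{\Tr{r}}(s,v_{i-1}) - \dist_{\Tr{r}}(s,v_i)| \le w(v_{i-1},v_i)$, in which case $\dist_{\Tr{r}}(s,v_i) \le \dist_{\Tr{r}}(s,v_{i-1}) + w(v_{i-1},v_i)$, or the edge fails, in which case I only know (using that $\Tr{r}$ is a shortest path tree in $\Hr{r}$ which contains the $2k$-spanner $H$) that $\dist_{\Tr{r}}(s,v_i) \le 2k \cdot \dist_G(s, v_i)$, i.e. the distance estimate in $\Tr{r}$ overshoots by at most an $O(k)$ factor relative to truth. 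Telescoping along $P$ in round $r$: $\dist_{\Tr{r}}(s,t) \le \dist_G(s,t) + \sum_{i : (v_{i-1},v_i) \text{ fails in round } r}(\text{extra cost of}\le 2k \cdot w(v_{i-1},v_i) \text{ or so})$. Averaging over all $R$ rounds and using \Cref{clm:edge_fails} (each edge fails $\le \eps R/(2k)$ times, so summing the $\le 2k \cdot w(e)$ penalty over failing rounds and dividing by $R$ gives $\le \eps \cdot w(e)$ per edge, hence $\le \eps \cdot \dist_G(s,t)$ total), there must exist a round $r^\star$ with $\dist_{\Tr{r^\star}}(s,t) \le (1+\eps)\dist_G(s,t)$. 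Since $\Tr{r^\star} \subseteq H^*$, we get $\dist_{H^*}(s,t) \le (1+\eps)\dist_G(s,t)$, and a shortest path tree of $H^*$ realizes distances at most this, for every $t$ simultaneously.

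I would carry the steps out in this order: (1) condition on the exponentially-high-probability event of \Cref{clm:edge_fails}; (2) fix $t$ and a true shortest path $P$; (3) in a fixed round $r$, telescope $\dist_{\Tr{r}}(s,\cdot)$ along $P$, splitting edges into ``good'' (triangle inequality holds) and ``failed'', and bound the per-failed-edge overshoot using that $\Tr{r}$ is a shortest path tree of a graph containing the $2k$-spanner $H$ (so $\dist_{\Tr{r}}(s,v) = \dist_{\Hr{r}}(s,v) \le \dist_H(s,v) \le 2k\,\dist_G(s,v)$); (4) sum over $r \in [R]$, swap the order of summation to group by edge of $P$, apply the $\le \eps R/(2k)$ failure bound per edge, and conclude the average over rounds of $\dist_{\Tr{r}}(s,t)$ is at most $(1+\eps)\dist_G(s,t)$ — wait, I need the penalty per failing round to be $O(k \cdot w(e))$ and $(\eps R /(2k)) \cdot O(k \cdot w(e)) / R = O(\eps) \cdot w(e)$, which works up to adjusting the constant in $R$; (5) pick the best round $r^\star$, note $\Tr{r^\star} \subseteq H^*$, and deduce the claim for all $t$ at once.

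The main obstacle I anticipate is pinning down the correct bound on the overshoot contributed by a single failing edge in a single round, and making the telescoping rigorous: a failing edge $(v_{i-1},v_i)$ does not give me $\dist_{\Tr{r}}(s,v_i) \le \dist_{\Tr{r}}(s,v_{i-1}) + (\text{small})$, so I cannot naively telescope; instead I should ``reset'' the bound at each failing edge using the absolute $2k \cdot \dist_G(s, v_i)$ bound and then continue telescoping forward from $v_i$. Formally, if the failing edges along $P$ occur at indices $i_1 < i_2 < \dots$, then between consecutive failing edges all edges are good so distances telescope cleanly, and at $v_{i_j}$ we have the fallback bound $\dist_{\Tr{r}}(s,v_{i_j}) \le 2k\,\dist_G(s,v_{i_j}) \le 2k\,\dist_G(s,t)$; combining, $\dist_{\Tr{r}}(s,t) \le \dist_G(s,t) + (\text{number of failing edges on } P \text{ in round } r)\cdot O(k)\cdot(\text{max edge weight contribution})$ — I will need to be careful to express the penalty in a way that sums to $\eps \cdot \dist_G(s,t)$ after averaging, likely by charging each failing edge $e$ a penalty of at most $O(k) \cdot w(e) \cdot (\text{something}) $, or alternatively by a slightly different accounting where I bound $\dist_{\Tr{r}}(s,t) - \dist_{\Tr{r}}(s, v_{\text{last failing vertex}})$ by the good-edge telescoping and the prefix by the absolute bound. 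Getting this bookkeeping exactly right, with constants matching the choice $R = 10k^2/\eps$, is the crux; everything else is routine.
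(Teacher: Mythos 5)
Your high-level plan is the same as the paper's (condition on \Cref{clm:edge_fails}, fix a shortest path $P$, average over rounds $r$, use the $2k$-spanner to control failing edges, and pick a best round), but there is a genuine gap in the step you yourself flag as ``the crux'': the bound you assign to a failing edge. You claim that for a failing edge $(v_{i-1},v_i)$ you ``only know'' the \emph{absolute} bound $\dist_{\Tr{r}}(s,v_i) \le 2k\,\dist_G(s,v_i)$, and you then propose a ``reset'' telescoping scheme. That scheme does not yield $(1+\eps)$: in any round with at least one failing edge on $P$, resetting gives you only $\dist_{\Tr{r}}(s,t) \le 2k\,\dist_G(s,t)$, and averaging over $r$ then costs $\Pr_r[\text{some edge of $P$ fails}] \cdot 2k\,\dist_G(s,t)$. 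By \Cref{clm:edge_fails} each edge fails in at most an $\eps/(2k)$-fraction of rounds, but $P$ has up to $\ell \approx n$ edges, so $\Pr_r[\text{some edge of $P$ fails}]$ can only be bounded by $\ell\cdot\eps/(2k)$ via a union bound, and you end up with $(1+\ell\eps)\,\dist_G(s,t)$ --- an $\ell$-factor off from what is needed.

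The missing observation is that the spanner gives a \emph{local, per-edge} bound even for failing edges, which is exactly what makes the naive telescoping go through with no resets. Since $\Tr{r}$ is a shortest path tree of $\Hr{r}$, tree distances from $s$ equal $\Hr{r}$-distances; combined with the triangle inequality in $\Hr{r}$ and the fact that $\Hr{r} \supseteq H$ is a $2k$-spanner, one gets
\[
  \card{\dist_{\Tr{r}}(s,v_i) - \dist_{\Tr{r}}(s,v_{i-1})}
  \;\le\; \dist_{\Hr{r}}(v_{i-1},v_i)
  \;\le\; \dist_H(v_{i-1},v_i)
  \;\le\; 2k\cdot w(v_{i-1},v_i),
\]
for \emph{every} edge of $P$, failing or not. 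With this, one simply telescopes $\dist_{\Tr{r}}(s,t) = \sum_i (\dist_{\Tr{r}}(s,v_i) - \dist_{\Tr{r}}(s,v_{i-1}))$, takes absolute values, and bounds $\Exp_r$ of each term separately: non-failing contributes $\le w(e)$ and failing contributes $\le 2k\,w(e)$, and since each edge fails in at most an $\eps/(2k)$-fraction of rounds the expectation per edge is at most $(1+\eps)\,w(e)$; summing over $P$ gives $(1+\eps)\,\dist_G(s,t)$. You do write ``extra cost of $\le 2k\cdot w(v_{i-1},v_i)$ or so'' at one point, which is the right quantity, but you never justify it and you repeatedly contradict it by insisting you only have the absolute bound. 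As written, your accounting does not close.
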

\begin{proof}
	We condition on the high probability event of~\Cref{clm:edge_fails} and show that for every $t \in V$, 
	\[
		\dist_{H^*}(s,t) \leq (1+\eps) \cdot \dist_G(s,t);
	\]
	the rest follows immediately as the algorithm returns an exact shortest path tree of $H^*$ at the end.

	The proof is by the probabilistic method. Fix a vertex $t \in V$, and let $P = (s=v_0, \dots, v_{\ell+1}=t)$ be a shortest $(s, t)$-path in $G$. 
	Consider sampling an index $r \in [R]$ uniformly at random. We have 
	\begin{align}
		\dist_{H^*}(s,t) &\leq \Exp_r\bracket{\dist_{\Tr{r}}(s,t)} = \Exp_r\bracket{\sum_{i=0}^{\ell} \dist_{\Tr{r}}(s,v_{i+1}) - \dist_{\Tr{r}}(s,v_{i})} \tag{by a telescoping sum} \\
		&= \sum_{i=0}^{\ell} \Exp_r\bracket{\dist_{\Tr{r}}(s,v_{i+1}) - \dist_{\Tr{r}}(s,v_{i})} \tag{by linearity of expectation} \\
		&\leq  \sum_{i=0}^{\ell} \Exp_r\card{\dist_{\Tr{r}}(s,v_{i+1}) - \dist_{\Tr{r}}(s,v_{i})}, \label{eq:per-edge}
	\end{align}
	by taking the absolute values instead. We now bound each term in the RHS above. Consider an edge $e = (v_{i},v_{i+1})$ for some $i \in [\ell]$. 
	If $e$ does not fail in round $r$ of the algorithm, then we have 
	\[
		\card{\dist_{\Tr{r}}(s,v_{i+1}) - \dist_{\Tr{r}}(s,v_{i})} \leq w(v_i,v_{i+1}), 
	\]
	by definition. On the other hand, even if $e$ does fail in this round, since $\Tr{r}$ is a shortest path tree of $H \cup \Fr{r}$ which in particular includes a $2k$-spanner $H$, 
	we have 
	\[
		\card{\dist_{\Tr{r}}(s,v_{i+1}) - \dist_{\Tr{r}}(s,v_{i})} \leq 2k \cdot w(v_i,v_{i+1}), 
	\]
	i.e., it cannot ``fail too much''. Since each edge fails in at most $\eps R/2k$ of the rounds by~\Cref{clm:edge_fails}, over a random choice of $r \in [R]$, we have 
	\begin{align*}
		\Exp_r\card{\dist_{\Tr{r}}(s,v_{i+1}) - \dist_{\Tr{r}}(s,v_{i})} &\leq w(v_i,v_{i+1}) + \Pr_r\paren{\text{$e$ fails in round $r$}} \cdot 2k \cdot w(v_i,v_{i+1}) \\
		&\leq w(v_i,v_{i+1}) + \frac{\eps}{2k} \cdot 2k \cdot w(v_i,v_{i+1}) \\
		& \le (1+\eps) \cdot w(v_i, v_{i+1})
	\end{align*}
	for each $i \in [0, \ell]$. 
	Plugging in this bound in~\Cref{eq:per-edge}, we obtain that 
	\begin{align*}
		\dist_{H^*}(s,t) \leq \sum_{i=0}^{\ell} \Exp_r\card{\dist_{\Tr{r}}(s,v_{i+1}) - \dist_{\Tr{r}}(s,v_{i})} \leq \sum_{i=0}^{\ell} (1+\eps) \cdot w(v_i,v_{i+1})   = (1+\eps) \cdot \dist_G(s, t),  
	\end{align*}
	concluding the proof. 
\end{proof}

\subsection{A Streaming Implementation of \Cref{alg:s_t}}
\label{sec:streaming_implementation}

\Cref{lem:alg-correct} establishes the correctness of~\Cref{alg:s_t}. We now verify that this algorithm can be implemented in the streaming model as required by \Cref{thm:upper-main}.

\begin{lemma} \label{lem:upper-stream}
	\Cref{alg:s_t} can be implemented in insertion-only streams using 
	\[
		O\paren{\frac{k}{\eps} \cdot n^{1+1/k} \cdot \log{n}}~\text{space} \quad \text{and} \quad O\left(\frac{k^2}{\eps}\right) ~\text{passes}. 
	\]
	with probability at least $1-n^{-\Omega(n)}$. 
\end{lemma}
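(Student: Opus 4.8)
The plan is to verify that each of the four steps of \Cref{alg:s_t} can be carried out within the claimed budget, with the dominant cost being a single streaming spanner computation plus $R = O(k^2/\eps)$ sampling passes, each using $\widetilde{O}(n^{1+1/k}/\eps)$ space. First I would handle the spanner: by \Cref{prop:spanner} applied with a constant $\eps$ (say $\eps = 1$), we obtain a $2k$-spanner $H$ of size $O(n^{1+1/k}\log n)$ using one pass and $O(n^{1+1/k}\log n)$ words, which we store for the rest of the algorithm. This gives the additive $O(\log n)$ in the space bound as a multiplicative factor and is well within budget. We keep $H$ in memory throughout, so the only recurring per-round costs are the sampling, the local shortest-path-tree computation, and the importance updates.

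Next I would argue the per-round implementation. In round $r$, the edge sampling in \Cref{alg:3a} uses the probabilities $\pr{r}_e$ from~\Cref{eq:p_e}, which depend only on $\qr{r}_e$ and $\Qr{r}$; since importances are maintained in memory and never reset, each edge's probability can be computed on the fly as the edge arrives in the stream. The expected size of $\Fr{r}$ is $\sum_{e} \pr{r}_e \le (10/\eps)\cdot k \cdot n^{1+1/k}\log n$ (because $\sum_e \qr{r}_e/\Qr{r} = 1$ and $\pr{r}_e$ is a min with this quantity), and by a standard Chernoff bound the actual size exceeds twice this with probability at most $n^{-\Omega(n)}$ (the expectation is $\geq n$ whenever not all probabilities are clamped to $1$; the clamped case only makes $|\Fr{r}|$ smaller). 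So with exponentially high probability $|\Fr{r}| = O(k n^{1+1/k}\log n/\eps)$, and storing $\Hr{r} = H \cup \Fr{r}$ costs $O(k n^{1+1/k}\log n/\eps)$ words. The shortest path tree $\Tr{r}$ of $\Hr{r}$ is then computed locally (no stream access) via Dijkstra or Bellman–Ford; this uses no additional asymptotic space. The importance update in \Cref{alg:3c} requires, for each $e=(u,v) \in E$, knowing $\dist_{\Tr{r}}(s,u)$ and $\dist_{\Tr{r}}(s,v)$ — these are stored with $\Tr{r}$ — so in one more pass over the stream (or the same pass, if we are willing to recompute $\Tr{r}$ only after the pass ends) we scan each edge, test the triangle inequality, and multiply $\qr{r}_e$ by $(1+n^{1/k})$ if violated, updating $\Qr{r+1}$ incrementally. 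Each round therefore uses $O(1)$ passes and $O(k n^{1+1/k}\log n/\eps)$ space; we free $\Fr{r}$ and $\Tr{r}$ between rounds but must retain $H^* = \bigcup_r \Tr{r}$, which has size $O(R \cdot n) = O(k^2 n/\eps)$ — absorbed into the stated bound since $n^{1/k}\log n \geq 1$.

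For the final step, the shortest path tree of $H^*$ is computed locally from the stored union of trees, again with no stream access and no extra asymptotic space. Summing up: $1$ pass for the spanner plus $O(1)$ passes per round over $R = 10k^2/\eps$ rounds gives $O(k^2/\eps)$ passes total, and the peak space is $O(k n^{1+1/k}\log n/\eps)$ words. The correctness guarantee of \Cref{lem:alg-correct} holds with probability $1-n^{-\Omega(n)}$, and the only additional failure event introduced by the streaming implementation is the sampling-size overflow above, which also fails with probability $n^{-\Omega(n)}$; a union bound over the $R = \poly(\log n)$ rounds preserves the $1 - n^{-\Omega(n)}$ guarantee. The main thing to be careful about — the only place a naive implementation could blow the budget — is ensuring the sampled set $\Fr{r}$ stays small with overwhelming probability even though the $\pr{r}_e$ can be highly skewed across edges; this is exactly what the Chernoff bound handles, using that $\sum_e \pr{r}_e$ is polynomially large so concentration is exponential in $n$.
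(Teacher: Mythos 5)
The central gap in your proposal is that you store the edge importances $\qr{r}_e$ explicitly in memory and update them in place in a dedicated pass (``scan each edge, test the triangle inequality, and multiply $\qr{r}_e$ by $(1+n^{1/k})$''). This requires one word per edge of $G$, i.e.\ $\Theta(|E|)$ words, which can be $\Theta(n^2)$ --- far over the stated $O(\tfrac{k}{\eps} n^{1+1/k}\log n)$ budget. The paper explicitly flags this as the obstacle, and the key implementation idea it uses (which you don't supply) is to maintain the importances \emph{implicitly}: store each shortest path tree $\Tr{1},\dots,\Tr{r-1}$ separately (total $O(nR) = O(nk^2/\eps)$ space, within budget since $k \le \ln n$), and, when an edge $e$ arrives in the stream during round $r$, recompute $\qr{r}_e$ on the fly by counting in how many of the stored trees $e$ violates the triangle inequality. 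This also means two passes per round --- one to aggregate $\Qr{r}$ and one to sample given $\Qr{r}$ --- rather than the ``compute on the fly in one pass'' claim, which silently assumes $\Qr{r}$ is already known. Relatedly, you cannot ``free $\Tr{r}$ between rounds'' and keep only the union $H^*$: the union forgets which trees an edge failed against, and the per-round tree distances $\dist_{\Tr{r}}(s,\cdot)$ are exactly what is needed to answer importance queries in later rounds.

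A smaller point: your claim that ``the expectation is $\geq n$ whenever not all probabilities are clamped to $1$'' is not true in general (a single edge with overwhelming importance can make $\sum_e \pr{r}_e$ small even with no clamping elsewhere), but the intended conclusion $|\Fr{r}| = O(\tfrac{k}{\eps} n^{1+1/k}\log n)$ with probability $1-n^{-\Omega(n)}$ still holds because the concentration threshold itself is $\Omega(n\log n)$; this is essentially how the paper argues it too, so it does not affect correctness.
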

\begin{proof}
	We can compute and explicitly store the $2k$-spanner $H$ in one pass and $O(n^{1+1/k}\log{n})$ space using \Cref{prop:spanner} (by setting $\eps$ in the proposition to be $0.5$).
	
	We cannot explicitly store the importances $q_e^{(i)}$ for every edge $e \in E$ and round $r \in [R]$, since $|E|$ may be much larger than our desired space bound. 
	Instead, we store a small data structure that, when queried with $(e,r)$, returns $\qr{r}_e$. We implement this data structure as follows:
	\begin{itemize}
		\item After each round $r \in [R]$ of \Cref{alg:s_t}, we explicitly store the shortest path tree $\Tr{r}$. Each tree requires $O(n)$ space to store, 
		and given the total number of rounds is $R = O(k^2/\eps)$, we need $O(nk^2/\eps)$ space in total for this step, which is within our budget since $k = O(\log{n})$. 
		\item When queried with an edge $e \in E$ and an index $r \in [R]$ in round $r$ of \Cref{alg:s_t}, our data structure 
		uses the stored shortest path trees $T^{(1)}, \dots, T^{(r-1)}$ to compute edge importance $q_e^{(r)}$ by counting the number of trees in which $e$ violates the triangle inequality and applying the 
		formula in 	Step~\ref{alg:3c} of \Cref{alg:s_t}. (The data structure returns $\qr{1}_e = 1$ for all edges $e$.)
	\end{itemize}
	
	For every $r \in [R]$, we can use our implicit representation of importances $\qr{r}_e$ for $e \in E$ to implement the round $r$ of \Cref{alg:s_t} in two passes: 
	in the first pass, we only compute $\Qr{r}$ by summing up $\qr{r}_e$ for each arriving edge $e$ in the stream using our data structure above. 
	In the second pass, when the edge $e$ arrives in the stream, we compute $\pr{r}_e$ using our data structure and the value of $\Qr{r}$ and sample the edge into $\Fr{r}$ with probability $\pr{r}_e$ as required.

	We have shown how to efficiently compute edge probabilities $\pr{r}_e$ and sample $\Fr{r}$. Additionally, by the Chernoff bound,
	\[
	\card{\Fr{r}} \leq 2 \cdot \Exp\card{\Fr{r}} = 2 \cdot \sum_{e \in E} \pr{r}_e = O(\frac{k}{\eps} \cdot n^{1+1/k} \cdot \log{n}), 
	\]
	with probability $1-n^{-\Omega(n)}$. Then, we can store the edges in $\Fr{r}$ explicitly in each round $r$ given our space budget. 
	Since we have stored the spanner $H$ and sampled edges $\Fr{r}$ locally in round $r$, we can compute the tree $\Tr{r}$ locally as well, without any additional passes over the stream.  
	
	The total number of passes of the algorithm is also $O(R) = O(k^2/\eps)$, finalizing the proof. 
\end{proof}

\Cref{thm:upper-main} follows immediately from~\Cref{lem:alg-correct} and~\Cref{lem:upper-stream}.

\section{The Lower Bound}\label{sec:lb}

In this section, we prove that approximating the length of the shortest path between any two vertices $s, t$ up to constant multiplicative factors requires $\Omega(\log n/\log \log n)$ passes in semi-streaming.

\begin{restatable}{theorem}{mainlb}
	\label{thm:main-lb}
	For any $\alpha \ge 1$ and any  integer $p$ sufficiently smaller than $\log(n) / \alpha$, any randomized streaming algorithm on unweighted graphs that  can output an $\alpha$-approximation to the distance between a fixed pair of vertices with probability at least $2/3$ requires
	$$
\Omega\left(p^{-7} \cdot  n^{1+\frac{1}{4\alpha \cdot (p+2)}}\right) \text{ space } \quad \text{or } \quad	p  \text{ passes.}
	$$
\end{restatable}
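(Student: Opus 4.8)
The plan is to follow the three-step reduction sketched in the technical overview, building up from a two-player communication lower bound for a single \emph{paired pointer chasing} instance, amplifying it to an $\OR$ of many such instances via direct sum and message compression, and finally reducing streaming $\alpha$-approximate $s$-$t$ distance to the $\OR$ problem by a layered-graph construction. I will spell out each step in turn.

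\paragraph{Step 1: Lower bound for a single paired pointer chasing instance.} Following \Cref{sec:overview-lower}, I would first recast the two coupled $\NPC$ instances on $n/(k+1)$-vertex layers as a single $\NPC$ instance on the Cartesian product, with $n^2/(k+1)^2$ vertices per layer, and observe that deciding $\rb$ amounts to deciding whether the pointer from $(s,t)$ lands on a diagonal (``one-'') vertex or an off-diagonal (``zero-'') vertex in the last layer. The technical heart here is the \textbf{rook set} (\Cref{def:rook-set}): I would prove by a counting/probabilistic argument that there exists a set $S$ of $\Omega(n/k)$ last-layer vertices with equal numbers of one- and zero-vertices such that no two members of $S$ share a coordinate in either factor. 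Conditioning on the final pointer landing in $S$, the four matchings become genuinely uniform and independent, so the problem restricted to $S$ is an honest sparse pointer chasing instance; I then invoke \Cref{lem:unif-pc} (the uniform-distribution $\NPC$ lower bound derived from \cite{AssadiN21}) to get $\Omega(n/\poly(k))$ communication for any protocol with $\leq k-1$ rounds (recall Bob speaks first) that guesses $\rb$ with a sufficiently large constant advantage over $1/2$.

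\paragraph{Step 2: From one instance to $\OR$ of $d$ instances.} Here I would take $d$ independent paired pointer chasing instances, each with answer bit $\rb_j$, and let the target function be $\bigvee_j \rb_j$; the hard distribution puts a $\Theta(1/d)$ mass on ``exactly one $\rb_j=1$'' against all $\rb_j=0$ (so the $\OR$ is balanced). The three-sub-step argument is: $(i)$ a standard direct sum (à la \cite{BarakBCR10}) turning a protocol $\protOR$ into a single-instance protocol $\prot$ whose \emph{information} cost under $\mu_0$ is $\approx \cc{\protOR}/d$; $(ii)$ external information compression \cite{HarshaJMR07,RaoY20} turning $\prot$ into one with expected communication $\approx$ its $\mu_0$-information cost when inputs come from $\mu_0$; $(iii)$ a ``communication odometer'': run the compressed protocol but abort if communication exceeds a threshold — either it rarely aborts (so it works on both $\mu_0$ and $\mu_1$, contradicting Step 1) or the abort event itself distinguishes $\mu_0$ from $\mu_1$ cheaply (again contradicting Step 1 after a suitable reduction). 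This yields $\cc{\protOR} = \Omega(d \cdot n/\poly(k))$ with $\leq k-1$ rounds.

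\paragraph{Step 3: Streaming reduction and parameter setting.} Finally I would embed $d = n^{\Theta(1/(\alpha p))}$ independent paired instances into one $n$-vertex layered graph with $k+1 \approx \alpha(p+2)$ layers, overlaying them on a shared vertex set after independently random-permuting the middle layers. When some $\rb_j = 1$ the $s$-$t$ distance is exactly $2k+1$; when all $\rb_j=0$, a union bound over the $\approx d^{2p+1}$ possible ``patched'' $s$-$t$ walks shows that, as long as $d^{2p+1} \ll n/k$, the distance exceeds $\alpha(2k+1)$ with probability $1-o(1)$ — so an $\alpha$-approximate distance estimate decides the $\OR$. A $p$-pass streaming algorithm using space $s$ gives a $(2p$-round, $O(ps)$-communication$)$ protocol, so $\Omega(d\cdot n/\poly(k))$ communication forces $s = \Omega(p^{-7} n^{1+1/(4\alpha(p+2))})$; plugging $p = \Theta(\log n/\log\log n)$ gives the $\tilde\Omega(\log n)$-pass lower bound for any constant $\alpha$.

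\paragraph{Main obstacle.} The crux is Step 1 — in particular designing the rook set and proving that conditioning on it genuinely restores uniformity and independence of the four matchings while preserving both a constant-fraction ``hit probability'' from $(s,t)$ and near-balancedness of $\rb$. This is the one place where the correlated distribution $\mu_1$ has to be tamed by hand; everything downstream (direct sum, compression, odometer, graph embedding) is a careful but essentially blackbox assembly, whereas the rook-set argument is where the genuinely new combinatorics lives. A secondary subtlety is making the odometer step in $(iii)$ interact correctly with the round restriction, since the compression step can inflate round count and one must ensure the final contradiction still lands against a $\leq k-1$-round protocol.
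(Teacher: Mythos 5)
Your proposal traces the paper's own three-step plan exactly: a rook-set reduction from paired pointer chasing to uniform pointer chasing (\Cref{lem:PPC-lb}), a direct-sum/compression/odometer reduction from $\ORPPC$ to $\PPC$ (\Cref{prop:orppc-ppc}), and a collection-graph reduction from streaming distance estimation to $\ORPPC$ (\Cref{lem:psc-stream}), with the rook set correctly identified as the new combinatorial content. Two small differences in the streaming reduction are worth flagging: (a) you retain the middle-layer random permutation from \cite{GuruswamiO13}, but the paper shows it is a no-op here --- when $b=0$ every matching in the collection graph is already uniform, so its distribution is exchangeable over last-layer vertices and one bounds $\Pr[v_{2d,1}\in B]\le \Exp[\card{B}]/w$ directly (\Cref{clm:b-0-collection-long}); and (b) your union bound over ``$\approx d^{2p+1}$ patched walks'' must be taken over walks of length $\alpha\cdot 2d$ rather than $\approx 2p$, giving $(2t)^{2\alpha d}$, since to rule out a spurious $\alpha$-approximate path one must enumerate all walks up to that length. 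The $\alpha$ in your chosen $t=n^{\Theta(1/(\alpha p))}$ is precisely the shadow of this dropped $\alpha$ in the exponent (and the layer count should be $\approx p+2$, not $\alpha(p+2)$); once corrected the reduction lands exactly where the paper does.
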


This directly proves \Cref{res:lower} because for any constant $\alpha \ge 1$, for any $p = (1/\alpha) \cdot o(\log n/\log \log n)$, the lower bound on space is $\omega(n \cdot \poly \log n)$, which is more than semi-streaming space. 

We begin by stating the different steps of our lower bound proof.  
Initially, we are given a streaming algorithm $\mathcal{A}$ that estimates the shortest path distance between input vertices $s, t$ in input graph $G$. We proceed as follows.
\begin{enumerate}[label=$(\roman*)$]
	\item\label{item:part1-or} 
	We translate algorithm $\mathcal{A}$ into a protocol $\pi_{\text{OR}}$ for the $\ORPPC$ (see \Cref{def:ORPPC}) problem, which corresponds to the 
	$\OR$ of many copies of the paired pointer chasing problem ($\PPC$, \Cref{def:paired-pc}). This is covered in \Cref{lem:psc-stream}.   
	\item\label{item:part2-dirsum} We use protocol $\pi_{\text{OR}}$ to get a protocol $\pi$ for solving the $\PPC$ problem with low communication using direct sum and message compression techniques, which is covered in \Cref{prop:orppc-ppc}.
	\item\label{item:part4-lb-ppc}  Lastly, we prove a lower bound on the communication required to solve $\PPC$ instances on our hard distribution, stated in \Cref{lem:PPC-lb}. This is the most novel part of our proof. 
\end{enumerate}

In this section, we first set up the graph constructions used in the lower bound. Then, we prove \Cref{thm:main-lb}, assuming the lower bound for $\PPC$. Lastly, we give our lower bound for $\PPC$. 

\subsection{Layered Graphs and Paired Pointer Chasing}\label{subsec:layer-ppc}

In this subsection, we develop our  definitions of the paired pointer chasing problem ($\PPC$) and the $\ORPPC$ problem.  
We will conclude the section by completing  \Cref{item:part1-or} of our lower bound framework, stated in \Cref{lem:psc-stream}. 

We begin by defining a graph family that we will use in our problem formulations.


\begin{Definition}[Layered Graph]\label{def:layered-graph}
	For integers $w, d \geq 1$, a layered graph is any graph $G = (V, E)$ that satisfies the following properties:
	\begin{enumerate}[label=$(\roman*)$]
		\item The vertex set $V$ can be partitioned into $d$ equal size layers $V_1, V_2, \ldots, V_d$, each of size $w$. We let the set $V_i $ be identified by $\{\ver{i,1}, \ver{i,2}, \ldots, \ver{i,w}\}$. 
		\item The edge set $E$ can be partitioned into $d-1$ matchings $M_1, M_2, \ldots, M_{d-1}$ such that for $i \in [d-1]$, $M_i$ is a perfect matching between vertices in $V_i$ and vertices in $V_{i+1}$. 
	\end{enumerate}
	We use $\cL_{d, w}$ to denote the set of all layered graphs with $d$ layers and $w$ nodes in each layer.
	For any $j \in [w]$, we use $\point{G}{\ver{1,j}}$ to denote the unique vertex $\ver{d, \ell} \in V_d$ for some $\ell \in [w]$ that has a path to $\ver{1,j} \in V_1$ in graph $G$. 
\end{Definition}

We define a \emph{product} operation on  layered graphs that will be useful in our analysis in \Cref{sec:lb-ppc}. Roughly speaking, 
the product operation creates a new graph out of two layered graphs by taking the Cartesian product of their vertices as the new vertex set and then defining edges of each product vertex based on the
edges of each of its coordinates in the first two graphs.

\begin{Definition}[Product of Layered Graphs]\label{def:prod-2pc}
	For any integers $w, d \geq 2$, given any two graphs $G^1 = (V^1, E^1), G^2 = (V^2, E^2) \in \cL_{d, w}$, the product graph of $G^1, G^2$, denoted by $\product{G^1}{G^2} = (\Vprod, \Eprod)$ is defined as follows:
	\begin{enumerate}[label=$(\roman*)$]
		\item The vertex set $\Vprod$ is $\Vprod_1 \sqcup \Vprod_2 \sqcup \ldots \sqcup \Vprod_d$ with $\Vprod_i = (V^1_i \times V^2_i) = \{(x,y)  \mid x \in V_i^1, y \in V_i^2\}$, 
		where $V^1_1 \sqcup V^1_2 \sqcup \ldots \sqcup V^1_d$ is the (layered) partition of $V^1$ and $V^2_1 \sqcup V^2_2 \sqcup \ldots \sqcup V^2_d$ is for $V^2$. 
		\item The edge set $\Eprod$ is comprised of $d-1$ matchings $\Mprod_1, \Mprod_2, \ldots, \Mprod_{d-1}$ with \[
		\Mprod_i = \{((x_{i}, y_{i}), (x'_{i+1}, y'_{i+1})) \mid (x_i, x'_{i+1}) \in M^1_i, (y_i, y'_{i+1}) \in M^2_i\},
		\]
		where $M^1_i$ and $M^2_i$ are the $i^{\textnormal{th}}$ matchings in $G^1$ and $G^2$ respectively, for $i \in [d-1]$. 
	\end{enumerate}
	See \Cref{fig:prod-layer} for an illustration.
\end{Definition}

\begin{figure}[h!]
	\centering
	\begin{subfigure}[b]{0.45\textwidth}
		\centering
		\includegraphics[scale=0.35]{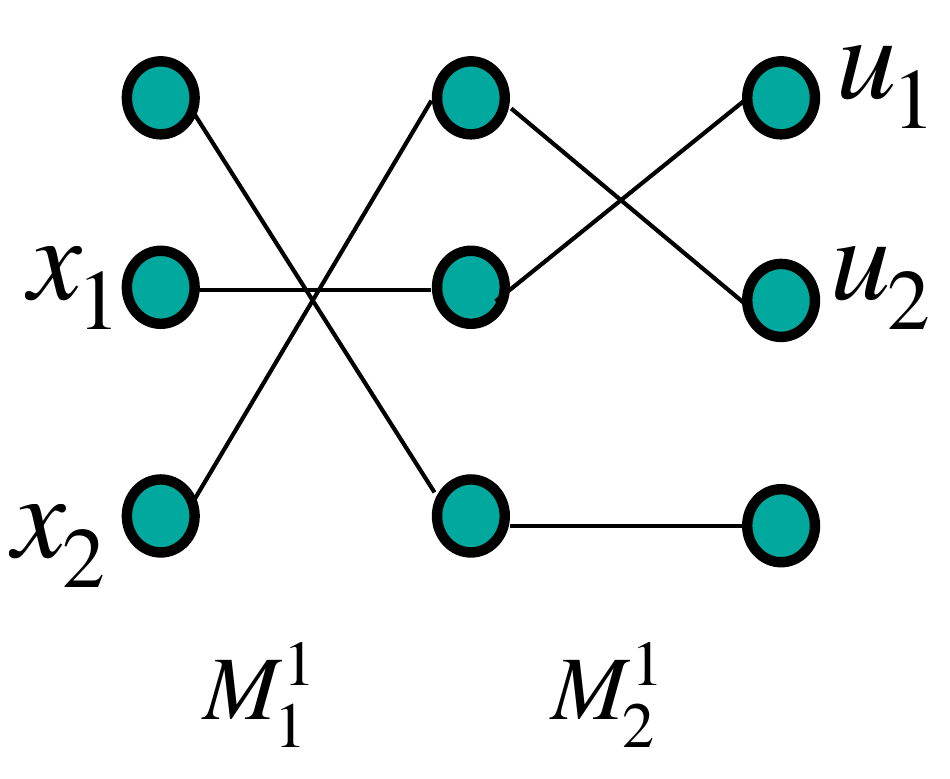}
		\caption{Layered graph $G^1$.}
	\end{subfigure}
	\hfill
	\begin{subfigure}[b]{0.45\textwidth}
		\centering
		\includegraphics[scale=0.35]{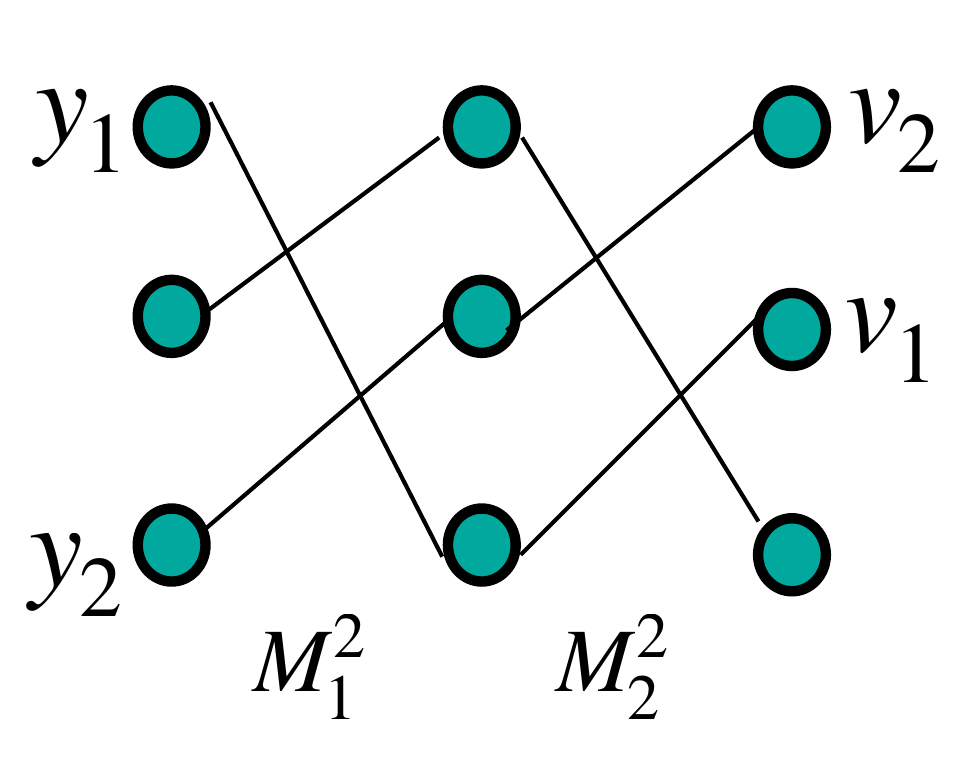}
		\caption{Layered graph $G^2$.}
	\end{subfigure}
	
	\vspace{0.5cm}
	
	\begin{subfigure}[b]{0.9\textwidth}
		\centering
		\includegraphics[scale=0.35]{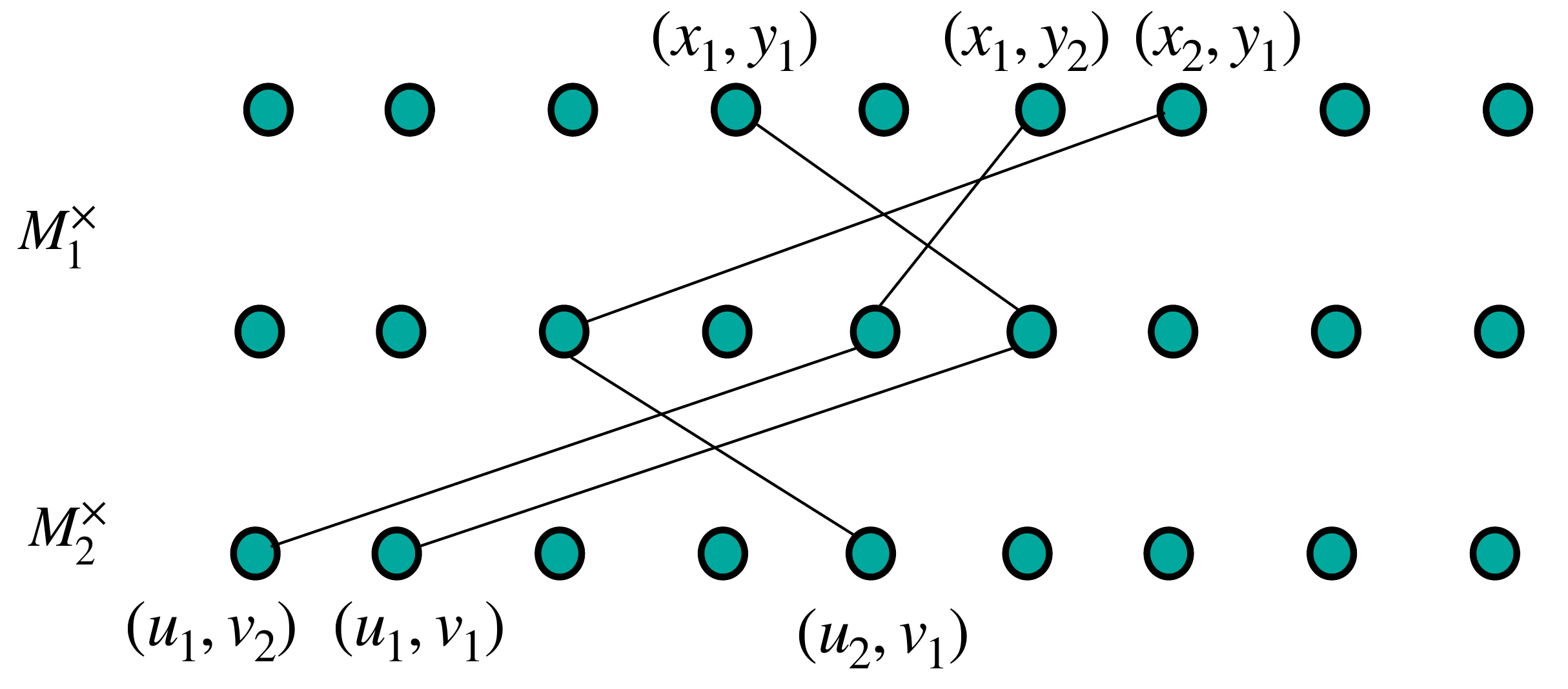}
		\caption{Graph $\Gprod =\product{G^1}{G^2}$ from set $\cL_{3,9}$. Also illustrates that $\point{\Gprod}{(x_1, y_1)} = (\point{G^1}{x_1}, \point{G^2}{y_1}) = (u_1, v_1)$. }
	\end{subfigure}
	\caption{ 
		Product of two layered graphs from $\cL_{3,3}$ lying in $\cL_{3, 9}$. Some edges are omitted.}
	\label{fig:prod-layer}
\end{figure}

We will  make use of the following simple observations about products of layered graphs.

\begin{observation}\label{obs:product-basic}
	For any $G^1, G^2 \in \cL_{d, w}$, and graph $G = \product{G^1}{G^2}$, we have, 
	\begin{enumerate}[label=$(\roman*)$]
		\item The graph $G$ belongs to $\cL_{d, w^2}$. 
		\item For any $(x,y) \in \Vprod_1$, $\point{G}{(x,y)} = (\point{G^1}{x}, \point{G^2}{y})$. 
	\end{enumerate}
\end{observation}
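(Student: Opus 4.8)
The plan is to verify both claims directly from \Cref{def:layered-graph} and \Cref{def:prod-2pc}, since each is essentially a bookkeeping argument.

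For part $(i)$, I would check the two defining properties of a layered graph. The layer structure is immediate: by construction $\Vprod = \Vprod_1 \sqcup \dots \sqcup \Vprod_d$ with $\card{\Vprod_i} = \card{V^1_i} \cdot \card{V^2_i} = w^2$ for every $i$, so there are $d$ layers each of size $w^2$. For the matching property, I would fix $i \in [d-1]$ and argue that $\Mprod_i$ is a perfect matching between $\Vprod_i$ and $\Vprod_{i+1}$: given $(x,y) \in \Vprod_i$, since $M^1_i$ and $M^2_i$ are perfect matchings there are unique $x' \in V^1_{i+1}$ and $y' \in V^2_{i+1}$ with $(x,x') \in M^1_i$ and $(y,y') \in M^2_i$, so $(x',y')$ is the unique $\Mprod_i$-neighbor of $(x,y)$; the symmetric argument starting from a vertex of $\Vprod_{i+1}$ is identical. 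Hence every vertex of $\Vprod_i \cup \Vprod_{i+1}$ is matched exactly once by $\Mprod_i$, so $G \in \cL_{d, w^2}$, and in particular $\point{G}{\cdot}$ is well defined.

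For part $(ii)$, recall that in any layered graph the unique vertex of $V_d$ reachable from a given vertex of $V_1$ is obtained by following the single edge of each successive matching. I would let $x = x_1, x_2, \dots, x_d = \point{G^1}{x}$ be the path from $x$ in $G^1$ (so $(x_i, x_{i+1}) \in M^1_i$ for each $i$), and similarly let $y = y_1, y_2, \dots, y_d = \point{G^2}{y}$ be the path from $y$ in $G^2$. By the definition of $\Mprod_i$ in \Cref{def:prod-2pc}, we have $((x_i, y_i),(x_{i+1}, y_{i+1})) \in \Mprod_i$ for every $i \in [d-1]$, so $(x_1, y_1), \dots, (x_d, y_d)$ is a path in $G$ from $(x,y) \in \Vprod_1$ to $(\point{G^1}{x}, \point{G^2}{y}) \in \Vprod_d$. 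Since $\point{G}{(x,y)}$ is by definition the unique vertex of $\Vprod_d$ joined to $(x,y)$ by a path, this forces $\point{G}{(x,y)} = (\point{G^1}{x}, \point{G^2}{y})$.

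There is no genuine obstacle here; the only points that require a moment of care are noticing that part $(i)$ must be invoked first so that $\point{G}{\cdot}$ makes sense on $G$, and checking that the sequence exhibited in part $(ii)$ is actually a path in $G$ — which is precisely the content of the edge definition in \Cref{def:prod-2pc}.
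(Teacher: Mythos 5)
Your proof is correct and follows the same approach as the paper's, which simply notes that both parts follow directly from the definitions of the product operation and layered graphs; you spell out the verification of the perfect-matching property and the path-tracing argument in more detail, but the content is identical.
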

\begin{proof}
	For part $(i)$, we know that the number of vertices in each layer of $\Vprod$ is $w^2$. For $i \in [d-1]$, each matching $\Mprod_i$ in $\Eprod$ takes any $(x, y) \in \Vprod_i$ to a unique vertex in $\Vprod_{i+1}$, and is a perfect matching. 
	Part $(ii)$ is also evident from the definition of the product operation. See \Cref{fig:prod-layer} for an illustration of these properties.
\end{proof}

We also define a \textit{join} operation on layered graphs, which will be useful for establishing  \Cref{item:part1-or} of our lower bound framework in \Cref{lem:psc-stream}. 
The join operation takes two layered graphs, ``reverses'' the order of the second one, and connects the two graphs using a fixed perfect matching between their last layers. This reversal of the order of the second layered graph is necessary, as this operation allows us to check for equality between $\point{G^1}{v_{1,j}}$ and  $\point{G^2}{v_{1,j}}$ in two different layered graphs $G^1$ and $G^2$ for any $j \in [w]$. See \Cref{obs:join-basic} for details.


\begin{Definition}[Join of Layered Graphs]\label{def:join-2pc}
		For integers $w, d \geq 2$, given two graphs $G^1 = (V^1, E^1), G^2 = (V^2, E^2) \in \cL_{d, w}$, the join graph of $G^1, G^2$, denoted by $\join{G^1}{G^2} = (\Vjoin, \Ejoin)$ is defined as follows:
	\begin{enumerate}[label=$(\roman*)$]
		\item The vertex set $\Vjoin$ is a union of $2d$ many layers of size $w$ each where, 
		\[
			\Vjoin_i = \begin{cases}
				&V^1_i \textnormal{ if $i \in [d]$}, \\
				&V^2_{2d+1-i} \textnormal{ if $d < i \leq 2d$},
			\end{cases}
		\]
		where $V^1_1 \sqcup V^1_2 \sqcup \ldots \sqcup V^1_d$ is the (layered) partition of $V^1$ and $V^2_1 \sqcup V^2_2 \sqcup \ldots \sqcup V^2_d$ is for $V^2$. 
		\item The edge set $\Ejoin$ is,  
		 \[
		\Ejoin = E^1 \cup E^2 \cup \{(v^1_{d, j}, v^2_{d, j}) \mid j \in [w]\},
		\]
		where $v^1_{d, j}$ is the $j^{\textnormal{th}}$ vertex in $V^1_d$ and $v^2_{d, j}$ is the $j^{\textnormal{th}}$ vertex in $V^2_d$.
	\end{enumerate}
	See \Cref{fig:join-layer} for an illustration.
\end{Definition}

\begin{figure}[h!]
	\centering
	\begin{subfigure}[b]{0.45\textwidth}
		\centering
		\includegraphics[scale=0.35]{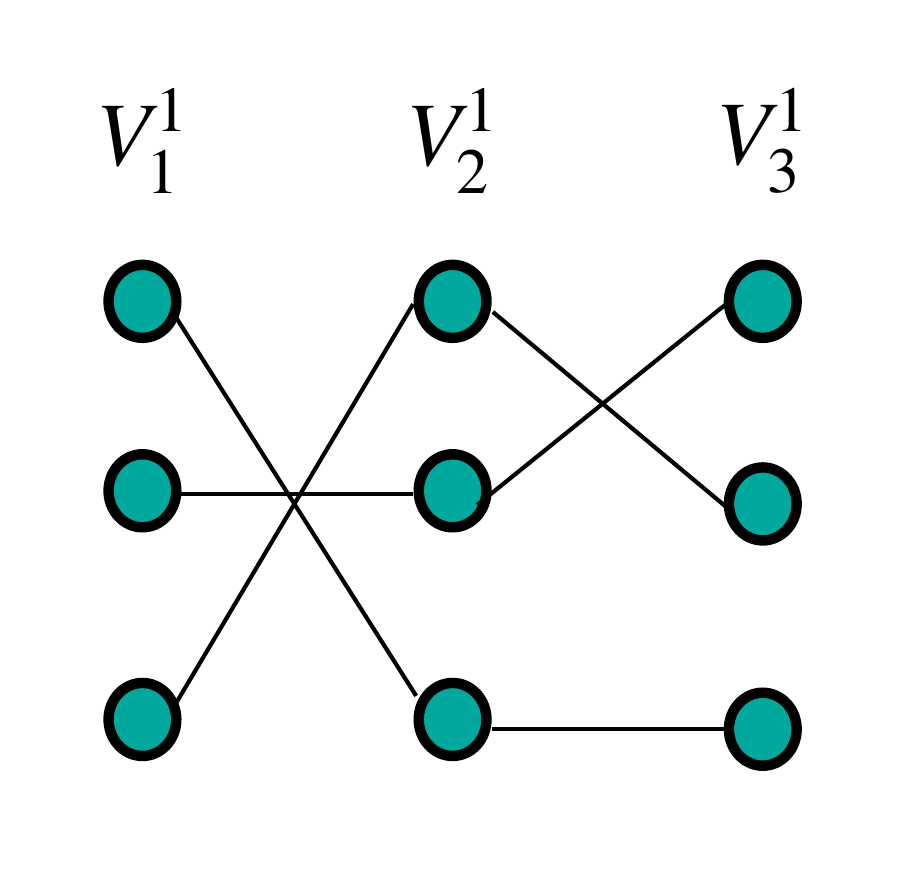}
		\caption{Layered graph $G^1$.}
	\end{subfigure}
	\hfill
	\begin{subfigure}[b]{0.45\textwidth}
		\centering
		\includegraphics[scale=0.35]{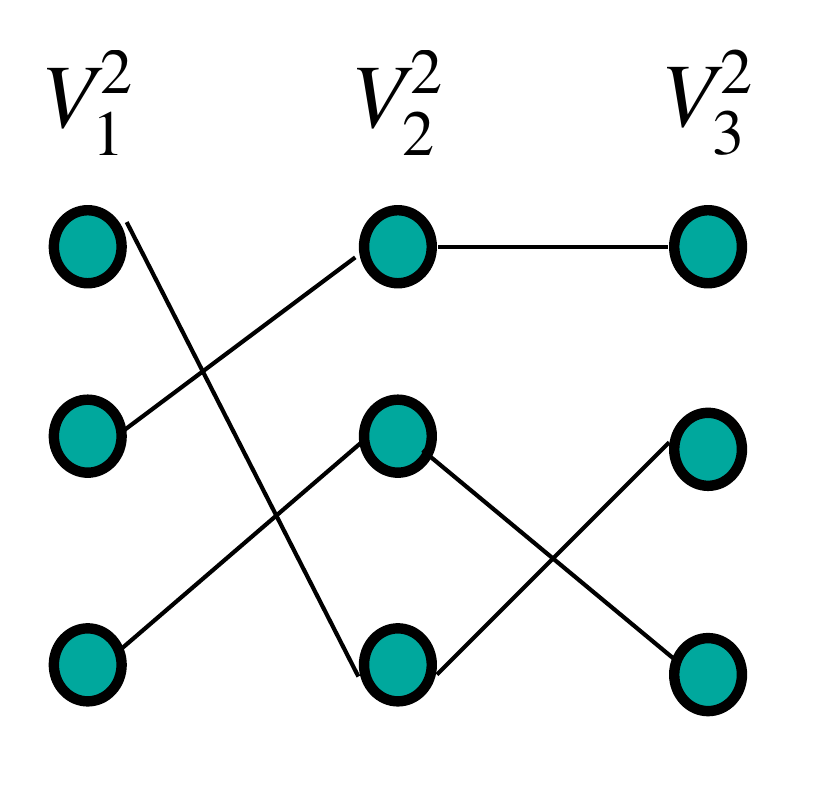}
		\caption{Layered graph $G^2$.}
	\end{subfigure}
	
	\begin{subfigure}[b]{0.9\textwidth}
		\centering
		\includegraphics[scale=0.40]{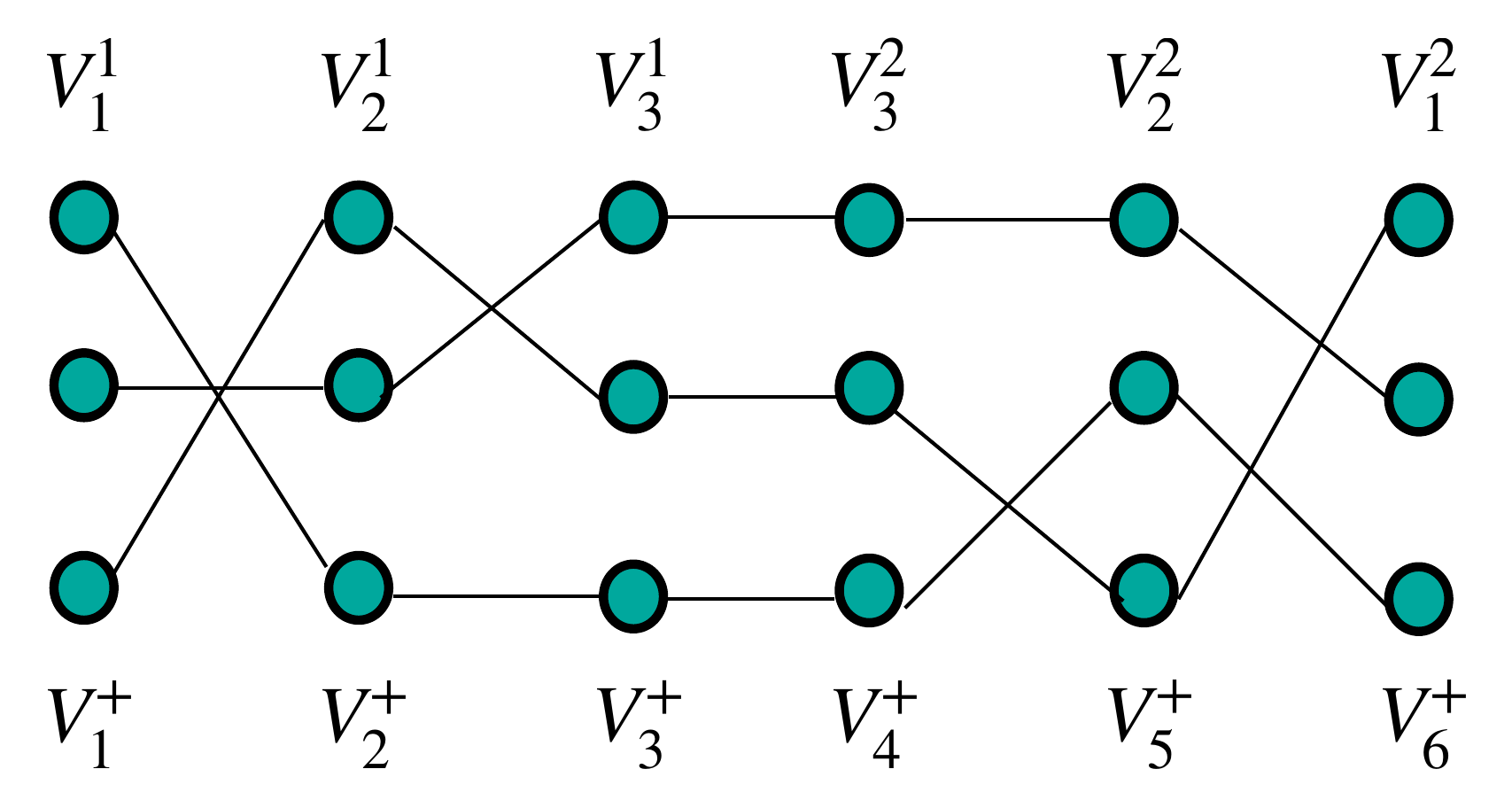}
		\caption{Graph $\Gprod =\join{G^1}{G^2}$ from set $\cL_{6,3}$. The name of the layers in $G^1$ and $G^2$ is specified above.}
	\end{subfigure}
	\caption{ 
		Join of two layered graphs from $\cL_{3,3}$ lying in $\cL_{6, 3}$. }
	\label{fig:join-layer}
\end{figure}
We will need the following observations about the join operation. 

\begin{observation}\label{obs:join-basic}
	For any $G^1, G^2 \in \cL_{d, w}$, and graph $G = \join{G^1}{G^2}$, we have, 
	\begin{enumerate}[label=$(\roman*)$]
		\item The graph $G$ belongs to $\cL_{2d, w}$. 
		\item For any $i \in [w]$, and vertex $v_{1, i} \in \Vjoin_1$, $\point{G}{v_{1,i}} = v_{2d, i}$ if and only if
		$\point{G^1}{v_{1, i}}$ and $\point{G^2}{v_{1, i}}$ are equal. 
	\end{enumerate}
\end{observation}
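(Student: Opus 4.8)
The plan is to prove both parts by directly unwinding the definition of the join operation and carefully tracking indices. For part $(i)$, I would verify the two defining conditions of a layered graph. The vertex set $\Vjoin$ is explicitly partitioned into $2d$ layers $\Vjoin_1, \dots, \Vjoin_{2d}$, each of size $w$, so the first condition is immediate. For the second condition I would exhibit the matching decomposition of $\Ejoin$: the matchings $M^1_1, \dots, M^1_{d-1}$ of $E^1$ are perfect matchings between the consecutive layers $\Vjoin_1, \dots, \Vjoin_d$; the bridge set $\{(v^1_{d,j}, v^2_{d,j}) \mid j \in [w]\}$ is a perfect matching between $\Vjoin_d = V^1_d$ and $\Vjoin_{d+1} = V^2_d$; and the matchings $M^2_1, \dots, M^2_{d-1}$ of $E^2$, read in the reversed layer order, are perfect matchings between the consecutive layers $\Vjoin_{d+1}, \dots, \Vjoin_{2d}$. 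This accounts for every edge of $\Ejoin$ and yields $(d-1) + 1 + (d-1) = 2d-1$ perfect matchings, exactly one between each consecutive pair of layers, so $G \in \cL_{2d, w}$.

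For part $(ii)$, I would trace the unique path in $G$ starting from $v_{1,i} = v^1_{1,i}$ (unique since every internal-layer vertex of a layered graph has degree two and the endpoints have degree one). Following the matchings coming from $E^1$, this path reaches $\point{G^1}{v_{1,i}}$, which is some vertex $v^1_{d,a} \in V^1_d$. The bridge edge then carries it to the vertex $v^2_{d,a} \in V^2_d$ with the same index $a$. From there the path continues through $\Vjoin_{d+1}, \dots, \Vjoin_{2d}$, i.e., it traverses $G^2$ ``backwards'' from $V^2_d$ to $V^2_1$ along the matchings $M^2_{d-1}, \dots, M^2_1$. Since each $M^2_i$ is a perfect matching, their composition is a bijection between $V^2_1$ and $V^2_d$, so there is a unique $v^2_{1,b} \in V^2_1$ with $\point{G^2}{v^2_{1,b}} = v^2_{d,a}$, and the backward traversal lands precisely on $v^2_{1,b}$. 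Hence $\point{G}{v_{1,i}} = v^2_{1,b} = v_{2d,b}$, recalling $\Vjoin_{2d} = V^2_1$.

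It then remains to observe that $\point{G}{v_{1,i}} = v_{2d,i}$ if and only if $b = i$. By the bijectivity just used, $b = i$ iff $\point{G^2}{v^2_{1,i}} = v^2_{d,a}$, and since $v^2_{d,a}$ is by construction the vertex of $V^2_d$ sharing its index with $\point{G^1}{v_{1,i}} = v^1_{d,a} \in V^1_d$, this is exactly the assertion that $\point{G^1}{v_{1,i}}$ and $\point{G^2}{v_{1,i}}$ are equal under the canonical identification of $V^1_d$ and $V^2_d$ by their indices. I expect the only delicate point to be this final bookkeeping step — making the notion of ``equal pointers'' precise across two disjoint vertex sets, and confirming that the layer reversal built into the join is exactly what turns an equality test between the two pointer-chasing instances into a reachability question in $G$ — but this is purely a matter of careful indexing rather than a genuine technical obstacle; everything else follows from the perfectness of the matchings.
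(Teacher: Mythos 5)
Your proof is correct and follows essentially the same route as the paper's, which is considerably terser (part $(i)$ is dispatched in one sentence, and part $(ii)$ is the same trace through $G^1$, the bridge matching, and then backward through $G^2$, phrased less explicitly). You simply spell out the path-tracing and index bookkeeping more carefully; the underlying argument is identical.
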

\begin{proof}
	Part $(i)$ follows directly from the join operation on two graphs $G^1, G^2 \in \mathcal{L}_{d, w}$. 
	
	We now verify part $(ii)$. We observe that $\point{G^1}{v_{1, i}}$ is the unique vertex in layer $V^1_d = V^+_d$ connected  to $v_{1, i}$ in $G$. This vertex is connected to $v_{1, i} \in V^2_1$ in $G$ if and only if $\point{G^2}{v_{1, i}}$ is the same as $\point{G^1}{v_{1,i}}$.
\end{proof}

We are ready to define the paired pointer chasing problem now. In the original pointer chasing problem, we have a layered graph and a fixed starting vertex, 
and the goal is to determine the unique vertex in the last layer reachable from the starting point. In this new problem, we have two instances of the pointer chasing problem 
that are correlated with each other so that with probability half they both reach the same vertex in the last layer, and  otherwise the two instances are independent. 


\begin{Definition}[Paired Pointer Chasing]\label{def:paired-pc}
	For integers $w, d \geq 2$, the paired pointer chasing problem, denoted by $\PPC_{d, w}$ is defined as follows. 
	\begin{enumerate}[label=$(\roman*)$]
		\item There is a bit $b \in \{0,1\}$ chosen uniformly at random. 
		\item The input is two graphs $G^1 = (V^1, E^1)$ and $G^2 = (V^2, E^2)$ from $\cL_{d, w}$.  We use $M^j_i$ to denote the $i^{\textnormal{th}}$ matching in graph $G^j$ for $i \in [d-1]$ and $j \in \{1,2\}$. 
		\item Alice receives $M_i^j$ for all odd $i \in [d-1]$ and Bob receives $M_i^j$ for all even $i \in [d-1]$ for both $j \in \{1,2\}$.
		\item When $b = 0$, all the matchings in graphs $G^1$ and $G^2$ are chosen uniformly at random. 
		\item When $ b = 1$, all the matchings in graphs $G^1$ and $G^2$ are chosen uniformly at random conditioned on $\point{G^1}{\ver{1,1}} = \point{G^2}{\ver{1,1}}$. 
		\item The players have $d-2$ rounds to communicate. At the end of all $d-2$ rounds, the player who receives the last message has to output the value of bit $b$. 
	\end{enumerate}
	We use $\muPPC$ to denote the distribution of the input in $\PPC_{d, w}$ problem. We use $\muPPC^0, \muPPC^1$ to denote the distribution of the input conditioned on $b= 0$ and $b = 1$ respectively.
\end{Definition}

We can use a streaming algorithm that estimates shortest path lengths to check whether the $\OR$ of many instances of $\PPC$ evaluate to $1$. This is a promise problem with $t$-many instances of $\PPC$, distributed so that exactly one of these instances is drawn from $\muPPC^1$, or all of them are drawn from $\muPPC^0$. Notice that if all $t$-instances are sampled $\muPPC^0$, all matchings are chosen independently and uniformly, but this is not true even if one one $\PPC$ instance is sampled from $\muPPC^1$. The correlation between different matchings in $\PPC$ problem persists in $\ORPPC$ also.  
Formally, we define the $\ORPPC$ problem distributionally as follows. 

\begin{Definition}[$\OR$ of many $\PPC$ instances]\label{def:ORPPC}
	For integers $t, w, d \geq 2$, the $t$-$\OR$ of $\PPC$ problem, denoted by $\ORPPC_{t, d, w}$ is defined as follows. 
		\begin{enumerate}[label=$(\roman*)$]
		\item There is a bit $b \in \{0,1\}$ chosen uniformly at random.
		\item There are $t$ instances of $\PPC$, denoted by $(G^1_i, G^2_i)$ for $i \in [t]$. The input matchings of these instances are given to Alice and Bob as in \Cref{def:paired-pc}.
		\item  When $b = 0$, $(G^1_i, G^2_i)$ is sampled from $\muPPC^0$ for all $i \in [t]$. 
		\item When $b = 1$, $\istar \in [t]$ is chosen uniformly at random. $(G^1_{\istar}, G^2_{\istar})$ is sampled from $\muPPC^1$, and $(G^1_i, G^2_i)$ is sampled from $\muPPC^0$ for all  $i \in [t]$ with $i \neq \istar$. 
		\item The players have $d-2$ rounds to communicate. At the end of all $d-2$ rounds, the player who receives the last message has to output the value of bit $b$. 
	\end{enumerate}
\end{Definition}

Recall that \Cref{item:part1-or} of our lower bound framework is to reduce solving the $\ORPPC$ problem to computing a constant-factor approximation of the $s$-$t$ shortest path distance in the streaming model. With that goal in mind, for each $\ORPPC$ instance we define an associated graph  whose  shortest path distances can aid in finding the value of bit $b$. 

More formally, consider  an instance of $\ORPPC$ corresponding to $t$ instances of $\PPC$, denoted by  $(G^1_i, G^2_i)$ for $i \in [t]$. Let $G_i = G_i^1 + G_i^2$ for each $i \in [t]$. By \Cref{obs:join-basic}, graph $G_i$ belongs to $\mathcal{L}_{2d, w}$ for each $i \in [t]$. Then graphs $G_i$, $i \in [t]$, can be interpreted as sharing the same vertex set specified in \Cref{def:layered-graph}, so $V(G_i) = V(G_j)$ for all $i, j \in [t]$.

We define the \textbf{collection graph} $G = (V, E)$ of our $\ORPPC$ instance as the edge union of graphs $G_i$, $i \in [t]$, so that
$$
	G = G_1 \cup \dots \cup G_t.
$$
We use $V = V_1 \sqcup V_2 \sqcup \ldots \sqcup V_{2d}$ with $V_j = \{v_{j,\ell} \mid \ell \in [w]\}$ to denote the  $2d$ layers of the collection graph. The collection graph contains all the matchings in each of the graphs $G_i$, $i \in [t]$. 

We can prove the following properties about the collection graph. 

\begin{observation}\label{obs:b-1-collection}
For any instance of  $\ORPPC_{t, d, w}$, if $b=1$,
then the distance between   vertices $\ver{1,1}$ and $\ver{2d, 1}$ in  collection graph $G$ is at most $\dist_G(\ver{1, 1}, \ver{2d, 1}) \le 2d-1$.
\end{observation}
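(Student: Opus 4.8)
The plan is to exhibit an explicit $(\ver{1,1},\ver{2d,1})$-walk of length $2d-1$ in the collection graph when $b=1$, using the structure of a single paired pointer chasing instance. When $b=1$, there is a special index $\istar \in [t]$ such that $(G^1_{\istar}, G^2_{\istar})$ is drawn from $\muPPC^1$, and by definition of $\muPPC^1$ the matchings of $G^1_{\istar}$ and $G^2_{\istar}$ are chosen so that $\point{G^1_{\istar}}{\ver{1,1}} = \point{G^2_{\istar}}{\ver{1,1}}$. (When $t=1$ the only instance is drawn from $\muPPC^1$, so $\istar = 1$.) The collection graph $G$ contains all edges of $G_{\istar} = G^1_{\istar} + G^2_{\istar}$ as a subgraph, so it suffices to bound $\dist_{G_{\istar}}(\ver{1,1}, \ver{2d,1})$.

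First I would invoke \Cref{obs:join-basic}: since $G_{\istar} = G^1_{\istar} + G^2_{\istar}$ with $G^1_{\istar}, G^2_{\istar} \in \cL_{d,w}$, part $(i)$ gives $G_{\istar} \in \cL_{2d, w}$, and part $(ii)$ gives that $\point{G_{\istar}}{\ver{1,1}} = \ver{2d,1}$ \emph{precisely because} $\point{G^1_{\istar}}{\ver{1,1}} = \point{G^2_{\istar}}{\ver{1,1}}$, which is exactly the conditioning guaranteed by $\muPPC^1$. Next, in any layered graph from $\cL_{2d,w}$, the vertex $\ver{1,1}$ in the first layer is connected to $\point{G_{\istar}}{\ver{1,1}}$ in the last layer by the unique path that follows the perfect matchings $M_1, M_2, \dots, M_{2d-1}$; this path traverses exactly $2d-1$ edges (one per matching, stepping through all $2d$ layers). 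Hence $\dist_{G_{\istar}}(\ver{1,1}, \ver{2d,1}) = 2d-1$, and therefore $\dist_G(\ver{1,1}, \ver{2d,1}) \le 2d-1$ since $G \supseteq G_{\istar}$.

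There is no real obstacle here — the statement is essentially a direct unpacking of the definitions of $\muPPC^1$, the join operation, and \Cref{obs:join-basic}. The only point requiring a sentence of care is the bookkeeping that a path through all $2d$ layers of a layered graph in $\cL_{2d,w}$ uses exactly $2d-1$ edges, and the observation that adding the other instances' edges (forming $G$ from $G_{\istar}$) can only decrease distances, never increase them. I would also note in passing that this argument does not use anything about the matchings of instances $i \neq \istar$, so it is robust to whatever those instances look like.
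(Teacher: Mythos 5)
Your proof is correct and follows essentially the same approach as the paper's: identify the special index $\istar$, apply the definition of $\muPPC^1$ together with \Cref{obs:join-basic} to conclude $\point{G_{\istar}}{\ver{1,1}} = \ver{2d,1}$, observe that this gives a path of length $2d-1$ in $G_{\istar}$, and note $G_{\istar} \subseteq G$.
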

\begin{proof}
When $b=1$, there is an index $i^* \in [t]$ such that $(G_{i^*}^1, G_{i^*}^2)$ is sampled from $\mu_{\PPC}^1$. By the definition of $\mu_{\PPC}^1$, we have that  $\point{G_{i^*}^1}{v_{1, 1}} = \point{G_{i^*}^2}{v_{1, 1}}$. Additionally, by \Cref{obs:join-basic} it follows that $\point{G_{\istar}}{v_{1, 1}} = v_{2d, 1}$, so graph  $G_{i^*}$  contains a path of length $2d-1$ between $v_{1, 1}$ and $v_{2d, 1}$. Since $G_{i^*}$ is a subgraph of the collection graph $G$, we conclude that there is a path of length $2d-1$ between $v_{1, 1}$ and $v_{2d, 1}$ in $G$ as well.
\end{proof}

\begin{claim} \label{clm:b-0-collection-long}
	For any instance of $\ORPPC_{t, d, w}$, if $b=0$, then with probability at least $1-\frac{(2t)^{2\alpha d}}{w}$, the distance between vertices $v_{1, 1}$ and $v_{2d, 1}$ in collection graph $G$ is at least $$\dist_G(v_{1, 1}, v_{2d, 1}) > \alpha \cdot 2d.$$
\end{claim}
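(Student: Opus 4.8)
The plan is to bound the number of vertices in the last layer $V_{2d}$ that are reachable from $v_{1,1}$ by a short path in the collection graph $G$, and show this set is small with high probability; then since $v_{2d,1}$ is essentially a "random" vertex of $V_{2d}$ from the perspective of the independent matchings, it will fail to be in this set with good probability. Concretely, when $b=0$ all $2t$ matchings across all $t$ instances (and both coordinates $j\in\{1,2\}$) are chosen independently and uniformly at random. I would call a vertex $u \in V_\ell$ \emph{$\ell$-reachable} if there is a path from $v_{1,1}$ to $u$ using $\ell-1$ edges (one edge from each inter-layer matching), and more generally track which vertices of each layer can be reached from $v_{1,1}$ by paths of length at most $\alpha \cdot 2d$ that respect the layered structure but may switch between the $t$ instances (i.e., may "change color") at each step.

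First I would set up the counting. Starting from $\{v_{1,1}\}$ in layer $V_1$, at each step from layer $V_\ell$ to $V_{\ell+1}$ a reachable vertex can branch into at most $t$ successors, one for each instance's matching $M_\ell^{(i)}$ (here I am using that each $M_\ell^{(i)}$ is a perfect matching, so each vertex has exactly one neighbor in it). Hence after $\ell$ layers the reachable set $S_\ell \subseteq V_\ell$ has $|S_\ell| \le t^{\ell-1}$. But a length-$\le \alpha\cdot 2d$ path from $v_{1,1}$ to $v_{2d,1}$ need not be monotone through the layers — it can go forward and backward, and in particular use the "join" matchings connecting $V_d^{(1)}$-side to $V_d^{(2)}$-side. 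Still, the total number of edges used is at most $\alpha\cdot 2d$, and at each edge the number of new endpoints reachable multiplies by at most $t$ (each matching is a perfect matching, degree one per matching, $t$ matchings per layer, plus the $O(1)$ fixed join matchings). So the set $R$ of \emph{all} vertices (in any layer) reachable from $v_{1,1}$ by a path of length at most $\alpha \cdot 2d$ satisfies $|R| \le (2t)^{\alpha \cdot 2d}$, say — I would be slightly careful with the constant but something of this form, matching the $(2t)^{2\alpha d}$ in the statement. In particular $|R \cap V_{2d}| \le (2t)^{2\alpha d}$.

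Next I would argue that $v_{2d,1}$ lands in $R\cap V_{2d}$ with probability at most $(2t)^{2\alpha d}/w$. The cleanest way: condition on everything \emph{except} the last matching $M_{2d-1}^{(i)}$ of one chosen instance $i^\star$ (or, symmetrically, expose the matchings from the $v_{1,1}$ side inward and note that the identity of which vertex of $V_{2d}$ is "$v_{2d,1}$" is fixed by the graph's labeling, which is independent of the random matchings). Actually the right framing is: the set $R \cap V_{2d}$ of reachable last-layer vertices is determined by the matchings, and $v_{2d,1}$ is a \emph{fixed label}; so I want $\Pr[v_{2d,1} \in R\cap V_{2d}] \le \Exp[|R\cap V_{2d}|]/w$ by a symmetry/union-bound argument — reveal all matchings to determine $R$, and observe that by the symmetry of the uniform distribution over perfect matchings, the set $R\cap V_{2d}$ is (marginally) a uniformly random subset of $V_{2d}$ of its given size, so a fixed vertex lies in it with probability $|R\cap V_{2d}|/w \le (2t)^{2\alpha d}/w$. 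Combined with $|R\cap V_{2d}| \le (2t)^{2\alpha d}$ deterministically, a union bound over the reachable vertices gives the claim; and when $v_{2d,1}\notin R$, every $v_{1,1}$-to-$v_{2d,1}$ path has length $> \alpha \cdot 2d$, i.e. $\dist_G(v_{1,1},v_{2d,1}) > \alpha \cdot 2d$.

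The main obstacle is making the symmetry argument fully rigorous, because $R$ (and hence whether $v_{2d,1}\in R$) depends on \emph{all} the matchings in a correlated way, so one cannot naively say "$v_{2d,1}$ is uniform given $R$." The honest fix is a union bound over potential short paths rather than over the reachable set: enumerate all layer-respecting (possibly non-monotone) paths of length $\le \alpha\cdot 2d$ from $v_{1,1}$; there are at most $(2t)^{\alpha\cdot 2d}$ \emph{path-shapes} (choices, at each step, of which matching/instance and which direction to follow, with the vertices then forced once the matchings are revealed — wait, the vertices are forced, but whether the path actually \emph{reaches} $v_{2d,1}$ is the random event). For each fixed shape, the probability that following it from $v_{1,1}$ lands exactly on $v_{2d,1}$ is at most $1/w$ — because the last matching edge incident to the final vertex is a uniform random matching edge, independent of the earlier ones along a self-avoiding shape, so the final vertex is uniform over $w$ choices (or one handles the non-self-avoiding case separately, as those can't be shortest paths anyway). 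Then a union bound over the $\le (2t)^{2\alpha d}$ shapes yields $\Pr[\dist_G(v_{1,1},v_{2d,1}) \le \alpha\cdot 2d] \le (2t)^{2\alpha d}/w$, which is exactly the claimed bound. I would carry it out in this union-bound form to avoid the subtle conditioning issue.
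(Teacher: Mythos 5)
Your first argument --- bound $|B \cap V_{2d}| \le (2t)^{2\alpha d}$ deterministically and use label-permutation symmetry of the last layer to get $\Pr[v_{2d,1} \in B] \le \mathbb{E}[|B \cap V_{2d}|]/w$ --- is exactly the paper's proof, and the ``subtle conditioning issue'' you flag in it is not real. Nobody needs to claim that ``$v_{2d,1}$ is uniform given $R$.'' What is needed, and what the paper proves, is only the marginal equality $\Pr[u \in B] = \Pr[v \in B]$ for every $u, v \in V_{2d}$, established by a measure-preserving bijection on the sample space: the relabeling $\phi$ swapping $u$ and $v$ in $V_{2d}$ satisfies $\Pr[G = H] = \Pr[G = \phi(H)]$ (because the matchings incident to $V_{2d}$ are independent uniform perfect matchings and $V_{2d}$ is the last layer), and $u \in B(H) \Leftrightarrow v \in B(\phi(H))$. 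Summing over $v \in V_{2d}$ then gives $w \cdot \Pr[v_{2d,1} \in B] = \sum_{v \in V_{2d}} \Pr[v \in B] \le \mathbb{E}[|B|] \le (2t)^{2\alpha d}$, with no conditioning step anywhere in the chain. Your own earlier phrasing --- that $R \cap V_{2d}$ is a uniformly random subset of its size --- is also a correct consequence of the same exchangeability and would already have sufficed.

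The pivot to a union bound over path shapes, which you say you would ``carry out'' as the actual proof, is the version with a genuine gap. Your per-shape bound of $1/w$ assumes the final matching edge is independent of the rest of the shape, but a vertex-self-avoiding shape can still traverse the same matching more than once (e.g., cross $V_{2d-1} \to V_{2d}$ via $M$, step back via a different matching, cross via $M$ again). After conditioning on $k-1$ previously revealed edges of $M$, the $k$-th target is uniform over only $w - k + 1$ values, not $w$; moreover you would need to account for shapes that are forced to collide with earlier vertices once the matchings are revealed. These issues are repairable with extra bookkeeping, but you have not done it, and ironically you abandoned a clean, conditioning-free symmetry argument to avoid a conditioning subtlety, only to land on the argument that actually has one. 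Keep your first argument and make the bijection explicit as the paper does.
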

\begin{proof}
 When $b=0$, graph pairs $(G_i^1, G_i^2)$ are sampled from $\mu_{\PPC}^0$ for each $i \in [t]$. Then the matchings in graph $G_i^1$ and in graph $G_i^2$ are chosen uniformly at random for each $i \in [t]$.

The collection graph $G = (V, E)$ is composed of $2d$ layers. The edge set $E$ of collection graph  $G$ can be characterized as follows:
\begin{itemize}
	\item Every edge $e \in E$ is between adjacent layers $V_i$ and $V_{i+1}$ for some $i \in [2d-1]$. 
	\item For each $i \in [1, d-1] \cup [d+1, 2d-1]$, the edge set  $E(V_i, V_{i+1})$ is composed of $t$ perfect matchings between $V_i$ and $V_{i+1}$, each chosen uniformly at random.
	\item The edge set between layers $V_d$ and $V_{d+1}$ is the perfect matching $ \{(v_{d, j}, v_{d+1, j}) \mid j \in [w]\}$. 	
\end{itemize}
Let $B \subseteq V$ denote the set of vertices in $G$ at a distance at most $\alpha \cdot 2d$ from $v_{1, 1}$. Since every vertex in $G$ has degree at most $2t$, we have that 
$
|B|  \le (2t)^{\alpha \cdot 2d}.
$  Additionally, for each $u, v \in V_{2d},$
$$
\Pr[u \in B] = \Pr[v \in B].
$$

This equality follows from a simple bijection between the  probability events in the distribution of the collection graph $G$. Let the set of graphs $\mathcal{Z} = \{H_1, \dots, H_z\}$ denote the support of random graph $G$. For each graph $H \in \mathcal{Z}$,  let $\phi(H)$ denote the graph $H$ with the labels for vertices $u$ and $v$ swapped. Then $\phi:\mathcal{Z} \mapsto \mathcal{Z}$ is a bijection. Moreover, 
$
\Pr[G = H] = \Pr[G = \phi(H)]
$
for each $H \in \mathcal{Z}$,  
by our earlier  characterization of collection graph $G$. Then 
$
\Pr[u \in B ] = \Pr[v \in B ],
$
as claimed. 

Since each vertex in $V_{2d}$ is contained in $B$ with equal probability, we know that
$$
|V_{2d}| \cdot \Pr[v_{2d, 1} \in B] = \sum_{v \in V_{2d}}\Pr[v \in B] \le \mathbb{E}[|B|] \le (2t)^{2\alpha d},
$$
where the final inequality follows from our earlier observation that $|B| \le (2t)^{2\alpha d}$. 
Rearranging gives  $ \Pr[v_{2d, 1} \in B] \le (2t)^{2\alpha d}/w$, completing the proof.
\end{proof}


Using \Cref{obs:b-1-collection} and \Cref{clm:b-0-collection-long}, we can now complete \Cref{item:part1-or} of our reduction with the following lemma.

\begin{lemma}\label{lem:psc-stream}
	Suppose there exists a  streaming algorithm $\alg$ that $\alpha$-approximates the distance between $u, v$ in an $n$-vertex graph with success  probability at least $2/3$ in $p$ passes and $s$ bits of space, for some $\alpha \ge 1$ and $p \le \frac{\log n}{4\alpha}-2$. 	Then there is a protocol $\prot$ for the $\ORPPC_{t, d, w}$
	problem with parameters \[
	t = n^{\frac{1}{4\alpha (p+2)}},	 \quad d = p+2,
	  \quad \text{and} \quad w = \frac{ n}{2(p+2)},
	\]
	such that $\pi$ has success probability at least $2/3 - \log(n) \cdot n^{-1/4}$  and  communication cost $s$.
\end{lemma}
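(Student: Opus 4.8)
The plan is to simulate the streaming algorithm $\alg$ via a communication protocol between Alice and Bob, where each round of communication corresponds to one pass of $\alg$ over the collection graph. First I would set up the reduction: given an $\ORPPC_{t,d,w}$ instance, the players implicitly consider the collection graph $G$ on $2d$ layers of $w$ vertices each, which has $2dw = n$ vertices by our choice of parameters. The edges of $G$ between layers $V_i$ and $V_{i+1}$ for $i \in [1,d-1]\cup[d+1,2d-1]$ are the $t$ matchings $\{M^j_{i'}\}$ (suitably indexed) from the $t$ $\PPC$ instances, while the edges between $V_d$ and $V_{d+1}$ form the fixed identity matching (coming from the join operation). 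By \Cref{def:paired-pc}, Alice holds all odd-indexed matchings and Bob holds all even-indexed matchings of every instance; note the edges between $V_d$ and $V_{d+1}$ are fixed and known to both.

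The heart of the argument is the pass-to-round simulation. Since $\alg$ uses $s$ bits of space, the players can run $\alg$ on a stream that first lists Alice's edges, then Bob's edges. Using public randomness for $\alg$'s coin flips, in each pass Bob starts by feeding his edges and passing the memory state ($s$ bits) to Alice, who feeds her edges and passes the state back; this takes $\le 2$ messages per pass, so $p$ passes cost $\le 2p$ messages of $s$ bits each — but we need to be careful to match the round structure of $\ORPPC$ ($d-2 = p$ rounds with Bob speaking first). Actually, the cleaner framing: each pass corresponds to one message from Bob followed by one from Alice, and we want the total to fit in $d-2 = p$ rounds. I would order the stream so that Bob's matchings come first in each pass (consistent with ``Bob speaks first''), interleave appropriately with the fixed middle matching, and observe that the communication per round is exactly the memory contents, giving communication cost $s$. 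At the end, the player holding the memory runs $\alg$'s output routine to get an $\alpha$-approximation of $\dist_G(v_{1,1}, v_{2d,1})$.

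Next I would translate the distance estimate into a guess for $b$. By \Cref{obs:b-1-collection}, if $b=1$ then $\dist_G(v_{1,1},v_{2d,1}) \le 2d-1$, so any $\alpha$-approximation returns a value $< \alpha \cdot 2d$. By \Cref{clm:b-0-collection-long}, if $b=0$ then with probability $\ge 1 - (2t)^{2\alpha d}/w$ the true distance exceeds $\alpha \cdot 2d$, so the $\alpha$-approximation returns a value $> 2d$; combined with the fact that in the $b=1$ case the estimate is $< \alpha \cdot 2d$ — wait, these overlap, so I instead use the sharper dichotomy: if $b=1$ the estimate is at most $\alpha(2d-1) < 2\alpha d$, and if $b=0$ (in the good event) the true distance is $> 2\alpha d$ so the estimate is $> 2\alpha d$. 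Thus the protocol outputs $1$ iff the estimate is $\le 2\alpha d$ (say), which is correct whenever $\alg$ succeeds and the good event of \Cref{clm:b-0-collection-long} holds. The failure probability is at most $1/3$ (from $\alg$) plus $(2t)^{2\alpha d}/w$; plugging in $t = n^{1/(4\alpha(p+2))}$, $d = p+2$, $w = n/(2(p+2))$, we get $(2t)^{2\alpha d}/w = (2 n^{1/(4\alpha(p+2))})^{2\alpha(p+2)}/w = 2^{2\alpha(p+2)} \cdot n^{1/2} / w \le \log(n) \cdot n^{-1/4}$ for the stated parameter range $p \le \frac{\log n}{4\alpha} - 2$ (so that $2^{2\alpha(p+2)} \le 2^{\log n /2} = n^{1/4}$, hence the bound is $\le n^{1/4} \cdot n^{1/2}/(n/(2(p+2))) = 2(p+2)n^{-1/4} \le \log(n) n^{-1/4}$). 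Hence success probability $\ge 2/3 - \log(n) n^{-1/4}$.

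The main obstacle I anticipate is bookkeeping the exact correspondence between the $2p+\Theta(1)$ ``half-passes'' a naive simulation would use and the $d-2 = p$ rounds allotted in \Cref{def:ORPPC}, together with verifying that the fixed middle matching (between $V_d$ and $V_{d+1}$) does not cost an extra round — since it is known to both players it can be folded into whichever message is convenient, and the natural bijection between a pass of $\alg$ and a (Bob, Alice) message pair should close the gap exactly, but I would double-check the edge cases (first and last passes, who holds memory at termination). A secondary point to verify carefully is that $n$, $d$, $w$, $t$ are mutually consistent integers (e.g.\ that $2dw = 2(p+2)\cdot\frac{n}{2(p+2)} = n$, exactly as needed), and that the bound $p \le \frac{\log n}{4\alpha}-2$ is precisely what makes the $2^{2\alpha(p+2)}$ factor absorbable.
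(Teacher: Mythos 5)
Your approach mirrors the paper's (build the collection graph, simulate each pass of $\alg$ by exchanging the $s$-bit memory state, threshold the distance estimate), but the final probability bound contains a genuine error, not a slip. You write $2^{2\alpha(p+2)} \le 2^{\log n/2} = n^{1/4}$, but $2^{\log n/2} = n^{1/2}$. Redoing your chain correctly: $(2t)^{2\alpha d} = 2^{2\alpha d}\, t^{2\alpha d}$ with $t^{2\alpha d} = n^{1/2}$ and $2^{2\alpha d} \le n^{1/2}$ (from $p+2 \le \log n/(4\alpha)$) gives $(2t)^{2\alpha d}/w \le n/w = 2(p+2)$, which is $\Theta(p)$ and not $o(1)$; the whole reduction collapses. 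The problem is that peeling off the constant $2$ as a separate factor $2^{2\alpha d}$ is too lossy. The paper's proof instead absorbs the $2$ into $t$ by noting $\log t = \frac{\log n}{4\alpha(p+2)} \ge 2$ in the relevant regime, so $1 + \log t \le \frac{3}{2}\log t$ and $(2t)^{2\alpha d} \le t^{3\alpha d} = n^{3/4}$, yielding $(2t)^{2\alpha d}/w \le 2d\,n^{-1/4}$. Without some analogue of this trick, your factoring does not give the stated bound.

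A secondary issue, which you flag yourself but do not resolve: a ``(Bob, Alice) message pair'' per pass costs $2p > d-2$ rounds. The paper's one-message-per-pass simulation is that in a round where (say) Bob speaks, Bob first streams \emph{his own} edges into $\alg$ locally, then sends the $s$-bit state, and Alice finishes the same pass by streaming her edges; the next round reverses roles. This makes one pass cost one round at the price of the edge order alternating between passes — the paper's footnote notes that enforcing a consistent order would cost $2p$ rounds. If your reduction really uses two messages per pass, you exceed the $d-2 = p$ round budget of $\ORPPC_{t,d,w}$.
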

\begin{proof}
	Let $t = n^{\frac{1}{4\alpha (p+2)}}, d = p+2, \text{ and }  w = \frac{ n}{2(p+2)}$.
	Let $G = (V, E)$ be the collection graph corresponding to an instance of $\ORPPC_{t, d, w}$. Notice that $G$ has $|V| = 2dw = n$ vertices.

	\paragraph{Protocol for $\ORPPC_{t, d, w}$.}
	Alice and Bob will simulate streaming algorithm $\mathcal{A}$ to compute an $\alpha$-approximation of the shortest path distance between $v_{1, 1}$ and $v_{2d, 1}$ in $G$. 
	In rounds where Bob sends the message, Bob first runs algorithm $\mathcal{A}$ locally, using his edges in $G$ as the input stream to $\mathcal{A}$. 
	Then Bob sends the final work state $\mathcal{W}$ of $\mathcal{A}$ to Alice using $s$ bits of communication. Alice then
	continues running algorithm $\mathcal{A}$ starting from work state $\mathcal{W}$, while streaming her edges in $G$ to the algorithm.  
	In this manner, Alice and Bob simulate one pass of $\mathcal{A}$ using one round and $s$  communication. We can use a similar protocol  in rounds where Alice sends the message.\footnote{Notice that the order of edges in the stream are not the same in different passes in our simulation. If it is required by the algorithm that the order of edges is the same across all the passes, this can be done in $2p$ rounds. Asymptotically all the results remain the same.}
	
	Then in $p \le d-2$ rounds and $s$ communication cost, Alice and Bob can completely simulate $\mathcal{A}$ and compute an $\alpha$-approximation $\sigma \in \mathbb{R}_{\ge 0}$ of $\dist_G(v_{1, 1}, v_{2d, 1})$ with success probability at least $2/3$. 
	Recall that if $\sigma$ is an $\alpha$-approximation of $\dist_G(v_{1, 1}, v_{2d, 1})$, then
	$$
	\dist_G(v_{1, 1}, v_{2d, 1}) \le \sigma \le \alpha \cdot \dist_G(v_{1, 1}, v_{2d, 1}).
	$$
	The player who receives the last message does the following:
	\begin{itemize}
		\item outputs $b=1$ if $\sigma \le \alpha \cdot( 2d-1)$, and
		\item outputs $b=0$ if $\sigma > \alpha \cdot (2d-1)$.
	\end{itemize}

	 \paragraph{Correctness.} If $b=1$, then with probability at least $2/3$,  algorithm $\mathcal{A}$ succeeds, 
	 so $\sigma \le \alpha \cdot (2d-1)$ by \Cref{obs:b-1-collection}, and our protocol outputs $b=1$. Then our protocol succeeds with probability at least $2/3$ when $b=1$.  
	 
	 If $b=0$, then by \Cref{clm:b-0-collection-long}, $\dist_G(v_{1, 1}, v_{2d, 1}) \le \alpha \cdot 2d$ with probability at most
	 \begin{align*}
	 	\Pr[\dist_G(v_{1, 1}, v_{2d, 1}) \le \alpha \cdot 2d] & \le \frac{(2t)^{2\alpha d}}{w} \\
	 	& = \frac{2^{2\alpha d \cdot (1 + \log t)}}{w}  \\
	 	& \le   \frac{2^{2\alpha d \cdot \left(\frac{3}{2} \cdot \log t\right)}}{w} \tag{since $\log t \ge 2$} \\
	 	& = \frac{2^{2\alpha d \cdot \left(\frac{3}{2} \cdot \frac{1}{4\alpha d(p+2)} \cdot  \log n\right)}}{w} \tag{since $t = n^{\frac{1}{4\alpha (p+2)}}$}  \\
	 	& \le \frac{2^{3/4\log n}}{w} \tag{since $p \ge 1$} \\
	 	& = 2dn^{-1/4} \tag{since $w = n/(2d)$} \\
	 	& \le 2 \log(n) \cdot n^{-1/4} \tag{since $d \le \log n$}
	 \end{align*}
	 Then by the union bound, with probability at least $2/3-2\log(n) \cdot n^{-1/4}$ algorithm $\alg$ succeeds and $\sigma \ge \dist_G(v_{1, 1}, v_{2d, 1}) > \alpha \cdot 2d$. Thus when $b = 0$, our protocol succeeds with probability at least $2/3 - 2\log(n) \cdot n^{-1/4}$. 
	 
	 Since $\ORPPC_{t, d, w}$ is distributionally defined so that $\Pr[b=1]=\Pr[b=0]= \frac{1}{2}$, our protocol  succeeds with probability at least
	 $$
	 \frac{1}{2} \cdot 2/3 + \frac{1}{2} \cdot (2/3 - 2\log(n) \cdot  n^{-1/4}) \ge  2/3 - \log(n) \cdot n^{-1/4},
	 $$   
	 which completes the proof.
\end{proof}

We have completed \Cref{item:part1-or} in our lower bound framework, by translating  semi-streaming algorithms for approximate $s$-$t$ shortest path into  protocols for $\ORPPC$. We  finish the proof of \Cref{thm:main-lb} in the next subsection.

\subsection{Finishing the Proof of \Cref{thm:main-lb}}

In the next step of our lower bound proof, we translate  protocols for $\ORPPC$ into protocols for $\PPC$. This reduction, summarized in \Cref{prop:orppc-ppc}, uses a  direct sum argument combined with message compression techniques. We defer the proof of \Cref{prop:orppc-ppc} to \Cref{sec:dirsum}.

\begin{proposition}\label{prop:orppc-ppc}
For any $t, d, w, s \geq 1$, for any $0 < \delta < 1$, given any protocol $\protOR$ for $\ORPPC_{t, d, w}$ that uses communication at most $s$ bits, and has probability of error at most $\delta$, 
there is a protocol $\protPPC$ for $\PPC_{d, w}$ that has communication at most $\cOR \cdot (d-2) \cdot (s/t + 1)$ bits for some absolute constant $\cOR \geq 1$, and has probability of error at most $\delta  + 1/200$. 
\end{proposition}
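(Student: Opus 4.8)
The plan is to combine a standard direct-sum embedding with an external-information message-compression step and a ``communication odometer'' safeguard, mirroring the three-step outline sketched in the technical overview (direct sum, compression over $\muPPC^0$, then a truncation/odometer argument to control communication over $\muPPC^1$ as well). First I would set up the direct sum. Given a protocol $\protOR$ for $\ORPPC_{t,d,w}$, and a single $\PPC_{d,w}$ instance $(G^1, G^2)$ drawn from $\muPPC$ with answer bit $\rb$, Alice and Bob use public randomness to sample a uniformly random coordinate $\istar \in [t]$, plant $(G^1, G^2)$ in coordinate $\istar$, and independently sample the other $t-1$ coordinates from $\muPPC^0$ (these can be generated with no communication: each player privately samples the matchings they own in the remaining $t-1$ instances, using public randomness only to agree on a seed so the two players' views are consistent). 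They then run $\protOR$ on this $t$-fold instance and output its answer. Since the planted coordinate is uniform and all other coordinates are from $\muPPC^0$, the resulting $t$-fold input is distributed exactly as an $\ORPPC_{t,d,w}$ instance (with the OR-bit equal to $\rb$), so the error is still at most $\delta$; and the number of rounds is unchanged at $d-2$. The issue is that the \emph{communication} of this naive simulation is still $s$, not $s/t$; this is what the information-cost machinery buys back.

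Next I would pass to information cost. The key observation is that the \emph{external information cost} of $\protOR$ under the input distribution where all $t$ coordinates are drawn i.i.d.\ from $\muPPC^0$ is at most $s$ (it is bounded by the communication), and by a super-additivity/chain-rule argument over the $t$ (independent, under the all-$\muPPC^0$ distribution) coordinates, the \emph{per-coordinate} external information revealed about the transcript is on average at most $s/t$. Concretely, embedding a single $\PPC$ instance into a uniformly random coordinate while filling the rest from $\muPPC^0$ yields a protocol $\prot'$ for $\PPC$ whose external information cost under $\muPPC^0$ is $O(s/t)$ (up to an additive constant per round, which after $d-2$ rounds contributes the $+1$ in the bound $s/t + 1$). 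This is the step where I would cite the standard direct-sum-for-information results, e.g.\ as in~\cite{BarakBCR10}. Then I would apply external information compression---the Harsha--Jain--McAllester--Radhakrishnan scheme in the presentation of~\cite{RaoY20}---to turn $\prot'$, round by round, into a protocol whose \emph{expected} communication under $\muPPC^0$ is $O((d-2)\cdot(s/t + \log(1/\text{round accuracy})))$, at the cost of a small additive increase in error per round, which we distribute to keep total error below $\delta + 1/400$, say.

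Finally I would handle the fact that compression only controls expected communication under $\muPPC^0$, whereas the target protocol $\protPPC$ must have \emph{worst-case} communication $\cOR \cdot (d-2)\cdot(s/t+1)$ over both $\muPPC^0$ and $\muPPC^1$. Here I would run the compressed protocol but attach a communication odometer: abort the moment the total communication exceeds $C := \cOR \cdot (d-2) \cdot (s/t + 1)$ for a large enough absolute constant $\cOR$. Under $\muPPC^0$, by Markov's inequality this truncation fails with probability at most $1/400$ if $\cOR$ is large relative to the expected-communication constant, so the truncated protocol is still correct with probability $\delta + 1/200$ on $\muPPC^0$-inputs. The subtle point---and the main obstacle---is arguing correctness on $\muPPC^1$-inputs: a priori the compressed protocol could blow past the budget with high probability when the input is from $\muPPC^1$, and then aborting would destroy correctness there. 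The resolution is that an early abort is \emph{itself informative}: because the whole point of $\muPPC^1$ versus $\muPPC^0$ is that the answer bit is $1$ vs.\ $0$, a protocol that runs long when the input is ``$1$-like'' and short when ``$0$-like'' already distinguishes the two cases, so we may simply \emph{output $\rb = 1$ whenever the odometer trips}. One then checks that the overall error is at most $\delta + 1/400 + 1/400 \le \delta + 1/200$: on $\muPPC^0$ the extra error is the $\le 1/400$ truncation probability, and on $\muPPC^1$ the only new error comes from the at-most-$1/400$ round-accumulated compression error, since the odometer-trip branch always outputs the correct bit $1$. This is exactly the ``either low communication over both distributions, or distinguish the two distributions via an odometer'' dichotomy referenced in the overview; I expect verifying that the constants and error budget compose correctly (and that the odometer bookkeeping does not itself cost more than $O(1)$ bits per round) to be the most delicate part of the write-up.
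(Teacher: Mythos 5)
Your proposal follows the paper's proof essentially verbatim: a direct-sum embedding of the $\PPC$ instance into a uniformly random coordinate $\istar$ with the remaining coordinates filled from $\muPPC^0$, an external-information compression step (Harsha--Jain--McAllester--Radhakrishnan, via Rao--Yehudayoff) that converts low information cost under $\muPPC^0$ into low \emph{expected} communication, and a communication odometer that aborts and outputs $1$ once the budget is exceeded. Two small points of imprecision worth noting: first, the HJMR sampler is exact, so there is no ``round-accumulated compression error''---the only error beyond $\delta$ comes from the $\muPPC^0$ Markov-truncation event, which the paper charges at $1/100$ conditioned on $b=0$ to get $\delta+1/200$ overall; second, in the direct-sum embedding the paper samples coordinates $i<\istar$ from \emph{public} randomness and $i>\istar$ \emph{privately}, and while your ``each player privately samples the matchings they own'' phrasing is also valid (it in fact only lowers the information cost), the parenthetical suggestion of publicly agreeing on a seed for all $t-1$ remaining coordinates would break the chain-rule bound, since $\sum_i \mi{\rG_i}{\rProt \mid \rG_{-i}}$ is not controlled by $\en{\rProt}$.
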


The final step in our proof of \Cref{thm:main-lb} is a lower bound on the communication cost of the paired pointer chasing problem. This lower bound is stated in \Cref{lem:PPC-lb} and proved in \Cref{sec:lb-ppc}.

\begin{restatable}{lemma}{ppclemma}
	\label{lem:PPC-lb}
	For any integers $d, w \geq 2$, any deterministic protocol $\prot$ for $\PPC_{d, w}$ that achieves a success probability of at least $0.95$ over the distribution $\mu_{\PPC}$, must have cost $\Omega(w/d^5- \log w)$.
\end{restatable}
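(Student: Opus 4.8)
The plan is to reduce \emph{uniform} pointer chasing on layered graphs --- for which a communication lower bound against protocols with few rounds is available through \Cref{lem:unif-pc} (itself following \cite{AssadiN21}) --- to $\PPC_{d,w}$, so that a cheap $(d-2)$-round protocol for $\PPC_{d,w}$ would yield a cheap $(d-2)$-round protocol for uniform pointer chasing on a $\Theta(w)$-wide, $d$-layer instance, contradicting that bound.

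\textbf{Passing to the product graph.} By \Cref{def:prod-2pc} and \Cref{obs:product-basic}, an input $(G^1,G^2)$ of $\PPC_{d,w}$ is the same object as the single layered graph $\product{G^1}{G^2}\in\cL_{d,w^2}$, whose last layer is indexed by $[w]\times[w]$ and whose vertex reachable from $(\ver{1,1},\ver{1,1})$ equals $(\point{G^1}{\ver{1,1}},\point{G^2}{\ver{1,1}})$. Hence the answer bit $b$ is precisely the indicator that this product-pointer is a \emph{diagonal} vertex $(z,z)$: under $\muPPC^0$ the product-pointer is uniform over all $w^2$ last-layer vertices, while under $\muPPC^1$ it is uniform over the $w$ diagonal vertices. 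The difficulty --- the reason this is not immediately a known pointer-chasing instance --- is that each of the $d-1$ matchings of the product graph is assembled from only two matchings on $[w]$, and so is very far from a uniformly random perfect matching on $[w]^2$.

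\textbf{The rook set and the reduction.} The way around this is to restrict attention to a \emph{rook set} $S\subseteq[w]\times[w]$ of last-layer vertices (see \Cref{def:rook-set}): a partial permutation matrix --- all first coordinates distinct, all second coordinates distinct --- of size $m=\Theta(w)$ containing exactly $m/2$ diagonal vertices (``one-vertices'') and $m/2$ off-diagonal vertices (``zero-vertices''); I would first exhibit such an $S$ explicitly, placing the diagonal elements in one coordinate block and a fixed-point-free partial permutation in a disjoint block. The point of the rook property is that the $m$ elements of $S$ occupy distinct rows and distinct columns, so tracing them backwards through the (bijective) matchings yields $m$ pairwise non-interfering ``tracks''; conditioning $\muPPC$ on a suitable $\Omega(1)$-probability event that does not bias $b$ (such as $\point{G^1}{\ver{1,1}}$ landing in the row-set of $S$) together with the product-pointer travelling along $S$ then turns the problem, restricted to these tracks, into a \emph{genuine} uniform pointer-chasing instance on $\cL_{d,m}$, all other coordinates carrying only independent uniform noise. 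Because $S$ is balanced between one- and zero-vertices, $b$ becomes a \emph{balanced} predicate of the pointer of that uniform instance (which half of $S$ it reaches). Carrying this through, a deterministic $\PPC_{d,w}$ protocol with $d-2$ rounds, $c$ bits of communication, and success at least $0.95$ over $\muPPC$ pulls back to a $(d-2)$-round, $c$-bit protocol computing a balanced function of the pointer of a uniform $\cL_{d,m}$ pointer-chasing instance with constant advantage, and \Cref{lem:unif-pc} then forces $c=\Omega(m/d^5-\log m)=\Omega(w/d^5-\log w)$.

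\textbf{Main obstacle.} The technical heart, and the step I expect to be hardest, is making the previous paragraph precise, and it has two intertwined parts. First, one must prove that the rook property genuinely \emph{decorrelates} the product matchings: the product-graph distribution, conditioned on the pointer reaching $S$, has to factor --- along the $m$ tracks --- into a uniformly random $\cL_{d,m}$ pointer-chasing instance times independent uniform completions on the remaining vertices of each layer of $G^1$ and $G^2$. This is delicate precisely because the product matchings are heavily constrained a priori, and it is where the partial-permutation structure of $S$ does all the work (if two elements of $S$ shared a row, their tracks in $G^1$ would be forced to merge). Second is the probability bookkeeping that keeps the reduction non-degenerate: the ``obvious'' event to condition on --- the product-pointer lying in $S$ --- is itself hugely skewed toward $b=1$, so one must instead condition on a coarser event that is (essentially) independent of $b$, still of probability $\Omega(1)$, and then verify that conditioned on it the bit $b$ remains bounded away from $0$ and $1$; this is where the exact $m/2$ versus $m/2$ balance of \Cref{def:rook-set} is spent, along with the slack in the $0.95$ success threshold. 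The additive $\log w$ term in the statement is simply the overhead inherited from \Cref{lem:unif-pc}: pointer chasing with one more round than $d-2$ is solvable with $O(\log w)$ communication, so no lower bound of this form can beat $\Theta(\log w)$.
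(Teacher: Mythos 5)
Your high‑level plan (pass to the product graph, isolate a balanced rook set, reduce to uniform pointer chasing on $\cL_{d,\Theta(w)}$, and invoke \Cref{lem:unif-pc}) matches the paper's. You also correctly identify the decorrelation claim (the paper's \Cref{clm:unif-matching}) and the distributional skew as the two technical hurdles. But you do not actually resolve the skew, and the remedy you gesture at would not work; this is a genuine gap, and it is where the paper spends its main idea.

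\textbf{Why a fixed rook set fails.} Under $\muPPC$, conditioned on $b=1$ the product pointer is uniform over the $w$ diagonal vertices, while conditioned on $b=0$ it is uniform over all $w^2$ vertices. Take any \emph{fixed} rook set $P$ of size $m=\Theta(w)$ with $m/2$ diagonal and $m/2$ off‑diagonal vertices, as you propose to construct explicitly. Then
$\Pr[\text{pointer}\in P,\, b=1]=\tfrac12\cdot\tfrac{m/2}{w}$ while $\Pr[\text{pointer}\in P,\, b=0]=\tfrac12\cdot\tfrac{m}{w^2}$,
so $\Pr[b=1 \mid \text{pointer}\in P] = 1-O(1/w)$. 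Conditioned on the pointer lying in $P$, the bit $b$ is essentially constant, and the trivial ``always output $1$'' protocol wins; you cannot extract a hard balanced instance of $\PC$ from this conditioning. Your proposed coarser event (``$\point{G^1}{\ver{1,1}}$ lands in the row‑set of $S$'') does not fix this: that event is indeed $b$‑independent, but having $\point{G^1}{\ver{1,1}}$ in the row‑set of $P$ does not make the product pointer land in $P$ except with probability $O(1/w)$ when $b=0$, so you are back to the same imbalance once you try to ``travel along $S$.'' Moreover, your decorrelation claim (``conditioned on the pointer reaching $S$ the matchings along the tracks are uniform'') is a statement about the \emph{all‑matchings‑uniform} distribution conditioned on a pointer event; under $\muPPC$ restricted to that event you instead get a $b$‑weighted mixture, and the rook‑set factorization does not directly apply.

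\textbf{What the paper does instead.} The paper never conditions $\muPC$ on a fixed rook set. It introduces an auxiliary distribution $\mutilPC$ (\Cref{dist:muPC}): draw a \emph{uniformly random} $w$‑element set $T$ of off‑diagonal vertices, take uniform matchings conditioned on the pointer landing in $S\cup T$ (where $S$ is the whole diagonal, $|S|=|T|=w$), and declare $b=1$ iff the pointer is in $S$. Because $|S|=|T|$, the bit $b$ is exactly balanced under $\mutilPC$; because all of $S$ is included and $T$ is averaged over random off‑diagonal sets, $\mutilPC$ is $1/(2w)$‑close to $\muPC$ in total variation (\Cref{clm:dist-mu-mutil}), so a $0.95$‑protocol for $\muPC$ is a $(0.95-\tfrac1{2w})$‑protocol for $\mutilPC$. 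Only then does the paper pass to a rook subset $\Stil\cup\Ttil\subseteq S\cup T$ of size $0.2w$ (\Cref{clm:choose-rook-set,clm:choose-Stil}), condition $\mutilPC$ on the pointer lying in it (this event has probability $0.2$, so the success probability degrades to $\geq 0.7$), and apply \Cref{clm:unif-matching} to get a clean uniform $\PC_{d,0.2w}$ instance. The random‑$T$/TV step is the crux: it is the mechanism that simultaneously balances $b$, matches the $b=1$ marginal exactly, and keeps the $b=0$ marginal within $1/w$ in TV. Without it, no fixed choice of rook set of size $\Theta(w)$ can reproduce $\muPPC$ closely enough (any such set misses a constant fraction of the diagonal, already forcing TV distance $\Omega(1)$).

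So the proof attempt is on the right track structurally, but the part you flag as the ``technical heart'' is not just hard — the specific fix you sketch is incorrect, and the argument requires the auxiliary‑distribution‑plus‑TV device, which is absent from the proposal.
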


We can now put everything together  and  prove \Cref{thm:main-lb}, which is restated below.

\mainlb*
\begin{proof}
	Let $\alpha \ge 1$, and let $p$ be an integer such that $p \le \log(n) /(4\alpha) - 2$. Let $\mathcal{A}$ be a streaming algorithm (deterministic or randomized) on unweighted graphs that, with success probability at least $0.99$, can output an $\alpha$-approximation to the distance between a fixed pair of vertices using at most $s$ space and at most $p$ passes. 
	
	Then by \Cref{lem:psc-stream}, there is a protocol $\pi$ for the $\ORPPC_{t, d, w}$ problem with parameters
	$$t=n^{\frac{1}{4\alpha (p+2)}},  \quad d = p+2, \quad  \text{ and } \quad w = n/(2(p+2)),
	$$
	such that $\pi$ has success probability at least $0.99-\log(n) \cdot n^{-1/4} \ge 0.98$ and communication $s$. 
	
	By applying  \Cref{prop:orppc-ppc}, this implies there is a protocol $\pi_{\PPC}$ for $\PPC_{d, w}$ with success probability at least $0.95$ and communication cost at most $$c_{\text{OR}} \cdot (d-2) \cdot (s/t + 1) \le c_{\text{OR}} \cdot d \cdot (s/t+1)$$
	bits, for some absolute constant $c_{\text{OR}} \ge 1 $. 
	
	Finally, \Cref{lem:PPC-lb} implies that the communication cost of $\pi_{\PPC}$ is at least
	$
	\Omega(w/d^5 - \log w). 
	$
	Combining our upper and lower bounds on the cost of $\pi_{\PPC}$ gives
	$$
	 c_{\text{OR}} \cdot d \cdot (s/t+1) = 	\Omega(w/d^5 - \log w).
	$$
	Solving for $s$ and plugging in the values of $t, d,$ and $w$, we obtain
	$$
	s = \Omega\left( \frac{tw}{d^6} - \frac{t}{d} \log(w) - t \right) = \Omega\left( \frac{n^{1+1/(4\alpha (p+2))}}{2(p+2)^7} - n^{1/(4\alpha (p+2))} \log(n)  \right) = \Omega\left( \frac{n^{1+1/(4\alpha (p+2))}}{p^7} \right).
	$$
	
	We have shown that any streaming algorithm that, with success probability at least $0.99$, can output an $\alpha$-approximation to the distance between a fixed pair of vertices using at most $p$ passes requires at least $\Omega\left( \frac{n^{1+1/(4\alpha (p+2))}}{p^7} \right)$ space. 
	We can extend this lower bound to hold for streaming algorithms that succeed with probability at least $2/3$, by using a standard probability of success amplification argument (running the algorithm in parallel $O(1)$ times and return the median answer).
\end{proof}


\subsection{Lower Bound for Paired Pointer Chasing}\label{sec:lb-ppc}

We prove \Cref{lem:PPC-lb}, which is the lower bound for $\PPC$ in this subsection.
We show that we can reduce $\PPC$ to a standard pointer chasing problem through a series of probabilistic arguments. 

We start by defining the standard pointer chasing problem. This communication problem is not defined with an inherent distribution, as we will analyze it on various distributions of the input while proving \Cref{lem:PPC-lb}.

\begin{Definition}[Pointer Chasing]\label{def:standard-pc}
	For integers $w, d \geq 2$, the pointer chasing problem, denoted by $\PC_{d, w}$ is defined as follows. 
	\begin{enumerate}[label=$(\roman*)$]
	\item The input is a graph $G \in \cL_{d, w}$ split between the players.  Alice receives $M_i$ for all odd $i \in [d-1]$ and Bob receives $M_i$ for all even $i \in [d-1]$.
	\item There is an arbitrary partition of vertex set $V_{d}  \subset V$ into sets $W, \overline{W}$ known to both players.
	\item The players speak for a total of $d-2$ rounds. At the end of all the rounds, the player who receives the last message has to output whether $\point{G}{\ver{1,1}} \in V_d$ belongs to set $W$ or $\overline{W}$. 
\end{enumerate}
\end{Definition}

$\PPC_{d, w}$ problem can be viewed as an instance of $\PC_{d, w^2}$ naturally. We use $\Gprod = (\Vprod, \Eprod)$ to denote the graph $\product{G^1}{G^2}$ where $G^1$ and $G^2$ are graphs in an instance of $\PPC$.  For $i \in [d]$, we use $\Vprod_i$ to denote the partitions of $\Vprod$ into $d$ layers. For $i \in [d-1]$, we use $\Mprod_i$ to denote the perfect matching between layers $\Vprod_i$ and $\Vprod_{i+1}$ of $\Vprod$ in graph $\Gprod$ (see \Cref{obs:product-basic} for properties of the product graph $\Gprod$). 

\begin{observation}\label{obs:PPC-to-PC-split}
	In any instance of $\PPC_{d, w}$, Alice knows all $\Mprod_i$ for odd $i \in [d-1]$ and no other edges. Similarly, Bob knows $\Mprod_i$ for all even $i \in [d-1]$ and no other edges.
\end{observation}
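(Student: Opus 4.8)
The plan is to simply unwind the two relevant definitions, since the claim is really a statement about how the input of $\PPC_{d,w}$ is partitioned between the players after passing to the product graph.

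First I would recall from \Cref{def:paired-pc} that in an instance of $\PPC_{d,w}$ with input graphs $G^1, G^2 \in \cL_{d,w}$, the portion of the input given to Alice is exactly the collection of matchings $\{M_i^1, M_i^2 : i \in [d-1] \text{ odd}\}$, while the portion given to Bob is exactly $\{M_i^1, M_i^2 : i \in [d-1] \text{ even}\}$; these two collections together account for all $2(d-1)$ matchings of $G^1$ and $G^2$, and this partition is fixed regardless of the value of the bit $b$. Next I would invoke \Cref{def:prod-2pc}: in $\Gprod = \product{G^1}{G^2}$, the $i$-th matching $\Mprod_i$ (between layers $\Vprod_i$ and $\Vprod_{i+1}$) is $\Mprod_i = \{((x,y),(x',y')) : (x,x') \in M_i^1,\ (y,y') \in M_i^2\}$, so $\Mprod_i$ is a deterministic function of the single pair $(M_i^1, M_i^2)$ at the same index $i$ (and conversely $(M_i^1, M_i^2)$ can be recovered from $\Mprod_i$ by projecting each edge onto its two coordinates). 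Also, directly from \Cref{def:prod-2pc} (or \Cref{obs:product-basic}$(i)$), $\Eprod = \Mprod_1 \cup \cdots \cup \Mprod_{d-1}$, so these $d-1$ matchings exhaust the edges of $\Gprod$.

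Combining the two: for each odd $i \in [d-1]$ the pair $(M_i^1, M_i^2)$ lies in Alice's input, so Alice can reconstruct $\Mprod_i$ in full; for each even $i \in [d-1]$ the pair $(M_i^1, M_i^2)$ lies entirely in Bob's input and no part of it appears in Alice's input, so no edge of $\Mprod_i$ is part of Alice's input. Since the matchings $\Mprod_1, \dots, \Mprod_{d-1}$ exhaust $\Eprod$, it follows that Alice's input consists of exactly the edges of $\Mprod_i$ for odd $i$ and of no other edge of $\Gprod$; the statement for Bob is identical with the roles of ``odd'' and ``even'' swapped. I do not expect a genuine obstacle here: the only thing that needs verifying is that each product matching $\Mprod_i$ depends on, and only on, the two underlying matchings $M_i^1, M_i^2$ at index $i$, which is immediate from \Cref{def:prod-2pc}. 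The one point I would be careful to state correctly is that this is an assertion about which edges are \emph{handed to} each player, not about what Alice could \emph{infer} about Bob's matchings --- under the conditioning in the $b=1$ case the matchings are correlated, but that correlation plays no role in the present combinatorial observation.
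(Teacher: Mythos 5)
Your proof is correct and takes essentially the same approach as the paper: unwind \Cref{def:paired-pc} and \Cref{def:prod-2pc}, observe that $\Mprod_i$ is determined by (and only by) the pair $(M_i^1, M_i^2)$, and conclude from the parity-based split of the input. Your closing remark distinguishing "edges handed to a player" from "what a player could infer" is a reasonable clarification but is not needed in the paper's statement of the observation.
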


\begin{proof}
	In \Cref{def:prod-2pc}, for any $i \in [d-1]$, we know that $\Mprod_i$ depends only on $M^1_i$ and $M^2_i$, which are the $i^{\textnormal{th}}$ matchings in $G^1$ and $G^2$, respectively. For even $i$, both of these matchings are known to Bob, and neither of them are known to Alice. Likewise, for odd $i$, both of the matchings are known to Alice, whereas Bob knows neither of them.
\end{proof}

\begin{observation}\label{obs:PPC-to-PC-dist}
	In any instance of $\PPC_{d, w}$ with graphs $G^1$ and $G^2$, the distribution is chosen so that, 
	\begin{itemize}
		\item When $\rb = 0$, all the matchings in instances $G^1$ and $G^2$ are chosen uniformly at random and independently of each other. 
		\item When $\rb = 1$, all the matchings in instances $G^1$ and $G^2$ are chosen uniformly at random,  conditioned on $\point{\Gprod}{(\ver{1,1}, \ver{1,1})} = (\ver{d,i},\ver{d,i})$ for some $i\in[w]$. Moreover, the value of $\point{\Gprod}{(\ver{1,1}, \ver{1,1})}$ is uniform over the set $\set{(\ver{d,i}, \ver{d,i}) \mid i \in [w]} \subset \Vprod_d$. 
	\end{itemize}
\end{observation}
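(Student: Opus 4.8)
The plan is to read both claims directly off \Cref{def:paired-pc} together with \Cref{obs:product-basic}$(ii)$, which says $\point{\Gprod}{(x,y)} = (\point{G^1}{x}, \point{G^2}{y})$ for $\Gprod = \product{G^1}{G^2}$. The case $\rb = 0$ is immediate: by \Cref{def:paired-pc}$(iv)$ all matchings of $G^1$ and $G^2$ are drawn uniformly at random and hence independently across the two instances, which is exactly the first bullet. For $\rb = 1$, apply \Cref{obs:product-basic}$(ii)$ with $x = y = \ver{1,1}$ to get $\point{\Gprod}{(\ver{1,1},\ver{1,1})} = (\point{G^1}{\ver{1,1}}, \point{G^2}{\ver{1,1}})$; therefore the conditioning event ``$\point{G^1}{\ver{1,1}} = \point{G^2}{\ver{1,1}}$'' of \Cref{def:paired-pc}$(v)$ is literally the event that $\point{\Gprod}{(\ver{1,1},\ver{1,1})}$ lands on the diagonal $\set{(\ver{d,i},\ver{d,i}) \mid i \in [w]}$, which yields the first half of the second bullet.

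It remains to show the ``moreover'' part: that under $\muPPC^1$ the common pointer value is uniform over the diagonal. I would argue by symmetry. First, under $\muPPC^0$, the endpoint $\point{G^j}{\ver{1,1}}$ is uniform over $V_d^j$ for each $j \in \set{1,2}$: reveal $M_1^j, M_2^j, \dots, M_{d-1}^j$ one layer at a time, and note that the conditional law of $M_i^j$ given the earlier matchings is still that of a uniformly random perfect matching between $V_i^j$ and $V_{i+1}^j$, so the current endpoint of the traced path is sent to a uniformly random vertex of $V_{i+1}^j$; an induction over $i$ closes this. Since the matchings of $G^1$ and $G^2$ are independent under $\muPPC^0$, the pair $(\point{G^1}{\ver{1,1}}, \point{G^2}{\ver{1,1}})$ is uniform over $V_d^1 \times V_d^2$; identifying both last layers with $[w]$, conditioning two i.i.d.\ uniform $[w]$-valued variables on being equal leaves their common value uniform over $[w]$. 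Pulling this back through \Cref{obs:product-basic}$(ii)$ says precisely that $\point{\Gprod}{(\ver{1,1},\ver{1,1})}$ is uniform over $\set{(\ver{d,i},\ver{d,i}) \mid i \in [w]}$ under $\muPPC^1$, as required.

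The main (and essentially only) point that needs care is the intermediate claim that a traced pointer in a layered graph with i.i.d.\ uniformly random matchings is uniform over the last layer; this is where the induction and the ``conditioning on the earlier matchings does not bias the next one'' observation are used. Everything else is bookkeeping with \Cref{def:paired-pc,def:prod-2pc} and the already-established \Cref{obs:product-basic}, so I do not expect any real obstacle here.
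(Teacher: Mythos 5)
Your proof is correct and follows essentially the same route as the paper: the $\rb=0$ case is read off \Cref{def:paired-pc}, the $\rb=1$ conditioning event is translated through \Cref{obs:product-basic}$(ii)$, and the ``moreover'' part comes down to a symmetry argument. The only difference is one of detail: where the paper simply asserts that ``the distribution of the matchings is symmetric across all elements of the diagonal set,'' you unpack that symmetry explicitly — proving by a layer-by-layer induction that each $\point{G^j}{\ver{1,1}}$ is uniform over $V_d^j$, combining independence under $\muPPC^0$ to get a uniform pair, and then conditioning the pair on equality. That expanded argument is sound and arguably a bit cleaner to verify, but it is the same underlying reasoning.
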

\begin{proof}
	The distribution of the matchings in $G^1$ and $G^2$ when $\rb=0$ are exactly as described in \Cref{def:paired-pc}. When $\rb = 1$, we know that in $\PPC_{d, w}$, the two graphs $G^1$ and $G^2$ are chosen so that $\point{G^1}{\ver{1,1}} = \point{G^2}{\ver{1,1}} = \ver{d,j}$ for some $j \in w$. By part $(ii)$ of \Cref{obs:product-basic}, we know that $\point{\Gprod}{(1, 1)} = (\point{G^1}{1}, \point{G^2}{1}) = (\ver{d,j}, \ver{d,j})$ also. For the ``Moreover" part, the distribution of the matchings is symmetric across all elements of the set $\set{(\ver{d,i}, \ver{d,i}) \mid i \in [w]}$, and thus $\point{\Gprod}{(\ver{1,1}, \ver{1,1})}$ is uniform over this set.
\end{proof}

Now, we know that the value of $\rb$ induces a distribution over $\cL_{d, w^2}$, where the two graphs $G^1$ and $G^2$ are sampled so that $\Gprod$ obeys the properties in the statement of \Cref{obs:PPC-to-PC-dist}. By \Cref{obs:PPC-to-PC-split}, we know that the graph $\Gprod$ is split between Alice and Bob exactly the same as if it was given as an instance of $\PC_{d, w^2}$.
However, observe that as of now, there is \textbf{no partition} of $\Vprod_d$ in $\Gprod$ into two sets $W, \overline{W}$ known to both Alice and Bob, as in \Cref{def:standard-pc}. They have to find the \textbf{value of $\bm{b}$}, and not where the final pointer $\point{\Gprod}{(1,1)}$ is. We will choose sets $W, \overline{W}$, so that finding the value of $\rb$ is (almost) equivalent to finding where $\point{\Gprod}{(1,1)}$ lies.

Let $\muPC$ denote the distribution of the random variables in input of $\PC_{d, w^2}$ induced by distribution $\mu_{\PPC}$, along with that of bit $\rb$. It is sufficient to show that outputting the value of $\rb$ is hard in $\PC_{d, w^2}$ under distribution $\mu_{\PC}$.
We work with another distribution $\mutilPC$ (again over graphs from $\PC_{d, w^2}$ and bit $\rb$). Distribution $\mutilPC$ is ``close'' to $\mu_{\PC}$, which will be useful in our analysis.

\begin{Distribution}\label{dist:muPC}
	\textbf{Distribution $\mutilPC$}:
	\begin{enumerate}[label=$(\roman*)$]
		\item Let set $S = \{(\ver{d,i}, \ver{d,i}) \mid i \in [w]\}$. Choose set $T$ of $w$ elements uniformly at random from the set  $\Vprod_d \setminus S$. 
		\item Sample graphs $G^1$ and $G^2$ uniformly at random conditioned on $\point{\Gprod}{(\ver{1,1}, \ver{1,1})} \in S \cup T$. 
		\item When $\point{\Gprod}{(\ver{1,1}, \ver{1,1})} \in S$, fix $\rb = 1$, and set $\rb = 0$ otherwise. 
	\end{enumerate}
\end{Distribution}

We will show that the distribution $\muPC$ and $\mutilPC$ are close to each other in total variation distance. 


\begin{claim}\label{clm:dist-uniform-ST}
In distribution $\mutilPC$, for any choice of set $T$, the value of $\point{\Gprod}{(\ver{1,1}, \ver{1,1})}$ is uniform over the set $S \cup T$.
\end{claim}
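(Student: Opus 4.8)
The plan is to exploit the symmetry of the uniform measure over layered graphs under relabelling of the vertices in the last layer. Fix an arbitrary set $T$ of $w$ elements of $\Vprod_d \setminus S$, so that $S \cup T$ is a fixed subset of $\Vprod_d$ of size $2w$. We must show that, conditioned on $\point{\Gprod}{(\ver{1,1},\ver{1,1})} \in S \cup T$, the endpoint $\point{\Gprod}{(\ver{1,1},\ver{1,1})}$ is uniform over $S \cup T$. First I would unwind what the distribution of $\Gprod$ actually is: by \Cref{obs:PPC-to-PC-dist} (and the discussion preceding this claim), in step $(ii)$ of \Cref{dist:muPC} we sample $G^1, G^2$ uniformly and independently from $\cL_{d,w}$, then condition on the event $\point{\Gprod}{(\ver{1,1},\ver{1,1})} \in S \cup T$. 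Equivalently, $\Gprod$ is a uniformly random element of the set of product graphs $\{\product{G^1}{G^2} : G^1, G^2 \in \cL_{d,w}\}$ (with multiplicity), conditioned on its endpoint landing in $S \cup T$.

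The key step is a bijection argument. Pick any two vertices $a, a' \in S \cup T$; I want to show $\Pr[\point{\Gprod}{(\ver{1,1},\ver{1,1})} = a] = \Pr[\point{\Gprod}{(\ver{1,1},\ver{1,1})} = a']$ in the \emph{unconditioned} product distribution (before conditioning on landing in $S \cup T$) — once I have this for all pairs in $\Vprod_d$, conditioning on a subset containing both preserves the equality, and the claim follows. To establish the unconditioned symmetry, I would note that $a = (\ver{d,i_1}, \ver{d,i_2})$ and $a' = (\ver{d,i_1'}, \ver{d,i_2'})$ for some indices, and the map that swaps the labels $\ver{d,i_1} \leftrightarrow \ver{d,i_1'}$ in the last layer of $G^1$ (i.e. post-composes $M^1_{d-1}$ with the transposition of those two vertices) is a measure-preserving bijection on $\cL_{d,w}$, since the uniform distribution on $\cL_{d,w}$ is invariant under relabelling the vertices within any single layer. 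Doing the analogous swap in $G^2$ handles the second coordinate. Composing these, I get a measure-preserving bijection on pairs $(G^1, G^2)$ that sends the event $\{\point{\Gprod}{(\ver{1,1},\ver{1,1})} = a\}$ to the event $\{\point{\Gprod}{(\ver{1,1},\ver{1,1})} = a'\}$ — here I use that $\point{\Gprod}{(\ver{1,1},\ver{1,1})} = (\point{G^1}{\ver{1,1}}, \point{G^2}{\ver{1,1}})$ by part $(ii)$ of \Cref{obs:product-basic}, and that relabelling vertex $\ver{d,i_1}$ to $\ver{d,i_1'}$ in $G^1$ turns the endpoint $\point{G^1}{\ver{1,1}} = \ver{d,i_1}$ into $\ver{d,i_1'}$ while leaving $\ver{1,1}$ (in layer $V_1$, untouched) fixed. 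This gives the unconditioned symmetry, hence the conditional uniformity.

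The only mild subtlety — and the one place to be careful rather than the "hard part," since the proof is genuinely short — is bookkeeping the relabelling: a transposition in layer $V_d$ of $G^1$ only alters the single matching $M^1_{d-1}$ (it permutes which vertex of $V_d$ each vertex of $V_{d-1}$ is matched to), and leaves $M^1_1, \dots, M^1_{d-2}$ untouched, so uniformity of each matching is clearly preserved; and since $\ver{1,1}$ lies in $V_1 \neq V_d$ (as $d \geq 2$), the starting vertex is unaffected, so only the endpoint moves. I would write this out explicitly for one coordinate and remark that the other is symmetric, then conclude by noting the bijection restricts to a bijection between the two conditioned events and is measure-preserving there too.
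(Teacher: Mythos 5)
Your proposal is correct and takes essentially the same route as the paper: both establish that the unconditioned distribution of $\point{\Gprod}{(\ver{1,1},\ver{1,1})}$ is uniform over all of $\Vprod_d$ (the paper by appealing to symmetry, independence, and \Cref{obs:product-basic}; you by the explicit transposition bijection, which just unpacks that symmetry), and then conclude by conditioning on $S \cup T$. The only stylistic difference is that you are more explicit about the symmetry step and more terse about the conditioning step, whereas the paper writes out the Bayes computation in full.
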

\begin{proof}
	
Firstly, we know that when the matchings in $G^1$ and $G^2$ are chosen uniformly at random, the values of $\point{G^1}{\ver{1,1}}$ and $\point{G^2}{\ver{1,1}}$ are uniform over the set $V^1_d$ and $V^2_d$ respectively by symmetry. Moreover, when the matchings are chosen independently of each other, the value of $\point{\Gprod}{(\ver{1,1}, \ver{1,1})}$ is uniform over the set $\Vprod_d= V^1_d \times V^2_d$, by part $(ii)$ of \Cref{obs:product-basic}. 
Therefore, 
\begin{equation}\label{eq:ScupT}
\Pr[\point{\Gprod}{(\ver{1,1}, \ver{1,1})} \in S \cup T] = \frac{\card{S \cup T}}{w^2}.
\end{equation}
	
Let $u = (x,y) \in S\cup T$ be any arbitrary vertex in $\Vprod_d$ with $x \in V_d^1$ and $y \in V_d^2$. We show that,
\begin{equation}
	\Pr[\point{\Gprod}{(\ver{1,1}, \ver{1,1})} = u = (x,y) \mid \point{\Gprod}{(\ver{1,1}, \ver{1,1})} \in S \cup T] = \frac1{\card{S \cup T}}, 
\end{equation}
which is sufficient to prove the claim.

We have, 
\begin{align*}
	&\Pr[\point{\Gprod}{(\ver{1,1}, \ver{1,1})} = u \mid \point{\Gprod}{(\ver{1,1}, \ver{1,1})} \in S \cup T]  \\
	&= \frac1{\Pr[\point{\Gprod}{(\ver{1,1}, \ver{1,1})} \in S \cup T]}\cdot \Pr[\point{\Gprod}{(\ver{1,1}, \ver{1,1})} = u\cap \point{\Gprod}{(\ver{1,1}, \ver{1,1})} \in S \cup T]  \tag{by law of total probability}\\
	&= \frac{w^2}{\card{S \cup T}}\cdot \Pr[\point{\Gprod}{(\ver{1,1}, \ver{1,1})} = u = (x,y)] \tag{by \Cref{eq:ScupT}}\\
	&= \frac{w^2}{\card{S \cup T}}\cdot \frac1{w^2} \tag{as $\point{\Gprod}{(\ver{1,1}, \ver{1,1})}$ is uniform over $\Vprod_d$} \\
	&= \frac1{\card{S \cup T}},
\end{align*}
which completes the proof. 
\end{proof}

%

\begin{claim}\label{clm:dist-mu-mutil}
	Distributions $\muPC$ and $\mutilPC$ are close to each other in total variation distance. 
	\[
		\tvd{\muPC}{\mutilPC} \leq 1/2w.
	\]
\end{claim}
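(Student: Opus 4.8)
The plan is to show that the only difference between $\muPC$ and $\mutilPC$ lies in how the ``zero side'' is generated: in $\muPC$ (which is induced by $\muPPC$) the event $\rb=0$ corresponds to $\point{\Gprod}{(\ver{1,1},\ver{1,1})}$ being uniform over \emph{all} of $\Vprod_d \setminus S$, whereas in $\mutilPC$ the event $\rb=0$ corresponds to the pointer being uniform over a random size-$w$ subset $T \subseteq \Vprod_d \setminus S$. On the ``one side'' ($\rb=1$, pointer in $S$) the two distributions agree exactly, by \Cref{obs:PPC-to-PC-dist} and \Cref{clm:dist-uniform-ST}. So the total variation distance is entirely concentrated on the $\rb=0$ part, and it comes only from the choice of which $w$ vertices of $\Vprod_d \setminus S$ can receive the pointer. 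First I would write $\tvd{\muPC}{\mutilPC}$ as a sum over the two values of $\rb$, discard the $\rb=1$ term as zero, and reduce to comparing the conditional laws of $\Gprod$ given $\rb=0$.

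Next, I would exploit the key structural fact (which follows from part $(ii)$ of \Cref{obs:product-basic} together with symmetry of uniform matchings): conditioned on the \emph{value} of the final pointer $\point{\Gprod}{(\ver{1,1},\ver{1,1})}$ being a fixed vertex $u \in \Vprod_d$, the conditional distribution of the matchings (hence of the whole graph $\Gprod$, hence of the protocol transcript) is the \emph{same} in both $\muPC$ and $\mutilPC$ — it is just ``uniform matchings conditioned on the pointer hitting $u$''. Therefore $\muPC$ restricted to $\rb=0$ is a mixture, over $u$ uniform in $\Vprod_d\setminus S$, of these conditional laws, while $\mutilPC$ restricted to $\rb=0$ is a mixture, over $u$ uniform in a random $T$, of the \emph{same} conditional laws. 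By convexity of total variation distance (and averaging over the choice of $T$ in $\mutilPC$), it suffices to bound the total variation distance between the two mixing measures on $u$: the uniform distribution on $\Vprod_d\setminus S$ versus the uniform distribution on a uniformly random size-$w$ subset $T$ of $\Vprod_d\setminus S$. After averaging over $T$, the latter is again exactly uniform on $\Vprod_d\setminus S$, so in fact these mixing measures coincide and one might worry the distance is $0$; the actual gap of $1/(2w)$ comes instead from the $\Pr[\rb=0]$ normalization mismatch, which I address below.

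The real bookkeeping is the normalization. In $\muPPC$ (Definition~\ref{def:paired-pc}) we have $\Pr[\rb=1]=\Pr[\rb=0]=1/2$ exactly. In $\mutilPC$, however, $\rb$ is \emph{defined} by whether $\point{\Gprod}{(\ver{1,1},\ver{1,1})}$ lands in $S$ versus $T$, conditioned on it landing in $S\cup T$; since $|S|=|T|=w$ and, by \Cref{clm:dist-uniform-ST}, the pointer is uniform over $S\cup T$, we get $\Pr_{\mutilPC}[\rb=1] = w/|S\cup T|$, and $|S\cup T|$ is $2w$ only when $S$ and $T$ are disjoint. They need not be disjoint: $S = \{(\ver{d,i},\ver{d,i})\}$ and $T$ is a uniform size-$w$ subset of $\Vprod_d\setminus S$, so actually $S\cap T=\emptyset$ always by construction of $T$ — wait, $T$ is drawn from $\Vprod_d\setminus S$, so they \emph{are} disjoint and $|S\cup T| = 2w$ exactly, giving $\Pr_{\mutilPC}[\rb=1]=1/2$ as well. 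So the normalization matches too. The remaining $1/(2w)$ slack must therefore be a clean safety margin: I would trace it to the fact that $\muPC$ is the \emph{induced} distribution, where conditioning on $\point{G^1}{\ver{1,1}}=\point{G^2}{\ver{1,1}}$ (the $\rb=1$ event of $\muPPC$) gives the pointer uniform over $S$, and the $\rb=0$ event gives the pointer uniform over $\Vprod_d \setminus S$ which has size $w^2 - w = w(w-1)$, not $w^2$; the pointer in $\mutilPC$ conditioned on $\rb=0$ is uniform over $T$ of size $w$. Both are uniform, but the ambient supports differ, and when we couple by ``same pointer value'', $\muPC$ can place the $\rb=0$ pointer on any of $w(w-1)$ vertices while $\mutilPC$ only uses $w$ of them; the coupling fails exactly when $\muPC$'s pointer lies outside $T$.

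So the clean argument is: build the optimal coupling by matching the $\rb=1$ parts perfectly (contributing $0$), and in the $\rb=0$ parts coupling the pointer value whenever it lies in $T$ (which happens with probability $|T|/|\Vprod_d\setminus S| = w/(w(w-1)) = 1/(w-1)$ under $\muPC$'s $\rb=0$ law, and always under $\mutilPC$'s). Hmm, that gives a much larger discrepancy than $1/(2w)$ — so the intended statement must be comparing $\muPC$ to $\mutilPC$ where $\mutilPC$ is designed with $|T|=w$ precisely so the $\rb=1$ masses are equal but then the $\rb=0$ conditional supports are both ``uniform over size-$w$ sets'' once we also restrict $\muPC$. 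Let me reconsider: the cleanest route is to observe that $\mutilPC$ is exactly $\muPC$ conditioned on the event $\{\point{\Gprod}{(\ver{1,1},\ver{1,1})} \in S\cup T\}$ after the extra randomness of picking $T$ — and since under $\muPC$ the pointer already lands in $S$ with probability $1/2$ (that is the $\rb=1$ mass) i.e. $w/(w\cdot$ something$)$... The main obstacle, and the thing I would pin down carefully, is precisely this normalization accounting: identifying $\mutilPC$ as $\muPC$ further conditioned on a probability-$\ge 1 - 1/(2w)$ event (namely that the $\rb=0$ pointer happens to fall inside the randomly chosen $T$, which under the symmetric/exchangeable structure has probability at least $1-1/(2w)$ after averaging over $T$), and then invoking the standard fact that conditioning on an event of probability $\ge 1-\eta$ changes total variation distance by at most $\eta$. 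I would structure the final writeup as: (1) reduce to the $\rb=0$ slice; (2) show both slices are mixtures over pointer-value of a common kernel; (3) compute that the mixing weights differ in total variation by at most $1/(2w)$ by the conditioning-event argument; (4) conclude by the data-processing/convexity inequality for total variation distance. The only genuinely delicate step is (3); everything else is routine.
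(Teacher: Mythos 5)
Your decomposition — agree on the $\rb$ marginal, agree exactly on the $\rb=1$ slice, reduce to the $\rb=0$ slice, and view the slices as mixtures over the final pointer value of a common kernel — is the right framework, and it is essentially what the paper does. But there is a genuine error in the premise that derails the whole argument, and your several attempted recoveries do not fix it.

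The error is in your very first sentence: you claim that under $\muPC$ (induced by $\muPPC$), the event $\rb=0$ corresponds to the final pointer being uniform over $\Vprod_d \setminus S$. This is false. In $\muPPC$, the case $\rb=0$ simply means that \emph{all matchings are chosen uniformly and independently}; there is no conditioning at all on the pointer. Consequently, under $\muPC \mid \rb=0$ the final pointer $\point{\Gprod}{(\ver{1,1},\ver{1,1})}$ is uniform over \emph{all} of $\Vprod_d$ (size $w^2$), and in particular it lands inside $S$ with probability $|S|/w^2 = 1/w$. Equal pointers can happen by chance even when $\rb=0$; that is precisely the $1/w$ of probability mass that creates the nonzero total variation distance. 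Once you set the $\rb=0$ support to $\Vprod_d\setminus S$, you find (correctly, given that premise) that the two mixing measures coincide after averaging over $T$, and the distance appears to vanish — which forces you into increasingly confused attempts to locate the missing $1/(2w)$ (the ``$1/(w-1)$'' coupling calculation, the ``$\mutilPC = \muPC$ conditioned on a $(1-1/(2w))$-probability event'' reading, etc.), none of which are right.

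The correct version of step (3) in your outline is: $\muPC \mid \rb=0$ puts the pointer uniform on $\Vprod_d$, while $\mutilPC \mid \rb=0$, averaged over the random choice of $T$, puts the pointer uniform on $\Vprod_d \setminus S$. The latter is exactly the former conditioned on the event $\{\text{pointer} \notin S\}$, whose complement has probability $1/w$. By the standard fact that conditioning away an event of probability $\eta$ shifts total variation by at most $\eta$, the two $\rb=0$ slices are at distance at most $1/w$. The chain rule then multiplies this by $\Pr[\rb=0]=1/2$, yielding $1/(2w)$. This is the paper's argument. Your steps (1), (2), and (4) are fine; fix the characterization of the $\rb=0$ slice of $\muPC$ and the rest falls into place.
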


\begin{proof}
The random variables involved in distributions $\muPC$ and $\mutilPC$ are the bit $\rb$, and all the matchings in $G^1, G^2$ which lead to $\Gprod \in \cL_{d,w^2}$. First, we argue that the distribution of $\rb$ is uniform in $\muPC$ and $\mutilPC$, and then we show that for each value of $\rb$, the distribution of the matchings are close to each other. Finally, we use the weak chain rule of total variation distance in \Cref{fact:tvd-chain-rule} to finish the proof. 
	
	\noindent
\textbf{Distribution of $\rb$:}
In distribution $\muPC$, the value of $\rb$ is uniform over $\{0,1\}$ by construction. In $\mutilPC$, we know that the value of $\point{\Gprod}{(\ver{1,1}, \ver{1,1})}$ is uniform over $S \cup T$ by \Cref{clm:dist-uniform-ST}. As $\card{S}= \card{T} = w$ by construction, the value of bit $\rb$ is chosen uniformly at random from $\{0,1\}$ in $\mutilPC$. 

\noindent
\textbf{When $\rb = 1$:} In distribution $\muPC$, when $\rb = 1$, we know that all matchings are chosen uniformly at random conditioned on $\point{\Gprod}{(\ver{1,1}, \ver{1,1})} \in S$. We will argue that this is true in $\mutilPC$ also. 

In $\mutilPC$, we know that matchings are chosen uniformly random conditioned on $\point{\Gprod}{(\ver{1,1}, \ver{1,1})}$ lying in $S \cup T$. However, when $\rb = 1$, we know that $\point{\Gprod}{(\ver{1,1}, \ver{1,1})}$ lies only in $S$ by definition of $\mutilPC$. Hence, the matchings are chosen uniformly at random conditioned on $\point{\Gprod}{(\ver{1,1}, \ver{1,1})} \in S$.	
	
\noindent
\textbf{When $\rb = 0$:}
In distribution $\muPC$, when $\rb = 0$, all the matchings are chosen uniformly at random and independently of each other. 
Here, we have, 
\begin{equation}\label{eq:S-inter-1}
	\Pr[\point{\Gprod}{(\ver{1,1}, \ver{1,1})} \in S \mid b = 0] = \frac1{w^2} \cdot \card{S} = \frac1{w}, 
\end{equation}
by part $(ii)$ of \Cref{obs:product-basic}, and by the matchings being uniformly random and independent in the graphs $G^1$ and $G^2$. 

In distribution $\mutilPC$, conditioned on $\rb = 0$, we know that the matchings are chosen uniformly at random and independent of each other conditioned on $\point{\Gprod}{(\ver{1,1}, \ver{1,1})} \in T$ for some uniformly random set $T$ of $w$ elements from $(V^1_d \times V^2_d) \setminus S$.  By the uniformity of $T$ over set $\Vprod_d \setminus S$, this is the same as sampling all matchings uniformly at random conditioned on $\point{\Gprod}{(\ver{1,1}, \ver{1,1})} \notin S$.

We have argued that the distribution $\mutilPC \mid b = 0$, is the same as distribution $\muPC \mid b  = 0$ conditioned on $\point{\Gprod}{(\ver{1,1}, \ver{1,1})} \notin S$. 

Therefore, by \Cref{fact:tvd-small-event}, we have,
\begin{align}\label{eq:ST-inter-1}
	\tvd{(\muPC \mid b = 0)}{(\mutilPC \mid b = 0)} \leq \Pr[\point{\Gprod}{(\ver{1,1}, \ver{1,1})} \in S \mid b = 0] = \frac1w,
\end{align}
where the last equality follows by \Cref{eq:S-inter-1}.

Now, we can complete the proof by \Cref{fact:tvd-chain-rule}.
\begin{align*}
	\tvd{\muPC}{\mutilPC} &\leq \tvd{\muPC(\rb)}{\mutilPC(\rb)} + \Exp_{\rb = b} \tvd{(\muPC \mid \rb= b)}{(\mutilPC \mid \rb = b)} \\
	&= 0 + \Exp_{\rb = b} \tvd{(\muPC \mid \rb= b)}{(\mutilPC \mid \rb = b)}  \tag{as $\rb$ is uniform over $\{0,1\}$ in both $\muPC$ and $\mutilPC$} \\
	& = \frac12 \cdot (\tvd{(\muPC \mid b = 1)}{(\mutilPC \mid b = 1)} + \tvd{(\muPC \mid b = 0)}{(\mutilPC \mid b = 0)}) \tag{again, as $\rb$ is uniform} \\
	&= \frac12 \cdot (0 +\tvd{(\muPC \mid b = 0)}{(\mutilPC \mid b = 0)} ) \tag{as we have argued, distributions are same conditioned on $\rb = 1$} \\
	&\leq \frac1{2w},
\end{align*}
where the last inequality follows by \Cref{eq:ST-inter-1}.
\end{proof}

By \Cref{clm:dist-mu-mutil}, as the distributions are close to each other, it is sufficient to prove that finding the value of $\rb$ is hard under distribution $\mutilPC$. 
We have made some progress from $\muPC$ in getting closer to standard pointer chasing: now, we have two sets $S$ and $T$ such that when $\rb = 0$, the final pointer is in $S$ and when $\rb = 1$, the pointer is in $T$. When the set $\Vprod_d$ is restricted to $S \cup T$, these sets $S$ and $T$ take the role of $W$ and $\overline{W}$ in \Cref{def:standard-pc} of standard pointer chasing. We continue to work with distribution $\mutilPC$.

The final pointer we are interested in is distributed uniformly at random by \Cref{clm:dist-uniform-ST}, however
the distribution of the specific matchings in $\mutilPC$ are far from uniform. They are from the product of the matchings in $G^1$ and $G^2$. In the next step of this subsection, we show that we can pick subsets of the vertex set $\Vprod$ in graph $\Gprod$ sampled from $\mutilPC$ so that these matchings become uniformly random and independent of each other.

\begin{definition}\label{def:rook-set}
	A subset $P \subseteq \Vprod_d$ is said to be a \textbf{rook set} if for any $(x, y) \in P$, $(x, y') \notin P$ for any $y \neq y'$ and $(x', y) \notin P$ for any $x' \notin P$.\footnote{\label{foot:rook-set}The name rook set has been chosen because when visualized as a grid of $w \times w$ squares, at most one element can be chosen from each row or column, similar to a non-attacking rook placement on a chess board.}
\end{definition}

For any subset $P \subset \Vprod_d$, for $i \in [d-1]$, we use $\Vprod_i(P)$ to denote the set of $\card{P}$ elements in $\Vprod_i$ that have a path to some vertex in $P$ in graph $\Gprod$. For the last layer, $\Vprod_d(P) $ will be equal to $P$.

\begin{claim}\label{clm:other-rook-sets}
	For any rook set $P$, for any $i \in [d]$, $\Vprod_i(P)$ is also a rook set.
\end{claim}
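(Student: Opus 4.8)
The plan is to argue by a straightforward downward induction on the layer index $i$, starting from $i = d$ where $\Vprod_d(P) = P$ is a rook set by hypothesis. For the inductive step, suppose $\Vprod_{i+1}(P)$ is a rook set; I want to show $\Vprod_i(P)$ is a rook set. Recall that $\Vprod_i(P)$ consists of exactly those vertices in $\Vprod_i$ that are matched by $\Mprod_i$ to a vertex in $\Vprod_{i+1}(P)$, and $\Mprod_i$ is a perfect matching of the special ``product'' form $\Mprod_i = \{((x_i, y_i), (x'_{i+1}, y'_{i+1})) \mid (x_i, x'_{i+1}) \in M^1_i,\ (y_i, y'_{i+1}) \in M^2_i\}$, where $M^1_i, M^2_i$ are perfect matchings in the two factor graphs.

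The key observation is that this product structure of the matching transports the rook property. Concretely, suppose $(x, y), (x, y') \in \Vprod_i(P)$ with $y \ne y'$. Let $(x', \cdot)$ be the $M^1_i$-partner of $x$, and let $z, z'$ be the $M^2_i$-partners of $y, y'$ respectively; since $M^2_i$ is a perfect matching and $y \ne y'$, we have $z \ne z'$. By the definition of $\Mprod_i$, the $\Mprod_i$-partners of $(x, y)$ and $(x, y')$ are $(x', z)$ and $(x', z')$, and both lie in $\Vprod_{i+1}(P)$ by definition of $\Vprod_i(P)$. But $(x', z)$ and $(x', z')$ share their first coordinate and differ in their second ($z \ne z'$), contradicting the assumption that $\Vprod_{i+1}(P)$ is a rook set. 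The symmetric argument handles the case $(x, y), (x', y) \in \Vprod_i(P)$ with $x \ne x'$. Hence $\Vprod_i(P)$ is a rook set, completing the induction.

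I do not expect any real obstacle here; the only thing to be careful about is bookkeeping the correspondence between edges of $\Mprod_i$ and pairs of edges of $M^1_i, M^2_i$ (i.e., that the first coordinates of two product-vertices in the same layer are matched by $M^1_i$ and the second coordinates by $M^2_i$, independently), and using that $M^1_i, M^2_i$ being \emph{perfect matchings} forces distinct vertices to have distinct partners. One could alternatively phrase the whole argument in one shot without induction, by noting that $\Vprod_i(P) = \{(x, y) : (\text{$M^1_i \cdots M^1_{d-1}$-image of }x,\ \text{$M^2_i \cdots M^2_{d-1}$-image of }y) \in P\}$ and that the two coordinate-wise images are each given by bijections on $[w]$; then the rook property of $\Vprod_i(P)$ is immediate from that of $P$ by composing bijections. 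Either presentation works; I would go with the short inductive one since the single-edge product structure is exactly what is already spelled out in \Cref{def:prod-2pc}.
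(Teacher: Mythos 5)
Your proof is correct and follows essentially the same downward induction by contradiction as the paper, using the product structure of $\Mprod_i$ to show that two offending vertices in $\Vprod_i(P)$ would map to two offending vertices in $\Vprod_{i+1}(P)$. The only cosmetic difference is that you spell out explicitly that $M^2_i$ being a perfect matching forces $z\neq z'$, whereas the paper leaves the analogous point (that the two images are distinct) implicit.
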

\begin{proof}
	The proof is by induction on $i$, ranging from $d$ and decreasing to $1$. The base case is when $i = d$, and $\Vprod_d(P) = P$. The base case holds trivially as $P$ is a rook set. 
	
	Let us assume towards a contradiction that $\Vprod_{i-1}(P)$ is not a rook set while $\Vprod_i(P)$ is a rook set. There exist two elements $(x, y)$ and $(x', y)$ in $\Vprod_{i-1}(P)$. We consider the two elements in $\Vprod_i(P)$ that $(x,y)$ and $(x',y)$ are mapped to by the matching $\Mprod_{i-1}$. Let these two elements be $(u_1, v_1)$ and $(u_2, v_2)$. But, we know that $v_1 = v_2$, as these are the vertices from layer $V^2_i$ of $G^2$ that was used to construct $\Gprod$. This contradicts our assumption that $\Vprod_i(P)$ is a rook set. The case when some $(x,y)$ and $(x,y')$ belongs to $\Vprod_{i-1}(P)$ can be analyzed similarly.
\end{proof}

\begin{claim}\label{clm:unif-matching}
	In distribution $\mutilPC$, for any rook set $P$, 
conditioned on $\point{\Gprod}{(\ver{1,1}, \ver{1,1})} \in P$, and all choices of $\Vprod_i(P)$ for $1 \leq i \leq d$, for all $1 \leq j \leq d-1$, the subset of $\Mprod_j$ between sets $\Vprod_j(P)$ and $\Vprod_{j+1}(P)$ is a uniformly random permutation independent of the choices of all other matchings, and all edges of $\Mprod_j$ not incident to $\Vprod_j(P)$ and $\Vprod_{j+1}(P)$. 
\end{claim}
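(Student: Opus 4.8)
The plan is to show that once we condition on the realized path–sets $\Vprod_i(P)$, all the randomness left in the matchings collapses to one independent uniformly random permutation per layer, and to prove this by an explicit re-randomization bijection carried out one layer at a time.

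First I would record the structure forced by the conditioning. By \Cref{clm:other-rook-sets} every $\Vprod_i(P)$ is a rook set, hence is determined by its set of first coordinates $A_i\subseteq V^1_i$ (of size $\card P$) together with a bijection $\beta_i$ onto its set of second coordinates $B_i\subseteq V^2_i$, so that $\Vprod_i(P)=\{(a,\beta_i(a)) : a\in A_i\}$. By the definition of the product graph, the ``internal'' part of $\Mprod_j$ — the subset of $\Mprod_j$ between $\Vprod_j(P)$ and $\Vprod_{j+1}(P)$ — sends $(a,\beta_j(a))$ to $(M^1_j(a),M^2_j(\beta_j(a)))$; for this to land in the rook set $\Vprod_{j+1}(P)$, the matching $M^1_j$ must send $A_j$ onto $A_{j+1}$, the matching $M^2_j$ must send $B_j$ onto $B_{j+1}$, and these two restrictions are \emph{coupled}: after fixing an ordering of each $A_i$ they are described by one common permutation $\sigma_j$ of $\card P$ elements. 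Write $\bar\Mprod_j$ for the complementary data ($M^1_j$ restricted to $V^1_j\setminus A_j$ and $M^2_j$ restricted to $V^2_j\setminus B_j$); the pair $(\sigma_j,\bar\Mprod_j)$ determines $\Mprod_j$. Finally, since the rook sets of interest satisfy $P\subseteq S\cup T$ (the only case the claim is invoked in), after conditioning on $\point{\Gprod}{(\ver{1,1},\ver{1,1})}\in P$ and on the prescribed values of the $\Vprod_i(P)$ the background conditioning of $\mutilPC$ on $\point{\Gprod}{(\ver{1,1},\ver{1,1})}\in S\cup T$ is already implied, so the conditional law of the matchings equals the uniform law over $\cL_{d,w}\times\cL_{d,w}$ conditioned only on the event $\mathcal E$ that each $\Vprod_i(P)$ attains its prescribed value. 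It therefore suffices to prove, in this uniform-conditioned model, that $(\sigma_1,\dots,\sigma_{d-1})$ are independent, each uniform over the $\card P!$ permutations, and jointly independent of $(\bar\Mprod_1,\dots,\bar\Mprod_{d-1})$ (and of the remaining matchings).

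For that I would use a counting/symmetry argument. Fix $j$ and fix the matchings of all layers $\ne j$ together with $\bar\Mprod_j$ to arbitrary values for which $\mathcal E$ holds for at least one $\sigma_j$; the key step is that $\mathcal E$ then holds for \emph{all} $\card P!$ values of $\sigma_j$. Layers $i\ge j+1$: $\Vprod_i(P)$ depends only on the matchings of layers $\ge i$, so it is untouched. Layer $j$: replacing $\sigma_j$ by any $\sigma'_j$ keeps $\Mprod_j$ a perfect matching that sends $\Vprod_j(P)$ onto $\Vprod_{j+1}(P)$ (because the fixed $\bar\Mprod_j$ sends $V^1_j\setminus A_j$ onto $V^1_{j+1}\setminus A_{j+1}$ and likewise in the second coordinate, while $\sigma'_j$ sends $A_j$ onto $A_{j+1}$), and, crucially, because $\Vprod_{j+1}(P)$ is a rook set, no vertex $(x,y)\notin\Vprod_j(P)$ can be mapped into $\Vprod_{j+1}(P)$ under any $\sigma'_j$: if $x\notin A_j$ then $M^1_j(x)\notin A_{j+1}$; if $y\notin B_j$ then $M^2_j(y)\notin B_{j+1}$; and if $x\in A_j$, $y\in B_j$ but $y\ne\beta_j(x)$, then $(M^1_j(x),M^2_j(y))$ has ``mismatched'' first and second coordinates and hence lies outside the rook set $\Vprod_{j+1}(P)$. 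So $\Vprod_j(P)$ is unchanged. Layers $i<j$: $\Vprod_i(P)$ is the preimage of $\Vprod_j(P)$ under the (fixed) matchings of layers $i,\dots,j-1$, hence also unchanged. Thus $\mathcal E$ is preserved. Since the $\card P!$ matching configurations obtained by ranging $\sigma_j$ over all permutations are equiprobable under the uniform prior and all lie in $\mathcal E$, the conditional distribution of $\sigma_j$ given $\mathcal E$, given the layers $\ne j$, and given $\bar\Mprod_j$ is uniform. As this holds for every such conditioning, the $\sigma_j$ are i.i.d. uniform and independent of the $\bar\Mprod_\ell$ and of $(\Mprod_\ell)_{\ell\ne j}$ — which, identifying $\sigma_j$ with the subset of $\Mprod_j$ between $\Vprod_j(P)$ and $\Vprod_{j+1}(P)$ and $\bar\Mprod_j$ with the remaining edges of $\Mprod_j$, is exactly the claim.

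The step that needs care is the layer-$j$ part of the $\mathcal E$-preservation argument: one must verify that swapping in a new internal permutation neither creates a new layer-$j$ vertex with a path to $P$ nor destroys an existing one. This is precisely where \Cref{clm:other-rook-sets} is indispensable — the rook-set structure of $\Vprod_{j+1}(P)$ is what rules out the ``off-diagonal'' product vertices $(x,y)$ with $x\in A_j,\ y\in B_j,\ y\ne\beta_j(x)$ from being matched into it, no matter which internal permutation is chosen. The reduction of the $\mutilPC$-conditioning to the uniform-conditioned model, and the propagation of $\mathcal E$ to the layers on either side of $j$, are routine.
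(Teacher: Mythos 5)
Your proof is correct, and it takes a genuinely different route from the paper's once both proofs have passed through the same initial reduction (the observation that, for $P\subseteq S\cup T$, conditioning on the prescribed sets $\Vprod_i(P)$ together with $\point{\Gprod}{(\ver{1,1},\ver{1,1})}\in P$ collapses the $\mutilPC$-conditioning to the uniform distribution over pairs of layered graphs conditioned only on the event $\mathcal E$ that the $\Vprod_i(P)$ attain their prescribed values). From there the paper proceeds by a direct factorial count: it fixes a single pair $(x,y)\in\Vprod_{i-1}(P)$, $(u,v)\in\Vprod_i(P)$, computes $\Pr[\Mprod_{i-1}\text{ maps }\Vprod_{i-1}(P)\text{ onto }\Vprod_i(P)]$ and $\Pr[(x,y)\mapsto(u,v)\text{ and the map event}]$ explicitly as ratios of factorials, and divides to get $1/k$. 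Your proof instead identifies the internal part of $\Mprod_j$ with a single permutation $\sigma_j$ of $\card{P}$ symbols (using the rook structure to encode the coupled restrictions of $M^1_j$ and $M^2_j$) and shows that, holding all other matchings and $\bar\Mprod_j$ fixed, varying $\sigma_j$ over all $\card{P}!$ values is a measure-preserving bijection of the event $\mathcal E$; the layer-by-layer verification that $\mathcal E$ is preserved, with the rook structure of $\Vprod_{j+1}(P)$ ruling out any off-diagonal vertex being pulled into the path set, is the analogue of the paper's counting step. The two arguments are interchangeable here, but they buy slightly different things: the paper's computation is shorter and more concrete for a single matched pair, while your symmetry argument directly yields the full joint uniformity of the internal permutation (not merely a per-edge marginal) and makes the independence across layers and from $\bar\Mprod_\ell$ fall out of the same conditioning-invariance, with the role of the rook-set property isolated to one transparent check. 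Your explicit scoping remark that the claim is only invoked for $P\subseteq S\cup T$ is a sensible hygiene point that the paper leaves implicit.
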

\begin{proof}
First, we observe that $\point{\Gprod}{(\ver{1,1}, \ver{1,1})}$ is uniform over the set $P$. This argument is the same as that of \Cref{clm:dist-uniform-ST}.

We will show that the matching edges between $\Vprod_{i-1}(P)$ and $\Vprod_i(P)$ are uniformly random conditioned on any choice of all other matchings, and all edges of $\Mprod_{i-1}$ not incident on $\Vprod_{i-1}(P), \Vprod_i(P)$ for any $i \in [d-1]$. Henceforth in this proof, we assume that all these other random variables are fixed to be something arbitrary.

Next, we observe that the condition of $\point{\Gprod}{(\ver{1,1}, \ver{1,1})} \in P$ is satisfied by the conditioning on these random variables already, as the vertex $(\ver{1,1}, \ver{1,1}) \in \Vprod_1$ belongs to the set $\Vprod_1(P)$, and hence has a path to some vertex in $P$. Therefore, now, in distribution $\mutilPC$, all the matchings $M^1_j$ and $M^2_j$ for $j \in[d-1]$ are chosen uniformly at random and independently of each other, conditioned on the choices of $\Vprod_j(P)$ for all $j \in [d]$ by definition. This implies that the matchings $M^1_{i-1}$ and $M^2_{i-1}$ are independent of the choices of all the other matchings, and the edges not incident to $\Vprod_{i-1}(P), \Vprod_i(P)$. 

It remains to argue that when the edges of the two matchings $M^1_{i-1} , M^2_{i-1}$ incident on $\Vprod_{i-1}(P)$, $\Vprod_i(P)$ are uniformly random conditioned on the choices of the two sets $\Vprod_{i-1}(P)$, $\Vprod_i(P)$, the respective edges in $\Mprod_{i-1}$ are uniformly random as well. 

Let us take any two arbitrary vertices $(x,y) \in \Vprod_{i-1}(P)$ and $(u,v) \in \Vprod_i(P)$. We will show that the probability of $(x,y)$ being mapped to $(u,v)$ is exactly $1/\card{\Vprod_i(P)}$, which is sufficient to prove the claim. 

Let $k = \card{\Vprod_i(P)} = \card{\Vprod_{i-1}(P)}$. 
First, we analyze what the probability is that the entirety of set $\Vprod_{i-1}(P)$ is mapped to $\Vprod_i(P)$. 
\begin{align*}
	&\Pr[\Mprod_{i-1} \textnormal{ maps $\Vprod_{i-1}(P)$ to $\Vprod_i(P)$}] \\
	& = \textnormal{Number of ways to arrange $\Vprod_{i}(P)$}\\
	&\hspace{15mm} \cdot \textnormal{Probability of getting a specific mapping from $\Vprod_{i-1}(P)$ to $\Vprod_{i}(P)$} \\
	&= k! \cdot \paren{\frac{(n-k)!}{n!}}^2, 
\end{align*}
as there are $k!$ ways of arranging set $\Vprod_i(P)$, and for any specific ordering, $k$ of the edges in both matchings $M^1_{i-1}$ and $M^2_{i-1}$ are fixed. The rest of the edges in both matchings remain fully random.

We also have, 
\begin{align*}
	&\Pr[(x,y) \textnormal{ is mapped to }(u,v) \textnormal{ and }\Mprod_{i-1}(P) \textnormal{ maps $\Vprod_{i-1}(P)$ to $\Vprod_i(P)$}] \\
	& =\Pr[(x,y) \textnormal{ is mapped to }(u,v)] \\
	& \hspace{15mm} \cdot \Pr[\Mprod_{i-1}(P) \textnormal{ maps $\Vprod_{i-1}(P)$ to $\Vprod_i(P)$} \mid (x,y) \textnormal{ is mapped to }(u,v)] \\
	&= \frac1{n} \cdot \frac1{n} \cdot \Pr[\Mprod_{i-1}(P) \textnormal{ maps $\Vprod_{i-1}(P)$ to $\Vprod_i(P)$} \mid (x,y) \textnormal{ is mapped to }(u,v)] \tag{as $M^1_{i-1}$ maps $x$ to $u$ and $M^2_{i-1}$ maps $y$ to $v$}\\
	& = \frac1{n^2} \cdot \Pr[\Mprod_{i-1}(P) \textnormal{ maps $\Vprod_{i-1}(P)$ to $\Vprod_i(P)$} \mid (x,y) \textnormal{ is mapped to }(u,v)] \\
	& = \frac1{n^2} \cdot (k-1)! \cdot \paren{\frac{(n-k)!}{(n-1)!}}^2, 
\end{align*}
where for the last equality, we know that conditioned on $(x,y)$ being mapped to $(u,v)$, the rest of the $n-1$ edges of both $M^1_{i-1}$ and $M^2_{i-1}$ are fully random. The probability that the vertices of $\Vprod_{i-1}(P) \setminus \{(x,y)\}$ are mapped to $\Vprod_i(P) \setminus \{(u,v)\}$ will be the number of ways of arranging the vertices of $\Vprod_i(P) \setminus \{(u,v)\}$ times the probability of a specific mapping from $\Vprod_{i-1}(P) \setminus \{(x,y)\}$ to $\Vprod_i(P) \setminus \{(u,v)\}$ being realized. 

We can continue the proof as follows. 
\begin{align*}
	&\Pr[(x,y) \textnormal{ is mapped to }(u,v) \mid \Mprod_{i-1}(P) \textnormal{ maps $\Vprod_{i-1}(P)$ to $\Vprod_i(P)$}] \\
	&= \frac{\Pr[(x,y) \textnormal{ is mapped to }(u,v) \textnormal{ and }\Mprod_{i-1}(P) \textnormal{ maps $\Vprod_{i-1}(P)$ to $\Vprod_i(P)$}]}{\Pr[\Mprod_{i-1} \textnormal{ maps $\Vprod_{i-1}(P)$ to $\Vprod_i(P)$}]} \\
	&= \frac1{n^2} \cdot (k-1)! \cdot \paren{\frac{(n-k)!}{(n-1)!}}^2 \cdot \frac1{k!} \cdot \paren{\frac{n!}{(n-k)!}}^2 \tag{from our equations above} \\
	&= \frac1{k} = \frac1{\card{\Vprod_i(P)}},
\end{align*}
which completes the proof. 
\end{proof}

We are very close to the standard pointer chasing problem over the uniform distribution with independent matchings. 
We will choose a rook set $P$ which is a subset of $S \cup T$ of large size, so that conditioned on the choices of $\Vprod_i(P)$ for all $i \in [d-1]$, the distribution of the matchings in $\PC_{d, w^2}$ is uniformly random. 

We need the following standard results about size of the largest matching in random bipartite graphs. The proof of this result is given in \Cref{app:random-graph-matching} for completeness. 

\begin{proposition}\label{prop:random-graph-matching}
For any large enough integer $k \geq 0$, in a random bipartite graph $G = (L \sqcup R, E)$ where $L= R= \{1,2, \ldots, k\}$, and $E$ is $k$ edges chosen uniformly at random without repetition from the set \[
E \subset L \times R \setminus \{(i,i) \mid i \in [k]\},
\]
with probability at least $1-1/k^2$, there exists a matching in $G$ of size at least $0.1k$. 
\end{proposition}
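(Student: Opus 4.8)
The proof is short, so here is the full plan. The plan is to prove the contrapositive via König's theorem: I will show that with probability at least $1-1/k^2$ the bipartite graph $G$ has \emph{no} vertex cover of size at most $0.1k$, which — since minimum vertex cover equals maximum matching in bipartite graphs — forces a matching of size at least $0.1k$. Write $N := k(k-1)$ for the number of non-diagonal pairs and let $\Omega := (L\times R)\setminus\{(i,i): i\in[k]\}$ be this ground set, so that $E$ is a uniformly random $k$-element subset of $\Omega$ with $|\Omega|=N$.

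First I would estimate, for a \emph{fixed} candidate cover $S = S_L \sqcup S_R$ with $|S|\le 0.1k$, the probability that $S$ covers all of $E$. The set $C_S\subseteq\Omega$ of non-diagonal pairs touched by $S$ has size at most $|S_L|(k-1)+|S_R|(k-1)=|S|(k-1)\le 0.1\,k(k-1)=0.1\,N$, and since $E$ is a uniform $k$-subset of $\Omega$,
\[
\Pr\!\big[E\subseteq C_S\big]=\frac{\binom{|C_S|}{k}}{\binom{N}{k}}\le\Big(\frac{|C_S|}{N}\Big)^{k}\le (0.1)^k,
\]
where the first inequality uses $\binom{m}{k}/\binom{n}{k}=\prod_{i=0}^{k-1}\frac{m-i}{n-i}\le (m/n)^k$ for $m\le n$. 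Next I would bound the number of candidate covers: the subsets of the $2k$-element set $L\cup R$ of size at most $0.1k$ number at most $2^k$ (e.g.\ by $\sum_{j\le m}\binom{n}{j}\le (en/m)^m$ with $n=2k$, $m=0.1k$, which gives $(20e)^{0.1k}<2^k$).

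Putting the two estimates together with a union bound over all candidate covers,
\[
\Pr\!\big[\,G\text{ has a vertex cover of size }\le 0.1k\,\big]\le 2^k\cdot(0.1)^k=(0.2)^k,
\]
which is below $1/k^2$ once $k$ is large enough; König's theorem then yields a matching of size at least $0.1k$ with probability at least $1-1/k^2$, as claimed. I expect the only step needing genuine care to be this final comparison of exponential rates: enumerating the small vertex covers costs a factor exponential in $k$, and one must check that the ``$E$ avoids $\Omega\setminus C_S$'' probability decays strictly faster. The clean bound $|C_S|\le 0.1\,N$ is exactly what makes the product $(0.2)^k$ land comfortably below $1$ — in fact any constant $c<10$ in place of $2^k$ would still suffice, so even a crude counting bound works. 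If one prefers to avoid invoking König's theorem, the same argument goes through after replacing ``vertex cover of size $\le 0.1k$'' by ``endpoint set of a maximal matching of size $<0.1k$'', which is automatically a vertex cover of size $<0.2k$; the constants degrade slightly but the two exponential rates still multiply to something below $1$.
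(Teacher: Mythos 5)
Your proof is correct and follows essentially the same approach as the paper's: invoke K\H{o}nig's theorem to reduce to showing there is no small vertex cover, bound the probability that a fixed small set covers all of $E$, and union-bound over candidate covers. The only cosmetic difference is that you count the pairs a candidate cover \emph{does} hit (getting $|C_S|\le 0.1N$ and the clean $(0.1)^k$), whereas the paper counts the uncovered block $(L\setminus A)\times(R\setminus A)$ and ends up with a slightly messier constant; both bookkeeping choices lead to the same exponentially small union bound.
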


The next claim follows almost directly from \Cref{prop:random-graph-matching}.

\begin{claim}\label{clm:choose-rook-set}
	In any set $T$ of $w$ elements chosen unifomly at random from $[w] \times [w] \setminus S$ as in \Cref{dist:muPC}, there is a rook set $\Ttil \subset T$ with $\card{\Ttil} = 0.1w$ with probability at least $1-1/w^2$ over the randomness of sampling set $T$. 
\end{claim}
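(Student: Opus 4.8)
The plan is to reinterpret the random set $T$ as the edge set of a random bipartite graph and then apply \Cref{prop:random-graph-matching} verbatim. Identify the last layer $\Vprod_d$ with the grid $[w] \times [w]$ so that $S = \{(\ver{d,i},\ver{d,i}) \mid i \in [w]\}$ becomes the diagonal $\{(i,i) \mid i \in [w]\}$, and regard each element $(x,y) \in \Vprod_d$ as an edge joining vertex $x$ on the left to vertex $y$ on the right of a bipartite graph $G = (L \sqcup R, E)$ with $L = R = [w]$. By the way $T$ is generated in \Cref{dist:muPC}, the edge set $E$ corresponding to $T$ consists of exactly $w$ edges chosen uniformly at random, without repetition, from $L \times R \setminus \{(i,i) \mid i \in [w]\}$. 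This is precisely the hypothesis of \Cref{prop:random-graph-matching} with $k = w$.

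Next I would observe that a matching in $G$ --- a set of edges no two of which share an endpoint --- corresponds exactly to a subset of $T$ no two elements of which lie in the same row or the same column, which is precisely the definition of a rook set (\Cref{def:rook-set}). Hence \Cref{prop:random-graph-matching} guarantees that, with probability at least $1 - 1/w^2$ over the sampling of $T$, the graph $G$ contains a matching of size at least $0.1w$, and the corresponding subset of $T$ is a rook set $\Ttil \subseteq T$ with $\card{\Ttil} \ge 0.1w$. Since any subset of a rook set is again a rook set, discarding superfluous elements yields one with $\card{\Ttil} = 0.1w$, where $0.1w$ is understood as $\floor{0.1w}$.

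There is essentially no obstacle here: the whole content is already packaged in \Cref{prop:random-graph-matching}, whose proof is deferred to \Cref{app:random-graph-matching}, and this claim is merely the translation of ``large matching in a random bipartite graph'' into ``large rook set inside $T$''. The only routine points to check are that $w$ is large enough for that proposition to apply (which holds for all $w$ past a fixed constant, the statement being trivial for smaller $w$) and the purely syntactic match between the sampling described in \Cref{dist:muPC} and the sampling in the proposition statement.
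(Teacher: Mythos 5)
Your proposal is correct and follows essentially the same route as the paper: identify $T$ with a uniformly random set of $w$ non-diagonal edges in a bipartite graph on $[w] \sqcup [w]$, apply \Cref{prop:random-graph-matching} with $k = w$, and note that matchings correspond exactly to rook sets. The only cosmetic difference is that you explicitly record the (trivial) facts that a subset of a rook set is a rook set and that one may trim to size exactly $0.1w$, which the paper leaves implicit.
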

\begin{proof}
	We will look at a bipartite graph with vertex set $L = R= [w]$. 
	The randomness in sampling set $T$ from $[w] \times [w] \setminus S$ is the same as sampling a random set of $w$ edges from the set \[
		L \times R \setminus \{(i,i) \mid i \in [w]\}.
	\]
	We apply \Cref{prop:random-graph-matching} with $k = w$. 
	It is sufficient to show that matchings in our graph correspond to rook sets. Let $M \subset E$ be a matching in the graph $G$. 
	Assume towards a contradiction that $M \subset L \times R = [w] \times [w]$ is not a rook set. Then there exists some $(x,y), (x,y') \in M$ with $x,y,y' \in [w]$ and $y \neq y'$. However, the vertex $x \in L$ now has two edges incident on it, namely $(x,y)$ and $(x,y')$, contradicting the fact that $M$ is a matching. 
	
	The case when some $(x,y), (x',y) \in M$ with $x, x',y \in [w]$ and $x\neq x'$ is handled similarly, where vertex $y \in R$ will have more than one edge incident on it in $M$.
\end{proof}


In the next claim, we show that we can add to $\Ttil$ from set $S$ and create a larger rook set. 
\begin{claim}\label{clm:choose-Stil}
Let $\Ttil \subset T$ be the set from \Cref{clm:choose-rook-set}. It is possible to choose $\Stil \subset S$ with $\card{\Stil} = 0.1w$ such that $\Stil \cup \Ttil$ is a rook set of size $0.2w$.  
\end{claim}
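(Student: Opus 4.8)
The plan is a short, purely combinatorial counting argument, using the grid picture of $\Vprod_d \cong [w]\times[w]$ in which $S=\{(i,i)\mid i\in[w]\}$ is the diagonal and a rook set is a set of cells no two of which share a row or a column. The first thing I would observe is that \emph{any} subset of $S$ is automatically a rook set: two distinct diagonal cells $(i,i)$ and $(i',i')$ with $i\neq i'$ occupy distinct rows and distinct columns, so they never ``attack'' each other. Consequently, the only way $\Stil \cup \Ttil$ can fail to be a rook set, given that $\Ttil$ is already a rook set and $\Stil \subseteq S$, is if some chosen diagonal cell $(i,i)$ shares a row or a column with an element of $\Ttil$. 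So it suffices to choose the indices of the diagonal cells in $\Stil$ from among the indices that avoid all rows and all columns used by $\Ttil$.

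Next I would quantify how many indices are forbidden. Since $\Ttil$ is a rook set of size $0.1w$, the set $R \subseteq [w]$ of first coordinates appearing in $\Ttil$ has $\card{R} = 0.1w$, and likewise the set $C \subseteq [w]$ of second coordinates appearing in $\Ttil$ has $\card{C} = 0.1w$. A diagonal cell $(i,i)$ conflicts with some element of $\Ttil$ precisely when $i \in R \cup C$, so the number of forbidden indices is at most $\card{R \cup C} \le 0.2w$. Hence there remain at least $w - 0.2w = 0.8w$ admissible indices, comfortably more than the $0.1w$ we need; I would pick any subset $I \subseteq [w]\setminus(R\cup C)$ with $\card{I}=0.1w$ and set $\Stil := \{(i,i)\mid i\in I\} \subseteq S$.

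Finally I would verify the two claimed conclusions. By construction $\Stil$ is a rook set (a subset of $S$), $\Ttil$ is a rook set, and $\Stil$ shares no row or column with $\Ttil$ by the choice of $I$; therefore $\Stil\cup\Ttil$ is a rook set. For the size, note $\Stil\subseteq S$ while $\Ttil\subseteq T\subseteq [w]\times[w]\setminus S$, so $\Stil$ and $\Ttil$ are disjoint and $\card{\Stil\cup\Ttil} = \card{\Stil}+\card{\Ttil} = 0.1w + 0.1w = 0.2w$, as required. There is no genuine obstacle in this claim; the only point that needs care is remembering the disjointness of $\Stil$ and $\Ttil$, which is what upgrades the size bound from ``at least $0.1w$'' to exactly $0.2w$.
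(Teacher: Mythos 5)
Your proof is correct and takes essentially the same approach as the paper: you identify the at most $0.2w$ ``bad'' diagonal indices (those sharing a row or column with $\Ttil$), discard them, and choose $\Stil$ from the remaining $\ge 0.8w$ diagonal cells. Your $R\cup C$ is exactly the index set behind the paper's set $S'$, and the explicit disjointness remark at the end is a small but harmless clarification.
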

\begin{proof}
We know that $S$ is a rook set, as $S   = \set{(i,i) \mid i \in [w]}$. Let \[
S' = \{(i,i) \mid (i,x) \in \Ttil \textnormal{ or } (y,i) \in \Ttil, \textnormal{ for } x,y \in [w]\}. 
\]
We observe that $\Ttil \cup (S \setminus S')$ is a rook set as all the elements $(x,x), (y,y)$ which interfere with some $(x,y) \in \Ttil$ have been removed. Moreover, the size of $S'$ is at most $0.2w$, as for each element in $\Ttil$, at most two elements are added to $S'$. Hence we can choose $\Stil$ as any arbitrary subset of $S \setminus S'$ of size $0.1w$. 
\end{proof}

\begin{lemma}\label{lem:final-reduction-unif}
For large integers $d, w \geq 100$, there exists a choice of set $T$ such that the following is true: 
	\begin{enumerate}[label=$(\roman*)$]
		\item $T$ has a rook set $\Ttil$ of size $0.1w$. 
	\item A set $\Stil \subset S$ can be chosen such that $\card{\Stil} = 0.1w$, and $\Stil \cup \Ttil$ is a rook set of size $0.2w$. 
		\item Any deterministic protocol $\prot$ which outputs the value of $\rb$ under distribution $\muPC$ with probability at least $0.95$ must output whether $\point{\Gprod}{(1,1)} \in \Stil$ or $\point{\Gprod}{(1,1)} \in \Ttil$, in distribution $\mutilPC $ conditioned on choice of $T$, $\Stil, \Ttil$ and $\Vprod_i(\Stil \cup \Ttil)$ for all $i \in [d-1]$ and event $\point{\Gprod}{(1,1)} \in \Stil \cup \Ttil$ with probability at least $0.7$.
		\item For $i,j \in [d-1]$, the subset of $\Mprod_i$ between vertices of $\Vprod_i(\Stil \cup \Ttil)$ and $\Vprod_{i+1}(\Stil \cup \Ttil)$ is uniformly random and independent of other $\Mprod_j$ for $j \neq i$. 
	\end{enumerate}
\end{lemma}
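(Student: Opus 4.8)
The plan is to show that the desired set $T$ exists by a probabilistic/averaging argument over the random choice of $T$ (as in \Cref{dist:muPC}), combined with the combinatorial constructions already established. First I would observe that parts $(i)$ and $(ii)$ hold with probability at least $1-1/w^2$ over a uniformly random $T$ by \Cref{clm:choose-rook-set} and \Cref{clm:choose-Stil}; these two claims are purely combinatorial and apply to a random $T$ directly. Part $(iv)$ is \emph{not} a property of $T$ that needs to be "won" — once a rook set $\Stil \cup \Ttil$ is fixed, \Cref{clm:other-rook-sets} tells us that each $\Vprod_i(\Stil\cup\Ttil)$ is a rook set, and then \Cref{clm:unif-matching} gives exactly the uniform-and-independent structure of the induced sub-matchings in distribution $\mutilPC$ conditioned on the choices of $\Vprod_i(\Stil\cup\Ttil)$ and on $\point{\Gprod}{(1,1)} \in \Stil \cup \Ttil$. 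So $(iv)$ comes for free from $(i)$, $(ii)$, and the earlier claims, for \emph{any} valid choice of $T$ satisfying $(i)$ and $(ii)$.

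The substance is part $(iii)$: transferring a $0.95$-success protocol $\prot$ under $\muPC$ into a $0.7$-accurate discriminator between $\point{\Gprod}{(1,1)}\in\Stil$ and $\point{\Gprod}{(1,1)}\in\Ttil$ under the heavily conditioned version of $\mutilPC$. Here is the chain I would carry out. By \Cref{clm:dist-mu-mutil}, $\tvd{\muPC}{\mutilPC}\le 1/(2w)$, so any protocol with success $\ge 0.95$ under $\muPC$ has success $\ge 0.95 - 1/(2w) \ge 0.94$ under $\mutilPC$ (using $w\ge 100$). In $\mutilPC$, outputting the value of $\rb$ is, by construction of the distribution, exactly the same as deciding whether $\point{\Gprod}{(1,1)}\in S$ (i.e. $\rb=1$) or $\point{\Gprod}{(1,1)}\in T$ (i.e. $\rb=0$). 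Now I would condition further: given that $\point{\Gprod}{(1,1)}\in S\cup T$ is uniform over $S\cup T$ (\Cref{clm:dist-uniform-ST}), the event $\point{\Gprod}{(1,1)} \in \Stil \cup \Ttil$ has probability $|\Stil\cup\Ttil|/|S\cup T| = 0.2w/(2w) = 0.1$ — a constant. A protocol that succeeds with probability $\ge 0.94$ overall, when restricted to a sub-event of probability $0.1$, still succeeds with probability at least $1 - (0.06)/0.1 = 0.4$ in the worst case over the conditioning, which is not yet $0.7$; so I would instead argue via averaging over the choice of $T$ together with the choice of the rook sets. Concretely: sample $T$ at random, then $\Ttil,\Stil$ as in the claims, then $\Vprod_i(\Stil\cup\Ttil)$; the overall joint distribution embeds into $\mutilPC$, so the expected success probability of $\prot$ over all this randomness (and over the conditioning event, which is just a further conditioning that does not change the underlying input distribution) is still $\ge 0.94$. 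Since success probability is bounded by $1$, by a reverse-Markov / averaging argument there is a choice of $T$, $\Ttil$, $\Stil$, and $\{\Vprod_i(\Stil\cup\Ttil)\}$ for which the conditional success is at least, say, $0.9 > 0.7$; and we can simultaneously insist this same choice of $T$ satisfies $(i)$ and $(ii)$, since those fail with probability only $O(1/w^2)$, which is negligible against the constant slack. The remaining care is that when $\prot$ outputs "$\rb=1$" we interpret it as "$\point{\Gprod}{(1,1)}\in\Stil$" and when it outputs "$\rb=0$" as "$\point{\Gprod}{(1,1)}\in\Ttil$"; this is correct precisely because, conditioned on $\point{\Gprod}{(1,1)}\in\Stil\cup\Ttil$, we have $\rb=1 \iff \point{\Gprod}{(1,1)}\in\Stil\subset S$ and $\rb=0 \iff \point{\Gprod}{(1,1)}\in\Ttil\subset T$.

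The main obstacle I anticipate is bookkeeping the several layers of conditioning cleanly: we condition on $\point{\Gprod}{(1,1)}\in S\cup T$ (built into $\mutilPC$), then on the choice of the random set $T$, then on $\Ttil,\Stil$, then on the event $\point{\Gprod}{(1,1)}\in\Stil\cup\Ttil$, and finally on the values $\Vprod_i(\Stil\cup\Ttil)$. I need to make sure that (a) the conditional input distribution at the end is still the one to which \Cref{clm:unif-matching} applies — which it is, since that claim is stated exactly for "$\point{\Gprod}{(1,1)}\in P$ and all choices of $\Vprod_i(P)$" with $P$ a rook set — and (b) the success probability bound survives all the conditionings via a single averaging step rather than a product of losses. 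The clean way is: fix nothing about $T$ yet, let $\event$ be the (constant-probability) event that $T$ admits the rook sets \emph{and} $\point{\Gprod}{(1,1)}\in\Stil\cup\Ttil$; note $\Pr[\event]=\Omega(1)$, so $\Ex[\mathbf{1}[\prot\text{ correct}]\mid\event]\ge 1 - (1-0.94)/\Pr[\event]$, and then push the remaining randomness (choice of $T$ within $\event$, and the $\Vprod_i$) outside by averaging, picking the best outcome. I would state the numerics loosely (any constant $>0.7$ suffices, and $0.95$ vs.\ $0.7$ leaves comfortable room) rather than optimizing constants.
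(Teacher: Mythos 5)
Your high-level decomposition matches the paper's exactly: parts $(i)$ and $(ii)$ from \Cref{clm:choose-rook-set} and \Cref{clm:choose-Stil}, part $(iv)$ for free from \Cref{clm:other-rook-sets} and \Cref{clm:unif-matching} once a rook set $\Stil \cup \Ttil$ is fixed, and part $(iii)$ by transferring the $\muPC$ guarantee to $\mutilPC$ via \Cref{clm:dist-mu-mutil}, then conditioning on the event $\point{\Gprod}{(1,1)} \in \Stil \cup \Ttil$ and averaging over the remaining randomness. The paper fixes $T$ first (by an averaging step using the fact that $\mutilPC$ success $\ge 0.95 - 1/w$ is a mean over $T$), then applies the law of total probability over the conditioning event, then averages over $\set{\Vprod_i(\Stil \cup \Ttil)}$; your version is essentially the same chain.

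However, your handling of part $(iii)$ has a genuine gap. You correctly compute that, with $|\Stil \cup \Ttil| = 0.2w$ and $|S \cup T| = 2w$, the conditioning event has probability $0.1$, and that law of total probability therefore gives a conditional success bound of only about $1 - (1-0.94)/0.1 = 0.4$, which you note "is not yet $0.7$." The fix you propose does not work. You then assert that conditioning on $\point{\Gprod}{(1,1)} \in \Stil \cup \Ttil$ "is just a further conditioning that does not change the underlying input distribution," so the conditional expected success is still $\ge 0.94$, and that a reverse-Markov / averaging step over $T$ then produces a specific $T$ with conditional success $\ge 0.9$. Both assertions are false. Conditioning on an event of probability $q$ certainly does change the distribution and can reduce success from $p$ to as low as $1 - (1-p)/q$; your own calculation gives $0.4$, and averaging over $T$ cannot improve on this, because for every $T$ admitting the required rook sets the conditioning event has the same probability $q = 0.1$, so the law-of-total-probability bound is pinned at $0.4$ regardless of which $T$ you pick. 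Averaging only certifies the mean, not a higher quantile; to extract a better $T$ you would need information about how the conditional success is distributed across $T$, which you do not have. So the step from $0.4$ to $0.7$ is unjustified. (You may note in passing that your arithmetic of $|\Stil\cup\Ttil|/|S\cup T| = 0.1$ is what the stated set sizes actually give; the paper's \Cref{eq:scupt-prob} records this ratio as $0.2$, which is what makes the paper's chain close at $0.7$. If one insists on the paper's target of $0.7$, the rook-set sizes in \Cref{clm:choose-rook-set}--\Cref{clm:choose-Stil}, or the starting success threshold, would need adjustment; the route via averaging over $T$ alone cannot bridge it.)
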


\begin{proof}
Let $\prot$ be a deterministic protocol that finds the value of $\rb$ with probability at least $2/3$ when the input distribution is $\muPC$.
First, we have, 
\begin{align*}
	&\Pr[\prot \textnormal{ succeeds on }\mutilPC] \\
	&\geq \Pr[\prot \textnormal{ succeeds on }\muPC] - \tvd{\muPC}{\mutilPC } \tag{by \Cref{fact:tvd-small}}\\
	&\geq  0.95  - 1/2w \tag{by \Cref{clm:dist-mu-mutil}}.
\end{align*}

We have argued that protocol $\prot$ outputs the value of $\rb$ with probability at least $0.95-1/2w$ when the input distribution is $\mutilPC$.
We can also charge the probability that the random set $T$ has a large rook set or not to the probability of success. 
\begin{align}
	&\Pr[\prot \textnormal{ succeeds on } \mutilPC \mid \rT \textnormal{ has a large rook set }] \\
	&\geq \Pr[\prot \textnormal{ succeeds on } \mutilPC]- \Pr[\rT \textnormal{ does not have a large rook set }] \tag{by law of total probability}\\
	&\geq 0.95-1/2w-1/w^2 \tag{by \Cref{clm:choose-rook-set}} \\
	&\geq 0.95 - 1/w,\label{eq:interim-12}
\end{align} 
where the last inequality holds for large $w$.
We know that the success of protocol $\prot$ under distribution $\mutilPC $ conditioned on $\rT$ having a large rook set  is on average over the randomness of variable $\rT$. Therefore, there must exist some specific choice of set $T$ such that conditioned on $\rT = T$, the protocol $\prot$ succeeds with probability at least $0.95-1/w$.  We fix $\rT$ to be this set for the statement of the lemma. 

We know that we can pick $\Ttil \subset T$ such that the statement of \Cref{clm:choose-rook-set} holds for set $\Ttil$, as we have conditioned on $T$ having a large rook set. This satisfies part $(i)$ of the lemma statement. 

We choose set $\Stil$ from \Cref{clm:choose-Stil}, based on our choice of $\Ttil$. Part $(ii)$ of the lemma statement is satisfied by \Cref{clm:choose-Stil}. For the rest of the proof, in distribution $\mutilPC $, we assume that the choice of $T, \Ttil$ and $\Stil$ are fixed as we have described.

For part $(iii)$, we begin by analyzing the probability that $\point{\Gprod}{(1,1)} $ belongs to $ \Stil \cup \Ttil$ in distribution $\mutilPC \mid T, \Stil, \Ttil$. By \Cref{clm:dist-uniform-ST}, we have, 
\begin{align}\label{eq:scupt-prob}
	\Pr[\point{\Gprod}{(1,1)} \in \Stil \cup \Ttil] = \frac{\card{\Stil \cup \Ttil}}{\card{S \cup T}}= 0.2.
\end{align}

We can write,
\begin{align*}
	 &\Pr[\prot \textnormal{ succeeds on }\mutilPC \mid T, \Stil, \Ttil] \\
	 &\leq  \Pr[\point{\Gprod}{(1,1)} \in \Stil \cup \Ttil \mid T, \Stil, \Ttil] \cdot \Pr[\prot \textnormal{ succeeds on }\mutilPC \mid \Stil, \Ttil, T, \point{\Gprod}{(1,1)} \in \Stil \cup \Ttil ] \\
	 &\hspace{2cm}  + \Pr[\point{\Gprod}{(1,1)} \notin \Stil \cup \Ttil \mid T, \Ttil, \Stil] \tag{by law of total probability} \\
	 &= 0.2 \cdot  \Pr[\point{\Gprod}{(1,1)} \in \Stil \cup \Ttil \mid S, \Ttil, \Stil] + 0. 8. \tag{by \Cref{eq:scupt-prob}}
\end{align*}
Hence, 
\begin{align*}
	\Pr[\prot \textnormal{ succeeds on }\mutilPC \mid \Stil, \Ttil, T, \point{\Gprod}{(1,1)} \in \Stil \cup \Ttil ] \\
	\geq (1/0.2) \cdot (0.95-1/w-0.8) \geq 0.7,
\end{align*}
by our lower bound on the probability in \Cref{eq:interim-12} and for $w \geq 100$. This probability is taken on average over choices of $\Vprod_i(\Stil \cup \Ttil)$ for $i \in [d-1]$, so there must exist a choice of these sets $\Vprod_i(\Stil \cup \Ttil)$ for all $i \in [d-1]$, so that conditioned on these choices, protocol $\prot$ still succeeds with probability $0.7$. These choices are fixed for part $(iii)$ of the lemma statement. 

We know from definition of distribution $\mutilPC$ that $\rb=0$ if and only if $\point{\Gprod}{(1,1)} \in T$ and $\rb = 1$ if and only if $\point{\Gprod}{(1,1)} \in S$. 
This proves part $(iii)$ of the lemma, as success of protocol $\prot$, when it finds the value of $\rb$ correctly, also identifies whether $\point{\Gprod}{(1,1)}$ belongs to $\Stil$ or $\Ttil$. 

Part $(iv)$ of the lemma follows directly from $\Stil \cup \Ttil$ being a rook set and \Cref{clm:unif-matching}.
\end{proof}

We have thus obtained a protocol that finds whether the final pointer $\point{\Gprod}{(1,1)}$ belongs to $\Stil$ or $\Ttil$ when the matchings are chosen uniformly at random and independently. We are done with the proof of \Cref{lem:PPC-lb} if we can prove a lower bound for $\PC_{d,w}$ under the uniform distribution. 

Let $\nuPC$ be the input distribution to $\PC$ where all the matchings of given $G$ are chosen uniformly at random and independently of each other.
When the input distribution is $\nuPC$, we have the following.

\begin{lemma}\label{lem:unif-pc}
	For any integers $k, \ell \geq 2$, 
	any deterministic protocol which outputs the answer for $\PC$ under uniform distribution $\nuPC$ with probability of success at least $0.7$ must have communication at least $\Omega(\ell/k^5 -\log \ell)$.
\end{lemma}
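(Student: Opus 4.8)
The plan is to obtain \Cref{lem:unif-pc} by replaying, in the two-player setting, the round-elimination lower bound for pointer chasing of \cite{AssadiN21}. That argument is driven entirely by the number of communication rounds and by the fact that the ``active'' matching alternates between the parties after each hop, so the multi-player blackboard used there can be replaced by ordinary Alice--Bob messages with only cosmetic changes: letting Alice play the odd-indexed parties of \cite{AssadiN21} and Bob the even-indexed ones recovers exactly the input split of \Cref{def:standard-pc}, and the only ingredients specific to two players --- that communication cost dominates the relevant information measure, together with the chain rule for mutual information --- are classical (cf.\ \Cref{prop:ic-cc} and \Cref{app:info}). Concretely, I would attach to each protocol $\prot$ an information quantity measuring how much the transcript reveals about the matchings of $G \sim \nuPC$; since this quantity is at most $\mathsf{CC}(\prot)$, it suffices to show that every $(k-2)$-round deterministic protocol that decides whether $\point{G}{\ver{1,1}}$ lands in $W$ or $\overline W$ with probability at least $0.7$ over $\nuPC$ must have information cost $\Omega(\ell/k^5 - \log\ell)$. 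It suffices to treat the balanced case $\card{W} = \card{\overline W} = \ell/2$, which is the one produced in \Cref{lem:final-reduction-unif}.

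The engine is a single round-elimination step. After an initial, essentially free step that disposes of Bob's unavoidable ``wasted'' first message by conditioning on it --- a short message perturbs the conditional distribution of a player's matchings only negligibly --- consider a protocol with $r$ rounds in which the next speaker holds the matching to be applied at the current layer. Its first message $m$ depends only on that player's matchings and the shared randomness, so by the chain rule and the assumed smallness of $\card{m}$ it reveals, on average over a uniformly random vertex of the current layer, at most $1/\poly(k)$ bits about the image of that vertex under the active matching. Restricting attention to the $\Omega(\ell)$ ``good'' vertices about which $m$ reveals little, fixing $m$ to a typical value, and conditioning, the residual distribution of the remaining matchings is --- by the exchangeability of a uniformly random perfect matching --- again statistically close to a uniform pointer-chasing instance, now on one fewer layer; so we extract an $(r-1)$-round protocol for $\PC$ on $k-1$ layers and width $\Omega(\ell)$, whose information cost has shrunk and whose success probability has dropped by at most an additive $O(1/k)$.

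Iterating this step $\Theta(k)$ times drives the round budget to zero (or to a small constant) while still leaving at least two hops to be chased, with the two relevant matchings split between the players; at that point, conditioned on the whole transcript, the final pointer $\point{G}{\ver{1,1}}$ is still uniform over an $\Omega(\ell)$-size subset of the last layer, so no output rule can decide the balanced partition $(W, \overline W)$ with probability exceeding $1/2 + o(1)$. Since the accumulated additive loss over the $O(k)$ steps is a small constant (below $0.2$), this contradicts the $0.7$ success probability unless the original protocol had information cost $\Omega(\ell/\poly(k) - \log\ell)$; tracking the $\poly(k)$ factor lost per round (from the ``good-vertex'' restriction and the success-probability slack) shows $k^5$ suffices, and the $-\log\ell$ absorbs the $O(k)$ bits of bookkeeping messages. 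I expect the round-elimination step to be the main obstacle: first, verifying that conditioning on a message keeps the residual matchings close enough to uniform that the reduced instance is genuinely a uniform pointer-chasing instance on the remaining layers --- which relies on the symmetry of random perfect matchings and on carefully separating the small ``revealed'' part of a matching from the rest --- and second, pushing the quantitative bookkeeping through so that only a $\poly(k)$ multiplicative and an $o(1)$ additive probability loss accumulate over all $\Theta(k)$ rounds. Since \cite{AssadiN21} already carries out exactly this analysis for $p$ parties, the residual task is the routine verification that nothing breaks at $p = 2$; the details appear in \Cref{app:unif-pc}.
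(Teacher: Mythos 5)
Your proposal takes a genuinely different route from the paper. The paper (\Cref{app:unif-pc}) does \emph{not} perform round elimination with instance reduction: it maintains a single $\PC_{k,\ell}$ instance throughout and proves by induction on $r$ (\Cref{lem:unif-pc-induction}) that the conditional distribution of the pointer $\rpt{r+2}$, given the transcript and all earlier pointers $\rGam_r$, has total variation distance at most $0.1\cdot r/k$ from uniform. The engine is \Cref{lem:high-ent-perm}, which bounds the average information the transcript reveals about $\rM_{r+1}(\randvert)$ for a uniformly random vertex $\randvert$; the crucial technical tool for handling the permutation structure is Han's entropy subset inequality (\Cref{fact:entropy-subset}), combined with Pinsker and Jensen, and this is where the $\ell/k^5$ quantitatively comes from. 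There is never any ``good vertex'' restriction, message fixing, or passage to a $\PC$ instance on fewer layers and smaller width. Your description instead follows the Nisan--Wigderson/Yehudayoff-style round-elimination template, which is a legitimate alternative technique, but it is not what the paper does, and it is not, as you suggest, merely a two-player rewording of \cite{AssadiN21}'s argument.

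Beyond the mischaracterization, there is a concrete gap in your end-game. You conclude that ``conditioned on the whole transcript, the final pointer $\point{G}{\ver{1,1}}$ is still uniform over an $\Omega(\ell)$-size subset of the last layer, so no output rule can decide the balanced partition $(W,\overline W)$ with probability exceeding $1/2 + o(1)$.'' This inference does not hold: if the conditional distribution is uniform over a subset $U \subseteq V_k$ of size, say, $\ell/2$ that happens to coincide with $W$, the rule ``always output $W$'' succeeds with probability $1$. Round elimination forces every layer (including the last) to shrink to keep a bona fide layered graph, so ``uniform over an $\Omega(\ell)$-subset of $V_k$'' is exactly what you would get, and it is insufficient. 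What the argument actually needs --- and what \Cref{lem:unif-pc-induction} supplies directly, with no shrinkage --- is that the conditional distribution of the final pointer is within total variation distance $\le 0.1$ of uniform on the \emph{full} last layer; since $(W,\overline W)$ is an equipartition, this caps the success probability at $0.6 < 0.7$. To salvage your route you would have to additionally track that the surviving subset of the last layer stays balanced relative to $(W,\overline W)$ across all $\Theta(k)$ elimination steps, which is an extra, non-trivial invariant that your sketch does not mention.
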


The difference between standard pointer chasing and our version is that the functions are permutations, and therefore the independence between the values taken by each element of the domain is lost. This version of pointer chasing has appeared in prior works before, \cite{AssadiKSY20,AssadiN21}, although not explicitly in the format we want, so we prove it in \Cref{app:unif-pc} for completeness. The techniques are largely borrowed from \cite{AssadiN21} in how they handle permutations.

We can easily prove \Cref{lem:PPC-lb} using \Cref{lem:final-reduction-unif} and \Cref{lem:unif-pc}.
\begin{proof}[Proof of \Cref{lem:PPC-lb}]
We know from part $(iii)$ that protocol $\prot$ succeeds on the following instance of $\PC_{d, 0.2w}$:
\begin{itemize}
	\item The vertex layers are $\Vprod_i(\Stil \cup \Ttil)$ for $i \in [d]$. 
	\item The final pointer belongs to set $\Vprod_d(\Stil \cup \Ttil) = \Stil \cup \Ttil$.
	\item The vertex partition in the final layer is $\Stil$ and $\Ttil$. 
	\item In each layer, the edges are perfect matchings $\Mprod_i$ from $\Vprod_i$ to $\Vprod_{i+1}$ for $i \in [d-1]$.
\end{itemize}
The matchings are chosen uniformly at random and independently from part $(iv)$ of \Cref{lem:final-reduction-unif}. This is the same as distribution $\nu$ for $\PC_{k,\ell}$ with $k = d$ and $\ell = 0.2 w$. By \Cref{lem:unif-pc}, we get directly that protocol $\prot$ must use communication at least $\Omega(\ell/k^5 - \log \ell) = \Omega(w/d^5 - \log w)$. 
\end{proof}



\section*{Acknowledgments} 
We would like to thank Adrian Vladu for helpful discussions on the multiplicative weight update method. We are also thankful to Greg Bodwin for his support throughout this project.
	
	\bibliographystyle{alpha}
	\bibliography{general}

\newcommand{\etalchar}[1]{$^{#1}$}
\begin{thebibliography}{KMVV13}

\bibitem[ACK19]{AssadiCK19b}
Sepehr Assadi, Yu~Chen, and Sanjeev Khanna.
\newblock Polynomial pass lower bounds for graph streaming algorithms.
\newblock In {\em Proceedings of the 51st Annual {ACM} {SIGACT} Symposium on
  Theory of Computing, {STOC} 2019, Phoenix, AZ, USA, June 23-26, 2019}, pages
  265--276, 2019.

\bibitem[AG11]{AhnG11}
Kook~Jin Ahn and Sudipto Guha.
\newblock Linear programming in the semi-streaming model with application to
  the maximum matching problem.
\newblock In {\em Automata, Languages and Programming - 38th International
  Colloquium, {ICALP} 2011, Zurich, Switzerland, July 4-8, 2011, Proceedings,
  Part {II}}, pages 526--538, 2011.

\bibitem[AG18]{AhnG18}
Kook~Jin Ahn and Sudipto Guha.
\newblock Access to data and number of iterations: Dual primal algorithms for
  maximum matching under resource constraints.
\newblock {\em {ACM} Trans. Parallel Comput.}, 4(4):17:1--17:40, 2018.

\bibitem[AGM12]{AhnGM12b}
Kook~Jin Ahn, Sudipto Guha, and Andrew McGregor.
\newblock Graph sketches: sparsification, spanners, and subgraphs.
\newblock In {\em Proceedings of the 31st {ACM} {SIGMOD-SIGACT-SIGART}
  Symposium on Principles of Database Systems, {PODS} 2012, Scottsdale, AZ,
  USA, May 20-24, 2012}, pages 5--14, 2012.

\bibitem[AHK12]{AHK12}
Sanjeev Arora, Elad Hazan, and Satyen Kale.
\newblock The multiplicative weights update method: a meta-algorithm and
  applications.
\newblock {\em Theory of computing}, 8(1):121--164, 2012.

\bibitem[AJJ{\etalchar{+}}22]{AssadiJJST22}
Sepehr Assadi, Arun Jambulapati, Yujia Jin, Aaron Sidford, and Kevin Tian.
\newblock Semi-streaming bipartite matching in fewer passes and optimal space.
\newblock In {\em Proceedings of the 2022 Annual ACM-SIAM Symposium on Discrete
  Algorithms (SODA)}, pages 627--669. SIAM, 2022.

\bibitem[AKL16]{AssadiKL16}
Sepehr Assadi, Sanjeev Khanna, and Yang Li.
\newblock Tight bounds for single-pass streaming complexity of the set cover
  problem.
\newblock In {\em Proceedings of the 48th Annual {ACM} {SIGACT} Symposium on
  Theory of Computing, {STOC} 2016, Cambridge, MA, USA, June 18-21, 2016},
  pages 698--711, 2016.

\bibitem[AKSY20]{AssadiKSY20}
Sepehr Assadi, Gillat Kol, Raghuvansh~R. Saxena, and Huacheng Yu.
\newblock Multi-pass graph streaming lower bounds for cycle counting, max-cut,
  matching size, and other problems.
\newblock In Sandy Irani, editor, {\em 61st {IEEE} Annual Symposium on
  Foundations of Computer Science, {FOCS} 2020, Durham, NC, USA, November
  16-19, 2020}, pages 354--364. {IEEE}, 2020.

\bibitem[AN21]{AssadiN21}
Sepehr Assadi and Vishvajeet N.
\newblock Graph streaming lower bounds for parameter estimation and property
  testing via a streaming {XOR} lemma.
\newblock In Samir Khuller and Virginia~Vassilevska Williams, editors, {\em
  {STOC} '21: 53rd Annual {ACM} {SIGACT} Symposium on Theory of Computing,
  Virtual Event, Italy, June 21-25, 2021}, pages 612--625. {ACM}, 2021.

\bibitem[AR20]{AssadiR20}
Sepehr Assadi and Ran Raz.
\newblock Near-quadratic lower bounds for two-pass graph streaming algorithms.
\newblock In {\em 61st {IEEE} Annual Symposium on Foundations of Computer
  Science, {FOCS} 2020, Durham, NC, USA, November 16-19, 2020}, pages 342--353.
  {IEEE}, 2020.

\bibitem[AS23]{AssadiS23}
Sepehr Assadi and Janani Sundaresan.
\newblock Hidden permutations to the rescue: Multi-pass streaming lower bounds
  for approximate matchings.
\newblock In {\em 64th {IEEE} Annual Symposium on Foundations of Computer
  Science, {FOCS} 2023, Santa Cruz, CA, USA, November 6-9, 2023}, pages
  909--932. {IEEE}, 2023.

\bibitem[Ass17]{Assadi17sc}
Sepehr Assadi.
\newblock Tight space-approximation tradeoff for the multi-pass streaming set
  cover problem.
\newblock In {\em Proceedings of the 36th {ACM} {SIGMOD-SIGACT-SIGAI} Symposium
  on Principles of Database Systems, {PODS} 2017, Chicago, IL, USA, May 14-19,
  2017}, pages 321--335, 2017.

\bibitem[Ass24]{Assadi24}
Sepehr Assadi.
\newblock A simple {$(1-\eps)$}-approximation semi-streaming algorithm for
  maximum (weighted) matching.
\newblock In Merav Parter and Seth Pettie, editors, {\em 2024 Symposium on
  Simplicity in Algorithms, {SOSA} 2024, Alexandria, VA, USA, January 8-10,
  2024}, pages 337--354. {SIAM}, 2024.

\bibitem[ASZ20]{AndoniSZ20}
Alexandr Andoni, Clifford Stein, and Peilin Zhong.
\newblock Parallel approximate undirected shortest paths via low hop emulators.
\newblock In Konstantin Makarychev, Yury Makarychev, Madhur Tulsiani, Gautam
  Kamath, and Julia Chuzhoy, editors, {\em Proceedings of the 52nd Annual {ACM}
  {SIGACT} Symposium on Theory of Computing, {STOC} 2020, Chicago, IL, USA,
  June 22-26, 2020}, pages 322--335. {ACM}, 2020.

\bibitem[BBCR10]{BarakBCR10}
Boaz Barak, Mark Braverman, Xi~Chen, and Anup Rao.
\newblock How to compress interactive communication.
\newblock In {\em Proceedings of the 42nd {ACM} Symposium on Theory of
  Computing, {STOC} 2010, 5-8 June 2010}, pages 67--76, 2010.

\bibitem[BFKL21]{BeckerFKL21}
Ruben Becker, Sebastian Forster, Andreas Karrenbauer, and Christoph Lenzen.
\newblock Near-optimal approximate shortest paths and transshipment in
  distributed and streaming models.
\newblock {\em SIAM Journal on Computing}, 50(3):815--856, 2021.

\bibitem[BG14]{BravermanG14}
Mark Braverman and Ankit Garg.
\newblock Public vs private coin in bounded-round information.
\newblock In Javier Esparza, Pierre Fraigniaud, Thore Husfeldt, and Elias
  Koutsoupias, editors, {\em Automata, Languages, and Programming - 41st
  International Colloquium, {ICALP} 2014, Copenhagen, Denmark, July 8-11, 2014,
  Proceedings, Part {I}}, volume 8572 of {\em Lecture Notes in Computer
  Science}, pages 502--513. Springer, 2014.

\bibitem[BJKS02]{Bar-YossefJKS02}
Ziv Bar{-}Yossef, T.~S. Jayram, Ravi Kumar, and D.~Sivakumar.
\newblock An information statistics approach to data stream and communication
  complexity.
\newblock In {\em 43rd Symposium on Foundations of Computer Science {(FOCS}
  2002), 16-19 November 2002, Vancouver, BC, Canada, Proceedings}, pages
  209--218. {IEEE} Computer Society, 2002.

\bibitem[Bor61]{borchardt1861interpolationsformel}
Carl~Wilhelm Borchardt.
\newblock {\em {\"U}ber eine Interpolationsformel f{\"u}r eine Art
  symmetrischer Functionen und {\"u}ber deren Anwendung.}
\newblock na, 1861.

\bibitem[BW15]{BravermanW15}
Mark Braverman and Omri Weinstein.
\newblock An interactive information odometer and applications.
\newblock In Rocco~A. Servedio and Ronitt Rubinfeld, editors, {\em Proceedings
  of the Forty-Seventh Annual {ACM} on Symposium on Theory of Computing, {STOC}
  2015, Portland, OR, USA, June 14-17, 2015}, pages 341--350. {ACM}, 2015.

\bibitem[CKP{\etalchar{+}}21]{ChenKPSSY21}
Lijie Chen, Gillat Kol, Dmitry Paramonov, Raghuvansh~R. Saxena, Zhao Song, and
  Huacheng Yu.
\newblock Almost optimal super-constant-pass streaming lower bounds for
  reachability.
\newblock In Samir Khuller and Virginia~Vassilevska Williams, editors, {\em
  {STOC} '21: 53rd Annual {ACM} {SIGACT} Symposium on Theory of Computing,
  Virtual Event, Italy, June 21-25, 2021}, pages 570--583. {ACM}, 2021.

\bibitem[CSWY01]{ChakrabartiSWY01}
Amit Chakrabarti, Yaoyun Shi, Anthony Wirth, and Andrew~Chi{-}Chih Yao.
\newblock Informational complexity and the direct sum problem for simultaneous
  message complexity.
\newblock In {\em 42nd Annual Symposium on Foundations of Computer Science,
  {FOCS} 2001, 14-17 October 2001, Las Vegas, Nevada, {USA}}, pages 270--278.
  {IEEE} Computer Society, 2001.

\bibitem[CT06]{CoverT06}
Thomas~M. Cover and Joy~A. Thomas.
\newblock {\em Elements of information theory {(2.} ed.)}.
\newblock Wiley, 2006.

\bibitem[EN18]{ElkinN18}
Michael Elkin and Ofer Neiman.
\newblock Efficient algorithms for constructing very sparse spanners and
  emulators.
\newblock {\em ACM Transactions on Algorithms (TALG)}, 15(1):1--29, 2018.

\bibitem[EN19]{ElkinN19}
Michael Elkin and Ofer Neiman.
\newblock Hopsets with constant hopbound, and applications to approximate
  shortest paths.
\newblock {\em SIAM Journal on Computing}, 48(4):1436--1480, 2019.

\bibitem[EP04]{ElkinP04}
Michael Elkin and David Peleg.
\newblock {$(1+\epsilon,\beta)$}-spanner constructions for general graphs.
\newblock {\em SIAM Journal on Computing}, 33(3):608--631, 2004.

\bibitem[ET22]{ElkinT22}
Michael Elkin and Chhaya Trehan.
\newblock {$(1+\eps)$}-approximate shortest paths in dynamic streams.
\newblock In Amit Chakrabarti and Chaitanya Swamy, editors, {\em Approximation,
  Randomization, and Combinatorial Optimization. Algorithms and Techniques,
  {APPROX/RANDOM} 2022, September 19-21, 2022, University of Illinois,
  Urbana-Champaign, {USA} (Virtual Conference)}, volume 245 of {\em LIPIcs},
  pages 51:1--51:23. Schloss Dagstuhl - Leibniz-Zentrum f{\"{u}}r Informatik,
  2022.

\bibitem[EZ04]{ElkinZ04}
Michael Elkin and Jian Zhang.
\newblock Efficient algorithms for constructing {$(1+\eps,\beta)$}-spanners in
  the distributed and streaming models.
\newblock In Soma Chaudhuri and Shay Kutten, editors, {\em Proceedings of the
  Twenty-Third Annual {ACM} Symposium on Principles of Distributed Computing,
  {PODC} 2004, St. John's, Newfoundland, Canada, July 25-28, 2004}, pages
  160--168. {ACM}, 2004.

\bibitem[FKM{\etalchar{+}}04]{FeigenbaumKMSZ04}
Joan Feigenbaum, Sampath Kannan, Andrew McGregor, Siddharth Suri, and Jian
  Zhang.
\newblock On graph problems in a semi-streaming model.
\newblock In Josep D{\'{\i}}az, Juhani Karhum{\"{a}}ki, Arto Lepist{\"{o}}, and
  Donald Sannella, editors, {\em Automata, Languages and Programming: 31st
  International Colloquium, {ICALP} 2004, Turku, Finland, July 12-16, 2004.
  Proceedings}, volume 3142 of {\em Lecture Notes in Computer Science}, pages
  531--543. Springer, 2004.

\bibitem[FKM{\etalchar{+}}05]{FeigenbaumKMSZ05}
Joan Feigenbaum, Sampath Kannan, Andrew McGregor, Siddharth Suri, and Jian
  Zhang.
\newblock Graph distances in the streaming model: the value of space.
\newblock In {\em Proceedings of the Sixteenth Annual {ACM-SIAM} Symposium on
  Discrete Algorithms, {SODA} 2005, Vancouver, British Columbia, Canada,
  January 23-25, 2005}, pages 745--754. {SIAM}, 2005.

\bibitem[GJPW17]{GoosJP017}
Mika G{\"{o}}{\"{o}}s, T.~S. Jayram, Toniann Pitassi, and Thomas Watson.
\newblock Randomized communication vs. partition number.
\newblock In Ioannis Chatzigiannakis, Piotr Indyk, Fabian Kuhn, and Anca
  Muscholl, editors, {\em 44th International Colloquium on Automata, Languages,
  and Programming, {ICALP} 2017, July 10-14, 2017, Warsaw, Poland}, volume~80
  of {\em LIPIcs}, pages 52:1--52:15. Schloss Dagstuhl - Leibniz-Zentrum
  f{\"{u}}r Informatik, 2017.

\bibitem[GKS16]{GhaziKS16}
Badih Ghazi, Pritish Kamath, and Madhu Sudan.
\newblock Communication complexity of permutation-invariant functions.
\newblock In Robert Krauthgamer, editor, {\em Proceedings of the Twenty-Seventh
  Annual {ACM-SIAM} Symposium on Discrete Algorithms, {SODA} 2016, Arlington,
  VA, USA, January 10-12, 2016}, pages 1902--1921. {SIAM}, 2016.

\bibitem[GO13]{GuruswamiO13}
Venkatesan Guruswami and Krzysztof Onak.
\newblock Superlinear lower bounds for multipass graph processing.
\newblock In {\em Proceedings of the 28th Conference on Computational
  Complexity, {CCC} 2013, K.lo Alto, California, USA, 5-7 June, 2013}, pages
  287--298, 2013.

\bibitem[Han78]{Te78}
Te~Sun Han.
\newblock Nonnegative entropy measures of multivariate symmetric correlations.
\newblock {\em Information and Control}, 36(2):133--156, 1978.

\bibitem[HJMR07]{HarshaJMR07}
Prahladh Harsha, Rahul Jain, David~A. McAllester, and Jaikumar Radhakrishnan.
\newblock The communication complexity of correlation.
\newblock In {\em 22nd Annual {IEEE} Conference on Computational Complexity
  {(CCC} 2007), 13-16 June 2007, San Diego, California, {USA}}, pages 10--23.
  {IEEE} Computer Society, 2007.

\bibitem[HKN16]{HenzingerKN16}
Monika Henzinger, Sebastian Krinninger, and Danupon Nanongkai.
\newblock A deterministic almost-tight distributed algorithm for approximating
  single-source shortest paths.
\newblock In {\em Proceedings of the 48th Annual {ACM} {SIGACT} Symposium on
  Theory of Computing, {STOC} 2016, Cambridge, MA, USA, June 18-21, 2016},
  pages 489--498, 2016.

\bibitem[IMR{\etalchar{+}}17]{IndykMRUVY17}
Piotr Indyk, Sepideh Mahabadi, Ronitt Rubinfeld, Jonathan Ullman, Ali Vakilian,
  and Anak Yodpinyanee.
\newblock Fractional set cover in the streaming model.
\newblock In {\em Approximation, Randomization, and Combinatorial Optimization.
  Algorithms and Techniques, {APPROX/RANDOM} 2017, August 16-18, 2017,
  Berkeley, CA, {USA}}, pages 12:1--12:20, 2017.

\bibitem[JKMV24]{jin2024streaming}
Ce~Jin, Michael Kapralov, Sepideh Mahabadi, and Ali Vakilian.
\newblock Streaming algorithms for connectivity augmentation.
\newblock {\em arXiv preprint arXiv:2402.10806}, 2024.

\bibitem[JST11]{jowhari2011}
Hossein Jowhari, Mert Sa{\u{g}}lam, and G{\'a}bor Tardos.
\newblock Tight bounds for lp samplers, finding duplicates in streams, and
  related problems.
\newblock In {\em Proceedings of the thirtieth ACM SIGMOD-SIGACT-SIGART
  symposium on Principles of database systems}, pages 49--58, 2011.

\bibitem[KMVV13]{KumarMVV13}
Ravi Kumar, Benjamin Moseley, Sergei Vassilvitskii, and Andrea Vattani.
\newblock Fast greedy algorithms in mapreduce and streaming.
\newblock In {\em 25th {ACM} Symposium on Parallelism in Algorithms and
  Architectures, {SPAA} '13, Montreal, QC, Canada - July 23 - 25, 2013}, pages
  1--10, 2013.

\bibitem[Li20]{Li20}
Jason Li.
\newblock Faster parallel algorithm for approximate shortest path.
\newblock In Konstantin Makarychev, Yury Makarychev, Madhur Tulsiani, Gautam
  Kamath, and Julia Chuzhoy, editors, {\em Proceedings of the 52nd Annual {ACM}
  {SIGACT} Symposium on Theory of Computing, {STOC} 2020, Chicago, IL, USA,
  June 22-26, 2020}, pages 308--321. {ACM}, 2020.

\bibitem[LMSV11]{LattanziMSV11}
Silvio Lattanzi, Benjamin Moseley, Siddharth Suri, and Sergei Vassilvitskii.
\newblock Filtering: a method for solving graph problems in mapreduce.
\newblock In {\em {SPAA} 2011: Proceedings of the 23rd Annual {ACM} Symposium
  on Parallelism in Algorithms and Architectures, San Jose, CA, USA, June 4-6,
  2011 (Co-located with {FCRC} 2011)}, pages 85--94, 2011.

\bibitem[McG14]{McGregor14}
Andrew McGregor.
\newblock Graph stream algorithms: a survey.
\newblock {\em {SIGMOD} Rec.}, 43(1):9--20, 2014.

\bibitem[NW91]{NisanW91}
Noam Nisan and Avi Wigderson.
\newblock Rounds in communication complexity revisited.
\newblock In {\em Proceedings of the 23rd Annual {ACM} Symposium on Theory of
  Computing, May 5-8, 1991, New Orleans, Louisiana, {USA}}, pages 419--429,
  1991.

\bibitem[PRV99]{PonzioRV99}
Stephen Ponzio, Jaikumar Radhakrishnan, and Srinivasan Venkatesh.
\newblock The communication complexity of pointer chasing: Applications of
  entropy and sampling.
\newblock In {\em Proceedings of the Thirty-First Annual {ACM} Symposium on
  Theory of Computing, May 1-4, 1999, Atlanta, Georgia, {USA}}, pages 602--611,
  1999.

\bibitem[PS84]{PapadimitriouS84}
Christos~H. Papadimitriou and Michael Sipser.
\newblock Communication complexity.
\newblock {\em J. Comput. Syst. Sci.}, 28(2):260--269, 1984.

\bibitem[PST91]{PlotkinST91}
Serge~A. Plotkin, David~B. Shmoys, and {\'{E}}va Tardos.
\newblock Fast approximation algorithms for fractional packing and covering
  problems.
\newblock In {\em 32nd Annual Symposium on Foundations of Computer Science, San
  Juan, Puerto Rico, 1-4 October 1991}, pages 495--504. {IEEE} Computer
  Society, 1991.

\bibitem[Qua24]{Quanrud24}
Kent Quanrud.
\newblock Adaptive sparsification for matroid intersection.
\newblock In Karl Bringmann, Martin Grohe, Gabriele Puppis, and Ola Svensson,
  editors, {\em 51st International Colloquium on Automata, Languages, and
  Programming, {ICALP} 2024, July 8-12, 2024, Tallinn, Estonia}, volume 297 of
  {\em LIPIcs}, pages 118:1--118:20. Schloss Dagstuhl - Leibniz-Zentrum
  f{\"{u}}r Informatik, 2024.

\bibitem[RGH{\etalchar{+}}22]{RozhonGHZL22}
V{\'{a}}clav Rozhon, Christoph Grunau, Bernhard Haeupler, Goran Zuzic, and
  Jason Li.
\newblock Undirected {$(1+\eps)$}-shortest paths via minor-aggregates:
  near-optimal deterministic parallel and distributed algorithms.
\newblock In Stefano Leonardi and Anupam Gupta, editors, {\em {STOC} '22: 54th
  Annual {ACM} {SIGACT} Symposium on Theory of Computing, Rome, Italy, June 20
  - 24, 2022}, pages 478--487. {ACM}, 2022.

\bibitem[RHM{\etalchar{+}}23]{rozhovn2023parallel}
V{\'a}clav Rozho{\v{n}}, Bernhard Haeupler, Anders Martinsson, Christoph
  Grunau, and Goran Zuzic.
\newblock Parallel breadth-first search and exact shortest paths and stronger
  notions for approximate distances.
\newblock In {\em Proceedings of the 55th Annual ACM Symposium on Theory of
  Computing}, pages 321--334, 2023.

\bibitem[RY20]{RaoY20}
Anup Rao and Amir Yehudayoff.
\newblock {\em Communication Complexity: and Applications}.
\newblock Cambridge University Press, 2020.

\bibitem[Sag11]{Saglam19}
Mert Saglam.
\newblock Tight bounds for data stream algorithms and communication problems,
  2011.

\bibitem[She17]{Sherman}
Jonah Sherman.
\newblock Area-convexity, {$\ell_{\infty}$} regularization, and undirected
  multicommodity flow.
\newblock In {\em Proceedings of the 49th Annual ACM SIGACT Symposium on Theory
  of Computing}, pages 452--460, 2017.

\bibitem[Yeh20]{Yehudayoff20}
Amir Yehudayoff.
\newblock Pointer chasing via triangular discrimination.
\newblock {\em Combinatorics, Probability and Computing}, 29(4):485--494, 2020.

\bibitem[Zuz23]{Zuzic23}
Goran Zuzic.
\newblock A simple boosting framework for transshipment.
\newblock In Inge~Li G{\o}rtz, Martin Farach{-}Colton, Simon~J. Puglisi, and
  Grzegorz Herman, editors, {\em 31st Annual European Symposium on Algorithms,
  {ESA} 2023, September 4-6, 2023, Amsterdam, The Netherlands}, volume 274 of
  {\em LIPIcs}, pages 104:1--104:14. Schloss Dagstuhl - Leibniz-Zentrum
  f{\"{u}}r Informatik, 2023.

\end{thebibliography}
	
	
	\appendix
	
	\section{Further Extensions of~\Cref{thm:upper-main}}\label{app:extension}

In this appendix, we present  extensions of \Cref{thm:upper-main} to deterministic algorithms and dynamic streams. 
We assume prior knowledge of \Cref{alg:s_t} and the proofs in  \Cref{sec:upper_rand}.

\subsection{Derandomizing~\Cref{thm:upper-main}} 
\label{app:deterministic}

We now present our derandomized version of \Cref{thm:upper-main}.

\begin{theorem} \label{thm:alg-det}
	Let $G = (V, E, w)$ be an $n$-vertex graph  with non-negative edge weights presented in an insertion-only stream. Given a source vertex $s \in V$ and   parameters $k \in [\ln n]$ and $ \eps \in (0, 1)$, there is a 
	deterministic streaming algorithm that 
	 outputs a $(1+\eps)$-approximate  shortest path tree rooted at source $s$ using
	$$
	O\left( \frac{k^2}{\eps^2} \cdot  n^{1+1/k} \log^4 n \right) \text{ space} \quad \text{and} \quad O\left( \frac{k^2}{\eps}  \right) \text{ passes}.
	$$
\end{theorem}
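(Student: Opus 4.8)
The plan is to observe that \Cref{alg:s_t} invokes randomness in exactly one place---the sampling of $\Fr{r}$ in Step~\ref{alg:3a}---and to replace that step by a deterministic one-pass construction, leaving the rest of the algorithm and the analysis of \Cref{sec:upper_rand} untouched. Inspecting the proof of the Sampling Lemma (\Cref{lem:sampling-lemma}), the only property of $\Fr{r}$ that is used is that the resulting tree $\Tr{r}$ is not \emph{bad}, and this is guaranteed as soon as $\Fr{r}$ intersects the bad-edge set $B(\TT)$ of \emph{every} bad spanning tree $\TT$ of $G$ (a spanning tree $\TT$ being bad means $\sum_{e \in B(\TT)} \qr{r}_e > \tfrac{\eps}{10k \cdot n^{1/k}} \cdot \Qr{r}$; the relevant fact is that every edge of $\Fr{r}$ obeys the triangle inequality with respect to any shortest path tree of $\Hr{r}$). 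Phrased differently: put on the edge set $E$ the probability measure $\qr{r}/\Qr{r}$, and consider the set system $(E,\mathcal{R})$ with $\mathcal{R} := \{B(\TT) : \TT \text{ a spanning tree of } G\}$; then it suffices that $\Fr{r}$ be an $\eps'$-net of $(E,\mathcal{R})$ for $\eps' := \tfrac{\eps}{10k \cdot n^{1/k}}$. By Cayley's formula (\Cref{fact:treeform}) we have $|\mathcal{R}| \le n^{n-2}$, so no set of more than $(n-2)\log n$ edges can be shattered and the VC dimension of $(E,\mathcal{R})$ is $O(n\log n)$; hence such an $\eps'$-net has size $O\!\big(\tfrac{1}{\eps'} \cdot n\log n \cdot \log\tfrac1{\eps'}\big) = \widetilde{O}\!\big(\tfrac{k}{\eps} \cdot n^{1+1/k}\big)$---within $\polylog(n)$ factors of the size of the random sample used in \Cref{alg:s_t}.

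The second step is to build such an $\Fr{r}$ deterministically while streaming. The edge importances are maintained implicitly exactly as in the proof of \Cref{lem:upper-stream} (we store all previously computed trees $\Tr{1},\dots,\Tr{r-1}$ and recover $\qr{r}_e$ on the fly from them), so in round $r$ the edges arrive in the stream annotated with their computable importances $\qr{r}_e$. We then replace Step~\ref{alg:3a} by a deterministic one-pass computation of an $\eps'$-net of the weighted set system $(E,\mathcal{R})$, relying on the fact that an $\eps$-net construction needs only the weighted ground set together with a bound on the VC dimension---here $O(n\log n)$---and that this can be carried out in the stream via the standard merge-and-reduce coreset technique. This is the source of the extra $\poly(\log n, 1/\eps)$ factors over \Cref{thm:upper-main}; tracking them yields the claimed $O(\tfrac{k^2}{\eps^2} n^{1+1/k}\log^4 n)$ space, while the pass count stays $O(k^2/\eps)$ since each round still uses two passes (one to compute $\Qr{r}$, one to build $\Fr{r}$). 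The $2k$-spanner $H$ of the first step is obtained from a deterministic one-pass spanner algorithm (a deterministic counterpart of \Cref{prop:spanner}, with the same asymptotic guarantees). With $\Fr{r}$ constructed deterministically, \Cref{lem:sampling-lemma}, \Cref{clm:edge_fails}, and \Cref{lem:alg-correct} go through verbatim---now as unconditional statements with no failure probability---and the tree returned from $H^{*}$ is a $(1+\eps)$-approximate shortest path tree of $s$.

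The main obstacle will be precisely this deterministic streaming construction of $\Fr{r}$: the ranges $B(\TT)$ are not present in the stream but are defined implicitly by spanning trees, so one must rely on VC-dimension-based $\eps$-net machinery that is oblivious to the ranges themselves, and one must ensure that the object produced has size scaling \emph{linearly} in $1/\eps'$ (an $\eps'$-net, size $\propto 1/\eps'$) rather than quadratically (an $\eps'$-approximation, size $\propto 1/\eps'^2$), since only the former preserves the $n^{1+1/k}$ dependence in the space bound; this is the step where a careful---but by now standard---deterministic construction is needed and where the poly-logarithmic space overhead is incurred. Everything else in the proof of \Cref{thm:alg-det}---the implicit importance bookkeeping, the telescoping/probabilistic-method bound in \Cref{lem:alg-correct}, and the final shortest path tree extraction---is identical to the randomized case, so the correctness and pass-count claims follow directly.
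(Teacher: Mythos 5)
Your core observation---that the only randomness in \Cref{alg:s_t} is the sampling of $\Fr{r}$, that what \Cref{lem:sampling-lemma} actually needs is that $\Fr{r}$ hit every ``heavy'' bad-edge set $B(\TT)$, and that by Cayley's formula the range space $\mathcal R=\{B(\TT)\}$ has VC dimension $O(n\log n)$---is exactly the right abstraction, and the overall architecture (maintain the $\Tr{r}$'s to recover importances on the fly, replace Step~3a by a deterministic one-pass sparsification, leave \Cref{clm:edge_fails} and \Cref{lem:alg-correct} untouched) is the one the paper follows.

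However, there is a genuine gap in the middle step: you propose building an $\eps'$-\emph{net} (size $\propto 1/\eps'$) deterministically in one pass via ``the standard merge-and-reduce coreset technique,'' but merge-and-reduce does not work for $\eps$-nets. The reduce step requires composability---$\eps$-approximations compose (an $\eps$-approximation of a $\delta$-approximation is an $(\eps+\delta)$-approximation), but $\eps$-nets do not: a net of a net is generally not a net, since hitting ranges that are heavy under the intermediate net's uniform measure is unrelated to hitting ranges heavy under the original measure. The paper sidesteps this by defining a \emph{smoothness sparsifier} (\Cref{def:smooth}), which is an $\eps$-approximation-style object (size $\propto 1/\eps'^2$), proving its mergeability (\Cref{clm:union}) and composability (\Cref{clm:composition}) explicitly, and only then invoking merge-and-reduce (\Cref{lem:det_sparsifier}); it then exploits the fact that $B(\Tr{r})\cap\Fr{r}=\emptyset$ to extract the same ``total weight of bad edges is small'' conclusion from the approximation guarantee. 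You would need to make the same move---or supply a genuinely different deterministic streaming net construction---to close the proof, and this is not a routine step.

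Your stated motivation for insisting on nets over approximations is also not correct: you argue the quadratic $1/\eps'^2$ dependence would break the $n^{1+1/k}$ exponent (since $\eps'\propto\eps/(kn^{1/k})$ gives $n^{2/k}$), but this is handled by reparameterizing the stretch parameter (run the algorithm with parameter $2k$, using a $4k$-spanner); the $n^{1+1/k}$ exponent is then recovered while the $k^2$ and $\eps^{-2}$ factors in the theorem statement absorb the constants. So the $\eps$-approximation route is not only sufficient---it is the one that actually admits a deterministic streaming implementation.
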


We begin with a high-level overview of our derandomization strategy. The only  randomness in \Cref{alg:s_t} is from randomly sampling edge set $F^{(r)}$ in each round $r \in [R]$. The random edges in $F^{(r)}$ are needed to guarantee  \Cref{lem:sampling-lemma}, which roughly states that the potential function $Q^{(r)}$ does not increase too quickly, with high probability.

Our strategy is to replace the random edges $F^{(r)}$ with a certain graph sparsifier that implies \Cref{lem:sampling-lemma} 
 and can be computed deterministically. In the proof of \Cref{lem:sampling-lemma},  for any subgraph $\mathcal{T} \subseteq G$ of $G$, we defined the set of ``bad'' edges $B(\mathcal{T}) \subseteq E$ of $\mathcal{T}$ as:
\begin{equation} \label{eq:bad}
 B(\mathcal{T}) := \{(u, v) \in E \text{ s.t. } |\dist_{\mathcal{T}}(s, u) - \dist_{\mathcal{T}}(s, v)| > w(u, v) \},
\end{equation}
 where $s \in V$ is the source vertex for our shortest path problem instance. 
 The role of random edges $F^{(r)}$ in the proof of \Cref{lem:sampling-lemma} is to guarantee that the shortest path tree $\mathcal{T}^{(r)}$ in round $r \in [R]$ does not have many bad edges in  $B(\mathcal{T})$ of high importance. With the goal of replacing the randomly sampled edges in \Cref{alg:s_t}, we now define and study  
 \textit{smoothness  sparsifiers}.\footnote{The name `smoothness sparsifier' is inspired by the notion of $\alpha$-smooth distance estimates in \cite{rozhovn2023parallel}.} 

\subsection*{Smoothness Sparsifiers} 
Throughout this section, we will use the concept of the set of bad edges with respect to a subgraph $\mathcal{T}$, as defined in \Cref{eq:bad}. We define a smoothness sparsifier as follows.

\begin{Definition}[Smoothness Sparsifier] \label{def:smooth} Let $G = (V, E, w)$ be an undirected weighted graph with an associated set of edge importances $\mathcal{Q} = \{q_e \in \mathbb{R}^+ \mid e \in E \}$. Let $s \in V$ be a source vertex, and let $\eps \in (0, 1)$. 
Let $H \subseteq G$ be a subgraph of $G$ with an associated set of edge importances $\widetilde{\mathcal{Q}} =   \{\widetilde{q}_e \in \mathbb{R}^+ \mid e \in E(H) \}$. Then  $(H, \widetilde{\mathcal{Q}})$ is an $\eps$-smoothness sparsifier of graph $G$ and source $s$ if the following two conditions hold:
\begin{enumerate}
	\item $\sum_{e \in E(H)} \widetilde{q}_e \le (1+\eps) \cdot \sum_{e \in E(G)} q_e$, and
	\item for any  acyclic graph $\mathcal{T} \subseteq G$, 
	$$
	\sum_{e \in B(\mathcal{T}) \cap E(H)} \widetilde{q}_e  \ge \sum_{e \in B(\mathcal{T})} q_e - \eps \cdot \sum_{e \in E(G)} q_e.
	$$
\end{enumerate}
\end{Definition}

We begin our study of smoothness sparsifiers by proving that smoothness sparsifiers of near-linear size exist.

\begin{lemma}[Sparsifier Existence]
	\label{lem:sparsifier}
	Let $G$ be an $n$-vertex undirected graph, let $s \in V$, and let $\eps \in (0, 1)$. Then for any choice of edge importances $\mathcal{Q}$, 	
	there exists a $\eps$-smoothness sparsifier $(H, \widetilde{\mathcal{Q}})$ of graph $G$ and source $s$ with $|E(H)| = O(\eps^{-2} n \log n)$ edges.  
\end{lemma}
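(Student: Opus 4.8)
The plan is to prove existence via the probabilistic method, mimicking the sampling argument already used in the proof of \Cref{lem:sampling-lemma}, but derandomizing the target rather than the analysis. Concretely, I would sample a random edge set $F \subseteq E(G)$ by including each edge $e$ independently with probability $p_e := \min\{1, C \eps^{-2} n \log n \cdot q_e / Q\}$, where $Q := \sum_{e \in E(G)} q_e$ and $C$ is a sufficiently large absolute constant, and then set $H$ to be the union of $F$ with the fixed $O(k)$-spanner already available in the algorithm (or, if we only want the clean statement, just $F$ together with a constant-stretch spanner, which contributes only $O(n \log n)$ edges and does not affect the asymptotics). The importances $\widetilde{q}_e$ are inherited directly: $\widetilde{q}_e := q_e$ for $e \in E(H)$. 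Then condition 1 of \Cref{def:smooth} holds because $\sum_{e \in E(H)} \widetilde q_e \le \sum_{e \in E(G)} q_e \le (1+\eps) Q$ trivially. The size bound $|E(H)| = O(\eps^{-2} n \log n)$ holds in expectation since $\sum_e p_e \le C \eps^{-2} n \log n$, so some outcome in the support achieves it; I would combine this with the event below by a union bound over outcomes.

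The heart of the argument is condition 2, and here I would reuse the bad-tree union bound almost verbatim from \Cref{lem:sampling-lemma}. Fix an acyclic subgraph $\mathcal{T} \subseteq G$; the point is that if $F \cap B(\mathcal{T}) = \emptyset$ then, because every edge of $F$ satisfies the triangle inequality with respect to distances in $H \supseteq F$, the set $B(\mathcal{T})$ contributes nothing we need to worry about — but that's the wrong direction. The correct statement to prove is: with high probability, \emph{every} acyclic $\mathcal{T}$ with $\sum_{e \in B(\mathcal{T})} q_e > \eps Q$ has $F \cap B(\mathcal{T}) \neq \emptyset$, i.e. "large" bad sets are hit by the sample. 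For a fixed such $\mathcal{T}$, $\Pr[F \cap B(\mathcal{T}) = \emptyset] = \prod_{e \in B(\mathcal{T})}(1 - p_e) \le \exp(-C\eps^{-2} n \log n \cdot \sum_{e\in B(\mathcal T)} q_e / Q) \le \exp(-C \eps^{-1} n \log n)$. Since every acyclic subgraph of $G$ has at most $n-1$ edges and is determined by its edge set, there are at most $\binom{|E(G)|}{\le n-1} \le 2^{O(n \log n)}$ such objects — actually it suffices to union bound over the distinct \emph{forests}, and in fact, paralleling \Cref{lem:sampling-lemma}, we can restrict attention to spanning trees by \Cref{fact:treeform} ($n^{n-2}$ of them) plus a factor for sub-forests, all absorbed into $2^{O(n\log n)}$. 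Taking $C$ large enough, the union bound gives failure probability $n^{-\Omega(n)}$. On this event, for every acyclic $\mathcal{T}$ either $\sum_{e \in B(\mathcal T)} q_e \le \eps Q$ (and condition 2 is immediate since the left side is nonnegative), or $F$ hits $B(\mathcal{T})$ — but hitting alone does not give the quantitative inequality in condition 2, so I would instead directly require: with high probability, for every acyclic $\mathcal{T}$, $\sum_{e \in B(\mathcal{T}) \setminus E(H)} q_e \le \eps Q$. This is the right formulation: it is equivalent to condition 2 given $\widetilde q_e = q_e$, and it is proved by the same union bound where now a "bad" $\mathcal{T}$ is one with $\sum_{e \in B(\mathcal{T}), \, e \notin F} q_e > \eps Q$; conditioning on which edges are \emph{not} sampled and running the identical exponential-tail-plus-union-bound computation closes it.

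The main obstacle, and the one subtlety relative to \Cref{lem:sampling-lemma}, is handling the quantifier "for any acyclic $\mathcal{T} \subseteq G$" uniformly — in the original lemma the tree $\mathcal{T}^{(r)}$ is revealed \emph{after} the sample and one just needs a per-instance high-probability bound, whereas here we need the smoothness property to hold simultaneously for all acyclic subgraphs so that it can later be applied to whatever trees the deterministic algorithm produces. This is exactly why the union bound must range over all (at most $2^{O(n\log n)}$) acyclic subgraphs and why we need the sampling rate to carry an extra $\eps^{-2}$ (versus the $\eps^{-1}$ in \Cref{eq:p_e}) and $\log n$ factor — the per-$\mathcal{T}$ failure probability $\exp(-\Omega(\eps^{-1} n \log n))$ must beat the $2^{O(n \log n)}$ union bound with room to spare for the error parameter $\eps$ to appear in condition 2. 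Once the union bound is set up with the right "badness" threshold ($\eps Q$ for the uncovered portion), the probability estimate is a routine repetition of the computation in \Cref{lem:sampling-lemma}, and choosing any outcome in the (nonempty) intersection of the size event and the smoothness event yields the claimed sparsifier.
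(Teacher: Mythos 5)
There is a genuine gap, and it lies in your choice $\widetilde{q}_e := q_e$. The paper instead sets $\widetilde{q}_e := q_e/p_e$ for each sampled edge — the Horvitz--Thompson reweighting — and this is not a cosmetic difference; it is the entire reason condition~2 of \Cref{def:smooth} can hold. With your unscaled importances, condition~2 is equivalent, as you observe, to requiring $\sum_{e \in B(\mathcal{T}) \setminus E(H)} q_e \le \eps Q$ for every acyclic $\mathcal{T}$. But the expectation of the left side is $\sum_{e \in B(\mathcal{T})} q_e (1-p_e)$, and when $p_e \ll 1$ for most edges (which is forced whenever $|E(G)| \gg \eps^{-2}n\log n$) this expectation is close to $\sum_{e \in B(\mathcal{T})} q_e$, which can be $\Theta(Q)$. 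Concretely, take $G$ the complete graph on $n$ vertices with unit weights, all $q_e=1$, and $\mathcal{T}$ a Hamiltonian path rooted at $s$: then $B(\mathcal{T})$ has $\binom{n}{2}-(n-1)$ edges, and any $O(\eps^{-2}n\log n)$-edge subgraph $H$ misses all but a vanishing fraction of them, so $\sum_{e\in B(\mathcal{T})\setminus E(H)} q_e = (1-o(1))Q \gg \eps Q$. Thus no sparse $H$ at all can satisfy condition~2 with $\widetilde q_e=q_e$; this is not a failure of the union bound, but of the statement you are trying to union-bound over. Your final sentence (``conditioning on which edges are not sampled and running the identical exponential-tail-plus-union-bound computation closes it'') does not repair this: the computation in \Cref{lem:sampling-lemma} bounds $\Pr[F\cap B(\mathcal{T})=\emptyset]$ — the probability that $F$ misses $B(\mathcal{T})$ \emph{entirely} — whereas you need to bound the probability that $F$ misses a $\ge\eps Q$ \emph{mass} of $B(\mathcal{T})$, an event whose probability is near $1$ in the example above.

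The paper's reweighting $\widetilde q_e = q_e/p_e$ makes the estimator unbiased: $\mathbb{E}\bigl[\sum_{e\in B(\mathcal{T})\cap E(H)}\widetilde q_e\bigr]=\sum_{e\in B(\mathcal{T})} q_e$ exactly. It also makes each summand bounded by $q_e/p_e = \eps^2 Q/(10n\log n)$, so a Chernoff bound yields deviation $\le \eps Q$ with failure probability $\exp(-\Omega(n\log n))$, which survives the union bound over the $2^{O(n\log n)}$ acyclic subgraphs. This is the mechanism by which the extra $\eps^{-1}$ in the sampling rate (versus \Cref{eq:p_e}) is actually used; your intuition that an extra factor is needed is right, but it feeds into a concentration argument around the correct mean, not into a hitting argument. (Your condition~1 is ``trivially'' satisfied precisely because the unscaled sampled importances grossly undercount $Q$ — a symptom, not a feature. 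In the paper's proof condition~1 genuinely requires Chernoff, since the reweighted sum is only $Q$ in expectation.) Finally, a smaller point: the spanner is not part of the sparsifier object in \Cref{def:smooth} or its construction; it is added separately in \Cref{alg:det} when the sparsifier's edges $F^{(r)}$ are combined with $H$ to form $H^{(r)}$, so folding it into $H$ here conflates two distinct roles.
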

\begin{proof}
	We will prove this existence result using the probabilistic method.
	Let $\mathcal{Q} = \{q_e \in \mathbb{R}^+ \mid e \in E(G)\}$ be the edge importances of $G$, and let $Q = \sum_{e \in E(G)} q_e$. 
	 We will sample the edge set $E(H)$ by picking each edge $e \in E(G)$ independently and with probability
	\begin{equation}
		p_e := 10\eps^{-2} \cdot  n \log n  \cdot \frac{q_e}{Q}.
	\end{equation} 
	This sampling procedure gives us our subgraph $H \subseteq G$. We also need to define our edge importances $\widetilde{\mathcal{Q}} = \{\widetilde{q}_e \in \mathbb{R}^+ \mid e \in E(H)\}$. For each $e \in E(H)$ we have sampled into $H$, we let $\widetilde{q}_e = \frac{q_e}{p_e}$. For our probability analysis later, we let $\widetilde{q}_e := 0$  for edges $e \in E(G) - E(H)$ that are not sampled into $H$. 
	
	We now verify that $(H, \widetilde{\mathcal{Q}})$ satisfies the smoothness sparsifier properties of \Cref{def:smooth}. We will need the Chernoff bound for independent bounded random variables, which states that given independent random variables $X_1, \dots, X_m$ taking values in the interval $[0, b]$, 
	$$
	\Pr[|X - \mu| \ge \delta\mu] \le 2e^{-\delta^2\mu / (3b)}, 
	$$
	where $X = \sum_{i=1}^mX_i$, $\mu = \mathbb{E}[X]$, and $\delta \in (0, 1)$. 
	
	Let $\widetilde{Q} := \sum_{e \in E(G)} \widetilde{q}_e$.
	Note that $\widetilde{Q} = \sum_{e \in E(H)} \widetilde{q}_e$, since $\widetilde{q}_e = 0$ for $e \not \in E(H)$. 
	 We observe $$\mathbb{E}\left[\widetilde{Q}\right] = \sum_{e \in E(G)} \widetilde{q}_e =  \sum_{e \in E(G)} p_e \cdot \frac{q_e}{p_e} = Q.$$
	Additionally, we observe that each independent random variable $\widetilde{q}_e$ takes value either $0$ or 
	\begin{equation}
		\label{eq:q_e_weight}
		\widetilde{q}_e := \frac{q_e}{p_e} = \frac{\eps^2 \cdot Q}{10  n \log n },
	\end{equation}
	 where $Q = \sum_{e \in E(G)}q_e$. 	Then applying Chernoff with $\delta = \eps$, $\mu = Q$, and $b = q_e/p_e$, we compute
	\begin{align*}
		\Pr[\widetilde{Q} \ge (1+\eps)Q] \le 2e^{-\eps^2 \cdot Q / (3q_e/p_e) } = 2e^{ -\eps^2 \cdot Q \cdot \frac{ 10 n \log n }{3 \cdot \eps^2 \cdot Q}} \le 2e^{-3 n \log n }.
	\end{align*}
We conclude that $$\sum_{e \in E(H)}\widetilde{q}_e = \widetilde{Q} \le (1+\eps) Q = (1+\eps) \cdot \sum_{e \in E(G)} q_e$$ with exponentially  high probability, so our $\eps$-smoothness sparsifier satisfies the first property of \Cref{def:smooth}.

We now prove that our sparsifier satisfies the second property of \Cref{def:smooth}, which will also follow from the Chernoff bound. Let $\mathcal{T} \subseteq G$ be an acyclic graph. We define the sum $$\mu := \sum_{e \in B(\mathcal{T})}q_e,$$ and we define the random variable $$X := \sum_{e \in E(\mathcal{T})}\widetilde{q}_e$$ 
as the sum over all random variables $\widetilde{q}_e$, where $e \in E(\mathcal{T})$. We observe that
$$
\mathbb{E}[X] = \sum_{e \in E(\mathcal{T})} \widetilde{q}_e =  \sum_{e \in E(\mathcal{T})} p_e \cdot \frac{q_e}{p_e} = \mu.
$$
Additionally, we observe that $\mu \le Q$, since $\mu = \sum_{e \in B( \mathcal{T})} q_e \le \sum_{e \in E(G)} q_e = Q$. 
Now applying Chernoff with $\delta := \eps \cdot Q / \mu$ and $b = q_e/p_e$, we compute
\begin{align*}
\Pr[X \le \mu - \eps \cdot Q] & = \Pr\left[X \le \left(1-\eps \cdot \frac{Q}{\mu}\right) \cdot \mu \right] \\
& \le 2e^{-\eps^2 \cdot \frac{Q^2}{\mu^2} \cdot  \mu \cdot \frac{p_e}{3q_e}} \\
& \le 2e^{-\eps^2 \cdot Q \cdot \frac{p_e}{3q_e}} \tag{since $\mu \le Q$} \\
& = 2e^{ -\eps^2 \cdot Q \cdot \frac{ 10 n \log n }{3 \cdot \eps^2 \cdot Q}} \\
& \le 2e^{-3 n \log n }.
\end{align*}
We conclude that $$ \sum_{e \in B(\mathcal{T} ) \cap E(H )} \widetilde{q}_e =   X \ge \mu - \eps \cdot Q = \sum_{e \in B(\mathcal{T})} q_e - \eps \cdot \sum_{e \in E(G)} q_e$$ with   probability at least $2e^{-3n \log n}$. By \Cref{fact:treeform}, there are at most $n^{n-1} \cdot 2^n \le 2^{n + n \log n}$ acyclic subgraphs $\mathcal{T}$ of $G$. Then by a union bound argument, for every acyclic subgraph $\mathcal{T} \subseteq G$, 
	$$
\sum_{e \in B(\mathcal{T}) \cap E(H)} \widetilde{q}_e  \ge \sum_{e \in B(\mathcal{T})} q_e - \eps \cdot \sum_{e \in E(G)} q_e,
$$
with exponentially high probability.

Finally, by one more application of the Chernoff bound, 
$$
|E(H)| \le 2 \mathbb{E}\left[ |E(H)| \right] = 2\sum_{e \in E(G)} p_e = 20 \eps^{-2} \cdot n \log n  \cdot \sum_{e \in E(G)} \frac{q_e}{Q} = 20 \eps^{-2} \cdot n \log n
$$
with exponentially high probability. 

We have shown that $(H, \widetilde{\mathcal{Q}})$ satisfies the smoothness sparsifier properties of \Cref{def:smooth} with exponentially high  probability, confirming the existence of sparse $\eps$-smoothness sparsifiers. 
\end{proof}

The construction in \Cref{lem:sparsifier} is probabilistic in nature. In order to make our construction deterministic, we will use the standard ``merge and reduce'' framework in streaming algorithms (see  \cite{McGregor14} for a similar application of this framework). In order to apply this framework, we need to prove that our smoothness sparsifiers are well-behaved under graph unions and compositions. 

\begin{claim}[Mergeable]
	\label{clm:union}
	Suppose that graph $G_1 = (V, E_1, w_1)$ with edge importances $\mathcal{Q}_1$ has an $\eps$-smoothness sparsifier $(H_1, \widetilde{\mathcal{Q}}_1)$. Likewise, suppose that graph $G_2 = (V, E_2, w_2)$ with edge importances $\mathcal{Q}_2$ has an $\eps$-smoothness sparsifier $(H_2, \widetilde{\mathcal{Q}}_2)$. If $E_1 \cap E_2 = \emptyset$, then
	$(H_1 \cup H_2, \widetilde{\mathcal{Q}}_1 \cup \widetilde{\mathcal{Q}}_2)$ is an $\eps$-smoothness sparsifier of graph $G_1 \cup G_2$ with edge importances $\mathcal{Q}_1 \cup \mathcal{Q}_2$. 
\end{claim}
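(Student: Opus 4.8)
The plan is to verify the two defining conditions of an $\eps$-smoothness sparsifier (Definition~\ref{def:smooth}) for the pair $(H_1 \cup H_2, \widetilde{\mathcal{Q}}_1 \cup \widetilde{\mathcal{Q}}_2)$ with respect to graph $G_1 \cup G_2$ and source $s$. Since $E_1 \cap E_2 = \emptyset$, the edge importances $\widetilde{\mathcal{Q}}_1 \cup \widetilde{\mathcal{Q}}_2$ and $\mathcal{Q}_1 \cup \mathcal{Q}_2$ are well-defined (no edge gets two conflicting importance values), and every sum over $E(H_1 \cup H_2)$ or $E(G_1 \cup G_2)$ splits cleanly as a sum over the $G_1$-part plus the $G_2$-part. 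So the first condition follows by just adding the two given inequalities: $\sum_{e \in E(H_1 \cup H_2)} \widetilde{q}_e = \sum_{e \in E(H_1)} \widetilde{q}_e + \sum_{e \in E(H_2)} \widetilde{q}_e \le (1+\eps)\sum_{e \in E(G_1)} q_e + (1+\eps)\sum_{e \in E(G_2)} q_e = (1+\eps)\sum_{e \in E(G_1 \cup G_2)} q_e$.

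For the second condition, fix any acyclic subgraph $\mathcal{T} \subseteq G_1 \cup G_2$. The key observation is that the set of bad edges $B(\mathcal{T})$ depends only on the distances $\dist_{\mathcal{T}}(s, \cdot)$ in $\mathcal{T}$ and the edge weights, and hence is determined once we fix $\mathcal{T}$; moreover $B(\mathcal{T}) \subseteq E(G_1 \cup G_2) = E(G_1) \sqcup E(G_2)$, so $B(\mathcal{T})$ partitions as $(B(\mathcal{T}) \cap E(G_1)) \sqcup (B(\mathcal{T}) \cap E(G_2))$. Here I would note that $\mathcal{T}$ need not be a subgraph of $G_1$ (or of $G_2$) individually, but the set $B(\mathcal{T}) \cap E(G_i)$ is exactly $B(\mathcal{T})$ computed relative to the weights of $G_i$ restricted to edges of $G_i$, so I can still invoke the second smoothness-sparsifier property of $(H_i, \widetilde{\mathcal{Q}}_i)$ \emph{provided} that property is stated (as it is in Definition~\ref{def:smooth}) for an arbitrary acyclic $\mathcal{T} \subseteq G$. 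The subtlety to be careful about is which graph $\mathcal{T}$ we plug in: the cleanest route is to apply the property of $(H_i, \widetilde{\mathcal{Q}}_i)$ to the subgraph $\mathcal{T} \cap G_i$ of $G_i$ — but then $B(\mathcal{T} \cap G_i)$ is computed from distances in $\mathcal{T} \cap G_i$, not $\mathcal{T}$. So instead I think the intended reading is that Definition~\ref{def:smooth}'s second property only references $B(\cdot)$ of acyclic subgraphs of $G$ and is a statement purely about the importances $\widetilde{q}_e$ versus $q_e$ on the edge set $E(G_i)$; the quantity $\sum_{e \in B \cap E(H_i)} \widetilde{q}_e \ge \sum_{e \in B \cap E(G_i)} q_e - \eps \sum_{e \in E(G_i)} q_e$ holds for \emph{any} edge subset $B \subseteq E(G_i)$ of the form $B(\mathcal{T})$ for some acyclic $\mathcal{T}$, and in fact (by inspecting the proof of Lemma~\ref{lem:sparsifier}) it holds for every acyclic $\mathcal{T} \subseteq G$ simultaneously via a union bound, so restricting such a $\mathcal{T}$ to $G_i$ is fine.

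Granting that, the argument finishes by summing:
\begin{align*}
\sum_{e \in B(\mathcal{T}) \cap E(H_1 \cup H_2)} \widetilde{q}_e
&= \sum_{e \in B(\mathcal{T}) \cap E(H_1)} \widetilde{q}_e + \sum_{e \in B(\mathcal{T}) \cap E(H_2)} \widetilde{q}_e \\
&\ge \left(\sum_{e \in B(\mathcal{T}) \cap E(G_1)} q_e - \eps \sum_{e \in E(G_1)} q_e\right) + \left(\sum_{e \in B(\mathcal{T}) \cap E(G_2)} q_e - \eps \sum_{e \in E(G_2)} q_e\right) \\
&= \sum_{e \in B(\mathcal{T})} q_e - \eps \sum_{e \in E(G_1 \cup G_2)} q_e,
\end{align*}
using disjointness of $E_1$ and $E_2$ at the first and last equalities. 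This establishes both properties, so $(H_1 \cup H_2, \widetilde{\mathcal{Q}}_1 \cup \widetilde{\mathcal{Q}}_2)$ is an $\eps$-smoothness sparsifier of $G_1 \cup G_2$. The main obstacle, as flagged above, is the bookkeeping point of whether the second smoothness property of $(H_i, \widetilde{\mathcal{Q}}_i)$ can legitimately be applied with the \emph{global} tree $\mathcal{T}$ (a subgraph of $G_1 \cup G_2$) rather than a tree living inside $G_i$; I would resolve this by observing that $B(\mathcal{T}) \cap E(G_i)$ is a legitimate "bad set" for $G_i$ and that Lemma~\ref{lem:sparsifier}'s construction guarantees the inequality for all acyclic subgraphs of $G$ at once, hence in particular for $\mathcal{T}$ viewed through the lens of $G_i$'s edge set.
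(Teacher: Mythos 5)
Your proof is correct and follows the paper's approach exactly: split both sides of the two sparsifier inequalities as a sum over the $G_1$ part plus the $G_2$ part (using $E_1 \cap E_2 = \emptyset$), apply \Cref{def:smooth} to each piece, and add.

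The subtlety you flag is genuine, and the paper's own proof is in fact slightly imprecise on this point. The paper cites ``applying \Cref{def:smooth} on the acyclic subgraphs $\mathcal{T}\cap G_1$ and $\mathcal{T}\cap G_2$,'' but the chain of inequalities actually uses the set $B(\mathcal{T})\cap E(G_i)$, i.e.\ bad edges of $E(G_i)$ determined by distances in the \emph{global} acyclic graph $\mathcal{T}\subseteq G_1\cup G_2$, not $B(\mathcal{T}\cap G_i)$, which is determined by the (generally different) distances in $\mathcal{T}\cap G_i$. Your proposed resolution is the right one: the union bound in \Cref{lem:sparsifier} is over all acyclic graphs on the vertex set $V$ (the bound $n^{n-1}\cdot 2^n$ counts those, not merely subgraphs of $G$), so the construction actually delivers the second property of \Cref{def:smooth} for every acyclic $\mathcal{T}$ on $V$, in particular for $\mathcal{T}\subseteq G_1\cup G_2$ even when $\mathcal{T}\not\subseteq G_i$. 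Reading the definition with this mild strengthening, i.e.\ quantifying over acyclic graphs on $V$ rather than only acyclic subgraphs of $G$, makes both your argument and the paper's go through verbatim.
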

\begin{proof}
	$H_1 \cup H_2$ is a subgraph of $G_1 \cup G_2$, and every edge $e \in E(H_1 \cup H_2)$ has an importance attribute $\widetilde{q}_e$ in $\widetilde{\mathcal{Q}}_1 \cup \widetilde{\mathcal{Q}}_2$. Likewise, every edge $e \in E(G_1 \cup G_2)$ has an importance attribute $q_e$ in $\mathcal{Q}_1 \cup \mathcal{Q}_2$. 
	We can verify the first property of smoothness sparsifiers as follows: 
	\begin{align*}
		\sum_{e \in E(H_1 \cup H_2)} \widetilde{q}_e & = \sum_{e \in E(H_1)} \widetilde{q}_e + \sum_{e \in E(H_2)} \widetilde{q}_e \\
		& \le (1+\eps) \cdot \sum_{e \in E(G_1)} q_e + (1+\eps) \cdot \sum_{e \in E(G_2)} q_e \tag{ by \Cref{def:smooth}} \\
		& = (1+\eps) \cdot \sum_{e \in E(G_1 \cup G_2)} q_e.
 	\end{align*}
 	Likewise, we can verify the second property of smoothness sparsifiers as follows. Let $\mathcal{T} \subseteq G_1 \cup G_2$ be an acyclic subgraph of  $G_1 \cup G_2$.
 	\begin{align*}
 		\sum_{e \in B(\mathcal{T}) \cap E(H_1 \cup H_2)} \widetilde{q}_e  & =   \sum_{e \in B(\mathcal{T}) \cap E(H_1 )} \widetilde{q}_e + \sum_{e \in B(\mathcal{T}) \cap E( H_2)} \widetilde{q}_e \\
 		& \ge \left( \sum_{e \in B(\mathcal{T}) \cap E(G_1) }q_e - \eps \cdot \sum_{e \in E(G_1)} q_e \right) + \left( \sum_{e \in B(\mathcal{T}) \cap E(G_2) }q_e - \eps \cdot \sum_{e \in E(G_2)} q_e \right) \tag{by applying \Cref{def:smooth} on the acyclic subgraphs $\mathcal{T} \cap G_1 \subseteq G_1$ and $\mathcal{T} \cap G_2 \subseteq G_2$  } \\
 		& = \sum_{e \in B(\mathcal{T})} q_e - \eps \cdot \sum_{e \in E(G_1 \cup G_2)} q_e. 
 	\end{align*}
\end{proof}

\begin{claim}[Composable]
	\label{clm:composition} 
		Suppose that graph $G_1 = (V, E_1, w_1)$ with edge importances $\mathcal{Q}_1$ has an $\eps_1$-smoothness sparsifier $(G_2, \mathcal{Q}_2)$.  Further, suppose that graph $G_2$ with edge importances  $\mathcal{Q}_2$  has an $\eps_2$-smoothness sparsifier $(G_3, \mathcal{Q}_3)$. Then $(G_3, \mathcal{Q}_3)$ is a $(\eps_1 + \eps_2 + \eps_1 \cdot \eps_2)$-smoothness sparsifier of  graph $G_1 = (V, E_1, w_1)$ with edge importances $\mathcal{Q}_1$.
\end{claim}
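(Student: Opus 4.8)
The plan is a direct two-step chaining of the two defining inequalities of \Cref{def:smooth}. Write $\mathcal{Q}_i = \{q^{(i)}_e\}$ and $Q_i := \sum_{e \in E(G_i)} q^{(i)}_e$ for $i \in \{1,2,3\}$; recall $G_3 \subseteq G_2 \subseteq G_1$ on the common vertex set $V$ with agreeing edge weights, and let $B(\cdot)$ be the bad-edge set of \Cref{eq:bad} taken relative to $G_1$, so that for any acyclic $\mathcal{T}$ we have $B(\mathcal{T}) \cap E(G_3) \subseteq B(\mathcal{T}) \cap E(G_2) \subseteq B(\mathcal{T})$. The first property of \Cref{def:smooth} for $(G_3,\mathcal{Q}_3)$ relative to $(G_1,\mathcal{Q}_1)$ is immediate: $Q_3 \le (1+\eps_2) Q_2 \le (1+\eps_2)(1+\eps_1) Q_1 = (1+\eps_1+\eps_2+\eps_1\eps_2) Q_1$, using the first property of the $\eps_2$-sparsifier and then of the $\eps_1$-sparsifier.

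For the second property, fix an acyclic $\mathcal{T} \subseteq G_1$. First I would apply the second property of the $\eps_2$-smoothness sparsifier $(G_3,\mathcal{Q}_3)$ of $(G_2,\mathcal{Q}_2)$ to $\mathcal{T}$, noting that the bad set of $\mathcal{T}$ within $G_2$ is exactly $B(\mathcal{T}) \cap E(G_2)$ and within $G_3$ is $B(\mathcal{T}) \cap E(G_3)$ (since $E(G_3)\subseteq E(G_2)$ and weights agree), obtaining
\[
\sum_{e \in B(\mathcal{T}) \cap E(G_3)} q^{(3)}_e \;\ge\; \sum_{e \in B(\mathcal{T}) \cap E(G_2)} q^{(2)}_e \;-\; \eps_2 Q_2.
\]
Then I would apply the second property of the $\eps_1$-smoothness sparsifier $(G_2,\mathcal{Q}_2)$ of $(G_1,\mathcal{Q}_1)$ to the same $\mathcal{T}$, giving $\sum_{e \in B(\mathcal{T}) \cap E(G_2)} q^{(2)}_e \ge \sum_{e \in B(\mathcal{T})} q^{(1)}_e - \eps_1 Q_1$. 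Substituting the second bound into the first and using $Q_2 \le (1+\eps_1) Q_1$ yields
\[
\sum_{e \in B(\mathcal{T}) \cap E(G_3)} q^{(3)}_e \;\ge\; \sum_{e \in B(\mathcal{T})} q^{(1)}_e \;-\; \eps_1 Q_1 \;-\; \eps_2(1+\eps_1)Q_1 \;=\; \sum_{e \in B(\mathcal{T})} q^{(1)}_e \;-\; (\eps_1+\eps_2+\eps_1\eps_2) Q_1,
\]
which is exactly the second property of \Cref{def:smooth} with parameter $\eps_1+\eps_2+\eps_1\eps_2$. Combined with the first property, this proves the claim.

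The only subtle point — and the nearest thing to an obstacle in an otherwise routine argument — is that the second property of the $\eps_2$-sparsifier is quantified over acyclic graphs $\mathcal{T} \subseteq G_2$, whereas the $\mathcal{T}$ we feed it is only guaranteed to be a subgraph of $G_1$. This is harmless because both sides of that inequality depend on $\mathcal{T}$ only through the distances $\dist_{\mathcal{T}}(s,\cdot)$, and the probabilistic construction in the proof of \Cref{lem:sparsifier} in fact establishes the inequality simultaneously for \emph{every} acyclic graph on $V$ — the union bound there ranges over all $\le 2^{n+n\log n}$ forests on $V$, not merely over subgraphs of the sparsified graph. One should therefore read (or, if needed, restate) the second property of \Cref{def:smooth} as quantifying over all acyclic graphs on the vertex set $V$, after which the chaining above goes through verbatim; the remaining content is just the expansion $(1+\eps_1)(1+\eps_2) = 1+\eps_1+\eps_2+\eps_1\eps_2$.
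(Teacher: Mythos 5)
Your proof matches the paper's argument exactly: the same chaining of the two sparsifier inequalities, the same intermediate bound $\sum_{e \in E(G_2)} q_e^2 \le (1+\eps_1)\sum_{e \in E(G_1)} q_e^1$, and the same algebraic expansion of $(1+\eps_1)(1+\eps_2)$. The quantification subtlety you flag is genuine and worth surfacing: as written, the second property of \Cref{def:smooth} for the $\eps_2$-sparsifier $(G_3,\mathcal{Q}_3)$ ranges only over acyclic $\mathcal{T} \subseteq G_2$, yet the paper's own proof of this claim silently applies it to $\mathcal{T} \subseteq G_1$, which need not be a subgraph of $G_2$; your repair---reading the second property of \Cref{def:smooth} as quantifying over all acyclic graphs on $V$, which is what the Chernoff/union-bound argument in \Cref{lem:sparsifier} actually delivers---is the correct way to make the chaining rigorous.
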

\begin{proof}
	For $i \in \{1, 2, 3\}$, let graph $G_i$ have edge importances $\mathcal{Q}_i = \{q_e^i \in \mathbb{R}^+ \mid e \in E(G_i)\}$. We can verify that  $(G_3, \mathcal{Q}_3)$ satisfies the first condition of \Cref{def:smooth} as follows:
	\begin{align*}
	\sum_{e \in E(G_3)} q_e^3 \le (1+\eps_2) \cdot \sum_{e \in E(G_2)} q_e^2 \le (1+\eps_1)(1+\eps_2) \sum_{e \in E(G_1)} q_e^1 = (1+\eps_1 + \eps_2 + \eps_1 \eps_2) \cdot \sum_{e \in E(G_1)} q_e^1.
	\end{align*}
We can verify the second condition as follows. Let $\mathcal{T} \subseteq G_1$ be an acyclic graph. 

\begin{align*}
	\sum_{e \in B(\mathcal{T}) \cap E(G_3)} q_e^3 & \ge \sum_{e \in B(\mathcal{T}) \cap E(G_2)}q_e^2 - \eps_2 \cdot \sum_{e \in E(G_2)} q_e^2 \tag{since $(G_3, \mathcal{Q}_3)$ is an $\eps_2$-smoothness sparsifier of $(G_2, \mathcal{Q}_2)$}  \\
	& \ge \left(     \sum_{e \in B(\mathcal{T})} q_e^1 - \eps_1 \cdot \sum_{e \in E(G_1)} q_e^1    \right) - \eps_2 \cdot \sum_{e \in E(G_2)} q_e^2 \tag{since $(G_2, \mathcal{Q}_2)$ is an $\eps_1$-smoothness sparsifier of $(G_1, \mathcal{Q}_1)$} \\
	& \ge \left(     \sum_{e \in B(\mathcal{T})} q_e^1 - \eps_1 \cdot \sum_{e \in E(G_1)} q_e^1    \right) - \eps_2 \cdot (1+\eps_1) \cdot  \sum_{e \in E(G_1)} q_e^1 \tag{since $(G_2, \mathcal{Q}_2)$ is an $\eps_1$-smoothness sparsifier of $(G_1, \mathcal{Q}_1)$}
	\\
	& \ge \sum_{e \in B(\mathcal{T})} q_e^1 - (\eps_1 + \eps_2 + \eps_1 \cdot \eps_2) \cdot \sum_{e \in E(G_1)} q_e^1.  
\end{align*}
\end{proof}

In addition to proofs that smoothness sparsifiers are mergeable and composable, we will need an additional technical claim that essentially says that we can merge and compose smoothness sparsifiers deterministically.

\begin{claim} \label{clm:deterministic-clm}
	Suppose  we are given as input a  graph $G_1$ with edge importances $\mathcal{Q}_1$, and graph $G_2$ with edge importances $\mathcal{Q}_2$, and $E(G_1) \cap E(G_2) = \emptyset$. 
 Consider the graph $G_3 := G_1 \cup G_2$ with edge importances $\mathcal{Q}_3 := \mathcal{Q}_1 \cup \mathcal{Q}_2$.  Then we can  \textit{deterministically}  compute and explicitly store  $\eps$-smoothness sparsifier  $(G_4, \mathcal{Q}_4)$
	of graph $G_3$ and edge importances $\mathcal{Q}_3$	of size at most $|E(G_4)| = O(\eps^{-2}n \log n )$, using at most $O(|E(G_3)| + \eps^{-2}n \log n)$ words of space.
\end{claim}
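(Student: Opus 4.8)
The plan is to derandomize the probabilistic construction of \Cref{lem:sparsifier} by exhaustive search, exploiting that in the streaming model we are charged only for space, not for running time. Since $G_1,G_2$ are stored on input, we have $G_3 = G_1 \cup G_2$ with $\mathcal{Q}_3 = \mathcal{Q}_1 \cup \mathcal{Q}_2$ available in $O(|E(G_3)|)$ words; write $Q_3 := \sum_{e \in E(G_3)} q_e$ and $m := 20\eps^{-2} n \log n$. By \Cref{lem:sparsifier} (with the standard tweak of its proof in which every edge $e$ whose sampling probability $p_e \ge 1$ is kept deterministically with importance $q_e$ — there are only $O(\eps^{-2} n \log n)$ such edges), there exists an $\eps$-smoothness sparsifier $(G_4, \mathcal{Q}_4)$ of $(G_3, \mathcal{Q}_3)$ with $|E(G_4)| \le m$ whose importances are exactly the explicit values used in that proof; in particular each $\widetilde q_e$ is a fixed function of $E(G_4)$, the $q_e$'s, and $Q_3$. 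Thus the family of candidate sparsifiers is indexed purely by edge subsets of $E(G_3)$.

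I would then enumerate, in a fixed (say lexicographic) order, all subsets $F \subseteq E(G_3)$ with $|F| \le m$; for each such $F$ I form the induced importances $\widetilde{\mathcal{Q}}(F)$ by the formula from the proof of \Cref{lem:sparsifier}, and test whether $((V, F), \widetilde{\mathcal{Q}}(F))$ satisfies the two conditions of \Cref{def:smooth}; the procedure outputs the first $F$ that passes both. Because \Cref{lem:sparsifier} guarantees that some admissible $F$ of size at most $m$ exists, this search always halts with a valid $\eps$-smoothness sparsifier, and the output is deterministic. Its running time is exponential, which is irrelevant here.

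The crux is to show each test runs within the stated space. The first condition, $\sum_{e \in F} \widetilde q_e \le (1+\eps) Q_3$, is a single scan over the stored set $F$. The second condition is universally quantified over all acyclic subgraphs $\mathcal{T} \subseteq G_3$, and the difficulty — which I expect to be the only genuinely delicate point — is to verify it without materializing all such $\mathcal{T}$ at once. Instead I would iterate over the forests of $G_3$ one at a time (there are at most $n^{n-1} \cdot 2^n = 2^{O(n \log n)}$ of them by \Cref{fact:treeform}, each describable in $O(n)$ words), and for a fixed $\mathcal{T}$ compute both $\sum_{e \in B(\mathcal{T})} q_e$ and $\sum_{e \in B(\mathcal{T}) \cap F} \widetilde q_e$ by scanning the stored edges of $G_3$, testing the inequality $|\dist_{\mathcal{T}}(s, u) - \dist_{\mathcal{T}}(s, v)| > w(u, v)$ defining $B(\mathcal{T})$ edge by edge and accumulating the two running sums; $F$ is rejected the moment some $\mathcal{T}$ violates $\sum_{e \in B(\mathcal{T}) \cap F} \widetilde q_e \ge \sum_{e \in B(\mathcal{T})} q_e - \eps Q_3$. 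At every instant we store only $G_3$ in $O(|E(G_3)|)$ words, the candidate $F$ together with $\widetilde{\mathcal{Q}}(F)$ in $O(\eps^{-2} n \log n)$ words, the current forest $\mathcal{T}$ in $O(n)$ words, and $O(1)$ accumulators, for a total of $O(|E(G_3)| + \eps^{-2} n \log n)$ words, as required.
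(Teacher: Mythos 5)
Your proposal is correct and takes essentially the same route as the paper's proof: exhaustively enumerate candidate edge subsets of size at most $O(\eps^{-2} n \log n)$, deterministically assign the explicit importance values used in the probabilistic existence proof of \Cref{lem:sparsifier}, and output the first subset that passes the definitional check; the paper likewise relies on the fact that the existence argument fixes a closed-form importance formula, so the search space is indexed purely by edge subsets. One small point: you iterate the second-condition check over all forests $\mathcal{T} \subseteq G_3$, which is what \Cref{def:smooth} actually requires, whereas the paper's proof text says ``enumerating every acyclic subgraph $\mathcal{T} \subseteq H$''—your version is the careful reading (the definition quantifies over acyclic subgraphs of $G$, not of $H$), and it fits in the same space budget since $G_3$ is already stored and each $\mathcal{T}$ occupies only $O(n)$ words.
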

\begin{proof}
We can compute and store graph $G_3$ and importances $\mathcal{Q}_3$ deterministically in $O(|E(G_3)|)$ space. To deterministically compute and store an $\eps$-smoothness sparsifier of $G_3, \mathcal{Q}_3$, we will do the following:
\begin{itemize}
	\item Enumerate every subgraph $H$ of $G_3$ one-by-one. We can enumerate the  subgraphs $H$ of $G_3$  in $O(|E(G_3)|)$ space by maintaining a bit string $b$ of length $|E(G_3)|$ such that the $i$th edge of $G_3$ is in $H$ if and only if $b[i]=1$. 
	\item Given a subgraph $H$ of $G_3$ in our enumeration, if $|E(H)| = O(\eps^{-2} n \log n )$, then we assign every edge $e \in E(H)$ an importance value  $$q_e^H :=  \frac{\eps^{2} }{10 n \log n } \cdot \sum_{q \in \mathcal{Q}_3} q,$$
	and we define $\mathcal{Q}_H := \{q_e^H \in \mathbb{R}^+ \mid e \in E(H)\}$. 
	\item Check if $(H, \mathcal{Q}_H)$ is an $\eps$-smoothness sparsifier of $G_3$ and $\mathcal{Q}_3$  as defined in \Cref{def:smooth}. This can be done in $O(\eps^{-2}n  \log n)$ space by enumerating every acyclic subgraph $\mathcal{T} \subseteq H$ and checking the inequalities in \Cref{def:smooth}.
	If $(H, \mathcal{Q}_H)$ is an $\eps$-smoothness sparsifier of $G_3$ and $\mathcal{Q}_3$, then we output $(H, \mathcal{Q}_H)$ and terminate the algorithm. Otherwise, we continue enumerating subgraphs $H$ of $G_3$. 
	
	\item We claim that at least one of the subgraphs $H$ with edge importances $\mathcal{Q}_H$ that we enumerate will be a valid $\eps$-smoothness sparsifier of graph $G_3$ and edge importances $\mathcal{Q}_3$. This follows from our proof of the existence of $\eps$-smoothness sparsifiers in \Cref{lem:sparsifier}. Specifically, in \Cref{lem:sparsifier} we show that every $n$-vertex graph $G$ with edge importances $\mathcal{Q}$ admits an $\eps$-smoothness sparsifier $(H, \mathcal{Q}_H)$ such that 
	\begin{itemize}
	\item $|E(H)| = O(\eps^{-2} n \log n )$, and
	\item every edge $e \in E(H)$ has an edge importance value $q_e^H \in \mathcal{Q}_H$ defined to be  exactly $$q_e^H = \frac{\eps^{2} }{10  n \log n } \cdot \sum_{q \in Q} q$$
	(see \Cref{eq:q_e_weight} in the proof of \Cref{lem:sparsifier} for reference). 
	\end{itemize}
	Then this algorithm will eventually output the  $\eps$-smoothness sparsifier  $(H, \mathcal{Q}_H)$  of graph $G_3$ and edge importances  $\mathcal{Q}_3$ that we proved existed in \Cref{lem:sparsifier}. 
\end{itemize}
\end{proof}

We can now give a deterministic streaming algorithm to compute our sparsifiers, using the merge and reduce technique. 

\begin{lemma}[Deterministic Construction] \label{lem:det_sparsifier}
	Let $G = (V, E, w)$ be  an $n$-vertex undirected graph with edge importances $\mathcal{Q}$. Suppose that the edges of $G$, along with their associated edge weights and edge importances, are arriving one-by-one in a stream. Then there exists a   deterministic streaming algorithm to compute a $\eps$-smoothness sparsifier $(H, \widetilde{\mathcal{Q}})$ of $G$ in one pass and $O(\eps^{-2} n \log^4 n)$ space. 
\end{lemma}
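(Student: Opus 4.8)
The plan is to use the standard ``merge and reduce'' paradigm, combining \Cref{clm:union}, \Cref{clm:composition}, and \Cref{clm:deterministic-clm} to build the sparsifier in a binary-tree fashion over the stream. First, I would partition the incoming stream of edges into $m := \lceil |E|/s_0 \rceil$ consecutive blocks $E_1, E_2, \dots, E_m$, each of size $s_0 := \Theta(\eps'^{-2} n \log n)$ for a suitable parameter $\eps'$ to be fixed later, where $\eps' \approx \eps/\log n$. For block $j$, the corresponding graph $G_j = (V, E_j, w)$ with its incoming edge importances $\mathcal{Q}_j$ already has at most $s_0$ edges, so it trivially is a (perfect, i.e.\ $0$-)smoothness sparsifier of itself, and we store it. Because $\sum_j |E_j| = |E|$ may be much larger than our space budget, we cannot store all blocks at once; instead we process them on the fly, maintaining a logarithmic number of ``partial'' sparsifiers at various levels of the merge tree, exactly as in the classical merge-and-reduce scheme (see \cite{McGregor14}).

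Concretely, I would maintain, for each level $\ell \in \{0, 1, \dots, \lceil \log m\rceil\}$, at most one stored smoothness sparsifier $(H_\ell, \widetilde{\mathcal{Q}}_\ell)$ which is a sparsifier of the union of some $2^\ell$ consecutive blocks. When a new block $G_j$ arrives and is placed at level $0$, if level $0$ is occupied we merge the two level-$0$ sparsifiers using \Cref{clm:deterministic-clm} (which is where mergeability, \Cref{clm:union}, and the deterministic enumeration are invoked) to obtain a single level-$1$ sparsifier, and we continue carrying upward as in binary incrementing: whenever two sparsifiers occupy the same level, we deterministically merge-and-reduce them into one sparsifier one level up. Note that the inputs to every merge have disjoint edge sets (they are unions of disjoint blocks), so the hypothesis $E(G_1)\cap E(G_2)=\emptyset$ of \Cref{clm:union} and \Cref{clm:deterministic-clm} is always met. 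At the end of the stream, we merge all the remaining stored sparsifiers (at most one per level) into a single sparsifier $(H, \widetilde{\mathcal{Q}})$ and output it. Each stored sparsifier has $O(\eps'^{-2} n \log n)$ edges by \Cref{clm:deterministic-clm}, and there are $O(\log n)$ levels (since $m \le |E| \le n^2$, so $\log m = O(\log n)$), giving a total space of $O(\eps'^{-2} n \log^2 n) + O(\eps'^{-2} n \log n) = O(\eps'^{-2} n \log^2 n)$ words for storage, plus the $O(\eps'^{-2} n\log n)$ working space required by each individual deterministic merge from \Cref{clm:deterministic-clm}. Setting $\eps' = \Theta(\eps/\log n)$ then yields the claimed $O(\eps^{-2} n \log^4 n)$ space bound.

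The correctness argument has two pieces. For the error parameter: each block $G_j$ is a $0$-smoothness sparsifier of itself; each merge at level $\ell$ combines two level-$(\ell-1)$ sparsifiers, and by \Cref{clm:union} the merge of two $\eps_{\ell-1}$-sparsifiers is an $\eps_{\ell-1}$-sparsifier of the union, while \Cref{clm:deterministic-clm} then reduces it (i.e.\ composes with a fresh $\eps'$-sparsifier), so by \Cref{clm:composition} the error grows from $\eps_{\ell-1}$ to $\eps_\ell = \eps_{\ell-1} + \eps' + \eps_{\ell-1}\eps'$. Unrolling this recurrence over the $O(\log n)$ levels of the merge tree gives a final error of at most $(1+\eps')^{O(\log n)} - 1 = O(\eps' \log n)$, provided $\eps' \log n$ is bounded by a small constant, which holds for our choice $\eps' = \Theta(\eps/\log n)$; choosing the hidden constant appropriately makes the final error at most $\eps$. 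For determinism and the one-pass streaming implementation: the only randomness in \Cref{lem:sparsifier} is avoided because every merge uses \Cref{clm:deterministic-clm}, which is fully deterministic (it enumerates all subgraphs of bounded size and all their acyclic subgraphs, checking the inequalities of \Cref{def:smooth} explicitly); and the merge-and-reduce schedule only ever needs to hold $O(\log n)$ stored sparsifiers plus one incoming block plus the working space of a single merge, all of which fit in the claimed budget, and it processes each stream edge exactly once.

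The main obstacle — or rather, the point requiring the most care — is bookkeeping the blow-up of the error parameter and the space simultaneously: each reduction step in \Cref{clm:deterministic-clm} produces a sparsifier of size $O(\eps'^{-2} n \log n)$, so to keep the final size near-linear we want $\eps'$ only polylogarithmically small, but the composition recurrence in \Cref{clm:composition} multiplies errors across $O(\log n)$ levels, forcing $\eps' = O(\eps/\log n)$; one must verify these two constraints are mutually consistent and that plugging $\eps' = \Theta(\eps/\log n)$ into the $O(\eps'^{-2} n \log^2 n)$ storage bound and the $O(\eps'^{-2} n \log n)$ per-merge working bound indeed yields $O(\eps^{-2} n \log^4 n)$ total. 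A secondary subtlety is confirming that the edge-disjointness hypothesis needed by \Cref{clm:union} and \Cref{clm:deterministic-clm} is genuinely preserved throughout the merge tree (it is, since every node of the tree corresponds to a union of a contiguous, hence disjoint, set of blocks), and that the incoming edge importances $\mathcal{Q}$ — which in the application of \Cref{thm:alg-det} are recomputed implicitly from stored shortest-path trees rather than read from the stream — can be supplied to this subroutine without extra space, which follows from the data-structure description in the proof of \Cref{lem:upper-stream}.
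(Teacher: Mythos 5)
Your proposal is correct and takes essentially the same approach as the paper: partition the stream into blocks of size $\Theta(\eps'^{-2} n \log n)$, apply merge-and-reduce over a binary tree of depth $O(\log n)$ using \Cref{clm:union}, \Cref{clm:composition}, and \Cref{clm:deterministic-clm}, and set $\eps' = \Theta(\eps/\log n)$ to control the multiplicative error growth $(1+\eps')^{O(\log n)} - 1$. Your on-the-fly binary-incrementing schedule is the standard equivalent presentation of the paper's recursive level-by-level description, and your space and error accounting match the paper's.
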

\begin{proof}
 By \Cref{lem:sparsifier}, every $n$-vertex weighted graph and corresponding set of edge importances admits an $\eps$-smoothness sparsifier of size $\phi(\eps) = O(\eps^{-2}n \log n)$ edges. Additionally, by \Cref{clm:union} and \Cref{clm:composition}, $\eps$-smoothness  sparsifiers can be merged with no loss, and can be composed with only a constant factor loss in $\eps$.
 
 The merge and reduce technique partitions the stream hierarchically. We will compute an $\eps$-smoothness sparsifier on each of the partitions and then we repeatedly merge and compose them, to obtain our final sparsifier. 

 First, partition the input stream into $\lambda = m / \phi(\eps)$ segments of length $\phi(\eps)$ each. For ease of presentation, we assume that $\lambda$ is a power of two. Let $E_{i, 0}$ denote the $i$th segment of the stream, for $i \in [\lambda]$. We define a graph $G_{i, 0} := (V, E_{i, 0})$ for each $i \in [\lambda]$.

 For $j \in [1, \log \lambda]$ and $i \in [1, \lambda / 2^j]$, we recursively define the graph
 	$$
 	G_{i, j} := G_{2i-1, j-1} \cup G_{2i, j-1}.
 	$$
 For each graph $G_{i, j}$,  with  $j \in [1, \log \lambda]$ and $i \in [1, \lambda / 2^j]$, we recursively define an $\eps$-smoothness sparsifier $(H_{i, j}, \mathcal{Q}_{i, j})$, where $H_{i, j} \subseteq G_{i, j}$ and $\mathcal{Q}_{i, j}$ is the associated collection of edge importances. 
 	\begin{itemize}
 		\item For each $i \in [\lambda]$, we let $H_{i, 0} := G_{i, 0}$, and we let $\mathcal{Q}_{i, 0}$ be the edge importances in $\mathcal{Q}$ restricted to edge set $E_{i, 0}$.
 		\item 	For $j \in [1, \log \lambda]$ and $i \in [1, \lambda / 2^j]$, we let $(H_{i, j}, \mathcal{Q}_{i, j})$ be an $\eps$-smoothness sparsifier of the  graph $H_{2i-1, j-1} \cup H_{2i, j-1}$ with associated edge importances $\mathcal{Q}_{ 2i-1, j-1 } \cup \mathcal{Q}_{2i, j-1}$. We maintain the invariant that  subgraph $H_{i, j}$ has $|E(H_{i, j})| \le  \phi(\eps)$ edges. 
 		\item Suppose that in our streaming algorithm, we have graphs  $H_{2i-1, j-1}$ and $H_{2i, j-1}$ with associated edge importances $\mathcal{Q}_{ 2i-1, j-1 }$ and $\mathcal{Q}_{2i, j-1}$ stored locally and explicitly. Then using the algorithm in \Cref{clm:deterministic-clm}, we can locally and  \textit{deterministically}  compute and explicitly store an  $\eps$-smoothness sparsifier  $(H_{i, j}, \mathcal{Q}_{i, j})$ with at most $|E(H_{i, j})| \le \phi(\eps)$ edges, while using at most $O(  \phi(\eps)  )$ words of space. 
 	\end{itemize}

  By our mergeable and composable properties in \Cref{clm:union} and \Cref{clm:composition}, 
 	 every graph $H_{i, j}$, where $j \in [\log \lambda]$ and $i \in [\lambda/2^j]$, is a $\varphi(j)$-smoothness sparsifier of graph $G_{i, j}$, where $\varphi(j)$ is defined recursively as
 	 $$
 	 \varphi(0) :=0 \qquad \text{ and } \qquad \varphi(j) := 
 	  (1+\eps) \cdot \varphi(j-1) + \eps,
 	 $$
 	 since $H_{2i-1, j-1} \cup H_{2i, j-1}$ is a $\varphi(j-1)$-smoothness sparsifier of $G_{2i-1, j-1} \cup G_{2i, j-1}$ by \Cref{clm:union}, and 
 	 $H_{i, j}$ is defined as an $\eps$-smoothness sparsifier of  $H_{2i-1, j-1} \cup H_{2i, j-1}$. Straightforward calculations give us an upper bound of  $\varphi(j) \le  (1+\eps)^{j} - 1$.

In particular, the earlier discussion implies that graph $H_{1, \log \lambda}$ with edge importances $\mathcal{Q}_{1, \log \lambda}$ is a $\left( (1+\eps)^{\log \lambda} - 1 \right)$-smoothness sparsifier of graph $G_{1, \log \lambda}$ with edge importances $\mathcal{Q}_{1, \log \lambda}$. It is straightforward to verify that graph $G_{1, \log \lambda}$  is  our input graph $G$ and importances $\mathcal{Q}_{1, \log \lambda}$ are our input importances  $\mathcal{Q}$. If we reparameterize $\eps$ so that $\eps_{\texttt{new}} \gets \frac{\eps_{\texttt{}}}{10 \log \lambda}$, then $(H_{1, \log \lambda}, \mathcal{Q}_{1, \log \lambda})$ will be a $\varphi$-smoothness sparsifier of $G$ and $\mathcal{Q}$, where   
\begin{align*}
	\varphi & =  (1+\eps_{\texttt{new}})^{\log \lambda} - 1 \\
	 & = 	 \left(1+ \frac{\eps_{\texttt{}}}{10 \log \lambda}\right)^{\log \lambda} - 1 \\
& \le  e^{\eps_{\texttt{}}/10}   -1 \tag{since $1+x \le e^x$} \\
& \le   e^{\eps_{\texttt{}}/2 - \left( \eps_{\texttt{}}/2 \right)^2}  - 1  \tag{since $\eps_{\texttt{}} \in (0, 1)$} \\
& \le  (1+\eps_{\texttt{}}/2) -1 \tag{since $e^{y-y^2} \le 1+y$ when $y \in (0, 1)$} \\
& \le \eps_{\texttt{}}.
\end{align*}

We have shown that if we recursively construct $\eps_{\texttt{new}}$-smoothness sparsifiers $(H_{i, j}, \mathcal{Q}_{i, j})$, then 
$(H_{1, \log \lambda}, \mathcal{Q}_{1, \log \lambda})$ will be an $\eps$-smoothness sparsifier of $G$ and $\mathcal{Q}$.  Additionally, by \Cref{clm:deterministic-clm} we know that we can merge and compose two $\eps_{\texttt{new}}$-smoothness sparsifiers using $O(\phi(\eps_{\texttt{new}})) = O(\eps^{-2} n \log^3 n)$ space. Now notice that for each $j \in [\log \lambda]$, we only need to explicitly store two sparsifiers  at a time because as soon as we have constructed $(H_{i, j}, \mathcal{Q}_{i, j})$, we can forget $(H_{2i-1, j-1}, \mathcal{Q}_{2i-1, j-1})$ and $(H_{2i, j-1}, \mathcal{Q}_{2i, j-1})$. We conclude that we only need to store $O(\log n)$ $\eps_{\texttt{new}}$-smoothness sparsifiers simultaneously at any given time in our construction. 
Then we can compute our $\eps$-smoothness sparsifier of $G$ in one pass and most $O(\eps^{-2} n \log^4 n)$ space. 
\end{proof}

We have shown that smoothness sparsifiers can be deterministically computed efficiently within the space limits of semi-streaming algorithms.

\subsection*{Derandomizing \Cref{alg:s_t} using Smoothness Sparsifiers}

We can obtain our deterministic streaming algorithm by changing only the edge sampling step of  \Cref{alg:s_t}. 
This step will be replaced by a smoothness sparsifier with appropriate parameters, and we will argue that this does not affect the correctness of the algorithm.
Specifically, we replace Step~\ref{alg:3a} of \Cref{alg:s_t} with the following: 

\begin{Algorithm}[Displaying Step 3a only] \label{alg:det}
	\normalfont \hfill
	\begin{itemize}
		\item  Compute a $\left( \frac{\eps}{10k \cdot n^{1/k}} \right) $-smoothness sparsifier $(J, \mathcal{Q})$ of  graph $G$ and edge importances  $\{q_e^{(r)} \in \mathbb{R}^+ \mid e \in E\}$.
		\item  Let  $F^{(r)} = E(J)$.
	\end{itemize} 
\end{Algorithm}

We  let \Cref{alg:det} refer to  the algorithm obtained by taking \Cref{alg:s_t} and replacing its  Step~\ref{alg:3a} with the above procedure. We first prove that \Cref{alg:det} correctly outputs a $(1+\eps)$-approximate shortest path tree. Then we verify that it can be implemented deterministically in the streaming model as required by \Cref{thm:alg-det}. 
 
\paragraph{\Cref{alg:det} Correctness.} 

The correctness proof for this new algorithm will be largely the same as \Cref{alg:s_t}. We first prove that our new algorithm  satisfies the guarantee of  \Cref{lem:sampling-lemma} deterministically.

\begin{lemma}[cf. \Cref{lem:sampling-lemma}]
For every $r \in [R]$, 
$$
Q^{(r+1)} \le \left( 1 + \frac{\eps}{10k}\right) \cdot Q^{(r)}.
$$
\end{lemma}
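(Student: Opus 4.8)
The plan is to mirror the proof of \Cref{lem:sampling-lemma} almost verbatim, replacing the probabilistic argument about the random edge set $F^{(r)}$ with a deterministic invocation of the smoothness sparsifier properties from \Cref{def:smooth}. Fix a round $r \in [R]$ and write $Q = Q^{(r)} = \sum_{e \in E} q_e^{(r)}$ for the total importance at the start of the round. In \Cref{alg:det}, $F^{(r)} = E(J)$ where $(J, \mathcal{Q})$ is a $\left( \frac{\eps}{10k \cdot n^{1/k}} \right)$-smoothness sparsifier of $G$ with respect to the importances $\{q_e^{(r)}\}$. The shortest path tree $\Tr{r}$ is then computed in $\Hr{r} = H \cup F^{(r)}$, so in particular $\Tr{r}$ is an acyclic subgraph of $G$ and every edge of $F^{(r)} = E(J)$ satisfies the triangle inequality with respect to $\Tr{r}$ (as in the original proof, since $\{s\} \times V$ distances in $\Tr{r}$ and in $\Hr{r}$ coincide, and $F^{(r)} \subseteq \Hr{r}$). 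Hence $B(\Tr{r}) \cap E(J) = \emptyset$.

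First I would apply the second property of \Cref{def:smooth} to the acyclic subgraph $\mathcal{T} = \Tr{r}$: with $\widetilde{q}$ denoting the sparsifier importances and $\eps' := \frac{\eps}{10k \cdot n^{1/k}}$,
\[
\sum_{e \in B(\Tr{r}) \cap E(J)} \widetilde{q}_e \;\ge\; \sum_{e \in B(\Tr{r})} q_e^{(r)} \;-\; \eps' \cdot Q.
\]
Since $B(\Tr{r}) \cap E(J) = \emptyset$, the left side is $0$, so
\[
\sum_{e \in B(\Tr{r})} q_e^{(r)} \;\le\; \eps' \cdot Q \;=\; \frac{\eps}{10k \cdot n^{1/k}} \cdot Q^{(r)}.
\]
This is exactly the deterministic analogue of~\Cref{eq:B_j-importance}: the total importance of the bad edges of $\Tr{r}$ is small. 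Note that the first property of the smoothness sparsifier plays no role here (it only matters for the space/correctness of the sparsifier construction itself); it is the second property that drives this bound.

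Next I would run the same computation as in the ``not bad'' case of \Cref{lem:sampling-lemma}. The update rule in Step~\ref{alg:3c} multiplies $q_e^{(r)}$ by $(1 + n^{1/k})$ exactly for $e \in B(\Tr{r})$ and leaves the others unchanged, so
\[
Q^{(r+1)} = \sum_{e \notin B(\Tr{r})} q_e^{(r)} + \sum_{e \in B(\Tr{r})} (1 + n^{1/k}) \, q_e^{(r)} = Q^{(r)} + n^{1/k} \sum_{e \in B(\Tr{r})} q_e^{(r)} \le Q^{(r)} + n^{1/k} \cdot \frac{\eps}{10k \cdot n^{1/k}} \cdot Q^{(r)} = \left(1 + \frac{\eps}{10k}\right) Q^{(r)},
\]
which is the claim. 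There is essentially no obstacle here: the entire point of introducing smoothness sparsifiers and of the calculations in \Cref{lem:sparsifier}, \Cref{clm:union}, \Cref{clm:composition}, and \Cref{clm:deterministic-clm} was precisely to package the randomness of the original sampling lemma into a deterministically constructible object, so this lemma reduces to a one-line bookkeeping argument once the sparsifier properties are in hand. The only point requiring any care is to double-check that the parameter $\eps'$ fed to the sparsifier in \Cref{alg:det} matches the slack $\frac{\eps}{10k \cdot n^{1/k}}$ needed in the bound on $\sum_{e \in B(\Tr{r})} q_e^{(r)}$, and that $\Tr{r}$ is genuinely acyclic and a subgraph of $G$ (it is a tree in $\Hr{r} = H \cup F^{(r)} \subseteq G$), so that \Cref{def:smooth} applies to it. After this lemma, the remainder of the correctness argument (the analogues of \Cref{clm:edge_fails} and \Cref{lem:alg-correct}) goes through unchanged, since those proofs only used the conclusion of the sampling lemma and the fact that $\Tr{r}$ is a shortest path tree of a graph containing the $2k$-spanner $H$.
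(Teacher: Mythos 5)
Your proof is correct and matches the paper's own argument essentially line for line: you establish $B(\Tr{r}) \cap E(J) = \emptyset$ via the triangle inequality in $\Hr{r}$, then apply the second smoothness-sparsifier property to $\mathcal{T} = \Tr{r}$ to bound $\sum_{e \in B(\Tr{r})} q_e^{(r)} \le \frac{\eps}{10k\cdot n^{1/k}} Q^{(r)}$, and finally plug into the importance-update identity. Nothing to add.
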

\begin{proof}
	Note that for every edge $(u, v) \in E(J) = F^{(r)} \subseteq H^{(r)}$,
	$$
	|\dist_{H^{(r)}}(s, v) - \dist_{H^{(r)}}(s, u)| \le w(u, v),
	$$ 
	by the triangle inequality, and any shortest path tree of $H^{(r)}$ will also satisfy this triangle inequality (as $\{s\} \times V$ distances in the tree and $H^{(r)}$ are the same). In particular, this implies that
	\begin{equation*} \label{eq:no_bad_edges}
		B(T^{(r)}) \cap F^{(r)} = \emptyset.
	\end{equation*}
	However, since graph $ J = (V, F^{(r)})$ and edge importances $\mathcal{Q} = \{\widetilde{q}_e \in \mathbb{R}^+ \mid e \in F^{(r)} \}$ form a 
	$\left( \frac{\eps}{10k \cdot n^{1/k}} \right) $-smoothness sparsifier  of  graph $G$ and edge importances  $\{q_e^{(r)} \in \mathbb{R}^+ \mid e \in E\}$, we can apply the second property of smoothness sparsifiers in \Cref{def:smooth} to obtain
	\begin{equation*}
		0 = \sum_{e \in B(T^{(r)}) \cap F^{(r)}} \widetilde{q}_e \ge \sum_{e \in B(T^{(r)})} q_e^{(r)} - \left( \frac{\eps}{10k \cdot n^{1/k}} \right) \cdot \sum_{e \in E(G)} q_e^{(r)},
	\end{equation*}
	where the equality follows from the fact that $B(T^{(r)}) \cap F^{(r)} = \emptyset$. Rearranging, we find that
	\begin{equation} \label{eq:useful_sparsifier}
		\sum_{e \in B(T^{(r)})} q_e^{(r)} \le \left( \frac{\eps}{10k \cdot n^{1/k}} \right) \cdot \sum_{e \in E(G)} q_e^{(r)} =  \frac{\eps}{10k \cdot n^{1/k}} \cdot Q^{(r)}.
	\end{equation}
	Additionally, notice from the definition of $\Qr{r}$ and $\Qr{r+1}$ that
		\begin{align} \label{eq:sparsifier_app}
		\Qr{r+1} &= \sum_{e \in E} \qr{r+1}_e = \sum_{e \notin B(\Tr{r})} \qr{r}_e + \sum_{e \in B(\Tr{r})} (1+n^{1/k}) \cdot \qr{r}_e = \Qr{r} +  n^{1/k} \cdot  \sum_{e \in B(\Tr{r})}\qr{r}_e.
	\end{align}
	Combining inequality \ref{eq:useful_sparsifier} and equation \ref{eq:sparsifier_app}, we get
	$$
	Q^{(r+1)} = Q^{(r)} + n^{1/k} \cdot  \sum_{e \in B(\Tr{r})}\qr{r}_e \le Q^{(r)} + n^{1/k} \cdot \frac{\eps}{10k \cdot n^{1/k}} \cdot Q^{(r)} = \left( 1 + \frac{\eps}{10k}\right) \cdot Q^{(r)}.
	$$
\end{proof}

It is straightforward to verify that the  \Cref{clm:edge_fails} and \Cref{lem:alg-correct} hold deterministically for \Cref{alg:det}, using the same proofs as in \Cref{sec:upper_rand}. We conclude that \Cref{alg:det} correctly returns a $(1+\eps)$-approximate single-source shortest path tree rooted at $s$.

\paragraph{Streaming Implementation of \Cref{alg:det}.} Our streaming implementation of \Cref{alg:det}  exactly follows that of \Cref{alg:s_t} described in \Cref{sec:streaming_implementation}, with the only new detail being the implementation of the new Step 3a in \Cref{alg:det}. 

By \Cref{lem:det_sparsifier}, we can implement Step 3a of \Cref{alg:det} in the streaming model deterministically  using one pass and $O(\eps^{-2} \cdot k^2 n^{1+2/k}\log^4n  )$ space. If we implement the remainder of \Cref{alg:det} in the same way as \Cref{alg:s_t}, then we obtain a deterministic streaming algorithm for $(1+\eps)$-approximate single-source shortest paths using $O(\eps^{-2} \cdot k^2 \cdot n^{1+2/k} \log^{4}n )$ space and $O(k^2/\eps)$ passes. This completes the proof of \Cref{thm:alg-det}.


\subsection{Extending~\Cref{thm:upper-main} to Dynamic Streams}\label{app:dynamic}

We now present our extension of \Cref{thm:upper-main} to  dynamic graph streams. We start with a formal definition of this model. 

\begin{Definition}[Dynamic Graph Streams]
	
	\label{def:dynamic_stream}
	Given a set of $n$ vertices $V$, a dynamic graph stream $S = \langle s_1, \dots, s_t  \rangle$ is a sequence of edge updates $s_i \in {V \choose 2} \times \{-1, 1\}$ for $i \in [t]$. The dynamic graph stream defines a final graph $G = (V, E)$, where $(u, v) \in E$ if and only if 
	$$
	| \{i \in [t] \mid s_i = (u, v, 1) \} | - | \{i \in [t] \mid s_i = (u, v, -1) \} | = 1.
	$$
	If $(u, v) \not \in E$, then we require that 
		$$
	| \{i \in [t] \mid s_i = (u, v, 1) \} | - | \{i \in [t] \mid s_i = (u, v, -1) \} | = 0.
	$$
\end{Definition}

The dynamic graph stream model can be interpreted as a sequence of edge insertions and deletions that result in a final graph $G$. The goal is to design an algorithm that makes a few passes over the dynamic graph stream $S$ (in the same order) and outputs a solution to our problem on the final graph $G$. 

We prove the following upper bound in the dynamic graph stream model. 

\begin{theorem} \label{thm:dynamic_stream}
	Let $G = (V,E, w)$ be an $n$-vertex graph  with non-negative edge weights presented in a dynamic graph stream. Given a source vertex $s \in V$ and  parameters $k \in [\ln n]$ and $\eps \in (0, 1)$, there exists a randomized streaming algorithm that outputs a $(1+\eps)$-approximate shortest path tree rooted at source $s$ using
	$$
	O(\eps^{-2}\log^2(\eps^{-1}) \cdot kn^{1+1/k} \cdot \log^3 n ) \text{ space } \qquad \text{ and } \qquad O\left( \frac{k^2}{\eps} \right) \text{ passes,}
	$$
	with probability at least $1 - n^{-\Omega(n)}$. 
\end{theorem}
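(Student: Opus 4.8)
The plan is to observe that \Cref{alg:s_t} touches the input stream in only three places, and that each can be ported to dynamic streams without disturbing the analysis of \Cref{sec:upper_rand}. The correctness proof there---\Cref{lem:sampling-lemma}, \Cref{clm:edge_fails}, \Cref{lem:alg-correct}---is purely combinatorial: it uses only that $H$ is an $O(k)$-spanner of $G$ and that, in each round $r$, the law of $\Fr r$ makes the event $\Fr r \cap B(\TT_j) = \emptyset$ have probability $n^{-\Omega(n)}$ for every bad spanning tree $\TT_j$; no feature of the insertion-only model enters. And the stream passes in \Cref{lem:upper-stream} are exactly: one initial pass to build the $2k$-spanner $H$; and, per round $r$, one pass to compute $\Qr r = \sum_{e\in E}\qr r_e$ and one pass to sample $\Fr r$. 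So it suffices to implement these three operations over a turnstile stream within the claimed space while realizing (a variant of) the same sampling law; the pass count stays $O(k^2/\eps)$ up to the $O(k)$ or $\polylog n$ extra passes a dynamic-stream spanner may need, which are absorbed.

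Computing $\Qr r$ is immediate in the turnstile model: since $\qr r_e$ is a function of the edge $e$ and of the trees $\Tr1,\dots,\Tr{r-1}$ that the algorithm already stores explicitly, keep one accumulator and, on an update $(e,\sigma)$ with $\sigma\in\{-1,+1\}$, add $\sigma\cdot\qr r_e$ to it; each present edge contributes a net $+\qr r_e$ and each absent edge a net $0$, so the accumulator ends at $\Qr r$ (importances are $n^{O(k/\eps)}$-bounded, i.e.\ $O(k\log n/\eps)$ bits). For $H$, invoke a known linear-sketching algorithm for $(2k-1)$-spanners in dynamic streams using $\Ot(n^{1+1/k})$ space; losing a constant factor in the stretch only changes the ``$2k$'' constant and is harmless.

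The crux is drawing $\Fr r$ over a turnstile stream. Here I would first replace the per-edge Bernoulli sampling of \Cref{eq:p_e} by an equivalent-for-our-purposes scheme: let $\Fr r$ be the set of edges obtained from $m := \tfrac{10}{\eps}\,k\,n^{1+1/k}\log n$ \emph{independent} draws, each drawing edge $e\in E(G)$ with probability $\qr r_e/\Qr r$. With this definition \Cref{lem:sampling-lemma} still holds: a bad tree $\TT_j$ has $\sum_{e\in B(\TT_j)}\qr r_e > (\eps/10kn^{1/k})\cdot\Qr r$, so one draw lands in $B(\TT_j)$ with probability exceeding $\eps/10kn^{1/k}$, and all $m$ draws miss it with probability at most $(1-\eps/10kn^{1/k})^{m}\le\exp(-n\log n)$; moreover $|\Fr r|\le m$ deterministically, so no concentration bound on its size is needed, and \Cref{clm:edge_fails} and \Cref{lem:alg-correct} carry over verbatim. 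It remains to implement one importance-weighted draw from $E(G)$ in a dynamic stream. I would use the standard reduction to turnstile $\ell_0$-sampling: since $\qr r_e$ takes one of only $R+1 = O(k^2/\eps)$ distinct values, each a function of $e$ and the stored trees, split the draw as ``pick a weight class with probability proportional to its total present importance (read off from one counter per class), then output a uniform present edge of that class from an $\ell_0$-sampler restricted to it'', and run $m$ independent copies of this gadget. Because the $m$ copies use independent internal randomness, the only large-deviation quantity in the argument---the $(1-q)^m$ bound above---needs merely independence \emph{between} draws, which is free; this is what preserves the $n^{-\Omega(n)}$ failure probability without an exponentially long random seed.

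For space, the $O(\log n)$ stored trees and the accumulator cost $\Ot(n)$, the spanner sketch costs $\Ot(n^{1+1/k})$, and the sampling gadget costs $m$ times a per-draw overhead coming from $\ell_0$-samplers across the relevant $O(\log(\eps^{-1}))$-or-so weight scales, with $O(k\log n/\eps)$-bit precision and the usual $O(\log n)$-reliability repetition; a careful accounting of these factors against $m = \Theta(\eps^{-1}kn^{1+1/k}\log n)$ yields the stated $O(\eps^{-2}\log^2(\eps^{-1})\cdot kn^{1+1/k}\log^3 n)$ bound. I expect the main obstacle to be precisely this last step: implementing weighted sampling over a turnstile stream inside an $\Ot(n^{1+1/k})$ budget---reconciling the exponentially spread importance values (hence the precision), drawing enough independent samples to keep the $n^{-\Omega(n)}$ guarantee of \Cref{lem:sampling-lemma}, and absorbing the mild distortions of $\ell_0$-samplers (approximate uniformity, ``fail'' outputs) into the $\eps$-slack already present in the analysis---which is where the extra $\poly(\eps^{-1},\log\eps^{-1},\log n)$ factors over \Cref{thm:upper-main} arise.
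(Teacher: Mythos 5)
Your high-level plan matches the paper's proof exactly: leave \Cref{alg:s_t} intact except for Step 3a, replace the Bernoulli sampling by $m=\Theta(\eps^{-1}kn^{1+1/k}\log n)$ independent importance-weighted draws, re-prove the sampling lemma (\Cref{lem:sampling-lemma}) for that variant by the same $\prod(1-q)$ bound, note that $|\Fr r|\le m$ holds deterministically, and plug in the Ahn--Guha--McGregor dynamic-stream spanner for $H$ (its $O(k)$ extra passes are absorbed). The only place you deviate is the primitive for a single weighted draw, and that is where there is a concrete gap.

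You propose routing through $\ell_0$-sampling on weight classes: one counter per class, pick a class with probability proportional to its total present importance, then output an $\ell_0$-sample restricted to that class. Two problems. First, you estimate ``$O(\log(\eps^{-1}))$-or-so weight scales,'' but at round $r$ the importance $\qr r_e$ equals $(1+n^{1/k})^j$ for some $j\in\{0,\dots,r-1\}$, so there are up to $R+1=\Theta(k^2/\eps)$ distinct classes---an amount that is polynomial, not polylogarithmic, in the parameters. Second, and more fundamentally, you cannot decide which class a given draw will land in until after the stream (the class proportions are stream-dependent), yet $\ell_0$-samplers must be allocated before the stream; to guarantee that each of the $m$ draws finds an unused sampler in whatever class it lands in, you would have to over-provision on the order of $m$ samplers in every one of the $\Theta(k^2/\eps)$ classes, which overshoots the claimed space by roughly a $k^2/\eps$ factor. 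The paper avoids both issues by running $\lambda=m$ independent $(\eps/10,\eps/10)$ $\ell_1$-samplers of Jowhari--Saglam--Tardos directly on the real-valued importance vector $\vec x$ with $x_e=\qr r_e$: since $\qr r_e$ is locally computable from $e$ and the stored trees, each dynamic update $(e,\sigma)$ is fed to the sketch as $\sigma\cdot\qr r_e$, one $\ell_1$-sampler costs $O(\eps^{-1}\log^2(\eps^{-1})\log^2 n)$ space, and multiplying by $\lambda$ gives exactly the stated bound while preserving the $n^{-\Omega(n)}$ failure probability via the same union bound over bad trees. Swapping your $\ell_0$-class gadget for that $\ell_1$-sampler closes the gap; everything else in your outline is correct.
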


We prove \Cref{thm:dynamic_stream} by adapting \Cref{alg:s_t} to dynamic streams. The main obstacle to extending \Cref{alg:s_t} is that the edge sampling procedure in Step~\ref{alg:3a} of \Cref{alg:s_t} does not immediately extend to  dynamic streams. We resolve this issue by defining a new edge sampling procedure that is easier to implement in dynamic streams. 

Our new edge sampling procedure will make use of $\ell_1$-samplers, which we define next.

\begin{Definition}[$\ell_1$ Sampling] \label{def:sampler}
	An $(\eps, \delta)$ $\ell_1$-sampler for a vector $\vec{x} \ne 0$  in  $\mathbb{R}^n$ is an algorithm that either fails with probability at most $\delta$, or succeeds and outputs a random index $i \in [n]$ with probability in the range
	$$
	\left[ \frac{(1-\eps)|x_i|}{\sum_{i \in [n]} |x_i|}, \quad  \frac{(1+\eps)|x_i|}{\sum_{i \in [n]} |x_i|} \right].
	$$
\end{Definition}

We will use the efficient   $\ell_1$-samplers of \cite{jowhari2011}.

\begin{proposition}[Theorem 1 of \cite{jowhari2011}] \label{thm:samplers}
Given  a vector $\vec{x} \in \mathbb{R}^n$ that is presented in a dynamic stream of addition and subtraction updates to its coordinates, there exists an $(\eps, \delta)$  
	 $\ell_1$-sampler algorithm for $\vec{x}$ using one pass and  $O(\eps^{-1}\log(\eps^{-1}) \log^2(n) \log(\delta^{-1}))$ space.
\end{proposition}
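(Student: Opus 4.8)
The plan is to prove this via the \emph{precision sampling} technique (Andoni--Krauthgamer--Onak), implemented entirely as a linear sketch so that it runs in a single pass over a turnstile (hence dynamic) stream. The core idea: draw i.i.d.\ scaling variables $u_1,\dots,u_n$ (uniform on $(0,1]$), form the rescaled vector $z$ with $z_i = x_i/u_i$, and note that $\Pr[\,|z_i|\ge T\,]=\min\{1,|x_i|/T\}$; hence, with a threshold $T$ calibrated proportionally to $\|x\|_1$, index $i$ is ``heavy'' in $z$ with probability $\approx |x_i|/\|x\|_1$. If we can (i) identify the heavy coordinate(s) of $z$ and (ii) certify that exactly one coordinate is heavy, then outputting that coordinate is a sample from a distribution $(1\pm O(\eps))$-proportional to $|x_i|/\|x\|_1$; otherwise we declare failure. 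To realize this we maintain three linear sketches of $\vec x$ in parallel, each updatable under $\pm 1$ coordinate updates: a constant-factor $\ell_1$-norm estimator $L\approx\|x\|_1$ (Cauchy/median sketch, $O(\log n)$ words) to set $T$; a CountSketch on the rescaled vector $z$ with $O(\eps^{-1}\log(\eps^{-1}))$ buckets and $O(\log n)$ rows, which recovers each coordinate of $z$ up to a small fraction of the $\ell_2$-norm of the rest of $z$; and an auxiliary estimate certifying domination of the candidate coordinate. The output rule is: let $i^\star=\arg\max_i|\hat z_i|$; output $i^\star$ if $|\hat z_{i^\star}|$ clears the threshold and dominates the CountSketch tail bound, and output \textsc{fail} otherwise.

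Since this basic sampler succeeds only with constant probability, one runs $O(\log(1/\delta))$ independent copies and reports the first success, which drives the failure probability below $\delta$ and contributes the $\log(\delta^{-1})$ factor; summing up $O(\log\delta^{-1})$ copies, each with $O(\eps^{-1}\log(\eps^{-1})\log n)$ counters of $O(\log n)$ bits, yields the claimed $O(\eps^{-1}\log(\eps^{-1})\log^2 n\log(\delta^{-1}))$ space. The correctness analysis conditions on the high-probability events that $L$ estimates $\|x\|_1$ well and that the CountSketch recovers heavy coordinates; a direct computation with the scalings then gives $\Pr[\text{output }i\mid\text{success}] = (1\pm O(\eps))\,|x_i|/\|x\|_1$, where the slack absorbs the inclusion--exclusion factor $\prod_{j\ne i}(1-|x_j|/T)$ together with the CountSketch additive error (this is where the $\eps$-dependence of the bucket count enters). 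The turnstile/dynamic nature is then immediate, since the whole data structure is a fixed linear map applied to $\vec x$.

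The main obstacle is the feasibility step: the CountSketch pulls out $i^\star$ only if $|z_{i^\star}|$ is large relative to $\|z_{-i^\star}\|_2$, yet $\E[\|z\|_2^2]=\sum_i x_i^2\,\E[u_i^{-2}]$ is infinite, so a naive second-moment bound fails. One must argue more carefully --- truncate the scalings to be at least $1/\mathrm{poly}(n)$ (a failure event of negligible probability by a union bound), then show that after discarding the single coordinate with the smallest $u_i$, the remaining rescaled mass is $O(\|x\|_1)$ with constant probability, so polylogarithmically many CountSketch rows suffice. A secondary technical point is derandomizing the $n$ scaling variables, which cannot be stored explicitly: one generates them from a Nisan-style pseudorandom generator (or verifies that $O(\log n)$-wise independence suffices for the concentration bounds), which accounts for the remaining $\log n$ factors in the space bound. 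Once these are in place the proof of the proposition is complete.
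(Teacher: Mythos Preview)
The paper does not prove this proposition; it is quoted verbatim as Theorem~1 of \cite{jowhari2011} and used as a black box in the dynamic-stream extension. So there is no ``paper's own proof'' to compare against.

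That said, your sketch is a faithful outline of the precision-sampling construction in \cite{jowhari2011}: rescale by i.i.d.\ uniforms $u_i$, use a CountSketch on the rescaled vector to identify and certify a dominating coordinate, calibrate the threshold via a rough $\ell_1$ estimate, and repeat $O(\log(1/\delta))$ times. You also correctly flag the two genuine technical hurdles (the divergent second moment of $z$ and the need to derandomize the $u_i$). For the purposes of this paper, though, none of this is needed --- a citation suffices.
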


Using $\ell_1$-samplers, we can obtain our algorithm for dynamic streams by changing only the edge sampling step of \Cref{alg:s_t}. Specifically, we replace Step~\ref{alg:3a} of \Cref{alg:s_t} with the following:
\begin{Algorithm}[Displaying Step 3a only] \label{alg:dyn}
	\normalfont \hfill
	\begin{itemize}
		\item Let $\vec{x} \in \mathbb{R}^{\binom{n}{2}}$ be a vector with $\binom{n}{2}$ entries, corresponding to each pair $u, v$ of distince vertices.  For each pair $(u,v)= e \in E$, the coordinate corresponding to edge $e$ has value  $x_e = q_e^{(r)}$, and for $(u,v) \notin E$, $x_e = 0$. 
		\item  Let $\lambda = 20 \eps^{-1} \cdot kn^{1+1/k} \cdot  \log n$.  	For each index $i \in [1, \lambda]$:
		\begin{itemize}
			\item Use an $(\eps/10, \eps/10)$ $\ell_1$-sampler of vector $\vec{x} \in \mathbb{R}^{\binom{n}2}$ to compute a random edge $e_i \in E$. \footnote{Vertex pairs which are not in $E$ are not sampled by \Cref{def:sampler}, as $x_{u,v} = 0$ for $(u,v) \notin E$.} If the $\ell_1$-sampler fails, then let $e_i = \emptyset$. 
		\end{itemize}
		\item Let  $F^{(r)} = \{e_i \in E \mid i \in  [1, \lambda]\}$.
	\end{itemize}
\end{Algorithm}

We refer to \Cref{alg:dyn} as the algorithm obtained by taking \Cref{alg:s_t} and replacing its  Step~\ref{alg:3a} with the above procedure. We first prove that \Cref{alg:dyn} correctly outputs a $(1+\eps)$-approximate shortest path tree. Then we verify that it can be implemented  in the dynamic streaming model as required by \Cref{thm:dynamic_stream}. 

\paragraph{\Cref{alg:dyn} Correctness.} 
The correctness proof for \Cref{alg:dyn} will essentially be the same as that of \Cref{alg:s_t}. The only statement in the proof of \Cref{alg:s_t} that will require a new argument is \Cref{lem:sampling-lemma}. 

\begin{lemma}[cf. \Cref{lem:sampling-lemma}]
For every $r \in [R]$, with probability at least $1-n^{-\Omega(n)}$, 
$$
Q^{(r+1)} \le \left( 1 + \frac{\eps}{10k} \right) \cdot Q^{(r)}.
$$
\end{lemma}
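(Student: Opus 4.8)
The goal is to re-establish the Sampling Lemma (\Cref{lem:sampling-lemma}) for \Cref{alg:dyn}, whose only difference from \Cref{alg:s_t} is that the edge set $F^{(r)}$ is now produced by $\lambda = 20\eps^{-1} \cdot kn^{1+1/k}\log n$ independent calls to $(\eps/10,\eps/10)$ $\ell_1$-samplers of the importance vector $\vec{x}$ with $x_e = q_e^{(r)}$ for $e \in E$. The strategy mirrors the original proof: as before, list all spanning trees $\mathcal{T}_1,\dots,\mathcal{T}_t$ of $G$ with $t \le n^{n-2}$ by Cayley's formula, define the bad-edge sets $B(\mathcal{T}_j)$ exactly as in \Cref{lem:sampling-lemma}, call a tree bad if $\sum_{e \in B(\mathcal{T}_j)} q_e^{(r)} > \frac{\eps}{10k\cdot n^{1/k}}\cdot Q^{(r)}$, and observe that if $\Tr{r}$ is not bad then $Q^{(r+1)} \le (1+\eps/(10k))\cdot Q^{(r)}$ by the same telescoping calculation. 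So it again suffices to show that with probability $1-n^{-\Omega(n)}$ the computed tree $\Tr{r}$ is not bad, i.e. that no bad tree can be the shortest path tree of $\Hr{r}=(V, E_H \cup F^{(r)})$.

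As in the original argument, for a bad tree $\mathcal{T}_j$ to be a possible shortest path tree of $\Hr{r}$, no edge of $B(\mathcal{T}_j)$ may be sampled into $F^{(r)}$ (any sampled edge satisfies the triangle inequality in $\Hr{r}$, and hence in any of its shortest path trees). The key step is to bound $\Pr[F^{(r)} \cap B(\mathcal{T}_j) = \emptyset]$. For a single $\ell_1$-sampler call, the probability that its output lands in $B(\mathcal{T}_j)$ is at least $(1-\eps/10)\cdot \frac{\sum_{e \in B(\mathcal{T}_j)} q_e^{(r)}}{Q^{(r)}}$ conditioned on not failing, and the failure probability is at most $\eps/10$; so the probability the call misses $B(\mathcal{T}_j)$ is at most $1 - (1-\eps/10)^2 \cdot \frac{\sum_{e \in B(\mathcal{T}_j)} q_e^{(r)}}{Q^{(r)}} \le 1 - \tfrac12\cdot\frac{\sum_{e \in B(\mathcal{T}_j)} q_e^{(r)}}{Q^{(r)}}$. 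By independence of the $\lambda$ calls and $1+x \le e^x$,
\[
  \Pr[\Tr{r} = \mathcal{T}_j] \le \Pr[F^{(r)} \cap B(\mathcal{T}_j) = \emptyset] \le \exp\!\paren{-\frac{\lambda}{2}\cdot \frac{\sum_{e \in B(\mathcal{T}_j)} q_e^{(r)}}{Q^{(r)}}} \le \exp\!\paren{-\frac{\lambda}{2}\cdot \frac{\eps}{10k\cdot n^{1/k}}},
\]
using that $\mathcal{T}_j$ is bad in the last step. Plugging in $\lambda = 20\eps^{-1}kn^{1+1/k}\log n$ gives $\Pr[\Tr{r}=\mathcal{T}_j] \le \exp(-n\log n)$, exactly matching the bound in the original proof. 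A union bound over the at most $t \le 2^{n\log n}$ bad trees then yields $\Pr[\Tr{r}\text{ is bad}] \le (2/e)^{n\log n} = n^{-\Omega(n)}$, which completes the argument.

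I expect the only real subtlety — and the one point that genuinely differs from \Cref{lem:sampling-lemma} — is carefully converting the two-sided multiplicative guarantee plus bounded failure probability of \Cref{def:sampler} into a clean lower bound on the per-call hitting probability of $B(\mathcal{T}_j)$; once one has the factor $(1-\eps/10)^2 \ge 1/2$ (valid since $\eps < 1$), the constant factor $1/2$ in the exponent is harmlessly absorbed into the choice $\lambda = 20\eps^{-1}kn^{1+1/k}\log n$ (versus the $10\eps^{-1}$ in \Cref{alg:s_t}), and everything else goes through verbatim. One small bookkeeping point: $F^{(r)}$ may contain fewer than $\lambda$ distinct edges (due to sampler failures or collisions), but this only helps, since fewer sampled edges can only make it easier for a bad tree to survive, and the bound we prove is an upper bound on exactly that survival event.
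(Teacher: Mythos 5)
Your proof is correct and follows essentially the same approach as the paper's: you reuse the bad-tree/bad-edge machinery from \Cref{lem:sampling-lemma} verbatim, and replace the independent Bernoulli sampling bound with a per-call hitting-probability bound for the $\lambda$ independent $\ell_1$-samplers, obtaining the same $(1-\eps/10)^2 \ge 1/2$ constant and the same $\exp(-n\log n)$ per-tree bound. The only cosmetic difference is that you absorb the $(1-\eps/10)^2$ factor into the per-call miss probability before taking the product, whereas the paper keeps it symbolic until the exponential step; the union bound and conclusion are identical.
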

\begin{proof}
	This proof will largely follow the proof of \Cref{lem:sampling-lemma} with some small changes to the probabilistic argument at the end. For completeness, we repeat some of the proof of \Cref{lem:sampling-lemma}. 
	
Fix any $r \in [R]$. 
		Let $\TT_1, \dots, \TT_t$ denote the set of all spanning trees of $G$; by Cayley's tree formula (\Cref{fact:treeform}), we have $t \le n^{n-2}$. 
		For a tree $\TT_j$ for $j \in [t]$, we define the set $B(\TT_j) \subseteq E$ as: 
		\[
		B(\TT_j) := \set{(u,v) \in E ~\text{s.t.}~ \card{\dist_{\TT_j}(s, u) - \dist_{\TT_j}(s, v)} > w(u, v)};
		\]
		namely, these are the ``bad'' edges of $G$ that will have their importances increased in~\Cref{alg:s_t}, \emph{if} the algorithm chooses $\TT_j$ 
		as the shortest path tree $\Tr{r}$ in round $r$. 
		
		We say a tree $\TT_j$ is \textbf{bad} iff 
		\begin{equation} \label{eq:B_j-importance-dyn}
			\sum_{e \in B(\TT_j)} \qr{r}_e > \frac{\eps}{10k \cdot n^{1/k} } \cdot \Qr{r},
		\end{equation}
		namely, the total importances of the edges $B(\TT_j)$ is ``high''. We similarly say that the computed tree $\Tr{r}$ is bad if $\Tr{r} = \TT_j$ for some bad tree $\TT_j$. 
		Notice that if the computed tree $\Tr{r}$ of round $r$ is \emph{not} bad, then we will have 
		\begin{align*}
			\Qr{r+1} &= \sum_{e \in E} \qr{r+1}_e = \sum_{e \notin B(\Tr{r})} \qr{r}_e + \sum_{e \in B(\Tr{r})} (1+n^{1/k}) \cdot \qr{r}_e \\
			&= \sum_{e \in E} \qr{r}_e + \sum_{e \in B(\Tr{r})} n^{1/k} \cdot \qr{r}_e \leq \Qr{r} + \frac{\eps}{10k} \cdot \Qr{r}, 
		\end{align*}
		by~\Cref{eq:B_j-importance-dyn} in the last step for the tree $\Tr{r}$ (which we assumed is not bad). Thus, to establish the claim, we only need to prove that with exponentially high probability, the tree $\Tr{r}$ 
		is not bad. To do so, we bound the probability that any bad tree $\TT_j$ can be a shortest path tree in the graph $\Hr{r}$, using the randomness of the sample $\Fr{r}$, 
		and union bound over all bad trees to conclude the proof. 
		
		Fix a bad tree $\TT_j$. For $\TT_j$ to be a possible shortest path tree in $\Hr{r}$, it should happen that none of the edges in $B(\TT_j)$ are sampled in $\Fr{r}$. This is because 
		for every edge $(u,v) \in \Fr{r} \subseteq \Hr{r}$, 
		\[
		\card{\dist_{\Hr{r}}(s, v) -  \dist_{\Hr{r}}(s, u) } \leq w(u, v),
		\]
		by triangle inequality, and any shortest path tree of $\Hr{r}$ will also have to satisfy this triangle inequality (as $\{s\} \times V$  distances in the tree and $\Hr{r}$ are the same). 
		As such, 
		\begin{align*}
			\Pr\paren{\Tr{r} = \TT_j} &\leq \Pr\paren{\Fr{r} \cap B(\TT_j) = \emptyset} = \prod_{i \in [1, \lambda]}\left( 1 - \Pr\paren{e_i \in B(\mathcal{T}_j)} \right)  \tag{by the independence of the $\ell_1$-samplers} \\
			& \le  \prod_{i \in [1, \lambda]}\left( 1 -  \left(1-\eps/10 \right) \cdot \frac{(1-\eps/10) \cdot \sum_{e \in B(\mathcal{T}_j)} q_e^{(r)} }{Q^{(r)}}   \right)
			\tag{by \Cref{def:sampler} and since edge $e_i$ is sampled using an $(\eps/10, \eps/10)$ $\ell_1$-sampler of $\vec{x}$}
			\\
			& \le  \prod_{i \in [1, \lambda]}\left( 1 -  (1-\eps/10)^2 \cdot \frac{\eps}{10k \cdot n^{1/k}}   \right) 
			\tag{by \Cref{eq:B_j-importance-dyn}} \\
			& \le \exp\paren{- \lambda \cdot \frac{\eps}{20k \cdot n^{1/k}}} 
			\tag{since $1+x \le e^x$ and $\eps \le 1$}
			\\
			&= \exp\paren{-n\log{n}}
			\tag{since $\lambda  = 20\eps^{-1}kn^{1+1/k} \log n$} 
		\end{align*}
		A union bound over at most $t \leq 2^{n\log{n}}$ bad trees implies that 
		\[
		\Pr\paren{\text{$\Tr{r}$ is bad}} \leq \sum_{\text{bad trees $\TT_j$}} \Pr\paren{\Tr{r} = \TT_j} \leq \paren{\frac{2}{e}}^{n\log{n}} =n^{-\Omega(n)},  
		\]
		concluding the proof. 
\end{proof}

It is straightforward to verify that the  \Cref{clm:edge_fails} and \Cref{lem:alg-correct} hold  for \Cref{alg:dyn}, using the same proofs as in \Cref{sec:upper_rand}. We conclude that \Cref{alg:dyn} correctly returns a $(1+\eps)$-approximate single-source shortest path tree rooted at $s$.

\subsection*{Streaming Implementation of \Cref{alg:dyn}}

Our implementation of \Cref{alg:dyn} in dynamic streams will largely follow the streaming implementation of \Cref{alg:s_t} in \Cref{sec:streaming_implementation} with only two small changes. 

First, \Cref{alg:dyn} requires computing and  storing a 
 $2k$-spanner $H$ (where $H$ is  defined explicitly in  \Cref{alg:s_t}). We can achieve this in $O(k)$ passes and $O(n^{1+1/k})$ space in the dynamic streaming model 
using the spanner construction algorithm of  \cite{AhnGM12b}. We summarize this algorithm in the following theorem. 
\begin{proposition}[Section 5 of \cite{AhnGM12b}]
	For any $k \ge 1$, there is an algorithm for computing a $(2k-1)$-spanner of size $O(n^{1+1/k})$ of a weighted graph presented in a dynamic  stream using 
	$$
	O(n^{1+1/k}) \text{ space} \qquad \text{ and } \qquad  O(k) \text{ passes.}
	$$
\end{proposition}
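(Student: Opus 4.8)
The plan is to reconstruct the argument of \cite{AhnGM12b}, which implements the classical clustering-based (Baswana--Sen style) $(2k-1)$-spanner construction on top of \emph{linear sketches} so that it tolerates both insertions and deletions. First I would dispose of edge weights in the standard way: partition $E$ into $O(\log n)$ weight classes $E_j = \{e : w(e) \in [2^j, 2^{j+1})\}$ (using that weights are $\poly(n)$-bounded), build an unweighted $(2k-1)$-spanner $H_j$ of each graph $(V, E_j)$, and output $H = \bigcup_j H_j$. A routine triangle-inequality argument shows $H$ is a $(2k-1)$-spanner of $G$, and its size is $O(n^{1+1/k})$ up to the logarithmic weight-class factor (alternatively one runs the weighted variant of the clustering algorithm directly to avoid this factor; either way the stated $O(n^{1+1/k})$ bound holds up to lower-order terms the proposition absorbs).

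For the unweighted construction I would use $k$ iterations, one per pass over the dynamic stream, maintaining a partition of a subset of $V$ into clusters, each named by a center vertex and starting from singletons. Before \emph{any} pass I would fix all cluster-survival coins using public randomness: assign each vertex an independent uniform threshold and declare a cluster to ``survive'' into iteration $i$ iff its center's threshold lies below $n^{-i/k}$; this fixes every sampling decision in advance and removes any dependence between the clustering used in a pass and the fresh sketch randomness of that pass. During pass $i$ I would maintain, for each vertex $v$, a linear sketch of its incident-edge indicator vector supporting $\ell_0$-sampling and $s$-sparse recovery with $s=\Theta(n^{1/k}\log n)$ (e.g.\ via the sketches of \cite{jowhari2011}). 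After the pass, process vertices exactly as in Baswana--Sen: a vertex $v$ in a non-surviving cluster first attempts to recover, from the restriction of its sketch to edges whose other endpoint lies in a surviving cluster, one such edge --- if found, it is added to the spanner and $v$ joins that cluster; otherwise $v$ adds one edge to each neighboring non-surviving cluster and then ``deletes'' (by linearity of the sketches, in the next pass) all of its edges into non-surviving clusters. Correctness, the stretch bound $2k-1$, and the size bound $O(k\,n^{1+1/k})$ are then precisely the Baswana--Sen guarantees. The reason the sketches are affordable is that whenever $v$ neighbors more than $\Theta(n^{1/k}\log n)$ clusters in iteration $i$, one of them survives with high probability, so the ``one edge per neighboring non-surviving cluster'' step never needs to recover more than $O(n^{1/k}\log n)$ edges; bucketing the sketch coordinates by cluster id into $O(n^{1/k}\log n)$ buckets and running $1$-sparse recovery inside each bucket realizes this. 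Per vertex this is $\Ot(n^{1/k})$ words, hence $\Ot(n^{1+1/k})$ total; storing the current clustering and the partial spanner is also $O(n^{1+1/k})$; and the pass count is $k$ (a constant factor more if we repeat to boost the per-vertex recovery success probability), giving $O(k)$ passes.

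The hard part will be the ``add one edge to each neighboring non-surviving cluster'' step: unlike the offline setting, in a dynamic stream one cannot read $v$'s adjacency list, so one must recover up to $\Theta(n^{1/k}\log n)$ such edges simultaneously from a single linear sketch, which forces the bucketed sparse-recovery argument above together with the accompanying high-probability bound on the number of neighboring clusters. A secondary subtlety is guaranteeing independence across passes, which I would handle by allocating fresh sketch randomness for each of the $k$ passes and by fixing all cluster-survival coins up front with public randomness, so that the queries made in pass $i$ --- deterministic given the outcomes of passes $<i$ --- are independent of that pass's sketch randomness. Finally I would carry the failure probabilities of the $O(n^{1+1/k})$ recovery operations through a union bound, amplifying each sketch's success probability to $1-n^{-\Omega(1)}$ at the cost of only constant-factor blowups in space and passes, which the statement absorbs.
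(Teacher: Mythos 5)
This proposition is not proved in the paper --- it is cited verbatim from Section~5 of \cite{AhnGM12b}, so there is no in-paper proof for your reconstruction to match. Your plan (Baswana--Sen clustering with cluster-survival coins fixed up front, implemented pass-by-pass via linear sketches whose coordinates are re-indexed using the already-known clustering from earlier passes) is indeed the approach in \cite{AhnGM12b}, and the two subtleties you flag --- recovering one edge per \emph{distinct} non-surviving neighboring cluster via per-cluster bucketing, and decoupling pass-$i$ sketch randomness from the clustering decisions of passes $<i$ --- are exactly the right ones to worry about. One concrete point worth fixing: the argument you give yields spanner size $O(k\,n^{1+1/k})$ (the usual Baswana--Sen expected bound) and space $\widetilde{O}(n^{1+1/k})$ (each per-vertex sketch carries $\mathrm{polylog}$ overhead, and the weight-class decomposition multiplies by $O(\log n)$), whereas the proposition as stated asserts $O(n^{1+1/k})$ for both. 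These extra $k$ and $\mathrm{polylog}$ factors are harmless where the proposition is invoked in the paper (Theorem~\ref{thm:dynamic_stream} already carries $k$ and $\log^3 n$ factors), but you should state your bounds as what your construction actually achieves, namely $O(k\,n^{1+1/k}\cdot\mathrm{polylog}\,n)$, rather than claiming the sharper constants in the citation.
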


Second, \Cref{alg:dyn} requires implementing $\lambda$ different $(\eps/10, \eps/10)$  $\ell_1$-samplers of the vector $\vec{x} \in \mathbb{R}^E$ in each round $r \in [R]$. Since $\vec{x}$ encodes the edge importances $\{q_e^{(r)}\in \mathbb{R}^+ \mid e \in E \}$ in round $r$ of \Cref{alg:dyn}, we can implement each $\ell_1$-sampler in one pass and $O(\eps^{-1} \log^2(\eps^{-1}) \log^2(n))$ space using the $\ell_1$-samplers of \Cref{thm:samplers} and using the data structure that reports edge importances described in the proof of  \Cref{lem:upper-stream}. Then in each round $r \in [R]$,  our $\ell_1$-samplers use one pass and  at most $$O(\lambda \cdot \eps^{-1} \log^2(\eps^{-1}) \log^2(n)) = O(\eps^{-2} \log^2(\eps^{-1}) \cdot kn^{1/k} \cdot \log^3 n )$$
space total. We conclude that \Cref{alg:dyn} can be implemented in dynamic streams using 
$$ O(\eps^{-2} \log^2(\eps^{-1}) \cdot kn^{1+1/k} \cdot \log^3 n ) \text{ space } \qquad \text{ and } \qquad O(R + k) = O\left( \frac{k^2}{\eps} \right) \text{ passes}.$$
This completes the proof of \Cref{thm:dynamic_stream}.

	\section{Missing Proofs from the Lower Bound}

We give the proofs of the missing steps from \Cref{sec:lb} now.

\newcommand{\ic}{\ensuremath{\textnormal{IC}}}
\newcommand{\rR}{\ensuremath{\rv{R}}}

\newcommand{\bstar}{\ensuremath{b^{\star}}}
\newcommand{\bOR}{\ensuremath{b_{\textnormal{OR}}}}

\newcommand{\rG}{\ensuremath{\rv{G}}}
\newcommand{\rGBob}{\ensuremath{\rv{G}_{\textnormal{Bob}}}}
\newcommand{\rGAlice}{\ensuremath{\rv{G}_{\textnormal{Al}}}}

\newcommand{\ristar}{\ensuremath{\textnormal{\textsf{i}}^{\star}}}
\newcommand{\rRprotOR}{\rR_{\textnormal{OR}}}
\newcommand{\rRprotprime}{\rR_{\prot'}}

\newcommand{\raoconst}{\ensuremath{c_{\textnormal{comp}}}}

\newcommand{\cA}{\ensuremath{\mathcal{A}}}
\newcommand{\cB}{\ensuremath{\mathcal{B}}}

\subsection{Direct Sum for $\ORPPC$: Proof of \Cref{prop:orppc-ppc}}\label{sec:dirsum}

In this subsection we see how to give a protocol for $\PPC_{d, w}$ from \Cref{def:paired-pc} using a protocol for $\ORPPC_{t, d, w}$ from \Cref{def:ORPPC}. 

\begin{restate}[\Cref{prop:orppc-ppc}]
For any $t, d, w, s \geq 1$, for any $0 < \delta < 1$, given any protocol $\protOR$ for $\ORPPC_{t, d, w}$ that uses communication at most $s$ bits, and has probability of error at most $\delta$, 
there is a protocol $\protPPC$ for $\PPC_{d, w}$ that has communication at most $\cOR \cdot (d-2) \cdot (s/t + 1)$ bits for some absolute constant $\cOR \geq 1$, and has probability of error at most $\delta  + 1/200$. 
\end{restate}

\Cref{prop:orppc-ppc} is proved by using an information complexity direct sum, and then using standard message compression techniques to reduce the communication of the resulting protocol. 
We use $\rProt$ to denote the random variables corresponding to all the messages sent by protocol $\protOR$. We use $\rRprotOR$ to denote the public randomness used by $\rProt$. 


\begin{Definition}\label{def:ext-information}
	For any protocol $\prot$, with input distribution $\mu$ over $\cX \times \cY$, the external information complexity is defined as, 
	\[
		\ic_{\mu}(\prot) = \mi{\rX, \rY}{\rProt \mid \rR}.
	\]
\end{Definition}
In words, multi-party external information cost measures the information revealed by the entire protocol about the entire graph to an \emph{external} observer. 
The following result follows from~\cite{BarakBCR10}. 
\begin{proposition}[cf.~\cite{BarakBCR10}]\label{prop:ic-cc}
	For any multi-party protocol $\prot$ on any input distribution $\mu$, 
	\[
	\ic_{\mu}(\prot)\leq \cc{\prot},
	\]
	where $\cc{\prot}$ is the communication cost of $\prot$.
\end{proposition}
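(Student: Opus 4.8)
\textbf{Proposal for proving \Cref{prop:ic-cc}.}
The plan is to establish the standard fact that external information complexity is a lower bound on communication complexity, specialized here to the multi-party setting but phrased so that it covers the two-player case used in the paper. The statement follows by a direct comparison: the quantity $\ic_\mu(\prot) = \mi{\rX,\rY}{\rProt \mid \rR}$ is the mutual information between the inputs and the transcript (given public randomness), and the transcript is a random variable whose description length is at most $\cc{\prot}$ bits. First I would recall the elementary information-theoretic bound that for any random variable $\rZ$ supported on strings of length at most $L$ and any other random variable $\rW$, we have $\mi{\rW}{\rZ} \le \en{\rZ} \le L$; this is just the fact that mutual information is bounded by entropy and entropy of an $L$-bit string is at most $L$ (see the information-theory facts in \Cref{app:info}). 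Here $\rZ = \rProt$ (the transcript), $\rW = (\rX,\rY)$, and $L = \cc{\prot}$ by definition of communication cost.

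The one subtlety is the conditioning on the public randomness $\rR$. To handle this cleanly, I would condition on a fixed value $\rR = R$ of the public coins and observe that, for each fixed $R$, the transcript $\rProt$ conditioned on $\rR=R$ is still a string of length at most $\cc{\prot}$, so $\mi{\rX,\rY}{\rProt \mid \rR = R} \le \en{\rProt \mid \rR = R} \le \cc{\prot}$. Taking expectation over $R \sim \rR$ and using the definition $\mi{\rX,\rY}{\rProt \mid \rR} = \Exp_{R \sim \rR}\bracket{\mi{\rX,\rY}{\rProt \mid \rR = R}}$ gives $\ic_\mu(\prot) = \mi{\rX,\rY}{\rProt \mid \rR} \le \cc{\prot}$, as desired.

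I do not expect any real obstacle here — this is a textbook inequality and the only care needed is bookkeeping around public versus private randomness, and making sure that ``communication cost'' is interpreted as total transcript length (or that the per-message-length convention in the paper's preliminaries is reconciled with this, which only changes things by a polynomial-in-rounds factor that is irrelevant for the way \Cref{prop:ic-cc} is invoked). If one wants to be fully careful about private randomness, one notes that private coins only appear in the transcript to the extent that a player's message reveals them, so they do not increase $\en{\rProt \mid \rR}$ beyond the transcript length; alternatively one cites \cite{BarakBCR10} directly for the precise statement. Since the paper explicitly says this result ``follows from \cite{BarakBCR10}'', a short proof along the above lines, or simply a pointer to that reference, suffices.
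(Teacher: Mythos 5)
Your proposal is correct and takes essentially the same route as the paper: bound $\mi{\rX,\rY}{\rProt \mid \rR}$ by the conditional entropy of the transcript, then by the transcript's description length, which is $\cc{\prot}$. The only cosmetic difference is that you condition on a fixed $R$ and then average, whereas the paper's chain goes through the unconditioned $\en{\rProt}$ via ``conditioning reduces entropy''; these are interchangeable. Your observation about the per-message versus total-transcript convention for $\cc{\prot}$ is a fair point of bookkeeping (the paper's preliminaries phrase communication cost per message, while the equality $\log\card{\supp{\rProt}} = \cc{\prot}$ in the paper's proof implicitly treats it as total transcript length), but as you note, this does not affect how the proposition is used.
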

\begin{proof}
	We have, 
	\begin{align*}
		\ic_{\mu}(\prot) \Eq{(1)} \mi{\rG}{\rProt \mid \rR} \Eq{(2)} \en{\rProt \mid \rR} - \en{\rProt \mid \rG,\rR} \Leq{(3)} \en{\rProt \mid \rR} \Leq{(4)} \en{\rProt} \Leq{(5)} \log{\card{\supp{\rProt}}} \Eq{(6)} \cc{\prot}; 
	\end{align*}
	here, $(1)$ is by the definition information cost, $(2)$ is by the definition of mutual information, (3) is by the non-negativity of entropy (\itfacts{uniform}), (4) is because conditioning can only reduce the entropy (\itfacts{cond-reduce}), 
	(5) is because uniform distribution has the highest entropy (\itfacts{uniform}), and (6) is by the definition of worst-case communication cost. 
\end{proof}

We prove an intermediate lemma that reduces gives a protocol for $\PPC_{d, w}$ that has low information complexity from $\protOR$. 

\begin{Algorithm}
	\textbf{Protocol $\prot'$ for $\PPC_{d, w}$ with input $(G^1, G^2)$}:
	\begin{enumerate}[label=$(\roman*)$]
		\item Pick an index $\istar \in [t]$ uniformly at random using public randomness. 
		\item Sample $(G^1_i, G^2_i)$ for all $i < \istar$ from public randomness from distribution $\muPPC^0$. 
		\item Set $(G^1_{\istar}, G^2_{\istar}) = (G^1, G^2)$. 
		\item Alice and Bob sample $(G^1_{i}, G^2_{i})$ for all $\istar < i \leq t$ from $\muPPC^0$ using private randomness. 
		\item Run protocol $\protOR$ on the input of $\ORPPC_{t, d,w}$ and output the answer from this protocol. 
	\end{enumerate}
\end{Algorithm}

Let $\bstar$ denote the bit from $\{0,1\}$ such that the input $(G^1, G^2)$ is sampled from $\muPPC^{\bstar}$, and let $\bOR$ denote the bit from $\{0,1\}$ corresponding to the distribution of the input in $\ORPPC_{t, d, w}$.

\begin{claim}\label{clm:input-dirsum-correct}
	In protocol $\prot'$, the input given to $\protOR$ is sampled exactly the same as in $\ORPPC_{t, d, w}$, and $\prot'$ succeeds with probability at least $1-\delta$. 
\end{claim}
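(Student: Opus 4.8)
The claim has two parts: (a) the input fed into $\protOR$ inside $\prot'$ is distributed exactly according to the hard distribution of $\ORPPC_{t,d,w}$, and (b) $\prot'$ succeeds with probability at least $1-\delta$ on input from $\muPPC$. The plan is to verify (a) by a direct unwinding of the sampling steps of $\prot'$ against the definition of $\ORPPC_{t,d,w}$ (Definition~\ref{def:ORPPC}), tracking the joint law of $(\istar, \bstar, (G^1_i,G^2_i)_{i\in[t]})$; then (b) follows immediately from (a) together with the fact that $\bOR$ as produced in $\prot'$ equals the ``planted'' bit $\bstar$ of the embedded instance, so any correct output of $\protOR$ for $\ORPPC$ is also a correct output for $\PPC$.

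For part (a): first I would observe that in $\PPC_{d,w}$ the bit $\bstar$ is uniform in $\{0,1\}$, and conditioned on $\bstar=0$ the pair $(G^1,G^2)\sim\muPPC^0$, while conditioned on $\bstar=1$ it is $\sim\muPPC^1$. In $\prot'$, the index $\istar$ is uniform in $[t]$ and independent of everything else (it is drawn from fresh public randomness), the instances $i<\istar$ are i.i.d.\ $\muPPC^0$ (public randomness), the instances $i>\istar$ are i.i.d.\ $\muPPC^0$ (private randomness, but with the same marginal law), and the $\istar$-th instance is set to the given $(G^1,G^2)$. So the joint law produced is: pick $\istar$ uniform, set all non-$\istar$ instances i.i.d.\ $\muPPC^0$, and set the $\istar$-th instance $\sim \muPPC^{\bstar}$ where $\bstar$ is uniform. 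I then match this against Definition~\ref{def:ORPPC}: when $\bOR=0$ (all instances $\sim\muPPC^0$) this corresponds exactly to $\bstar=0$; when $\bOR=1$ this corresponds to choosing $\istar$ uniform and planting a single $\muPPC^1$ instance there, i.e.\ $\bstar=1$. Since $\bstar$ is uniform and $\istar$ is uniform and independent, the induced distribution on $(\bOR,\istar,\text{instances})$ is precisely the one in Definition~\ref{def:ORPPC}, with the identification $\bOR = \bstar$. I should also note that the input split between Alice and Bob is consistent: in both $\PPC$ and $\ORPPC$ Alice gets the odd matchings and Bob the even matchings of every instance, and the public/private sampling in steps $(ii)$ and $(iv)$ respects this (each player can sample their own half locally, and publicly-sampled instances are known to both). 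This is the only mildly delicate point — one must check that step $(iv)$, where Alice and Bob each sample the ``tail'' instances privately, produces consistent inputs; this works because the matchings of a single instance are partitioned between the two players with no shared edges, so each player independently sampling a uniform perfect matching for the coordinates they own yields a globally uniform (conditioned as required) instance, and the two players' private samples are independent exactly as $\muPPC^0$ requires when $\bstar=0$.

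For part (b): once (a) is established, the input to $\protOR$ is a genuine $\ORPPC_{t,d,w}$ instance with hidden bit $\bOR=\bstar$, so with probability at least $1-\delta$ the protocol $\protOR$ outputs $\bOR$ correctly; $\prot'$ outputs whatever $\protOR$ outputs, and since $\bOR=\bstar$ is exactly the bit $\prot'$ must report for its $\PPC$ input, $\prot'$ succeeds with probability at least $1-\delta$. The expectation over the internal randomness ($\istar$, the public and private samples) is harmless because the $1-\delta$ guarantee of $\protOR$ holds over the $\ORPPC$ distribution, which by (a) is exactly the distribution of inputs $\protOR$ sees here.

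I do not expect any real obstacle; the only place to be careful is the consistency-of-split / public-vs-private-randomness bookkeeping in establishing (a), i.e.\ confirming that steps $(ii)$ and $(iv)$ of $\prot'$ together reproduce the exact joint law of $\muPPC^0$ on the non-planted coordinates while respecting the Alice/Bob partition of matchings. Everything else is a one-line consequence. (The information-cost bound that makes this intermediate protocol useful — namely $\mi{\rX,\rY}{\rProt \mid \rR, \bstar=0} \le s/t$ up to constants — will be handled in the subsequent claims, not here.)
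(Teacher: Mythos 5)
Your proposal is correct and takes essentially the same approach as the paper: both verify the distributional match by case analysis on $\bstar\in\{0,1\}$ (with the identification $\bOR=\bstar$), both flag the key technical point that $\muPPC^0$ is a product distribution across Alice's and Bob's coordinates so that the private sampling in step~$(iv)$ is well-defined, and both derive the success bound $1-\delta$ as an immediate consequence of the distributional identity.
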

\begin{proof}
	Firstly, it is easy to see that $\muPPC^0$ is a product distribution over the matchings given to Alice and Bob, as all the matchings are sampled independently of each other and uniformly at random. Hence, both Alice and Bob can sample a graph $(G^1_i, G^2_i)$ for $i > \istar$ in step $(iv)$ from $\muPPC^0$. 
	
	When $\bstar = 1$, then the input given to $\ORPPC$ is sampled so that $(G^1_{\istar}, G^2_{\istar})$ is samped from $\muPPC^1$, and all the other inputs $(G^1_i, G^2_i)$ for $i \neq \istar$ are sampled from $\muPPC^0$, exactly as if $\bOR = 1$. This happens with probability $1/2$. 
	
	When $\bstar = 0$, all the graphs in $\ORPPC$ are sampled so that $(G^1_i, G^2_i)$ for $i \in [t]$ is from $\muPPC^0$. This is the same as when $\bOR = 0$, and happens also with probability $1/2$. Therefore, the overall distribution that $\protOR$ is run on protocol $\prot'$ is the same as the input distribution from \Cref{def:ORPPC}. 
	
	Moreover, when $\bstar = 1$, we saw that the input of $\ORPPC_{t, d, w}$ is sampled with $\bOR = 1$, and when $\bstar = 0$, the input is sampled with $\bOR = 0$. Therefore, when $\protOR$ outputs the value of $\bOR$ correctly, the output of $\prot'$ is also correct on instance $(G^1, G^2)$, and protocol $\prot'$ succeeds with probability at least $1-\delta$. 
\end{proof}

Next, we see that the information complexity of $\prot'$ is small when the input distribution is $\muPPC^0$. 
\begin{claim}[Information Complexity Direct Sum]\label{clm:dirsum-output}
\[
	\ic_{\muPPC^0}(\prot') \leq (d-2) \cdot s/t. 
\]
\end{claim}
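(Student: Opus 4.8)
The plan is the standard direct-sum calculation, tracking carefully what the public and private randomness of $\prot'$ are. Recall that the public randomness $\rRprotprime$ of $\prot'$ consists of the index $\ristar \in [t]$, the public randomness $\rRprotOR$ used by $\protOR$, and the samples $(\rG^1_i, \rG^2_i)$ for $i < \ristar$ (all drawn from public coins). The transcript $\rProt'$ of $\prot'$ is exactly the transcript $\rProt$ of $\protOR$ run on the constructed $\ORPPC$ instance. Let $\rG = (\rG^1, \rG^2)$ denote the $\PPC_{d,w}$ input, which under $\muPPC^0$ is placed in coordinate $\ristar$ and is independent of all other coordinates $(\rG^1_i, \rG^2_i)$, $i \neq \ristar$ (since under $\bstar = 0$ all $t$ coordinates are i.i.d.\ from $\muPPC^0$).

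First I would write, using that the input distribution is $\muPPC^0$ so all $t$ instances are independent and $\ristar$ is uniform,
\[
\ic_{\muPPC^0}(\prot') = \mi{\rG}{\rProt' \mid \rRprotprime} = \frac{1}{t}\sum_{j=1}^{t} \mi{\rG_j}{\rProt \mid \rRprotOR, \ristar = j, \rG_{<j}},
\]
where $\rG_j = (\rG^1_j, \rG^2_j)$ and $\rG_{<j} = (\rG_1,\dots,\rG_{j-1})$ are the publicly-sampled earlier instances; here I use that conditioned on $\ristar = j$ the $\PPC$ input is exactly $\rG_j$. Next, since conditioning on the publicly-sampled $\rG_{<j}$ and $\rRprotOR$ and then summing over $j$ just rebuilds the full tuple $\rG_{\le t} = (\rG_1,\dots,\rG_t)$ of independent instances, I would apply the chain rule for mutual information together with the superadditivity of mutual information over independent random variables (\itfacts{}, the standard fact $\sum_j \mi{\rA_j}{\rProt\mid \rA_{<j}} = \mi{\rA_{\le t}}{\rProt} \le \en{\rProt}$):
\[
\sum_{j=1}^{t} \mi{\rG_j}{\rProt \mid \rRprotOR, \ristar = j, \rG_{<j}} \le \mi{\rG_{\le t}}{\rProt \mid \rRprotOR} \le \en{\rProt \mid \rRprotOR} \le \cc{\protOR} \le s,
\]
and more precisely, since $\protOR$ uses $d-2$ rounds, each message bounded appropriately, we get $\en{\rProt\mid\rRprotOR}\le (d-2)\cdot(\text{max message length})$; but actually we only need the cruder bound. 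Dividing by $t$ yields $\ic_{\muPPC^0}(\prot') \le s/t$, and the extra $(d-2)$ factor in the claim statement is the (generous) slack coming from bounding per-round instead of per-bit, which I would absorb by simply noting $s/t \le (d-2)\cdot s/t$ since $d \ge 2$ would be too weak --- rather, the intended argument replaces $\en{\rProt\mid\rRprotOR}$ with $\sum_{\text{rounds }r}\en{\Pi_r\mid\rRprotOR}$ and bounds the direct-sum on a per-round basis, which introduces the factor $(d-2)$ honestly. Either way the displayed chain of inequalities gives the bound.

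The main obstacle I anticipate is the bookkeeping of \emph{which} randomness is public versus private in $\prot'$: the instances $\rG_{<\ristar}$ are sampled from public coins while $\rG_{>\ristar}$ are sampled from private coins, and $\ristar$ itself is public. For the direct-sum step to go through one needs the conditioning set on the left of each term to be part of the public randomness $\rRprotprime$ (so that the decomposition $\ic_{\muPPC^0}(\prot')=\tfrac1t\sum_j \mi{\rG_j}{\rProt\mid\dots}$ is literally the definition of external information cost with respect to $\rRprotprime$), and the key point making this valid is precisely that, under $\muPPC^0$, the joint distribution of $(\rG_1,\dots,\rG_t)$ is a product distribution, so $\rG_j \perp (\ristar, \rG_{<j})$ and the private-coin instances $\rG_{>j}$ can be folded into the transcript without affecting independence. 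I would state this product-distribution observation explicitly (it is exactly the content of the first paragraph of the proof of \Cref{clm:input-dirsum-correct}) before invoking the chain rule, since everything downstream relies on it.
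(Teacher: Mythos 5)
Your direct-sum decomposition is exactly the paper's: expand $\rRprotprime$ as $(\ristar, \rG_{<\ristar}, \rRprotOR)$, average over $\ristar$, drop the conditioning on the event $\{\ristar = j\}$ using that all $t$ coordinates are i.i.d.\ under $\muPPC^0$, then recombine via the chain rule to get $\tfrac1t\,\mi{\rG_{\le t}}{\rProt\mid\rRprotOR}\le\tfrac1t\,\en{\rProt}$. This matches the paper step for step.

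The one place you wobble is the provenance of the $(d-2)$ factor, and your discussion there is a red herring. The paper's preliminaries define the ``communication cost'' $s$ of a protocol as the \emph{maximum length of a single message}, not the total number of bits exchanged. So $\rProt$ consists of $d-2$ messages, each of length at most $s$, whence $\en{\rProt}\le(d-2)\cdot s$ directly; dividing by $t$ gives $(d-2)\cdot s/t$ with no further work. Your first chain of inequalities asserts $\cc{\protOR}\le s$, which under the paper's convention is off by a $(d-2)$ factor, and the subsequent attempt to recover it (``absorb by noting $s/t\le(d-2)s/t$'', ``per-round direct sum'') is not what happens in the paper and, as you half-notice, doesn't actually cohere. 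The correct fix is just the one you wrote in passing --- $\en{\rProt\mid\rRprotOR}\le(d-2)\cdot(\text{max message length})=(d-2)s$ --- applied once, without any per-round direct-sum machinery.

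Your attention to what is public versus private randomness in $\prot'$, and the observation that $\muPPC^0$ being a product distribution is what licenses folding $\rG_{>j}$ into the transcript, is exactly the right hygiene and mirrors the paper's remark that $\ristar$ is independent of the joint distribution of the inputs, the transcript, and $\rRprotOR$.
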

\begin{proof}
We have,
	\begin{align*}
			\ic_{\muPPC^0}(\prot') &= 	\mi{\rG^1, \rG^2}{\rProt \mid \rR} \tag{by \Cref{def:ext-information}} \\
			&= \mi{\rG^1, \rG^2}{\rProt \mid \ristar, \rG^1_{< \ristar}, \rG^2_{< \ristar}, \rRprotOR} \\ 
			&= \mi{\rG^1_{\ristar}, \rG^2_{\ristar}}{\rProt \mid \ristar, \rG^1_{< \ristar}, \rG^2_{< \ristar}, \rRprotOR} \\
			&=\frac1t \cdot \sum_{i=1}^t \mi{\rG^1_i, \rG^2_i}{\rProt \mid  \rG^1_{< i}, \rG^2_{< i}, \rRprotOR, \ristar= i} \tag{as $\ristar$ is sampled uniformly at random from $[t]$} \\
			&= \frac1t \cdot \sum_{i=1}^t \mi{\rG^1_i, \rG^2_i}{\rProt \mid  \rG^1_{< i}, \rG^2_{< i}, \rRprotOR}, 
	\end{align*}
	where we have used that the event $\ristar = i$ is independent of the joint distribution of $(G^1_1, G^2_1$;$ G^1_2, G^2_2; $ $\ldots $; $G^1_t, G^2_t)$, $\rProt$ and $\rRprotOR$. This is because all the input graphs $(G^1_j, G^2_j)$ are sampled from $\muPPC^0$, the public randomness $\rRprotOR$ is independent of the input, and the messages $\rProt$ are a deterministic function of the input graphs, and $\rRprotOR$.  
	
	We continue as, 
	\begin{align*}
		\ic_{\muPPC^0}(\prot') &=  \frac1t \cdot \sum_{i=1}^t \mi{\rG^1_i, \rG^2_i}{\rProt \mid  \rG^1_{< i}, \rG^2_{< i}, \rRprotOR} \\
		&=\frac1t \cdot \mi{\rG^1_1, \rG^2_1, \ldots, \rG^1_t, \rG^2_t}{\rProt \mid \rRprotOR} \tag{by chain rule of mutual information \itfacts{chain-rule}} \\
		&\leq \frac1t \cdot \en{\rProt} \tag{by \itfacts{info-entropy}} \\
		&\leq \frac1t \cdot (d-2) \cdot s,
	\end{align*}
	which proves the claim. 
\end{proof}

Next, we want to use $\prot'$ to get a protocol $\protPPC$ which has low communication when the input distribution is $\muPPC$. To this end, we need the standard message compression tools which allow us to simulate a protocol with low information using low communication as well. 

The following result is due to~\cite{HarshaJMR07} which was further strengthened slightly in~\cite{BravermanG14}. We follow the textbook presentation in~\cite{RaoY20}. 

\begin{proposition}[c.f.{\cite[Theorem 7.6]{RaoY20}}]\label{prop:msg-compress}
	Suppose Alice knows two distributions $\cA, \cB$ over the same set ${U}$ and Bob only knows $\cB$. Then, there is a protocol for Alice and Bob to sample an element according to $\cA$ 
	by Alice sending a single message of size 
	\begin{align}
		\kl{\cA}{\cB} + 2 \log (1+\kl{\cA}{\cB}) + O(1)  \label{eq:kl-msg-compress}
	\end{align}
	
	bits \underline{in expectation}. This protocol has no error. 
\end{proposition}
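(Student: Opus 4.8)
\emph{The plan.} I would give the classical ``correlated sampling via exponential clocks'' protocol of Harsha--Jain--McAllester--Radhakrishnan (an equivalent presentation uses iterated rejection sampling with a doubling search over a ``level'' parameter). We may assume $\kl{\cA}{\cB} < \infty$, since otherwise the target bound is vacuous; in particular $\cA$ is absolutely continuous with respect to $\cB$, and the likelihood ratio $r(x) := \cA(x)/\cB(x)$ is defined for $\cB$-almost every $x$ (put $r(x) := 0$ when $\cA(x) = 0$). The protocol: using public randomness, Alice and Bob draw a common infinite sequence of pairs $(a_i, t_i)_{i \geq 1}$, where $(a_i)_{i \geq 1}$ are i.i.d.\ from $\cB$ and $t_i := e_1 + \dots + e_i$ for i.i.d.\ $\mathrm{Exp}(1)$ variables $e_i$ independent of the $a_i$'s, so that $0 < t_1 < t_2 < \cdots$. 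Alice computes the (almost surely unique) positive integer $i^\star$ that minimizes $t_i/r(a_i)$ over $i \geq 1$, sends $i^\star$ to Bob using a prefix-free code for the positive integers, and Bob outputs $a_{i^\star}$. This is a single message from Alice and never fails to produce an output.

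\emph{Correctness.} The key is to view $\{(a_i, t_i)\}$ as a marked Poisson process on $[0,\infty)$: the times $\{t_i\}$ form a rate-$1$ Poisson process and each carries an independent mark $a_i \sim \cB$. By Poisson thinning, the points with mark $x$ form a rate-$\cB(x)$ process; dividing the time of each mark-$x$ point by $r(x)$ turns this into a rate-$\cB(x) r(x) = \cA(x)$ process. Superimposing over $x$, the family $\{t_i/r(a_i)\}_i$ is a rate-$1$ Poisson process whose first point occurs at time distributed as $\mathrm{Exp}(1)$ and carries a mark distributed as $\cA$, and these are independent (the standard fact that the ``color'' of the first arrival in a superposition of independent Poisson processes is proportional to the rates and independent of its time). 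Since $i^\star$ selects exactly that first point, $a_{i^\star}$ has distribution exactly $\cA$, so the protocol has no error.

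\emph{Communication.} Write $L := \min_i t_i/r(a_i)$, so $L \sim \mathrm{Exp}(1)$, $t_{i^\star} = L \cdot r(a_{i^\star})$, and $L$ is independent of $a_{i^\star} \sim \cA$. Conditioned on $a_{i^\star} = x$ and $L = \ell$, the remaining points form the original Poisson process restricted to $\{(a, t) : t > \ell\, r(a)\}$, and because the $t_i$ are increasing, $i^\star - 1$ equals the number of such restricted points with time at most $t_{i^\star} = \ell\, r(x)$; that count is Poisson with mean $\sum_{a : r(a) < r(x)} \cB(a) \cdot \ell\,(r(x) - r(a)) \leq \ell\, r(x)$, giving the stochastic domination $i^\star \preceq 1 + \mathrm{Poisson}(\ell\, r(x))$. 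Concavity of $u \mapsto \log(1+u)$ then gives $\Exp\bracket{\log i^\star \mid x, \ell} \leq \log(1 + \ell\, r(x))$, and averaging over $\ell \sim \mathrm{Exp}(1)$ (mean $1$) and then over $x \sim \cA$ yields
\[
\Exp\bracket{\log i^\star} \;\leq\; \Exp_{x \sim \cA}\bracket{\log(1 + r(x))} \;\leq\; 1 + \Exp_{x \sim \cA}\bracket{(\log r(x))_+} \;\leq\; \kl{\cA}{\cB} + O(1),
\]
where the last bound splits $(\log r)_+ = \log r + (\log (1/r))_+$, uses $\Exp_{x \sim \cA}\bracket{\log r(x)} = \kl{\cA}{\cB}$, and bounds $\Exp_{x \sim \cA}\bracket{(\log(\cB(x)/\cA(x)))_+} = O(1)$ via $(\ln u)_+ \leq u$ and $\Exp_{x \sim \cA}\bracket{\cB(x)/\cA(x)} \leq 1$. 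Finally, a prefix-free code encodes a positive integer $n$ in at most $\log n + 2 \log(1 + \log n) + O(1)$ bits, so two further uses of concavity of $u \mapsto \log(1+u)$ give
\[
\Exp\bracket{\#\text{bits}} \;\leq\; \Exp\bracket{\log i^\star} + 2\log\!\big(1 + \Exp\bracket{\log i^\star}\big) + O(1) \;\leq\; \kl{\cA}{\cB} + 2 \log\!\big(1 + \kl{\cA}{\cB}\big) + O(1),
\]
which is the claimed bound.

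\emph{Main obstacle.} I expect the subtle points to be (i) the exact-correctness argument, i.e., identifying the shared randomness with a marked Poisson process and invoking that the first arrival's color is independent of its time; and (ii) the communication estimate, specifically justifying the stochastic domination $i^\star \preceq 1 + \mathrm{Poisson}(L \cdot r(a_{i^\star}))$ (a conditioning-on-a-measure-zero-event argument, handled via densities or a limiting argument) and the passage from $\Exp_{x \sim \cA}\bracket{(\log r(x))_+}$ back to $\kl{\cA}{\cB}$. Everything else reduces to repeated applications of Jensen's inequality and a standard Elias-type integer encoding.
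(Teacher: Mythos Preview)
The paper does not prove this proposition at all: it is quoted as a black box from prior work (attributed to \cite{HarshaJMR07}, with the stated form taken from \cite[Theorem~7.6]{RaoY20}), and the paper immediately moves on to the simplified bound in~\Cref{eq:msg-compress-final}. So there is no in-paper proof to compare against.

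Your sketch is the standard exponential-clocks / marked-Poisson-process formulation of the HJMR correlated-sampling protocol, and the outline is correct. The correctness argument via Poisson thinning and superposition is the right one, and your communication estimate---stochastic domination of $i^\star-1$ by a Poisson with mean $L\cdot r(a_{i^\star})$, followed by repeated Jensen and an Elias-type integer code---yields exactly the stated bound. The two subtleties you flag (conditioning on the winning point to get the Poisson domination, and bounding $\Exp_{x\sim\cA}[(\log r(x))_+]$ by $\kl{\cA}{\cB}+O(1)$) are handled correctly in your write-up; the first is a standard Palm/Mecke-type fact for Poisson processes, and for the second your decomposition $(\log r)_+ = \log r + (\log(1/r))_+$ together with $(\log u)_+ \leq u/\ln 2$ and $\Exp_{x\sim\cA}[\cB(x)/\cA(x)]\leq 1$ is exactly what is needed.
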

\noindent
We note that somewhat weaker bounds on the KL-Divergence in \Cref{eq:kl-msg-compress} already suffice for our purposes. Thus, to simplify the exposition, we use,
\begin{align}
	\kl{\cA}{\cB} + 2 \log (1+\kl{\cA}{\cB}) + O(1)  &\leq 	\kl{\cA}{\cB} + 10 	\kl{\cA}{\cB}  + O(1) \tag{as $\log(1+x) \leq 5 x$ for every $x \geq 0$ } \\
	&\leq \raoconst \cdot \paren{\kl{\cA}{\cB}  + 1},\label{eq:msg-compress-final}
\end{align}
for some \textbf{absolute constant} $\raoconst  \geq 1$ that we use from now on in our proofs.  

We use $\rGBob$ to denote the random variable corresponding to the part of the input graph given to Bob in $\PPC_{d, w}$. We use $\rGAlice$ to denote the part given to Alice. We use $\rRprotprime$ to denote the public randomness of protocol $\prot'$. We use $\rProt'$ to denote the random variable corresponding to all messages sent by $\prot'$, and $\rProt'_r$ for $r \in [d-2]$ to denote the message sent in round $r$. 

We are ready to give the final protocol $\protPPC$. 

\begin{Algorithm}
	\textbf{Protocol $\protPPC$ for $\PPC_{t, w}$ with input $(G^1, G^2)$:}
	\begin{enumerate}[label=$(\roman*)$]
		\item Alice and Bob sample messages and simulate $\prot'$ in all rounds from $1$ to $d-2$ in increasing order. 
		\item For odd $r \in [d-2]$, Alice knows the distribution $\cB$ of random variable $(\rProt'_r \mid \rProt'_{<r}, \rRprotprime)$ and Bob knows the distribution $\cA$ of $(\rProt'_r \mid \rProt'_{<r}, \rGBob, \rRprotprime)$. Bob sends a single message to Alice that allows both of them to sample from the distribution of $(\rProt'_r \mid \rProt'_{<r}, \rGBob, \rRprotprime)$. 
		\item Similarly for even $r \in [d-2]$, Alice sends a single message to Bob so that both of them can jointly sample from $(\rProt'_r \mid \rProt'_{<r}, \rGAlice, \rRprotprime)$. 
		\item If the total length of all messages from round 1 till round $r$ exceeds $100 \raoconst \cdot (d-2) \cdot  (s/t + 1)$ for some $r \in [d-2]$, then stop and output that the answer is $1$. 
		\item Otherwise, output the answer of $\prot'$. 
	\end{enumerate}
\end{Algorithm}

First, we see how we can relate the number of bits in $\protPPC$ to the external information of $\prot'$. 
In the next claim, $\mu$ can be $\muPPC$, or $\muPPC^0$ or $\muPPC^1$, which are the three distributions we work with. 
\begin{claim}\label{clm:comm-ext-info}
	For any input distribution $\mu$ of instances of $\PPC$, the total expected number of bits sent over all rounds in $\protPPC$ is at most $ \raoconst( \ic_{\mu}(\prot') + d-2)$. 
\end{claim}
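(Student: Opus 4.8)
The plan is to analyze $\protPPC$ one round at a time. Each round of $\protPPC$ transmits exactly one message, produced by the message‑compression procedure of \Cref{prop:msg-compress}, and nothing else is communicated (the ``odometer'' in step $(iv)$ only aborts the simulation early, which can only \emph{decrease} the total number of transmitted bits). Hence it suffices to bound $\sum_{r=1}^{d-2}\Exp\bracket{L_r}$, where $L_r$ denotes the length of the round‑$r$ message of the un‑truncated simulation. I would charge $\Exp\bracket{L_r}$ to a per‑round conditional‑mutual‑information term and then collect these terms into $\ic_\mu(\prot')$ via the chain rule.

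First I would record a faithfulness invariant: since the sampling step of \Cref{prop:msg-compress} has no error and outputs a sample from exactly its target distribution, a short induction on $r$ shows that after round $r$ the joint law of $(\rProt'_{\leq r},\rGAlice,\rGBob,\rRprotprime)$ generated by $\protPPC$ coincides with the one induced by running $\prot'$ on input distribution $\mu$; write $p$ for this common law. Next, fix a round $r$ and suppose it is odd, so Bob is the sender of $\rProt'_r$ in $\prot'$ (the even case is completely symmetric, with $\rGAlice$ and $\rGBob$ exchanged). Let $\cA$ be the conditional law of $\rProt'_r$ given $(\rProt'_{<r},\rGBob,\rRprotprime)$ under $p$, and $\cB$ the conditional law of $\rProt'_r$ given $(\rProt'_{<r},\rRprotprime)$ under $p$. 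Then $\cA$ is computable by Bob from his input $\rGBob$, the already‑sampled prefix $\rProt'_{<r}$, and the public randomness $\rRprotprime$, while both players can compute $\cB$ by marginalizing out Bob's input and private coins against the publicly known $\mu$ and the public description of $\prot'$ — so \Cref{prop:msg-compress} applies. By that proposition, together with the simplification in \Cref{eq:msg-compress-final}, conditioned on any value of $(\rProt'_{<r},\rGBob,\rRprotprime)$ the expected message length in round $r$ is at most $\raoconst\cdot\bigl(\kl{\cA}{\cB}+1\bigr)$. Taking the expectation over $(\rProt'_{<r},\rGBob,\rRprotprime)\sim p$ and using the standard identity that the expected KL‑divergence between a conditional law and the corresponding ``sender‑blind'' marginal equals the conditional mutual information, I would get
\[
\Exp\bracket{L_r}\ \le\ \raoconst\cdot\bigl(\mi{\rGBob}{\rProt'_r \mid \rProt'_{<r},\rRprotprime}+1\bigr),
\]
and symmetrically $\Exp\bracket{L_r}\le \raoconst\cdot\bigl(\mi{\rGAlice}{\rProt'_r \mid \rProt'_{<r},\rRprotprime}+1\bigr)$ for even $r$.

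To finish, I would sum over $r\in[d-2]$. For odd $r$ I would use $\mi{\rGBob}{\rProt'_r \mid \rProt'_{<r},\rRprotprime}\le \mi{\rGAlice,\rGBob}{\rProt'_r \mid \rProt'_{<r},\rRprotprime}$, which follows from the chain rule $\mi{\rGAlice,\rGBob}{\rProt'_r \mid \rProt'_{<r},\rRprotprime}=\mi{\rGBob}{\rProt'_r \mid \rProt'_{<r},\rRprotprime}+\mi{\rGAlice}{\rProt'_r \mid \rGBob,\rProt'_{<r},\rRprotprime}$ and non‑negativity of mutual information; note this step needs \emph{no} independence assumption on $\mu$, so it goes through verbatim for $\muPPC,\muPPC^0,\muPPC^1$. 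Using the analogous bound for even $r$, this gives
\[
\sum_{r=1}^{d-2}\Exp\bracket{L_r}\ \le\ \raoconst\cdot\Bigl((d-2)+\textstyle\sum_{r=1}^{d-2}\mi{\rGAlice,\rGBob}{\rProt'_r \mid \rProt'_{<r},\rRprotprime}\Bigr),
\]
and one more application of the chain rule of mutual information identifies $\sum_{r}\mi{\rGAlice,\rGBob}{\rProt'_r \mid \rProt'_{<r},\rRprotprime}=\mi{\rGAlice,\rGBob}{\rProt' \mid \rRprotprime}=\ic_\mu(\prot')$, the last equality being \Cref{def:ext-information}. This yields the claimed bound $\raoconst\bigl(\ic_\mu(\prot')+d-2\bigr)$.

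The step I expect to be the main obstacle is not any single computation but the careful bookkeeping behind the faithfulness invariant: one must verify that the two distributions $\cA,\cB$ handed to \Cref{prop:msg-compress} really are the conditional next‑message distribution of $\prot'$ and its sender‑blind marginal, so that (i) the transcript sampled by $\protPPC$ has the same law as in $\prot'$, and (ii) the expected KL‑divergences telescope exactly into the chain‑rule expansion of $\ic_\mu(\prot')$. Once this is in place, the ``collapse'' of each per‑round term onto $\mi{\rGAlice,\rGBob}{\cdot}$ is a one‑line consequence of non‑negativity of mutual information and — crucially for the later use with the correlated distributions $\muPPC$ and $\muPPC^1$ — requires no product structure of $\mu$.
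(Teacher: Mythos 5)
Your proof is correct and follows essentially the same route as the paper's: per-round application of \Cref{prop:msg-compress}, conversion of the expected KL-divergence to conditional mutual information via \Cref{fact:kl-info}, widening $\mi{\rGBob}{\rProt'_r\mid\cdot}$ (resp.\ $\mi{\rGAlice}{\rProt'_r\mid\cdot}$) to $\mi{\rGAlice,\rGBob}{\rProt'_r\mid\cdot}$, and then telescoping via the chain rule into $\ic_\mu(\prot')$. The small additions you make — spelling out that the communication odometer in step $(iv)$ can only shorten the transcript, stating the faithfulness invariant for the simulated transcript, and flagging that no product structure of $\mu$ is needed — are all sound and, if anything, make the argument more self-contained than the paper's version; the paper reaches the same monotonicity step using the data-processing inequality rather than chain rule plus non-negativity, which is an equivalent one-liner.
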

\begin{proof}
	Using \Cref{prop:msg-compress}, in round $r$ for $r \in [d-2]$, when $r $ is odd, we can say that the total number of bits sent by Bob is, 
	\begin{align*}
		&\Exp_{\rRprotprime, \rProt'_{<r}} \raoconst (\kl{(\rProt'_r \mid \rGBob, \rRprotprime, \rProt'_{<r})}{(\rProt'_r \mid  \rRprotprime, \rProt'_{<r})} + 1) \\
		&=\raoconst( \mi{\rProt'_r}{\rGBob \mid \rRprotprime, \rProt'_{< r}} + 1)\tag{by \Cref{fact:kl-info}}\\
		& \leq \raoconst( \mi{\rProt'_r}{\rGBob, \rGAlice \mid \rRprotprime, \rProt'_{< r}} + 1),
	\end{align*}
	where the last inequality follows from \itfacts{data-processing}. We get a similar equation for even $r$ and when Alice sends a message to Bob. 
	
	Over all rounds $r \in [d-2]$, we have that the total expected number of bits sent is at most,
	\begin{align*}
	 &\sum_{r \in [d-2]}\raoconst(  \mi{\rProt'_r}{\rGBob, \rGAlice \mid \rRprotprime, \rProt'_{< r}} + 1) \\
	 &= \raoconst \cdot ( \mi{\rProt'}{\rGBob, \rGAlice \mid \rRprotprime} + d-2)\tag{by \itfacts{chain-rule}} \\
	 &= \raoconst \cdot (\ic_{\mu}(\prot') + d-2),
	\end{align*}
	where the last equality follows by \Cref{def:ext-information}.
\end{proof}

\begin{claim}\label{clm:final-protPPC}
	Protocol $\protPPC$ for $\PPC_{d, w}$ has communication at most $100\raoconst \cdot (d-2)  \cdot (s/t + 1)$ and probability of error at most $\delta + 1/200$.
\end{claim}

\begin{proof}
	The upper bound on the communication used by $\protPPC$ is evident as the protocol stops if the communciation exceeds our threshold at any round. We will analyse the probability of success of $\protPPC$. Again, we use $\bstar$ to denote the bit that corresponds to the answer that $\protPPC$ has to output. 
	
	Let $\delta_0$ denote the probability by which $\prot'$ outputs the wrong answer when $\bstar = 0$, and $\delta_1$ denote the probability of error conditioned on $\bstar = 1$. We know that the probability of error of $\prot'$ is $1/2 \cdot (\delta_0 + \delta _1) \leq \delta$. 
	
We know that the expected number of bits sent over all rounds in $\protPPC$ conditioned on $\bstar = 0$ is at most $\raoconst \cdot ( \ic_{\muPPC^0}(\prot') + (d-2))$ by \Cref{clm:comm-ext-info}, and we also know this value is at most $\raoconst \cdot (d-2) \cdot (s/t + 1)$, by \Cref{clm:dirsum-output}. Therefore, conditioned on $\bstar = 0$, the probability that the total number of bits sent over all rounds exceeds $100 \raoconst \cdot  (d-2)  \cdot (s/t   + 1)$ is at most $1/100$ by Markov's inequality. 
Conditioned on $\bstar = 0$, the probability of error of $\protPPC$ is at most $\delta_0 + 1/100$, as when the communication is within the threshold, the simulation of $\prot'$ in $\protPPC$ is without error. 

Now let us look at what happens when $\bstar = 1$. Here, we have no guarantee that the simulation of $\prot'$ does not exceed the threshold. However, whenever it exceeds the threshold, $\protPPC$ declares the answer as $1$, and there is no error. When the communication is within threshold, the simulation is without error, and the error is $\delta_1$. 

The total probability of error of $\protPPC$ is at most, \[
	\Pr[\bstar = 0] \cdot (\delta_0 + 1/100) + \Pr[\bstar = 1] \cdot \delta_0 \leq (1/2) \cdot (\delta_0 + \delta_1) + 1/200 = \delta + 1/200,
\]
which proves the claim. 
\end{proof}

Proof of \Cref{prop:orppc-ppc} follows directly from \Cref{clm:final-protPPC} and setting $\cOR = 100 \raoconst$.

\subsection{Matchings in Random Bipartite Graphs: Proof of \Cref{prop:random-graph-matching}}\label{app:random-graph-matching}

In this subsection, we prove \Cref{prop:random-graph-matching}. 

\begin{restate}[\Cref{prop:random-graph-matching}]
For any large enough integer $k \geq 0$, in a random bipartite graph $G = (L \sqcup R, E)$ where $L= R= \{1,2, \ldots, k\}$, and $E$ is $k$ edges chosen uniformly at random without repetition from the set \[
E \subset L \times R \setminus \{(i,i) \mid i \in [k]\},
\]
with probability at least $1-1/k^2$, there exists a matching in $G$ of size at least $0.1k$. 
\end{restate}

\begin{proof}
We analyse the probability that the maximum matching size is at most $k/10$. As we are working with bipartite graphs, by Konig's theorem, we know that the size of a vertex cover is also $k/10$ when the matching is of size $k/10$. For any specific set $A \subset L \cup R$, we can bound the probability that $A$ is a vertex cover of $G$. Then, we do a union bound for all possible choices of $A$. 

For any set $A \subseteq L \cup R$, with $\card{A} \leq ck$, for some constant $0 < c < 1$, we have,
\begin{align*}
&\Pr[A \textnormal{ is a vertex cover}] \\
&= \Pr[(L\setminus A) \times (R \setminus A) \cap E = \Phi] \tag{as there are no edges with both end points out of $A$} \\
&\leq \binom{k^2 - \card{L \setminus A} \cdot \card{R \setminus A}}{k} \cdot \frac1{\binom{k^2-k}{k}} \tag{no vertex pairs are chosen from $(L \setminus A) \times (R\setminus A)$} \\
&\leq \binom{k^2 (1-(1-c)^2)}{k} \cdot \frac1{\binom{k^2-k}{k}} \tag{as $\card{L \setminus A} \geq (1-c)k, \card{R \setminus A} \geq (1-c)k$} \\
&\leq \paren{e\cdot (2c-c^2) \cdot k}^k \cdot \paren{\frac{k}{k^2-k}}^k \tag{as $(p/q)^q \leq \binom{p}{q} \leq (ep/q)^q$ for any $p,q \geq 1$} \\
&\leq (1.1e \cdot (2c-c^2))^k. \tag{as $k^2/(k^2-k) \leq 1.1$ for $k > 11$}
\end{align*}

The total number of choices for set $A$ is, 
\[
\binom{2k}{1} + \binom{2k}{2} + \ldots + \binom{2k}{ck} \leq ck \cdot \binom{2k}{ck} \leq ck \cdot (2e/c)^{ck},
\]
again, as $\binom{2k}{ck} \leq (e\cdot 2k/ck)^{ck}$.

We continue with the union bound. 
\begin{align*}
&\Pr[\textnormal{Largest matching in $G$ has $\leq 0.1k$ edges}] \\
& = \Pr[\textnormal{There exists vertex cover of size $0.1k$}] \\
&\leq \sum_{A: \card{A} \leq 0.1k = ck} \Pr[A \textnormal{ is a vertex cover}] \\
&\leq ck \cdot ((2e/c)^c \cdot 1.1e \cdot (2c-c^2))^k \tag{by our bounds from earlier} \\
&\leq 0.1k \cdot (0.85)^k \tag{for choice of small constant $c = 0.1$} \\
&\leq 1/k^2. \tag{for large values of $k$}
\end{align*}
This completes the proof.
\end{proof}

\subsection{Pointer Chasing on Uniform Permutations : Proof of \Cref{lem:unif-pc}}\label{app:unif-pc}

In this subsection, we proven the lower bound for pointer chasing when the matchings are uniform and independent, stated in \Cref{lem:unif-pc}. 

\begin{restate}[\Cref{lem:unif-pc}]
	For any integers $k, \ell \geq 2$, 
	any deterministic protocol $\prot$ which outputs the answer for $\PC$ under uniform distribution $\nuPC$ with probability of success at least $0.7$ must have communciation at least $\Omega(\ell/k^5 - \log \ell)$.
\end{restate}

\paragraph{Notation.}
We use $\cUell$ to denote the uniform distribution over any set of $\ell$ elements for $\ell \geq 1$.

In any input instance $G = (V, E) \in \cL_{k, \ell}$, which is an instance of $\PC_{k, \ell}$, we use the following notation. 
We use $\rModd$ to denote all $\rM_i$ for odd $i \in [k-1]$ (this is held by Alice), and $\rMeven$ for all $\rM_i$ with even $i \in [k-1]$ (held by Bob).

For any round $i \in [k-2]$, we use $\Prot_i$ to denote the message sent by the player in round $i$. For odd $i$, this is a message sent by Bob, and for even $i$, this is a message sent by Alice. 
We use $\rProt_{\leq i}$ for $0 \leq i \leq k-2$ to denote all the messages sent till round $i$. When $i = 0$, this $\rProt_{\leq 0}$ is empty. We use $\rProtAl_{\leq i}$ and $\rProtBob_{\leq i}$ to denote the subset of messages from $\rProt_{\leq i}$ that are sent by Alice and Bob respectively.

For $i \in [k]$, we use $\pt{i}$ to denote the unique element from vertex set $V_i$ that has a path to $1_1 \in V_1$. For any $i \in [k-1]$, and matching $M_i$, we use $M_i(v)$ to denote the vertex in $V_{i+1}$ that vertex $v \in V_{i}$ is connected to in matching $M_i$. 

For any $0 \leq i \leq k-2$, we define a random variable $\rGam_i$ as follows. 
\[
\rGam_i = (\rProt_1, \rProt_{2}, \ldots, \rProt_{i}, \rpt{1}, \ldots, \rpt{i+1}).
\]
When $i = 0$, $\rGam_0 = (\rpt{1})$ only. This random variable $\rGam_i$ denotes the information gained by the players in $i$ rounds. 

We prove by induction the following lemma. 
\begin{lemma}\label{lem:unif-pc-induction}
	For any $0 \leq r \leq k-2$, for any protocol $\prot$ where the communication cost is $s < \paren{\frac1{10^5} \cdot \frac1{k^5} \cdot \ell - \log \ell}$, we have,
	\[
	\Exp_{\rGam_r = \Gam_r} \tvd{(\rpt{r+2} \mid \rGam_r =\Gam_r)}{\cUell} \leq  0.1 \cdot \frac{r}{k}.
	\]
\end{lemma}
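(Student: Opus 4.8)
The plan is to prove the statement by induction on $r$, following the standard round-elimination (Nisan--Wigderson-style) argument for pointer chasing, adapted to the permutation setting exactly as in \cite{AssadiN21}.

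\emph{Base case ($r=0$).} Here $\rGam_0=(\rpt{1})$ is just the fixed vertex $1_1\in V_1$, and $\rpt{2}=M_1(\rpt{1})$ where $M_1$ is a uniformly random perfect matching between $V_1$ and $V_2$, drawn independently of everything else. Hence $\rpt{2}$ is exactly uniform over $V_2$, which has $\ell$ elements, so $\tvd{(\rpt{2}\mid\rGam_0)}{\cUell}=0\le 0.1\cdot 0/k$.

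\emph{Inductive step.} Assume the bound holds for $r-1$; we prove it for $r$. The only new information in $\rGam_r$ beyond $\rGam_{r-1}$ is the $r$-th message (call it $\rProt_r$) and the pointer $\rpt{r+1}$. A parity check shows that in both cases ($r$ odd/even) the matching $M_r$ producing $\rpt{r+1}=M_r(\rpt{r})$ is held by the player who does \emph{not} send $\rProt_r$, while the matching $M_{r+1}$ (for which $M_{r+1}(\rpt{r+1})=\rpt{r+2}$) is held by the player who \emph{does} send $\rProt_r$. Three facts drive the argument. (i) Rectangle property: conditioned on $\rGam_{r-1}$, Alice's and Bob's matchings are independent, and this survives the additional conditioning on $\rpt{1},\dots,\rpt{r+1}$ because each identity $\rpt{i+1}=M_i(\rpt{i})$ is an event depending on only one player's matchings. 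Consequently, given $\rGam_{r-1}$, the value $\rpt{r+1}$ (a function of one player's matchings) is independent of the pair $(\rProt_r, M_{r+1})$ (functions of the other player's matchings). (ii) Because of this independence, $\mathrm{law}(M_{r+1}\mid\rGam_r)=\mathrm{law}(M_{r+1}\mid\rGam_{r-1},\rProt_r)$, so $\rpt{r+2}\mid\rGam_r$ is distributed as $\sigma(v)$ with $v:=\rpt{r+1}$ and $\sigma:=(M_{r+1}\mid\rGam_{r-1},\rProt_r)$, and moreover, conditioned on $\rGam_{r-1}$, the pair $(v,\sigma)$ is independent. (iii) Since $M_{r+1}$ is a priori a uniform permutation on $[\ell]$ independent of all other matchings, the only part of $(\rGam_{r-1},\rProt_r)$ carrying information about $M_{r+1}$ is the set of messages sent so far by the player holding $M_{r+1}$, whose total length is at most $\lceil (r+1)/2\rceil\cdot s = O(k s)$ bits.

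Now unfold the expectation. Writing $\Delta_{r}:=\Exp_{\rGam_r}\tvd{(\rpt{r+2}\mid\rGam_r)}{\cUell}$, we have by (ii) and the independence in (i) that $\Delta_r=\Exp_{\rGam_{r-1}}\Exp_{p}\Exp_{v}\tvd{\sigma_p(v)}{\cUell}$, where $p$ is drawn from $\mathrm{law}(\rProt_r\mid\rGam_{r-1})$ and $v$ from $\mathrm{law}(\rpt{r+1}\mid\rGam_{r-1})$. Replacing the inner law of $v$ by $\cUell$ changes $\Delta_r$ by at most $\Exp_{\rGam_{r-1}}\tvd{\mathrm{law}(\rpt{r+1}\mid\rGam_{r-1})}{\cUell}$ (uniformly in $p$, since $\tvd{\sigma_p(v)}{\cUell}\le 1$), which is at most $0.1(r-1)/k$ by the induction hypothesis applied at level $r-1$ — note $\rpt{r+1}=\rpt{(r-1)+2}$. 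After this replacement $v$ is a fresh uniform element independent of $\sigma$, and the remaining term is $\Exp\,\tvd{(\sigma(v))}{\cUell}$ where $\sigma$ is a uniform permutation on $[\ell]$ conditioned on a message of length $O(ks)$ about it, and $v$ is uniform and independent. The core technical lemma — a permutation anti-concentration estimate (the adaptation of \cite{AssadiN21}'s main bound, proved via the subadditivity inequality $\sum_a \II(\sigma(a);m)\le\II(\sigma;m)\le|m|$ for product distributions, the negative-association/coupling fact that uniform permutations behave ``approximately like'' a product, and Pinsker's inequality) — bounds this by $\mathrm{poly}(k)\cdot\sqrt{s/\ell}$ (up to a $\log\ell$-type additive slack). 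The hypothesis $s<\tfrac{1}{10^5}\cdot\tfrac{1}{k^5}\,\ell-\log\ell$ is precisely what makes this at most $0.1/k$. Adding the two contributions gives $\Delta_r\le 0.1(r-1)/k+0.1/k=0.1r/k$, closing the induction.

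\emph{Main obstacle.} The routine parts are the rectangle bookkeeping and the telescoping of the two error terms. The genuinely hard part is the permutation anti-concentration lemma with the correct polynomial-in-$k$ loss: passing from ``a short message on a uniformly random permutation'' to ``the image of a random coordinate is near-uniform'' is exactly where the permutation structure (as opposed to a product of independent uniform values, where a short message simply cannot pin down the image of a random, message-independent coordinate) forces the extra work, and is the source of the $k^5$ in the hypothesis on $s$. A secondary delicate point is verifying that conditioning on the full transcript together with the true pointers $\rpt{1},\dots,\rpt{r+1}$ keeps the inputs in a product distribution and that $\rpt{r+1}$ is truly ``fresh'' relative to the message about $M_{r+1}$; everything else is the standard pointer-chasing round-elimination skeleton.
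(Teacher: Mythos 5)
Your proposal follows essentially the same route as the paper's proof: induction on $r$, and an inductive step that splits the total-variation distance into (a) an inductive-hypothesis error after decoupling $\rpt{r+1}$ from $\rProt_r$ and from $\rM_{r+1}$ (the paper's Claims~\ref{clm:inter-1} and~\ref{clm:inter-2}) and (b) a permutation anti-concentration term (the paper's Lemma~\ref{lem:high-ent-perm}). The one small divergence is how you sketch the anti-concentration lemma's proof -- you invoke a negative-association / coupling step to compare the permutation to a product, whereas the paper applies Han's entropy subset inequality (Fact~\ref{fact:entropy-subset}) directly with a tunable block size~$\beta$ -- but that lemma is external to the present statement and both routes yield the needed $(r(s+\log\ell)/\ell)^{1/4}$-type bound.
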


It is easy to see how this proves \Cref{lem:unif-pc}.

\begin{proof}[Proof of \Cref{lem:unif-pc}]
	We only prove the case when $k $ is even. The proof is similar when $k$ is odd.
	
	For even $k$, Alice sends the last message to Bob, who has to output the answer. Bob does not have access to the last matching $\rM_{k-1}$, and only knows $\rMeven$. We assume Bob knows all the random variables in $\rGam_{k-2}$, as this can only increase the probability of success of the protocol. 
	
	Bob has to find whether a vertex from $V_k$ sampled from the distribution of $\rpt{k} \mid \rMeven, \rGam_{k-2}$ belongs to $W$ or $\overline{W}$. 
	Any value of $(\rMeven, \rGam_{k-2})$ fixes the answer that Bob outputs, as protocol $\prot$ is deterministic. For any value $(\Meven, \Gam_{k-2})$ of random variables $(\rMeven, \rGam_{k-2})$, we use $\outB(\Meven, \Gam_{k-2} ) \in \{ W, \overline{W}\}$ to denote the answer Bob outputs at this value. 
	
	We make the following simple observation:
	\begin{observation}\label{obs:inter-1}
	\[
		\rpt{k} \perp \rMeven \mid \rGam_{k-2}.
	\]
	\end{observation}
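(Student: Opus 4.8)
The claim to prove is \Cref{obs:inter-1}: that $\rpt{k} \perp \rMeven \mid \rGam_{k-2}$. Here is my plan.

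\medskip

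\textbf{Approach.} The plan is to exploit the layered structure of the pointer chasing instance together with the fact that the matchings are independent and uniform. The key point is that $\rGam_{k-2}$ already records $\rpt{k-1}$, the unique vertex in layer $V_{k-1}$ lying on the path from $1_1$. The pointer $\rpt{k}$ is determined from $\rpt{k-1}$ by a single application of the last matching, $\rpt{k} = \rM_{k-1}(\rpt{k-1})$, and $\rM_{k-1}$ is held by Alice (it is an odd-indexed matching since $k$ is even). So I would argue that conditioned on $\rGam_{k-2}$, the value $\rpt{k-1}$ is fixed, and the conditional distribution of $\rpt{k} = \rM_{k-1}(\rpt{k-1})$ depends only on the randomness of $\rM_{k-1}$ — which must be shown to remain independent of Bob's matchings $\rMeven$ under this conditioning.

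\medskip

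\textbf{Key steps.} First, note that $\rGam_{k-2} = (\rProt_1,\dots,\rProt_{k-2}, \rpt{1},\dots,\rpt{k-1})$, so conditioning on it fixes $\rpt{k-1} =: v^*$ and fixes the transcript $\rProt_{\le k-2}$. Second, I would invoke the standard ``rectangle''/independence property of communication protocols on product distributions: conditioned on a fixed transcript $\rProt_{\le k-2}$, Alice's input $\rModd$ and Bob's input $\rMeven$ are independent (the conditional distribution is a product distribution, since the prior $\nu_{\PC}$ is a product distribution and a deterministic transcript carves out a combinatorial rectangle). Third, additionally conditioning on the pointers $\rpt{1},\dots,\rpt{k-1}$ — which are functions of the matchings $\rM_1,\dots,\rM_{k-2}$ along the realized path — I need to check this conditioning does not re-introduce dependence between $\rModd$ and $\rMeven$ beyond what is already there; more carefully, I want the stronger statement that $\rM_{k-1}$ restricted to what determines $\rM_{k-1}(v^*)$ stays uniform and independent of $\rMeven$. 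Since $\rpt{1},\dots,\rpt{k-1}$ only constrain $\rM_1,\dots,\rM_{k-2}$ (the last pointer $\rpt{k-1}$ is reached using $\rM_{k-2}$) and impose no constraint on $\rM_{k-1}$, and since $\rM_{k-1}$ is independent of all other matchings a priori and the transcript is a deterministic function of the matchings, the conditional law of $\rM_{k-1}$ given $(\rGam_{k-2}, \rMeven)$ equals its conditional law given $\rProtAl_{\le k-2}$ (the messages Alice sent, which are all she could have used $\rM_{k-1}$ for — but in fact by the round structure Alice never sends a message depending on $\rM_{k-1}$ before it is her turn, hmm) — more cleanly: conditioned on the transcript, $\rM_{k-1}$ is a deterministic-rectangle-constrained uniform matching whose marginal does not depend on $\rMeven$. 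Therefore $\rpt{k} = \rM_{k-1}(v^*)$ has conditional law independent of $\rMeven$, which is exactly the claim.

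\medskip

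\textbf{Main obstacle.} The subtle point — and the step I expect to need the most care — is the interaction between conditioning on the transcript $\rProt_{\le k-2}$ (which gives a rectangle, preserving product structure) and additionally conditioning on the intermediate pointers $\rpt{1},\dots,\rpt{k-1}$. A priori, conditioning on $\rpt{k-1}$ could correlate Alice's and Bob's matchings, because the path alternates between Alice's and Bob's matchings. The resolution is that $\rpt{k-1}$ is reached using matchings $\rM_1,\dots,\rM_{k-2}$ only, so it imposes no condition on $\rM_{k-1}$; and what we care about — $\rpt{k} = \rM_{k-1}(\rpt{k-1})$ — depends on $\rM_{k-1}$ only through a single coordinate (where $v^*$ maps), which remains uniform over the still-available targets regardless of $\rMeven$. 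I would make this precise by writing the conditional distribution explicitly: $\Pr[\rpt{k} = u \mid \rGam_{k-2} = \Gam_{k-2}, \rMeven = M^{\mathrm{even}}]$ factors through $\Pr[\rM_{k-1}(v^*) = u \mid \rProtAl_{\le k-2}, \rpt{1},\dots,\rpt{k-1}]$, which has no dependence on $M^{\mathrm{even}}$ because $\rM_{k-1} \perp (\rMeven, \rM_1,\dots,\rM_{k-2})$ a priori and the transcript conditioning only restricts to a rectangle. This gives the desired conditional independence.
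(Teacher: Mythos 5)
Your proposal takes essentially the same route as the paper: conditioned on $\rGam_{k-2}$, the pointer $\rpt{k} = \rM_{k-1}(\rpt{k-1})$ is a function of Alice's matching $\rM_{k-1}$ alone, and the rectangle property of deterministic protocols on product input distributions keeps $\rModd \perp \rMeven$ after conditioning on the transcript, with the intermediate-pointer constraints splitting cleanly between Alice's matchings and Bob's matchings. The one small imprecision is that your displayed conditional $\Pr[\rM_{k-1}(v^*)=u\mid\rProtAl_{\leq k-2},\rpt{1},\dots,\rpt{k-1}]$ conditions on Alice's messages alone, whereas the correct conditional depends on the full transcript through the rectangle $A_\tau$; the substantive claim — that this conditional law does not further depend on $\Meven$ once the transcript and pointers are fixed — is nonetheless correct and is exactly what the paper's appeal to the rectangle property gives.
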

	\begin{proof}
		First, we know that $\rpt{k}\perp \rMeven \mid \rpt{1}, \rpt{2}, \ldots, \rpt{k-1}$, as $\rpt{k}$ is a function of $\rM_{k-1}$ when conditioned on $\rpt{k-1}$, and $\rM_{k-1}$ is not a part of $\rMeven$.
	The matchings are chosen independently of each other. 
	
	We also have that $\rpt{k} \perp \rMeven \mid \rGam_{k-2}$, as the independence continues to hold even conditioned on all the messages and pointers by the rectangle property of communication protocols. 
	\end{proof}
	
	The probability of success of the protocol is:
	\begin{align*}
		&\Exp_{\rGam_{k-2}, \rMeven} \Pr_{\rpt{k} \mid \rGam_{k-2}, \rMeven}[\rpt{k} \in \outB(\Gam_{k-2}, \Meven)] \\
	&\leq \Exp_{\rGam_{k-2}, \rMeven} \paren{\Pr_{\rpt{k}}[\rpt{k} \in \outB(\Gam_{k-2}, \Meven)] + \tvd{(\rpt{k} \mid \rGam_{k-2}, \rMeven)}{(\rpt{k})}} \tag{by \Cref{fact:tvd-small}} \\
	& = \frac12 +  \Exp_{\rGam_{k-2}, \rMeven} \tvd{(\rpt{k} \mid \rGam_{k-2}, \rMeven)}{(\rpt{k})} \tag{as $\rpt{k}$ is uniform, and $W, \overline{W}$ is an equipartition} \\
	& =  \frac12 +  \Exp_{\rGam_{k-2}} \tvd{(\rpt{k} \mid \rGam_{k-2})}{(\rpt{k})} \tag{by \Cref{obs:inter-1}} \\
	&\leq \frac12 + 0.1 \cdot (k-2)/k \tag{by \Cref{lem:unif-pc-induction}} \\
	&\leq 0.6,
	\end{align*}
which gives a contradiction to the probability of success being at least $0.7$. Therefore, the premise of \Cref{lem:unif-pc-induction} does not hold, and $ s > \paren{\frac1{10^5} \cdot \frac1{k^5} \cdot \ell - \log \ell} = \Omega(\ell/k^5-\log \ell)$. 
\end{proof}

We need an intermediate lemma to prove \Cref{lem:unif-pc-induction}.

\begin{lemma}\label{lem:high-ent-perm}
	For any vertex $v$ chosen uniformly at random from $V_{r+1}$, for any $0 \leq r \leq k-2$, in any protocol $\prot$ with cost $s < \paren{\frac1{10^5} \cdot \frac1{k^5} \cdot \ell - \log \ell}$, we have, 
	\[
	\Exp_{(\rGam_{r-1}, \rProt_r) = (\Gam_{r-1}, \Prot_r)} \Exp_{\randvert} \tvd{(\rM_{r+1}(\randvert) \mid \rGam_{r-1}, \rProt_r)}{\cUell} \leq 0.1 \cdot \frac{1}{k}.
	\]
\end{lemma}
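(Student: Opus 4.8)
The plan is to prove \Cref{lem:high-ent-perm} by the standard ``information cost to total variation'' route, following the permutation-handling ideas of \cite{AssadiN21}. Fix $0 \le r \le k-2$ and assume the round $r+1$ message $\rM_{r+1}$ was sent by the player who holds it; write $\cA$ for Alice's input and $\cB$ for Bob's. First I would observe that since $\prot$ is deterministic and has communication cost $s$, the entire transcript $\rProt_{\le r}$ and hence $\rGam_{r-1}, \rProt_r$ reveal at most $s + O(r\log \ell)$ bits about the inputs: more precisely, using the chain rule and the fact that the pointers $\rpt{1}, \dots, \rpt{r+1}$ are each an element of an $\ell$-set, we get $\mi{\rM_{r+1}}{\rGam_{r-1}, \rProt_r} \le s + (r+1)\log\ell$. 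Crucially, $\rM_{r+1}$ is held by only one of the two players, so the part of the transcript sent by that player carries no information about $\rM_{r+1}$ beyond what is already in the other player's messages and the pointers, and a rectangle/independence argument (as in \Cref{obs:inter-1}) lets us bound the relevant mutual information by essentially $s/(\text{number of rounds that player speaks}) + O(\log\ell)$ --- the $1/k$ factor in the conclusion comes from this per-round amortization over the $\Theta(k)$ rounds.

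Next, for a uniformly random vertex $\randvert \in V_{r+1}$, I would relate $\tvd{(\rM_{r+1}(\randvert) \mid \rGam_{r-1}, \rProt_r)}{\cUell}$ to this conditional mutual information. The key point is that $\rM_{r+1}$ is a uniformly random \emph{perfect matching} (equivalently a uniform permutation of $[\ell]$), so unconditionally $\rM_{r+1}(\randvert) \sim \cUell$ for each fixed $\randvert$, and $\rM_{r+1}(\randvert)$ for a random $\randvert$ averages out the permutation structure. By Pinsker's inequality and convexity,
\[
\Exp_{\rGam_{r-1},\rProt_r}\Exp_{\randvert}\tvd{(\rM_{r+1}(\randvert)\mid\rGam_{r-1},\rProt_r)}{\cUell}
\le \Exp_{\randvert}\sqrt{\tfrac12 \mi{\rM_{r+1}(\randvert)}{\rGam_{r-1},\rProt_r}}.
\]
The subtlety with permutations is that $\{\rM_{r+1}(v)\}_{v}$ are \emph{not} independent across $v$, so I cannot simply write $\sum_v \mi{\rM_{r+1}(v)}{\cdot} \le \mi{\rM_{r+1}}{\cdot}$ via a product chain rule. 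This is exactly where the \cite{AssadiN21} machinery enters: one shows that for a uniform permutation, revealing a small amount of information about the whole permutation still leaves the image of a random coordinate close to uniform on average --- intuitively because each coordinate's image is ``$1/\ell$-influenced'' and the $s$ bits can meaningfully bias only $O(s)$ coordinates. Quantitatively this gives $\Exp_{\randvert}\mi{\rM_{r+1}(\randvert)}{\rGam_{r-1},\rProt_r} \le O\big(\tfrac{1}{\ell}(s + r\log\ell)\big)$, and combined with the averaging over the $\Theta(k)$ rounds that the message-sending player speaks in, one squeezes out a bound of the form $O\big(\tfrac{1}{k\ell}(s+k\log\ell)\big)$ inside the square root; under the hypothesis $s < \tfrac{1}{10^5}\cdot\tfrac{\ell}{k^5} - \log\ell$ this is at most $(0.1/k)^2$, as desired.

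Then I would put the pieces together: starting from Pinsker plus convexity as above, substitute the permutation-information bound, use the hypothesis on $s$ to make the square root at most $0.1/k$, and conclude. I should be careful about a couple of bookkeeping issues: (i) the message $\rProt_r$ and the pointer information $\rGam_{r-1}$ must be conditioned on in the right order so that the rectangle property of communication protocols applies (the players' inputs stay independent given the transcript, and $\rM_{r+1}$ is determined by only one side's input together with $\rpt{r+1}$); (ii) the polynomial-in-$k$ slack in the statement is generous, so I do not need tight constants --- any $\poly(k)$ loss in the information-to-TVD conversion is absorbed by the $k^{-5}$ in the hypothesis.

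The main obstacle is precisely the permutation structure: handling the non-independence of $\{\rM_{r+1}(v)\}_v$ and showing that a transcript of $s$ bits cannot make the image of a typical random coordinate far from uniform. I expect to invoke the relevant lemma from \cite{AssadiN21} (or reprove its two-player specialization, which is simpler since we only have Alice and Bob) as a black box, extracting a statement of the form ``for $\rM$ a uniform permutation of $[\ell]$ and any random variable $Z$ with $\mi{\rM}{Z}\le s$, $\Exp_{v\sim\cUell}\tvd{(\rM(v)\mid Z)}{\cUell} = O(\sqrt{(s+\log\ell)/\ell})$.'' Once that is in hand, the rest of \Cref{lem:high-ent-perm} --- and then the induction in \Cref{lem:unif-pc-induction} via $\rpt{r+2} = \rM_{r+1}(\rpt{r+1})$ with $\rpt{r+1}$ close to uniform by the inductive hypothesis --- follows by routine manipulation.
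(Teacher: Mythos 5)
Your high-level route matches the paper's: apply Pinsker's inequality, pull the expectation inside via Jensen/concavity of $\sqrt{\cdot}$, and reduce to bounding $\Exp_{\randvert}\mi{\rM_{r+1}(\randvert)}{\rGam_{r-1},\rProt_r}$. You also correctly name the real obstacle --- the $\rM_{r+1}(v)$'s are not independent across $v$ because $\rM_{r+1}$ is a permutation, so a naive per-coordinate chain rule does not apply. But the quantitative claims you make for the permutation step are wrong, and the actual technical argument is left as an unverified black box, which is the whole content of the lemma.

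Concretely: you assert that the permutation-handling machinery yields $\Exp_{\randvert}\mi{\rM_{r+1}(\randvert)}{\rGam_{r-1},\rProt_r} \le O\bigl(\tfrac{1}{\ell}(s + r\log\ell)\bigr)$ and hence $\Exp_{\randvert}\tvd{(\rM_{r+1}(\randvert)\mid\cdot)}{\cUell} = O\bigl(\sqrt{(s+\log\ell)/\ell}\bigr)$. That would be the bound one gets \emph{if} the $\rM_{r+1}(v)$ were independent and one could sum the per-coordinate MI to the total. The argument that actually handles permutations --- the entropy subset inequality of Han (\Cref{fact:entropy-subset}), applied to random $\beta$-subsets $S\subseteq V_r$ of $\rM_{r+1}(S)$ and then optimized over $\beta\approx\sqrt{r(s+\log\ell)\ell}$ --- only yields $\Exp_{\randvert}\mi{\rM_{r+1}(\randvert)}{\rGam_{r-1},\rProt_r} \le O\bigl(\sqrt{r(s+\log\ell)/\ell}\bigr)$, and after Pinsker the TVD bound is $O\bigl((r(s+\log\ell)/\ell)^{1/4}\bigr)$, a \emph{quarter}-power, not a half-power. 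That square-root loss is precisely why the hypothesis needs $s\lesssim \ell/k^5$ rather than $\ell/k^3$; your observation that ``any $\poly(k)$ loss is absorbed'' is what saves the plan, but it also means your claimed exponent is not what the machinery delivers. Relatedly, your ``per-round amortization over the $\Theta(k)$ rounds'' giving a $1/k$ factor on the transcript information does not happen and is not needed: one simply charges $\en{\rGam_{r-1},\rProt_r}\le r(s+\log\ell)$ (all $r$ messages plus pointers, no division by $k$) and lets $k^{-5}$ do the rest. The rectangle-property step you invoke is genuinely used in the paper, but in the companion \Cref{clm:inter-1,clm:inter-2} that surround \Cref{lem:high-ent-perm}, not to amortize the communication inside it.

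So the gap is the technical core itself: the claimed black-box statement does not hold in the form you wrote, and what does hold (via \Cref{fact:entropy-subset} with the $\beta$-optimization) is never sketched. You would need to either reproduce that Han's-inequality argument (which is what the paper does, following \cite{AssadiN21}) or verify that \cite{AssadiN21} proves a two-party statement with the weaker quarter-power and then redo the arithmetic against the $s < \ell/(10^5 k^5) - \log\ell$ hypothesis. After that fix, your plan would yield a correct proof along essentially the same lines as the paper's.
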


\Cref{lem:high-ent-perm} can be viewed as an analong of Lemma 6.6 of \cite{AssadiN21} for two-party communication. This is the technical part which handles the difference of functions being permutations.

\begin{proof}[Proof of \Cref{lem:high-ent-perm}]
	First, by \Cref{fact:pinskers}, we have, 
	\begin{align}
		&\Exp_{\rGam_{r-1}, \rProt_r} \Exp_{\randvert} \tvd{(\rM_{r+1}(\randvert) \mid \rGam_{r-1}, \rProt_r)}{\cUell} \notag\\
		&\leq\Exp_{\rGam_{r-1}, \rProt_r} \Exp_{\randvert} \sqrt{\frac12 \cdot \kl{(\rM_{r+1}(\randvert) \mid \rGam_{r-1}, \rProt_r)}{\cUell}} \tag{by Pinsker's inequality \Cref{fact:pinskers}}\\
		&\leq \sqrt{  (1/2) \cdot \Exp_{\rGam_{r-1}, \rProt_r} \Exp_{\randvert} (\kl{(\rM_{r+1}(\randvert) \mid \rGam_{r-1}, \rProt_r)}{\cUell})} \tag{by Jensen's inequality and concavity of $\sqrt{\cdot}$}\\
		&=  \sqrt{  (1/2) \cdot \Exp_{\randvert}  \Exp_{\rGam_{r-1}, \rProt_r} (\kl{(\rM_{r+1}(\randvert) \mid \rGam_{r-1}, \rProt_r)}{\cUell})} \tag{by linearity of expectation} \\
		&=   \sqrt{  (1/2\ell) \cdot \sum_{v \in V_r} \Exp_{\rGam_{r-1}, \rProt_r} (\kl{(\rM_{r+1}(v) \mid \rGam_{r-1}, \rProt_r)}{\cUell})} \tag{as $\randvert$ is chosen uniformly at random} \\
		&= \sqrt{  (1/2\ell) \cdot \sum_{v \in V_r}  \mi{\rM_{r+1}(v)}{\rGam_{r-1},\rProt_r}} \label{eq:inter-1}.
	\end{align}
	where the last line follows from \Cref{fact:kl-info}, as $\rM_{r+1}(v)$ is uniform over $V_{r+1}$ with no conditioning. 
	We focus on bounding the sum of the mutual information terms using \Cref{fact:entropy-subset}.
	
	Let $\beta \leq \ell/2$ be an integer parameter which we will fix later. For any $S \subset V_r$, we use $\rM_{r+1}(S)$ to denote the set of random variables $\rM_{r+1}(v)$ for $v \in S$. Using the entropy subset inequality from \Cref{fact:entropy-subset}, we get, 
	\begin{align*}
		\frac1{\ell} \cdot \sum_{v \in V_r} \en{\rM_{r+1}(v)\mid \rGam_{r-1}, \rProt_r} &\geq \frac1{\binom{\ell}{\beta}} \sum_{S \subset V_r: \card{S} = \beta}  \frac1{\beta} \cdot \en{\rM_{r+1}(S) \mid \rGam_{r-1}, \rProt_{r}} \\
		&= \frac1{\binom{\ell}{\beta}} \sum_{S \subset V_r: \card{S} = \beta}  \frac1{\beta} \cdot (\en{\rM_{r+1}(S), \rGam_{r-1} \rProt_r } - \en{ \rGam_{r-1}, \rProt_{r}} )\tag{by chain rule of entropy \itfacts{en-chain-rule}}\\
		&\geq \frac1{\binom{\ell}{\beta}} \sum_{S \subset V_r: \card{S} = \beta}  \frac1{\beta} \cdot (\rM_{r+1}(S) - \en{ \rGam_{r-1}, \rProt_{r}} )\tag{by \itfacts{en-chain-rule} and by \itfacts{uniform}} \\
		&= \frac1{\beta} \cdot \paren{\log\paren{\frac{\ell!}{(\ell-\beta)!}} - \en{\rGam_{r-1}, \rProt_r}} \tag{as $\en{\rM_S}$ is a uniform partial permutation to $V_{r+1}$ for any $S$} \\
		&\geq \log(\ell - \beta) - \frac1{\beta} \cdot \en{\rGam_{r-1}, \rProt_r} \tag{as $\frac{(\ell)!}{(\ell-\beta)!} \geq (\ell-\beta)^{\beta}$} \\
		&\geq \log(\ell - \beta) - \frac1{\beta} \cdot r \cdot (s + \log \ell) \tag{as $\rGam_{r-1}, \rProt_r$ consists of totally $r$ messages and pointers, and by \itfacts{uniform}}
	\end{align*}
	Hence, for any integer $0 \leq \beta \leq \ell/2$, we have obtained, 
	\begin{equation}\label{eq:inter-2}
			\frac1{\ell} \cdot \sum_{v \in V_r} \en{\rM_{r+1}(v)\mid \rGam_{r-1}, \rProt_r} \geq \log(\ell - \beta) - \frac1{\beta} \cdot r \cdot (s + \log \ell)
	\end{equation}
	The mutual information term can be bounded easily now. 
	\begin{align*}
		&\frac1{\ell} \cdot \sum_{v \in V_r}  \mi{\rM_{r+1}(v)}{\rGam_{r-1},\rProt_r}  \\
		&=\frac1{\ell} \cdot \sum_{v \in V_r}  (\en{\rM_{r+1}(v)} - \en{\rM_{r+1}(v) \mid \rGam_{r-1},\rProt_r} )\tag{by definition of mutual information} \\
		&= \log \ell - 	\frac1{\ell} \cdot \sum_{v \in V_r} \en{\rM_{r+1}(v)\mid \rGam_{r-1}, \rProt_r}  \tag{as $\rM_{r+1}(v)$ is uniform over $V_{r+1}$} \\
		&\leq \log \ell - \log(\ell - \beta) + \frac1{\beta} \cdot r \cdot (s + \log \ell) \tag{by \Cref{eq:inter-2}} \\
		& = \log \paren{1 + \frac{\beta}{\ell - \beta}} + \frac1{\beta} \cdot r \cdot (s + \log \ell) \\
		&\leq \frac{\beta}{\ell - \beta} \cdot \log e + \frac1{\beta} \cdot r \cdot (s + \log \ell) \tag{as $1+x \leq e^x$} \\
		&\leq \frac1{\ell} \cdot 4 \beta + r\cdot (s+ \log \ell)\cdot \frac1{\beta}. \tag{for any choice of $\beta \leq \ell/2$}
	\end{align*}
We set the value of $\beta$ as $\sqrt{r \cdot (s+ \log \ell) \cdot \ell}$ which is less than $ \ell/2$ for our choice of $s$. The equation simplifies to: 
\begin{equation}\label{eq:inter-3}
	\frac1{\ell} \cdot \sum_{v \in V_r}  \mi{\rM_{r+1}(v)}{\rGam_{r-1},\rProt_r}  \leq 5 \cdot \sqrt{\frac{r \cdot (s +\log \ell)}{\ell}}.
\end{equation}
We plug \Cref{eq:inter-3} into \Cref{eq:inter-1} to get:
\begin{align*}
	\Exp_{\rGam_{r-1}, \rProt_r} \Exp_{\randvert} \tvd{(\rM_{r+1}(\randvert) \mid \rGam_{r-1}, \rProt_r)}{\cUell} \leq \sqrt{\frac12 \cdot 5 \cdot \sqrt{\frac{r \cdot (s +\log \ell)}{\ell}}} \leq 0.1 \cdot \frac1{k},
\end{align*}
for our choice of $s$. This proves the lemma. 
\end{proof}

We will prove \Cref{lem:unif-pc-induction} by induction. We need two more claims. 

	\begin{claim}\label{clm:inter-1}
		For any $r$ with $1 \leq r \leq k-2$, 
	for any value of $(\rGam_{r-1}, \rProt_r)$ and any vertex $v \in V_{r+1}$, we have, 
	\[
	\rM_{r+1}(v) \perp \rpt{r+1} = v \mid \rGam_{r-1}, \rProt_r.
	\]
\end{claim}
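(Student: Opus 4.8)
The plan is to prove \Cref{clm:inter-1} by appealing to the rectangle property of deterministic communication protocols together with the independence structure of the underlying matchings. The key point is that the event defining $\point{G}{\ver{r+1,\cdot}} = v$ depends only on matchings $\rM_1, \dots, \rM_r$ (through the chain $\rpt{1} \to \rpt{2} \to \dots \to \rpt{r+1}$), whereas $\rM_{r+1}(v)$ is a coordinate of the matching $\rM_{r+1}$, which is sampled independently of all earlier matchings under the uniform distribution $\nuPC$.

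First I would observe that $\rGam_{r-1} = (\rProt_1, \dots, \rProt_{r-1}, \rpt{1}, \dots, \rpt{r})$ and $\rProt_r$ together determine a combinatorial rectangle in the input space: conditioned on a fixed transcript prefix $(\Prot_1, \dots, \Prot_r)$, Alice's input $\rModd$ and Bob's input $\rMeven$ are independent (this is the standard rectangle property, since $\prot$ is deterministic and the players alternate). Conditioning further on $(\rpt{1}, \dots, \rpt{r})$ only restricts the values of the already-revealed pointers, which is again a ``product'' restriction in the sense that it cuts the rectangle down coordinate-wise without introducing correlation between $\rM_{r+1}$ and the rest. Now $\rpt{r+1}$ is a deterministic function of $\rpt{r}$ and $\rM_r$ — namely $\rpt{r+1} = \rM_r(\rpt{r})$ — so conditioned on $\rGam_{r-1}$ the event $\{\rpt{r+1} = v\}$ is measurable with respect to $\rM_r$ alone (in fact with respect to a single coordinate of $\rM_r$), and hence jointly with respect to $(\rModd, \rMeven)$ restricted away from matching $\rM_{r+1}$. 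Since under $\nuPC$ the matching $\rM_{r+1}$ is chosen uniformly and independently of all other matchings, and since $\rProt_r$ is a function of the inputs revealed so far (which do not touch $\rM_{r+1}$ when $r+1 \le k-1$... careful: $\rProt_r$ could depend on $\rM_{r+1}$ if $r+1$ has the right parity). I would handle this by noting that the transcript up through round $r$ consists of messages $\rProt_1, \dots, \rProt_r$, and each message $\rProt_j$ is a function of the sender's input and the prior transcript; the relevant point is that conditioning on the whole tuple $(\rGam_{r-1}, \rProt_r)$ still leaves a rectangle, within which the coordinate $\rM_{r+1}(v)$ retains its fresh uniform randomness conditionally independent of the coordinate(s) of $\rM_r$ that determine whether $\rpt{r+1} = v$.

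Concretely, the argument reduces to: fix any $\Gam_{r-1}, \Prot_r$; within the resulting rectangle $\cA \times \cB$ (Alice's odd matchings in $\cA$, Bob's even matchings in $\cB$), the matching $\rM_{r+1}$ appears on exactly one side and, conditioned on the rectangle, its distribution is still the uniform distribution over permutations \emph{independent} of matching $\rM_r$ (since the constraints defining the rectangle and the pointer values never jointly constrain $\rM_{r+1}$ and $\rM_r$ — the pointer values constrain $\rM_r$ via $\rpt{r+1} = \rM_r(\rpt{r})$ but $\rpt{r+1}$ is \emph{not} yet part of $\rGam_{r-1}$). Then for any fixed $v$, the events $\{\rM_r(\rpt{r}) = v\}$ and the value $\rM_{r+1}(v)$ are conditionally independent given the rectangle, which is exactly the claim $\rM_{r+1}(v) \perp \{\rpt{r+1} = v\} \mid \rGam_{r-1}, \rProt_r$.

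The main obstacle I anticipate is being careful about which matchings each message can depend on, and ensuring the ``rectangle is still a product in the relevant coordinates'' step is stated precisely — in particular verifying that conditioning on the pointers $\rpt{1}, \dots, \rpt{r}$ (which live in $\rGam_{r-1}$) does not secretly correlate $\rM_{r+1}$ with $\rM_r$. This is true because the pointer chain only propagates forward ($\rpt{i+1} = \rM_i(\rpt{i})$), so the pointers in $\rGam_{r-1}$ are functions of $\rM_1, \dots, \rM_{r-1}$ only, and $\rProt_r$ is a function of one player's matchings and the transcript, hence a function of $\rM_1, \dots, \rM_{r-1}$ together with possibly $\rM_{r+1}, \rM_{r+3}, \dots$ but crucially the conditioning event $\{\rpt{r+1}=v\}$ touches only $\rM_r$, which is disjoint from $\rM_{r+1}$. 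I would make this rigorous by explicitly writing the joint distribution as a product over matchings and observing that conditioning on a rectangle event preserves independence of any two matchings that are not simultaneously constrained, then specializing to $\rM_r$ and $\rM_{r+1}$.
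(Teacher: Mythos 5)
Your proof is correct and rests on the same underlying fact as the paper's: conditioned on the transcript and the revealed pointers, Alice's matchings $\rModd$ and Bob's matchings $\rMeven$ remain independent, and $\rM_r$ and $\rM_{r+1}$ always lie on opposite sides of this split (they have opposite parities), so the event $\{\rM_r(\rpt{r})=v\}$ and the value $\rM_{r+1}(v)$ are conditionally independent. The difference is formalism: the paper never invokes the rectangle property as a primitive; instead it establishes $\mi{\rM_{r+1}(v)}{\rpt{r+1}\mid\rGam_{r-1},\rProt_r}=0$ via an explicit chain of data-processing and chain-rule steps that one-by-one moves each pointer $\rpt{i}$ out of the conditioning (replacing it with $\rM_{i-1}$ on the appropriate side), then peels off the messages $\rProtAl_j,\rProtBob_j$, until the expression reduces to $\mi{\rModd}{\rMeven}=0$. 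Your rectangle-property argument is cleaner to state but carries exactly the burden you flag: you must verify that conditioning on $\rpt{1},\ldots,\rpt{r}$ does not destroy the combinatorial rectangle created by conditioning on the transcript. Your sketch is right --- each $\rpt{i+1}=\rM_i(\rpt{i})$ is a constraint on a single matching, so the joint pointer constraint factors as a product of per-matching constraints and the restriction stays a rectangle --- but this is the one step that genuinely needs to be spelled out, and the paper's mutual-information chain is precisely a mechanized way of doing so without having to restate the rectangle lemma in this extended form.
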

\begin{proof}
We prove the claim for even $r$. The case when $r$ is odd follows similarly.
	We can write, 
	\begin{align*}
		&\mi{\rM_{r+1}(v)}{\rpt{r+1} \mid \rGam_{r-1}, \rProt_r}\\
		&\leq \mi{\rM_{r+1}}{\rpt{r+1} \mid \rGam_{r-1}, \rProt_r} \tag{by data processing inequality \itfacts{data-processing}} \\
		&\leq  \mi{\rM_{r+1}}{\rM_r \mid \rGam_{r-1}, \rProt_r} \tag{again by \itfacts{data-processing}, as $\rpt{r+1} = \rM_r(\rpt{r})$, and $\rpt{r}$ is fixed by $\rGam_{r-1}$} \\
		&\leq \mi{\rM_{r+1}}{\rM_r \mid\rProtAl_{\leq r}, \rProtBob_{\leq r}, \rpt{1}, \rpt{2}, \ldots, \rpt{r}}. \tag{expanding $\rGam_{r-1}$} \\
		&\leq  \mi{\rM_{r+1}, \rpt{r}}{\rM_r \mid\rProtAl_{\leq r}, \rProtBob_{\leq r}, \rpt{1}, \rpt{2}, \ldots, \rpt{r-1}} \tag{by \itfacts{chain-rule} and non-negativity of mutual information} \\
		&\leq \mi{\rM_{r+1}, \rM_{r-1}}{\rM_r \mid \rProtAl_{\leq r}, \rProtBob_{\leq r}, \rpt{1}, \rpt{2}, \ldots, \rpt{r-1}} \tag{by \itfacts{data-processing}} \\
		&\leq \mi{\rM_{r+1}, \rM_{r-1}, \ldots, \rM_1}{\rM_r, \rM_{r-2}, \ldots, \rM_2 \mid   \rProtAl_{\leq r}, \rProtBob_{\leq r}} \tag{by repeating the earlier two steps for other pointers} \\
		&\leq  \mi{\rModd}{\rMeven \mid   \rProtAl_{\leq r}, \rProtBob_{\leq r}} \tag{by \itfacts{data-processing}} \\
		&\leq \mi{\rModd, \rProtAl_r}{\rMeven \mid \rProtAl_{\leq (r-2)}, \rProtBob_{\leq (r-1)}} \tag{by \itfacts{chain-rule} and non-negativity of mutual information} \\
		&= \mi{\rModd}{\rMeven \mid \rProtAl_{\leq (r-2)}, \rProtBob_{\leq (r-1)}} \tag{by \itfacts{data-processing} as $\rProtAl_r$ if fixed by $\rModd$ and $\rProtBob_{\leq r-1}$} \\
		&\leq \mi{\rModd}{\rMeven, \rProtBob_{ (r-1)} \mid \rProtAl_{\leq (r-2)}, \rProtBob_{\leq (r-3)}}  \tag{by \itfacts{chain-rule} and non-negativity of mutual information} \\
		&= \mi{\rModd}{\rMeven \mid \rProtAl_{\leq (r-2)}, \rProtBob_{\leq (r-3)}}  \tag{as $\rProtBob_{ (r-1)}$ if fixed by $\rMeven, \rProtAl_{\leq (r-2)}$, and by \itfacts{data-processing}} \\
		&\leq \mi{\rModd}{\rMeven} \tag{by repeating earlier two steps for all messages} \\
		&= 0. \tag{as all the matchings are independent of each other and \itfacts{info-zero}}
	\end{align*}
\itfacts{info-zero} now completes the claim.
\end{proof}

\begin{claim}\label{clm:inter-2}
	For any $1 \leq r \leq k-2$, we have, 
	\[
		\rpt{r+1} \perp \rProt_r \mid \rGam_{r-1}.
	\]
\end{claim}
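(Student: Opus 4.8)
\textbf{Proof proposal for \Cref{clm:inter-2}.}

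The plan is to reduce the target statement to the bare independence $\rModd \perp \rMeven$ via a short chain of data–processing inequalities, and then reuse (verbatim) the peeling argument that appears at the end of the proof of \Cref{clm:inter-1}. First I would split on the parity of $r$. Conditioned on $\rGam_{r-1}$ the pointer $\rpt{r}$ is fixed, so $\rpt{r+1}=\rM_r(\rpt{r})$ is a deterministic function of $\rM_r$ together with $\rGam_{r-1}$; hence by \itfacts{data-processing},
\[
\mi{\rpt{r+1}}{\rProt_r \mid \rGam_{r-1}} \;\leq\; \mi{\rM_r}{\rProt_r \mid \rGam_{r-1}}.
\]
Next, conditioned on $\rGam_{r-1}$ (which contains all messages $\rProt_{\le r-1}$), the message $\rProt_r$ is a deterministic function of the \emph{sender's} entire input: if $r$ is even it is a function of $\rModd$, and if $r$ is odd it is a function of $\rMeven$. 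Applying \itfacts{data-processing} again, together with the observation that $\rM_r\subseteq\rMeven$ when $r$ is even and $\rM_r\subseteq\rModd$ when $r$ is odd, both cases collapse to
\[
\mi{\rpt{r+1}}{\rProt_r \mid \rGam_{r-1}} \;\leq\; \mi{\rModd}{\rMeven \mid \rGam_{r-1}}.
\]

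It then remains to show $\mi{\rModd}{\rMeven \mid \rGam_{r-1}}=0$, i.e. $\rModd \perp \rMeven \mid \rGam_{r-1}$. Here I would expand $\rGam_{r-1}=(\rProt_{\le r-1},\rpt{1},\dots,\rpt{r})$ and run exactly the same "peeling" computation used in the final lines of the proof of \Cref{clm:inter-1}: repeatedly replace each conditioned pointer $\rpt{i}$ (for $i\ge 2$) by the matching $\rM_{i-1}$ of which it is a function given $\rpt{i-1}$ — placing it on Alice's side when $i-1$ is odd and on Bob's side when $i-1$ is even — and repeatedly move each conditioned message $\rProt_i$ into the appropriate side of the mutual information (it is a function of one player's input and the earlier transcript), using \itfacts{chain-rule}, non-negativity of mutual information, and \itfacts{data-processing}. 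This chain of bounds terminates at $\mi{\rModd}{\rMeven}=0$, which holds because all matchings are mutually independent (\itfacts{uniform}/\itfacts{info-zero}). Finally, \itfacts{info-zero} converts the vanishing mutual information into the claimed conditional independence.

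The only delicate point — and the step I expect to need the most care — is the bookkeeping in the peeling argument: each $\rpt{i}$ individually depends on a single matching, but the full tuple $(\rpt{1},\dots,\rpt{r})$ touches matchings of both players, so one must strip the pointers off \emph{one at a time}, alternating which side of the mutual information absorbs them, rather than trying to move them as a block. (Equivalently, one can phrase this as: for fixed values, the event $\{\rProt_{\le r-1}=\tau,\ \rpt{\le r}=p\}$ is a combinatorial rectangle in $\rModd\times\rMeven$ — the transcript prefix cuts out a rectangle by the standard rectangle property of deterministic protocols, and the constraints $\rM_i(p_i)=p_{i+1}$ each restrict a single player's matchings — and restricting the product prior to a rectangle keeps it a product; either presentation works, and I would use whichever matches the paper's style, i.e. the peeling version to mirror \Cref{clm:inter-1}.) Once that bookkeeping is set up, every individual inequality is one of the information-theoretic facts already cited in the preceding proofs, so no new tools are needed.
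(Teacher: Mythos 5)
Your proof is correct, and the key idea — reducing to the independence of $\rModd$ and $\rMeven$ via data processing and then peeling the conditioning — is the same as the paper's. The route through the intermediate steps is mildly different, though, and I think the difference is worth noting. The paper never isolates the quantity $\mi{\rModd}{\rMeven \mid \rGam_{r-1}}$: instead, starting from $\mi{\rM_r}{\rProt_r \mid \rGam_{r-1}}$, it immediately begins pushing each conditioned pointer $\rpt{i}$ out of the conditioning and into whichever argument of the mutual information belongs to the corresponding matching's owner, and only once all pointers are absorbed and the arguments have been enlarged to the full inputs does it land on $\mi{\rMeven}{\rProtAl_r,\rModd\mid \rProt_{\leq r-1}}$ and begin peeling the messages. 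You instead make two up-front data-processing steps (enlarging $\rM_r$ to $\rMeven$, and replacing $\rProt_r$ by $\rModd$ since the message is a deterministic function of the sender's input and the prior transcript) to arrive directly at $\mi{\rModd}{\rMeven\mid\rGam_{r-1}}$, and only then do the peeling. The two orderings reach the same place and use the same information-theoretic facts; yours is a little more modular, since it cleanly separates ``the quantity we want to bound'' from ``the statement we will prove by peeling.'' Both your peeling bookkeeping (assigning $\rpt{i}$ to the side owning $\rM_{i-1}$, i.e.\ Alice when $i-1$ is odd, Bob when $i-1$ is even) and your termination at $\mi{\rModd}{\rMeven}=0$ are correct and match the paper's conventions, and your observation that the argument can alternatively be phrased as the rectangle property of deterministic protocols applied to a product prior restricted by per-player pointer constraints is a clean and valid shortcut — it is not the route the paper takes, but it captures exactly why the peeling must close.
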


\begin{proof}
	We only prove the statement when $r $ is even. The proof follows in the same way for odd $r$. 
	\begin{align*}
		&\mi{\rpt{r+1}}{\rProt_r \mid \rGam_{r-1}} \\
		&\leq \mi{\rpt{r+1}}{\rProt_r \mid \rProtAl_{\leq r-1}, \rProtBob_{\leq r-1}, \rpt{1}, \rpt{2}, \ldots, \rpt{r}} \tag{expanding $\rGam_{r-1}$} \\
		&\leq \mi{\rM_r}{\rProt_r \mid \rProtAl_{\leq r-1}, \rProtBob_{\leq r-1}, \rpt{1}, \rpt{2}, \ldots, \rpt{r}} \tag{by \itfacts{data-processing}, as $\rpt{r+1} = \rM_r(\rpt{r})$} \\
		&\leq  \mi{\rM_r}{\rProt_r, \rpt{r} \mid \rProtAl_{\leq r-1}, \rProtBob_{\leq r-1}, \rpt{1}, \rpt{2}, \ldots, \rpt{r-1}} \tag{by \itfacts{chain-rule}} \\
		&\leq   \mi{\rM_r}{\rProt_r, \rM_{r-1} \mid \rProtAl_{\leq r-1}, \rProtBob_{\leq r-1}, \rpt{1}, \rpt{2}, \ldots, \rpt{r-1}} \tag{by \itfacts{data-processing}} \\
		&\leq \mi{\rM_r, \rM_{r-2}, \ldots, \rM_2}{\rProt_r, \rM_{r-1}, \rM_{r-3}, \ldots, \rM_1 \mid \rProtAl_{\leq r-1}, \rProtBob_{\leq r-1}} \tag{repeating the earlier two steps for all other pointers} \\
		&\leq \mi{\rMeven}{\rProtAl_{ r}, \rModd \mid  \rProtAl_{\leq r-1}, \rProtBob_{\leq r-1}} \\
		&=  \mi{\rMeven}{ \rModd \mid  \rProtAl_{\leq r-1}, \rProtBob_{\leq r-1}} \tag{by \itfacts{data-processing} as $\rProtAl_r$ is fixed by $\rModd, \rProtBob_{\leq (r-1)}$} \\
		&\leq  \mi{\rMeven, \rProtBob_{(r-1)}}{ \rModd \mid  \rProtAl_{\leq r-1}, \rProtBob_{\leq r-3}} \tag{by \itfacts{chain-rule} and non-negativity of mutual information} \\
		&=   \mi{\rMeven}{ \rModd \mid  \rProtAl_{\leq r-1}, \rProtBob_{\leq r-3}}\tag{as $\rProtBob_{ (r-1)}$ is fixed by $\rMeven$ and $\rProtAl_{\leq (r-1)}$}\\
		&\leq \mi{\rMeven}{\rModd} \tag{by repeating earlier two steps for all messages} \\
		&= 0. \tag{as the matchings are independent of each other}
	\end{align*}
The proof can be finished using \itfacts{info-zero}.
\end{proof}

\begin{proof}[Proof of \Cref{lem:unif-pc-induction}]
	The base case is when $r = 0$. Then, $\rGam_0 = \rpt{1} $. Conditioned on this value, the next pointer $\rpt{2} $ remains uniform. Thus, both LHS and RHS are zero. 
	
	Let us assume that the following statement is true for the induction hypothesis:
	\begin{equation}\label{eq:inter-4}
		\Exp_{\rGam_{r-1}} \tvd{(\rpt{r+1} \mid \rGam_{r-1})}{\cUell} \leq 0.1 \cdot \frac{(r-1)}{k}.
	\end{equation}
	
	We want to prove the statement for $r$ with $1 \leq r \leq k-2$.
	\begin{align}
		&\Exp_{\rGam_{r}} \tvd{(\rpt{r+2} \mid \rGam_{r})}{\cUell} \notag\\
		&= \Exp_{\rGam_{r-1}, \rProt_r} \Exp_{\rpt{r+1} \mid \rGam_{r-1}, \rProt_r} \tvd{(\rpt{r+2} \mid \rGam_{r-1}, \rProt_r, \rpt{r+1})}{\cUell} \notag\\
		&= \Exp_{\rGam_{r-1}, \rProt_r} \Exp_{v \sim \rpt{r+1} \mid \rGam_{r-1}, \rProt_r} \tvd{(\rM_{r+1}(v) \mid \rGam_{r-1}, \rProt_r, \rpt{r+1} = v)}{\cUell} \notag\\
		&\leq  \Exp_{\rGam_{r-1}, \rProt_r} \paren{\Exp_{v \sim V_{r+1}} \tvd{(\rM_{r+1}(v) \mid \rGam_{r-1}, \rProt_r, \rpt{r+1} = v)}{\cUell} + \tvd{(\rpt{r+1} \mid \rGam_{r-1}, \rProt_r)}{\cUell}}, \label{eq:inter-5}
	\end{align}  
	where $v \sim V_{r+1}$  is drawn uniformly at random from $V_{r+1}$. The inequality in the last step follows from \Cref{fact:tvd-small}.
	We bound the two terms in \Cref{eq:inter-5} separately.
	
	For the first term: 
	\begin{align}
	& \Exp_{\rGam_{r-1}, \rProt_r}\Exp_{v \sim V_{r+1}} \tvd{(\rM_{r+1}(v) \mid \rGam_{r-1}, \rProt_r, \rpt{r+1} = v)}{\cUell} \notag \\
	&= \Exp_{\rGam_{r-1}, \rProt_r}\Exp_{v \sim V_{r+1}} \tvd{(\rM_{r+1}(v) \mid \rGam_{r-1}, \rProt_r)}{\cUell} \tag{by \Cref{clm:inter-1}} \\
	&\leq 0.1 \cdot (1/k). \tag{by \Cref{lem:high-ent-perm}}
	\end{align} 
	
	For the second term:
	\begin{align}
		&\Exp_{\rGam_{r-1}, \rProt_r} \tvd{(\rpt{r+1} \mid \rGam_{r-1}, \rProt_r)}{\cUell} \notag \\
		&= \Exp_{\rGam_{r-1}} \tvd{(\rpt{r+1} \mid \rGam_{r-1})}{\cUell} \tag{by \Cref{clm:inter-2}} \\
		& \leq 0.1 \cdot (r-1) \cdot (1/k). \tag{by induction hypothesis from \Cref{eq:inter-4}} 
	\end{align}
	Plugging the two upper bounds in \Cref{eq:inter-5} completes the proof.
\end{proof}

\section{Background on Information Theory}\label{app:info}

We now briefly introduce some definitions and facts from information theory that are used in our proofs. We refer the interested reader to the text by Cover and Thomas~\cite{CoverT06} for an excellent introduction to this field, 
and the proofs of the statements used in this Appendix. 

For a random variable $\rA$, we use $\supp{\rA}$ to denote the support of $\rA$ and $\distribution{\rA}$ to denote its distribution. 
When it is clear from the context, we may abuse the notation and use $\rA$ directly instead of $\distribution{\rA}$, for example, write 
$A \sim \rA$ to mean $A \sim \distribution{\rA}$, i.e., $A$ is sampled from the distribution of random variable $\rA$. 

\begin{itemize}[leftmargin=10pt]
	\item We denote the \emph{Shannon Entropy} of a random variable $\rA$ by
	$\en{\rA}$, which is defined as: 
	\begin{align}
		\en{\rA} := \sum_{A \in \supp{\rA}} \Pr\paren{\rA = A} \cdot \log{\paren{1/\Pr\paren{\rA = A}}} \label{eq:entropy}
	\end{align} 
	\noindent
	\item The \emph{conditional entropy} of $\rA$ conditioned on $\rB$ is denoted by $\en{\rA \mid \rB}$ and defined as:
	\begin{align}
		\en{\rA \mid \rB} := \Ex_{B \sim \rB} \bracket{\en{\rA \mid \rB = B}}, \label{eq:cond-entropy}
	\end{align}
	where 
	$\en{\rA \mid \rB = B}$ is defined in a standard way by using the distribution of $\rA$ conditioned on the event $\rB = B$ in \Cref{eq:entropy}.

	\item The \emph{mutual information} of two random variables $\rA$ and $\rB$ is denoted by
	$\mi{\rA}{\rB}$ and is defined:
	\begin{align}
		\mi{\rA}{\rB} := \en{A} - \en{A \mid  B} = \en{B} - \en{B \mid  A}. \label{eq:mi}
	\end{align}
	\noindent
	\item The \emph{conditional mutual information} $\mi{\rA}{\rB \mid \rC}$ is $\en{\rA \mid \rC} - \en{\rA \mid \rB,\rC}$ and hence by linearity of expectation:
	\begin{align}
		\mi{\rA}{\rB \mid \rC} = \Ex_{C \sim \rC} \bracket{\mi{\rA}{\rB \mid \rC = C}}. \label{eq:cond-mi}
	\end{align} 
\end{itemize}

\subsection{Useful Properties of Entropy and Mutual Information}\label{sec:prop-en-mi}

We shall use the following basic properties of entropy and mutual information throughout. 

\begin{fact}\label{fact:it-facts}
	Let $\rA$, $\rB$, $\rC$, and $\rD$ be four (possibly correlated) random variables.
	\begin{enumerate}
		\item \label{part:uniform} $0 \leq \en{\rA} \leq \log{\card{\supp{\rA}}}$. The right equality holds
		iff $\distribution{\rA}$ is uniform.
		\item \label{part:info-zero} $\mi{\rA}{\rB \mid \rC} \geq 0$. The equality holds iff $\rA$ and
		$\rB$ are \emph{independent} conditioned on $\rC$.
		\item \label{part:info-entropy} $\mi{\rA}{\rB \mid \rC} \leq \en{\rB}$ for any random variables $\rA, \rB, \rC$. 
		\item \label{part:cond-reduce} \emph{Conditioning on a random variable reduces entropy}:
		$\en{\rA \mid \rB,\rC} \leq \en{\rA \mid  \rB}$.  The equality holds iff $\rA \perp \rC \mid \rB$.
		\item \label{part:chain-rule} \emph{Chain rule for mutual information}: $\mi{\rA,\rB}{\rC \mid \rD} = \mi{\rA}{\rC \mid \rD} + \mi{\rB}{\rC \mid  \rA,\rD}$.
		\item \label{part:en-chain-rule} \emph{Chain rule for entropy}: $\en{\rA, \rB} = \en{\rA} + \en{\rB \mid \rA}$.
		\item \label{part:data-processing} \emph{Data processing inequality}: for a function $f(\rA)$ of $\rA$, $\mi{f(\rA)}{\rB \mid \rC} \leq \mi{\rA}{\rB \mid \rC}$. 
	\end{enumerate}
\end{fact}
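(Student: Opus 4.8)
The plan is to derive all seven items from two elementary principles, applied throughout to finitely-supported random variables: Gibbs' inequality (nonnegativity of $\kl{\cdot}{\cdot}$, together with its equality case), and direct bookkeeping of the defining sums in \eqref{eq:entropy}--\eqref{eq:cond-mi}. Every statement here is classical (see~\cite{CoverT06}); I would only record the short arguments, taking care to order the items so that none is invoked before it has been established.

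First I would prove the entropy chain rule, item~\ref{part:en-chain-rule}, straight from the definition: writing $\Pr(\rA=a,\rB=b)=\Pr(\rA=a)\cdot\Pr(\rB=b\mid\rA=a)$ and using additivity of $\log$, the sum defining $\en{\rA,\rB}$ splits into $-\sum_{a,b}\Pr(a,b)\log\Pr(a)=\en{\rA}$ and $-\sum_{a,b}\Pr(a,b)\log\Pr(b\mid a)=\en{\rB\mid\rA}$, the latter by \eqref{eq:cond-entropy}. Next, item~\ref{part:uniform}: the bound $\en{\rA}\ge0$ is immediate since each summand $\Pr(\rA=A)\log(1/\Pr(\rA=A))$ is nonnegative on $[0,1]$; the upper bound follows from the identity $\en{\rA}=\log\card{\supp{\rA}}-\kl{\distribution{\rA}}{U}$, where $U$ is uniform on $\supp{\rA}$, since $\kl{\distribution{\rA}}{U}\ge0$, and the equality case of Gibbs' inequality gives $\en{\rA}=\log\card{\supp{\rA}}$ iff $\distribution{\rA}=U$. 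Then item~\ref{part:info-zero}: for each fixed value $C$ one has $\mi{\rA}{\rB\mid\rC=C}=\kl{\distribution{(\rA,\rB)\mid\rC=C}}{\distribution{\rA\mid\rC=C}\otimes\distribution{\rB\mid\rC=C}}\ge0$, and averaging over $C\sim\rC$ via \eqref{eq:cond-mi} gives $\mi{\rA}{\rB\mid\rC}\ge0$; the Gibbs equality case shows this is tight iff the conditional joint factorizes for every $C$ in the support of $\rC$, i.e.\ iff $\rA\perp\rB\mid\rC$.

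The remaining four items are combinations of the above. For item~\ref{part:cond-reduce}, $\en{\rA\mid\rB}-\en{\rA\mid\rB,\rC}=\mi{\rA}{\rC\mid\rB}\ge0$ by item~\ref{part:info-zero}, with equality iff $\rA\perp\rC\mid\rB$ (again the equality case of item~\ref{part:info-zero}). For item~\ref{part:info-entropy}, $\mi{\rA}{\rB\mid\rC}=\en{\rB\mid\rC}-\en{\rB\mid\rA,\rC}\le\en{\rB\mid\rC}\le\en{\rB}$, using nonnegativity of conditional entropy (item~\ref{part:uniform} inside the expectation defining \eqref{eq:cond-entropy}) and item~\ref{part:cond-reduce}. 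For the mutual-information chain rule, item~\ref{part:chain-rule}, expand each mutual information in \eqref{eq:cond-mi} as a difference of conditional entropies, apply item~\ref{part:en-chain-rule} to $\en{\rA,\rB\mid\rD}$ and to $\en{\rA,\rB\mid\rC,\rD}$, and regroup the four resulting entropy terms into $\mi{\rA}{\rC\mid\rD}+\mi{\rB}{\rC\mid\rA,\rD}$. Finally, for the data-processing inequality, item~\ref{part:data-processing}: since $f(\rA)$ is a deterministic function of $\rA$, the pair $(\rA,f(\rA))$ determines and is determined by $\rA$, so $\mi{\rA}{\rB\mid\rC}=\mi{\rA,f(\rA)}{\rB\mid\rC}$; item~\ref{part:chain-rule} then yields $\mi{\rA,f(\rA)}{\rB\mid\rC}=\mi{f(\rA)}{\rB\mid\rC}+\mi{\rA}{\rB\mid f(\rA),\rC}\ge\mi{f(\rA)}{\rB\mid\rC}$, the inequality being item~\ref{part:info-zero}.

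There is no substantive obstacle here; the only points needing a little care are the equality conditions in items~\ref{part:uniform}, \ref{part:info-zero}, and~\ref{part:cond-reduce}, which all trace back to the equality case of Gibbs' inequality, and the dependency order (entropy chain rule and the entropy bounds first, then nonnegativity of conditional mutual information, then the rest). If brevity is preferred, one may instead simply cite~\cite{CoverT06}, where all seven statements appear.
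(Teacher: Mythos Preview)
Your proposal is correct and in fact goes well beyond what the paper does: the paper states this as a \emph{Fact} without proof, simply referring the reader to~\cite{CoverT06} for all seven items. Your closing remark that one may ``simply cite~\cite{CoverT06}'' is exactly the route the paper takes.
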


\begin{fact}[Entropy Subset Inequality \cite{Te78}]\label{fact:entropy-subset}
	For any set of $n$ random variables $\rX_1, \rX_2, \ldots, \rX_n$, and set $S \subset [n]$, we define,
	$\rX_S :=\set{\rX_i \mid i \in S}$. For every $k \in [n]$, we define:
	\[
		\ents{k}{\rX} := \frac1{\binom{n}{k}} \sum_{S \subset [n]: \card{S} = k} \frac1{k} \cdot \en{\rX_S}.
	\]
	Then, $\ents{1}{X}^{(1)} \geq \ents{2}{X} \geq \ldots \geq \ents{n}{X}$. This inequality holds for conditional entropy also. 
\end{fact}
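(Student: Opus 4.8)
The plan is to prove Han's inequality by first establishing a ``local'' comparison for a single index set of size $k+1$ and then averaging over all such sets. Fix $k$ with $1 \le k \le n-1$; it suffices to show $\ents{k}{X} \ge \ents{k+1}{X}$, since chaining these gives the full decreasing sequence. The core step is the claim that for every $T \subseteq [n]$ with $|T| = k+1$,
\[
\en{\rX_T} \;\le\; \frac{1}{k}\sum_{i \in T}\en{\rX_{T\setminus\{i\}}}.
\]
To see this, I would write $\en{\rX_{T\setminus\{i\}}} = \en{\rX_T} - \en{\rX_i \mid \rX_{T\setminus\{i\}}}$ (chain rule, \itfacts{en-chain-rule}) and sum over $i \in T$ to get $\sum_{i\in T}\en{\rX_{T\setminus\{i\}}} = (k+1)\en{\rX_T} - \sum_{i\in T}\en{\rX_i\mid\rX_{T\setminus\{i\}}}$. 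Then I fix any ordering $T = \{i_1,\dots,i_{k+1}\}$, expand $\en{\rX_T} = \sum_{j=1}^{k+1}\en{\rX_{i_j}\mid\rX_{i_1},\dots,\rX_{i_{j-1}}}$ by the chain rule, and note that each term is at least $\en{\rX_{i_j}\mid\rX_{T\setminus\{i_j\}}}$ because $\{\rX_{i_1},\dots,\rX_{i_{j-1}}\}$ is a sub-collection of $\rX_{T\setminus\{i_j\}}$ and conditioning on more variables only reduces entropy (\itfacts{cond-reduce}). Hence $\sum_{i\in T}\en{\rX_i\mid\rX_{T\setminus\{i\}}} \le \en{\rX_T}$, and substituting back yields the claim.

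Given the claim, I would sum it over all $T \subseteq [n]$ with $|T| = k+1$ and interchange the order of summation. Each set $S$ with $|S| = k$ arises as $T \setminus \{i\}$ for exactly $n-k$ choices of $T$, so
\[
\sum_{|T| = k+1}\en{\rX_T} \;\le\; \frac{n-k}{k}\sum_{|S| = k}\en{\rX_S}.
\]
Dividing both sides by $\binom{n}{k+1}(k+1)$, using the identity $\binom{n}{k+1}(k+1) = \binom{n}{k}(n-k)$, the left-hand side becomes exactly $\ents{k+1}{X}$ and the right-hand side becomes $\frac{1}{k\binom{n}{k}}\sum_{|S|=k}\en{\rX_S} = \ents{k}{X}$, giving $\ents{k+1}{X} \le \ents{k}{X}$ as desired. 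For the conditional version, I would rerun the identical argument with every entropy $\en{\cdot}$ replaced by its conditioned counterpart $\en{\cdot \mid \rZ}$ for an arbitrary auxiliary variable $\rZ$ (and, if the expected-conditional-entropy form is wanted, take an expectation over $\rZ$ at the end); the only two tools used---the chain rule for entropy and ``conditioning reduces entropy''---both hold verbatim in the conditional setting, so nothing new is required.

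I do not expect a genuine obstacle here. The one place that needs care is the inequality $\sum_{i\in T}\en{\rX_i\mid\rX_{T\setminus\{i\}}} \le \en{\rX_T}$: one must be explicit that the chain-rule expansion uses an \emph{arbitrary but fixed} ordering and that each conditional entropy term in that expansion is being compared against the conditioning set $\rX_{T\setminus\{i_j\}}$ that is a superset of the prefix. The remaining work---the double-counting factor $n-k$ and the binomial identity---is routine bookkeeping, but it must be matched precisely against the $\tfrac{1}{\binom{n}{k}}\cdot\tfrac1k$ normalization in the definition of $\ents{k}{X}$, so I would write that step out carefully.
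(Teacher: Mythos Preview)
Your argument is correct and is the standard proof of Han's inequality. Note, however, that the paper does not supply its own proof of this statement: it is recorded as a background fact with a citation to \cite{Te78}, so there is no proof in the paper to compare against.
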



%
%

\subsection{Measures of Distance Between Distributions}\label{sec:prob-distance}

We use two main measures of distance (or divergence) between distributions, namely the \emph{Kullback-Leibler divergence} (KL-divergence) and the \emph{total variation distance}. 

\paragraph{KL-divergence.} For two distributions $\mu$ and $\nu$ over the same probability space, the \textbf{Kullback-Leibler (KL) divergence} between $\mu$ and $\nu$ is denoted by $\kl{\mu}{\nu}$ and defined as: 
\begin{align}
	\kl{\mu}{\nu}:= \Ex_{a \sim \mu}\Bracket{\log\frac{\mu(a)}{{\nu}(a)}}. \label{eq:kl}
\end{align}
We have the following relation between mutual information and KL-divergence. 
\begin{fact}\label{fact:kl-info}
	For random variables $\rA,\rB,\rC$, 
	\[\mi{\rA}{\rB \mid \rC} = \Ex_{(B,C) \sim {(\rB,\rC)}}\Bracket{ \kl{\distribution{\rA \mid \rB=B,\rC=C}}{\distribution{\rA \mid \rC=C}}}.\] 
\end{fact}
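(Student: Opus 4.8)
The plan is to prove this identity by a direct computation from the definitions of conditional mutual information, conditional entropy, and KL-divergence recorded in~\Cref{eq:cond-mi},~\Cref{eq:cond-entropy}, and~\Cref{eq:kl}; there is no substantive difficulty here, only bookkeeping with conditional distributions. All three variables are discrete, so every sum below ranges over the relevant (countable) support. As a preliminary sanity check one notes that $\Pr\paren{\rA=a\mid\rC=c} > 0$ whenever $\Pr\paren{\rA=a,\rB=b,\rC=c}>0$ (since $\Pr\paren{\rA=a\mid\rC=c}\geq\Pr\paren{\rA=a,\rB=b\mid\rC=c}>0$ in that case), so the right-hand side is well-defined and no term $\log(\cdot/0)$ appears.

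First I would establish, for each fixed value $c$ with $\Pr\paren{\rC=c}>0$, the single-variable identity
\[
\mi{\rA}{\rB\mid\rC=c} = \sum_{a,b}\Pr\paren{\rA=a,\rB=b\mid\rC=c}\cdot\log\frac{\Pr\paren{\rA=a\mid\rB=b,\rC=c}}{\Pr\paren{\rA=a\mid\rC=c}},
\]
which follows from $\mi{\rA}{\rB\mid\rC=c}=\en{\rA\mid\rC=c}-\en{\rA\mid\rB,\rC=c}$ together with the chain rule for probabilities $\Pr\paren{\rA=a,\rB=b\mid\rC=c}=\Pr\paren{\rB=b\mid\rC=c}\cdot\Pr\paren{\rA=a\mid\rB=b,\rC=c}$. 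Averaging over $c\sim\rC$ and invoking~\Cref{eq:cond-mi} gives
\[
\mi{\rA}{\rB\mid\rC}=\sum_{a,b,c}\Pr\paren{\rA=a,\rB=b,\rC=c}\cdot\log\frac{\Pr\paren{\rA=a\mid\rB=b,\rC=c}}{\Pr\paren{\rA=a\mid\rC=c}}.
\]

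The last step is to regroup the triple sum. Writing $\Pr\paren{\rA=a,\rB=b,\rC=c}=\Pr\paren{\rB=b,\rC=c}\cdot\Pr\paren{\rA=a\mid\rB=b,\rC=c}$ and pulling the $(b,c)$-sum outside yields
\[
\mi{\rA}{\rB\mid\rC}=\sum_{b,c}\Pr\paren{\rB=b,\rC=c}\cdot\Paren{\sum_a\Pr\paren{\rA=a\mid\rB=b,\rC=c}\log\frac{\Pr\paren{\rA=a\mid\rB=b,\rC=c}}{\Pr\paren{\rA=a\mid\rC=c}}},
\]
and by~\Cref{eq:kl} the bracketed quantity is precisely $\kl{\distribution{\rA\mid\rB=b,\rC=c}}{\distribution{\rA\mid\rC=c}}$; rewriting the outer sum as $\Ex_{(B,C)\sim(\rB,\rC)}$ finishes the proof. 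An equivalent route, which I would use instead if it typesets more compactly, is to skip the per-$c$ identity: expand $\mi{\rA}{\rB\mid\rC}=\en{\rA\mid\rC}-\en{\rA\mid\rB,\rC}$, write both conditional entropies as expectations over $(\rB,\rC)$ (harmless for $\en{\rA\mid\rC}$ since it does not depend on $b$), expand the candidate right-hand side via~\Cref{eq:kl}, and match the two using $\sum_b\Pr\paren{\rA=a,\rB=b,\rC=c}=\Pr\paren{\rA=a,\rC=c}$. The only point requiring any care is ensuring the final expectation is taken over the joint $(\rB,\rC)$ rather than over $\rC$ alone; beyond that the argument is entirely routine.
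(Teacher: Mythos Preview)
Your proof is correct. The paper does not actually prove this statement: it is stated as a standard fact in the information-theory appendix, with proofs deferred to Cover and Thomas~\cite{CoverT06}. Your direct expansion from the definitions of conditional mutual information (\Cref{eq:cond-mi}), conditional entropy (\Cref{eq:cond-entropy}), and KL-divergence (\Cref{eq:kl}) is exactly the routine verification one would expect, and the absolute-continuity check you include is the only point worth flagging.
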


\paragraph{Total variation distance.} We denote the \textbf{total variation distance} between two distributions $\mu$ and $\nu$ on the same 
support $\Omega$ by $\tvd{\mu}{\nu}$, defined as: 
\begin{align}
	\tvd{\mu}{\nu}:= \max_{\Omega' \subseteq \Omega} \paren{\mu(\Omega')-\nu(\Omega')} = \frac{1}{2} \cdot \sum_{x \in \Omega} \card{\mu(x) - \nu(x)}.  \label{eq:tvd}
\end{align}
\noindent
We use the following basic properties of total variation distance. 
\begin{fact}\label{fact:tvd-small}
	Suppose $\mu$ and $\nu$ are two distributions for $\event$, then, 
	$
	{\mu}(\event) \leq {\nu}(\event) + \tvd{\mu}{\nu}.
	$
\end{fact}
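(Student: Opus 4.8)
The plan is to read the inequality straight off the definition of total variation distance recorded in \Cref{eq:tvd}. There, for two distributions $\mu,\nu$ on a common sample space $\Omega$, we set
\[
\tvd{\mu}{\nu} \;=\; \max_{\Omega' \subseteq \Omega}\paren{\mu(\Omega') - \nu(\Omega')}.
\]
The event $\event$ is a subset of $\Omega$, hence one of the sets $\Omega'$ competing in this maximum. Therefore $\mu(\event) - \nu(\event) \le \tvd{\mu}{\nu}$, and moving $\nu(\event)$ to the other side gives $\mu(\event) \le \nu(\event) + \tvd{\mu}{\nu}$, which is precisely the assertion. That is the whole argument, and I would present it in a single line.

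If one instead prefers to argue from the $\ell_1$ form $\tvd{\mu}{\nu} = \frac12 \sum_{x \in \Omega}\card{\mu(x)-\nu(x)}$ that also appears in \Cref{eq:tvd}, a two-line version is available: let $\Omega^+ := \{x \in \Omega : \mu(x) > \nu(x)\}$, and observe that since $\sum_{x}\paren{\mu(x)-\nu(x)} = 0$, the positive part $\sum_{x \in \Omega^+}\paren{\mu(x)-\nu(x)}$ equals $\tvd{\mu}{\nu}$. Then
\[
\mu(\event) - \nu(\event) \;=\; \sum_{x \in \event}\paren{\mu(x)-\nu(x)} \;\le\; \sum_{x \in \event \cap \Omega^+}\paren{\mu(x)-\nu(x)} \;\le\; \tvd{\mu}{\nu},
\]
and rearranging finishes the proof.

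There is no genuine obstacle here: \Cref{fact:tvd-small} is essentially a restatement of the variational characterization of total variation distance, so the only ``choice'' is which of the two equivalent forms of the definition in \Cref{eq:tvd} to quote. I would use the $\max$ form, since it turns the proof into one inequality with no auxiliary set to introduce.
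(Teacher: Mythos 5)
Your proof is correct; the paper states \Cref{fact:tvd-small} without proof, and your one-line argument from the $\max$ form of \Cref{eq:tvd}, with $\event$ as one of the competing subsets, is the standard and most direct way to obtain it. The alternative $\ell_1$ argument you sketch is also valid, but the first version is exactly what one would write.
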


\begin{fact}\label{fact:tvd-small-event}
	For any distribution $\mu$ and event $\event$, \[
		\tvd{\mu}{(\mu \mid \neg \event)} \leq \Pr[\event].
	\]
\end{fact}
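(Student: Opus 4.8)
The plan is to prove $\tvd{\mu}{(\mu \mid \neg\event)} \le \Pr[\event]$ directly from the variational characterization of total variation distance recorded in \Cref{eq:tvd}, namely $\tvd{\mu}{\nu} = \max_{\Omega' \subseteq \Omega}\paren{\mu(\Omega') - \nu(\Omega')}$. Write $p := \Pr_\mu[\event]$; we may assume $p < 1$, since otherwise the conditional distribution $(\mu \mid \neg\event)$ is undefined, and in any case the bound is vacuous because a total variation distance never exceeds $1$.

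First I would record the pointwise description of the conditional distribution: for $x \in \event$ we have $(\mu \mid \neg\event)(x) = 0$, while for $x \notin \event$ we have $(\mu \mid \neg\event)(x) = \mu(x)/(1-p) \ge \mu(x)$. Consequently, for any event $\Omega' \subseteq \Omega$, splitting along $\event$ and $\neg\event$ gives
\[
(\mu \mid \neg\event)(\Omega') = (\mu \mid \neg\event)(\Omega' \cap \neg\event) = \frac{\mu(\Omega' \cap \neg\event)}{1-p} \ge \mu(\Omega' \cap \neg\event).
\]

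The key step is then the one-line estimate
\[
\mu(\Omega') - (\mu \mid \neg\event)(\Omega') \le \mu(\Omega') - \mu(\Omega' \cap \neg\event) = \mu(\Omega' \cap \event) \le \mu(\event) = p.
\]
Taking the maximum over all $\Omega' \subseteq \Omega$ and invoking \Cref{eq:tvd} yields $\tvd{\mu}{(\mu \mid \neg\event)} \le p = \Pr[\event]$. As an alternative (and to see that the bound is in fact tight), one can compute $\tvd{\mu}{(\mu\mid\neg\event)} = \tfrac12 \sum_x \card{\mu(x) - (\mu\mid\neg\event)(x)}$ by separating the contribution of $x \in \event$, which sums to $p$, from that of $x \notin \event$, which sums to $(1-p)\cdot\frac{p}{1-p} = p$. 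There is no genuine obstacle here—this is a standard fact—and the only points requiring a bit of care are the normalizing factor $1/(1-p)$ appearing in the conditional distribution and the degenerate case $\Pr[\event]=1$.
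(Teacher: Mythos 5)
Your argument is correct and complete; the paper states \Cref{fact:tvd-small-event} without proof, so there is nothing to compare against. Both your routes work: the variational bound via $\mu(\Omega') - (\mu\mid\neg\event)(\Omega') \le \mu(\Omega'\cap\event) \le \Pr[\event]$, and the exact pointwise computation showing $\tvd{\mu}{(\mu\mid\neg\event)} = \Pr[\event]$ (so the stated inequality is in fact an equality whenever $\Pr[\event] < 1$). You also correctly flag the degenerate case $\Pr[\event]=1$. No gaps.
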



We also have the following (chain-rule) bound on the total variation distance of joint variables.

\begin{fact}\label{fact:tvd-chain-rule}
	For any distributions $\mu$ and $\nu$ on $n$-tuples $(X_1,\ldots,X_n)$, 
	\[
	\tvd{\mu}{\nu} \leq \sum_{i=1}^{n} \Exp_{X_{<i} \sim \mu} \tvd{\mu(X_i \mid X_{<i})}{\nu(X_i \mid X_{<i})}. 
	\]
\end{fact}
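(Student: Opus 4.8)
The plan is to prove this by a standard hybrid (telescoping) argument, swapping the law of one coordinate at a time. I would define a sequence of hybrid distributions $\rho_0, \rho_1, \ldots, \rho_n$ over $n$-tuples $(X_1,\ldots,X_n)$, where $\rho_i$ samples the prefix $(X_1,\ldots,X_i)$ according to $\mu$ and then the suffix $(X_{i+1},\ldots,X_n)$ according to $\nu$ conditioned on that prefix; formally, $\rho_i(x_1,\ldots,x_n) = \mu(x_{\le i}) \cdot \nu(x_{>i} \mid x_{\le i})$. Then $\rho_0 = \nu$ and $\rho_n = \mu$, so by the triangle inequality for total variation distance, $\tvd{\mu}{\nu} \le \sum_{i=1}^n \tvd{\rho_{i-1}}{\rho_i}$. (Throughout, conditional distributions on a zero-probability prefix may be fixed arbitrarily; this does not affect any of the sums below, since such prefixes carry zero weight.)

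Next I would bound each term $\tvd{\rho_{i-1}}{\rho_i}$ using the product structure of the hybrids. Applying the factorization $\nu(x_{\ge i} \mid x_{<i}) = \nu(x_i \mid x_{<i}) \cdot \nu(x_{>i} \mid x_{\le i})$, one sees that $\rho_{i-1}(x) = \mu(x_{<i}) \cdot \nu(x_i \mid x_{<i}) \cdot \nu(x_{>i} \mid x_{\le i})$ while $\rho_i(x) = \mu(x_{<i}) \cdot \mu(x_i \mid x_{<i}) \cdot \nu(x_{>i} \mid x_{\le i})$. Thus the two hybrids share the common prefix factor $\mu(x_{<i})$ and the common suffix factor $\nu(x_{>i} \mid x_{\le i})$, and differ only in the conditional law of the $i$th coordinate. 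Plugging into the definition of total variation distance as $\tfrac12$ of the $\ell_1$ distance and summing over $x_{>i}$ (which contributes a factor of $1$) gives $\tvd{\rho_{i-1}}{\rho_i} = \sum_{x_{<i}} \mu(x_{<i}) \cdot \tvd{\mu(X_i \mid x_{<i})}{\nu(X_i \mid x_{<i})} = \Exp_{X_{<i}\sim\mu} \tvd{\mu(X_i \mid X_{<i})}{\nu(X_i \mid X_{<i})}$. Combining this with the triangle-inequality bound from the first paragraph yields exactly the claimed inequality.

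The only point requiring a little care is orienting the hybrids correctly so that the outer expectation ends up over $X_{<i}\sim\mu$ rather than $X_{<i}\sim\nu$: this is why the interpolation should run from $\nu$ toward $\mu$ by replacing coordinates from the front, so that each $\rho_i$ has $\mu$-marginals on its swapped prefix. Interpolating in the opposite direction, or replacing coordinates from the back, would produce the conditional expectation over $\nu$ instead. Beyond this bookkeeping I do not anticipate any genuine obstacle — the argument is a telescoping sum plus the elementary factorization of the hybrid densities, with no information-theoretic subtleties other than the trivial handling of zero-probability conditioning events noted above.
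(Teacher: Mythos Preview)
Your hybrid/telescoping argument is correct and is the standard proof of this inequality. The paper itself does not prove this statement---it is listed in the information-theory appendix as a background fact without proof---so there is no paper proof to compare against; your approach is exactly the textbook one.
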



%

The following Pinsker's inequality bounds the total variation distance between two distributions based on their KL-divergence.

\begin{fact}[Pinsker's inequality]\label{fact:pinskers}
	For any distributions $\mu$ and $\nu$, 
	$
	\tvd{\mu}{\nu} \leq \sqrt{\frac{1}{2} \cdot \kl{\mu}{\nu}}.
	$ 
\end{fact}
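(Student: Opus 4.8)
The plan is to prove Pinsker's inequality by the classical two‑step reduction: first coarsen the comparison down to a pair of Bernoulli distributions using a data‑processing argument, and then verify the inequality for Bernoulli distributions by elementary calculus. One preliminary bookkeeping point: the paper's $\kl{\cdot}{\cdot}$ in \Cref{eq:kl} uses the binary logarithm, so I would first note that it suffices to prove the claim with the natural logarithm, since passing from $\ln$ to $\log$ divides the divergence by $\ln 2 < 1$ and hence only \emph{increases} the right‑hand side $\sqrt{\tfrac12\kl{\mu}{\nu}}$. If $\kl{\mu}{\nu}=\infty$ the bound is trivial, so assume it is finite (in particular $\mu(x)=0$ whenever $\nu(x)=0$).

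For the reduction, I would take $A := \{x \in \Omega : \mu(x) \ge \nu(x)\}$. From the form $\tvd{\mu}{\nu} = \tfrac12\sum_x |\mu(x)-\nu(x)|$ in \Cref{eq:tvd} together with $\sum_x(\mu(x)-\nu(x)) = 0$, the positive part of $\mu-\nu$ has total mass exactly $\tfrac12\sum_x|\mu(x)-\nu(x)|$, so $\tvd{\mu}{\nu} = \mu(A) - \nu(A)$. Write $p := \mu(A)$, $q := \nu(A)$, so $p \ge q$ and $\tvd{\mu}{\nu} = p-q$ (if $q=0<p$ then $\kl{\mu}{\nu}=\infty$ and we are done, so assume $0<q\le p\le 1$). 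Applying the log‑sum inequality to the two groups $A$ and $\Omega\setminus A$ — which is the data‑processing inequality for KL‑divergence under the coarsening $x\mapsto \mathbf{1}[x\in A]$, and itself follows in one line from convexity of $t\mapsto t\ln t$ — gives, for the Bernoulli pair $\bar\mu := (p,1-p)$, $\bar\nu := (q,1-q)$ on $\{A,\Omega\setminus A\}$,
\[
\kl{\mu}{\nu} \;\ge\; \kl{\bar\mu}{\bar\nu} \;=\; p\ln\frac{p}{q} + (1-p)\ln\frac{1-p}{1-q},
\]
with the divergence measured in nats (using the convention $0\ln 0 = 0$ at the boundary $p=1$).

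It then remains to prove the scalar inequality $p\ln\frac pq + (1-p)\ln\frac{1-p}{1-q} \ge 2(p-q)^2$ for $0<q\le p\le 1$. I would fix $p$, set $h(q) := p\ln\frac pq + (1-p)\ln\frac{1-p}{1-q} - 2(p-q)^2$ on $(0,p]$, and show $h \ge 0$. One has $h(p)=0$ and
\[
h'(q) = -\frac pq + \frac{1-p}{1-q} + 4(p-q) = \frac{q-p}{q(1-q)} + 4(p-q) = (p-q)\Bigl(4 - \frac{1}{q(1-q)}\Bigr) \le 0,
\]
since $p-q\ge 0$ and $q(1-q)\le\tfrac14$ on $(0,1)$; hence $h$ is nonincreasing on $(0,p]$ and $h(q)\ge h(p)=0$. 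Chaining the two displays gives $\kl{\mu}{\nu} \ge 2(p-q)^2 = 2\,\tvd{\mu}{\nu}^2$ in nats, a fortiori in bits, and rearranging yields $\tvd{\mu}{\nu}\le\sqrt{\tfrac12\kl{\mu}{\nu}}$.

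The only non‑routine ingredient is the coarsening step $\kl{\mu}{\nu}\ge \kl{\bar\mu}{\bar\nu}$; depending on taste I would either cite the log‑sum inequality as standard or include its one‑line Jensen proof. Everything else — the identification of $A$ as the optimal test event, the calculus for $h$, and the harmless $\log$‑versus‑$\ln$ comparison — is elementary, so I do not expect any real obstacle here.
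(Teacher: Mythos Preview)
Your proof is correct and is the standard two-step argument (coarsen to Bernoulli via the log-sum/data-processing inequality, then verify the scalar inequality by calculus). The paper itself does not prove this statement: it is listed as a background fact in the information-theory appendix, with the reader referred to \cite{CoverT06} for proofs, so there is no in-paper argument to compare against.
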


\end{document}